\newtheorem{theorem}{Theorem}[section]
\newtheorem{proposition}[theorem]{Proposition}
\newtheorem{corollary}[theorem]{Corollary}
\theoremstyle{definition}
\newtheorem{example}[theorem]{Example}
\newtheorem{remark}[theorem]{Remark}
\newcommand{\CB}{{\cal B}}
\newcommand{\CC}{{\cal C}}
\newcommand{\CG}{{\cal G}}
\newcommand{\CJ}{{\cal J}}
\newcommand{\CL}{{\cal L}}
\newcommand{\CO}{{\cal O}}
\newcommand{\CP}{{\cal P}}
\def\IZ{{\mathbb Z}}
\def\IR{{\mathbb R}}
\def\IC{{\mathbb C}}
\def\IQ{{\mathbb Q}}
\def\IN{{\mathbb N}}
\def\IP{{\mathbb P}}
\def\IF{{\mathbb F}}
\newcommand{\re}{{\rm e}}
\newcommand{\ri}{{\rm i}}
\newcommand{\rd}{{\rm d}}
\renewcommand{\d}{\partial}
\newcommand{\mA}{\mathsf{A}}
\newcommand{\mB}{\mathsf{B}}
\newcommand{\mO}{\mathsf{O}}
\newcommand{\mrho}{\mathsf{\rho}}
\newcommand{\mx}{\mathsf{x}}
\newcommand{\my}{\mathsf{y}}
\newcommand{\mb}{{\mathsf{b}}}
\newcommand{\be}{\begin{equation}}
\newcommand{\ee}{\end{equation}}
\newcommand{\ba}{\begin{aligned}}
\newcommand{\ea}{\end{aligned}}
\newcommand{\ben}{\begin{eqnarray}\displaystyle}
\newcommand{\een}{\end{eqnarray}}
\newcommand{\sectiono}[1]{\section{#1}\setcounter{equation}{0}}
\title{\LARGE{\boldmath Resurgence, Stokes constants, and arithmetic functions in topological string theory}}
\author{Claudia Rella}
\affiliation{D\'epartement de Physique Th\'eorique\\ Universit\'e de Gen\`eve, CH-1211 Gen\`eve, Switzerland}
\emailAdd{claudia.rella@unige.ch} 
\abstract{
The quantization of the mirror curve to a toric Calabi--Yau threefold gives rise to quantum-mechanical operators, whose fermionic spectral traces produce factorially divergent power series in the Planck constant.
These asymptotic expansions can be promoted to resurgent trans-series. They show infinite towers of periodic singularities in their Borel plane and infinitely many rational Stokes constants, which are encoded in generating functions expressed in closed form in terms of $q$-series.
We provide an exact solution to the resurgent structure of the first fermionic spectral trace of the local $\IP^2$ geometry in the semiclassical limit of the spectral theory, corresponding to the strongly-coupled regime of topological string theory on the same background in the conjectural TS/ST correspondence. Our approach straightforwardly applies to the dual weakly-coupled limit of the topological string. We present and prove closed formulae for the Stokes constants as explicit arithmetic functions and for the perturbative coefficients as special values of known $L$-functions, while the duality between the two scaling regimes of strong and weak string coupling constant appears in number-theoretic form.
A preliminary numerical investigation of the local $\IF_0$ geometry unveils a more complicated resurgent structure with logarithmic sub-leading asymptotics.
Finally, we obtain a new analytic prediction on the asymptotic behavior of the fermionic spectral traces in an appropriate WKB double-scaling regime, which is captured by the refined topological string in the Nekrasov--Shatashvili limit.
}
\gdef\@fpheader{\null}
\begin{document}

\maketitle
\flushbottom

\sectiono{Introduction} \label{sec: introduction} 
Resurgent asymptotic series arise naturally as perturbative expansions in quantum theories. The machinery of resurgence uniquely associates them with a non-trivial collection of complex numbers, known as Stokes constants, which capture information about the large-order behavior of the perturbative expansion and about the non-perturbative sectors which are invisible in conventional perturbation theory. In some remarkable cases, the Stokes constants are integers, and they possess an interpretation as enumerative invariants based on the counting of BPS states. Recent progress in this direction~\cite{GrGuM, GGuM, GGuM2, GuM, GuM2, GuM3} advocates for the investigation of algebro-geometric theories within the analytic framework of the theory of resurgence. 

A first example is given by 4d $\mathcal{N} = 2$ supersymmetric gauge theory in the Nekrasov--Shatashvili limit~\cite{NS} of the Omega-background~\cite{N}, whose BPS spectrum~\cite{GMN} is encoded in the Stokes constants of the asymptotic series obtained by quantizing the Seiberg--Witten curve~\cite{GrGuM}.
A second example is given by complex Chern--Simons theory on the complement of a hyperbolic knot. The Stokes constants of the asymptotic series arising as saddle-point expansions of the quantum knot invariants around classical solutions are integer numbers~\cite{GGuM, GGuM2}, and they are closely related to the Dimofte--Gaiotto--Gukov index of the three-dimensional manifold~\cite{DGG}. 
In both types of quantum theories, it is conjectured in general, and verified numerically in some concrete cases, that the perturbative expansions of interest are resurgent series, and that their Borel transforms can be analytically continued as multivalued functions in the complex plane except for a discrete and infinite set of singular points, which are located along a finite number of vertical lines, spaced by integer multiples of some fundamental constant of the theory. Their arrangement in the complex plane is called a peacock pattern in~\cite{GGuM, GGuM2}. These infinite towers of periodic singularities lead to infinitely many Stokes constants, which are computed numerically in many examples, and often conjectured analytically in closed form. 

The formal connection between enumerative invariants and resurgence has been applied to the context of topological string theory on Calabi--Yau (CY) threefolds in the recent work of~\cite{GuM}.
A family of factorially divergent power series in the string coupling constant $g_s$ is obtained from the perturbative expansion of the conventional, unrefined topological string partition function, or exponential of the total free energy, in the weakly-coupled limit of the string $g_s \rightarrow 0$. It is observed in examples, and conjectured in general, that these asymptotic series lead to peacock patterns of singularities in their Borel plane and to infinite sets of integer Stokes constants.
However, differently from the previous two examples of $\mathcal{N}=2$ $SU(2)$ super Yang--Mills theory and complex Chern--Simons theory, a direct BPS interpretation of these integer invariants is still missing. 
Following the logic of~\cite{GuM}, we apply the tools of resurgence to a new family of asymptotic series which appear naturally in a dual strongly-coupled limit $g_s \rightarrow \infty$ of topological strings.

In this paper, we consider the perturbative expansion in the Planck constant $\hbar \propto g_s^{-1}$ of the logarithm of the fermionic spectral traces of the set of positive-definite, trace-class quantum operators which arise by quantization of the mirror curve to a toric CY. We obtain in this way a collection of factorially divergent perturbative series $\phi_{\bm{N}}(\hbar)$, which is indexed by a set of non-negative integers $\bm{N}$, one for each true complex modulus of the geometry. At fixed $\bm{N}$, the resurgent series $\phi_{\bm{N}}(\hbar)$ can be promoted to a full trans-series solution, unveiling an infinite collection of non-perturbative contributions. We conjecture that its resurgent analysis produces peacock-type arrangements of singularities in the complex Borel plane and an infinite set of rational Stokes constants. This numerical data is uniquely determined by the perturbative series under consideration, and it is yet intrinsically non-perturbative, representing a new, conjectural class of enumerative invariants of the CY, whose identification in terms of BPS counting is still to be understood. By means of the conjectural correspondence of~\cite{GHM, CGM2}, known as Topological Strings/Spectral Theory (TS/ST) correspondence, the semiclassical limit of the fermionic spectral traces turns out to be closely related to the all-orders WKB contribution to the total grand potential of the CY, and therefore to the total free energy of the refined topological string in the Nekrasov--Shatashvili (NS) limit. Note that the asymptotic series studied in~\cite{GuM} can be obtained as a perturbative expansion in $g_s$ of the same fermionic spectral traces. This dual weakly-coupled regime selects the worldsheet instanton contribution to the total grand potential, which is captured, in turn, by the standard topological string total free energy. 
We comment that, for each $\bm{N}$, the fermionic spectral traces are well-defined, analytic functions of $\hbar \in \IR_{>0}$ and provide a non-perturbative completion of the corresponding perturbative series $\phi_{\bm{N}}(\hbar)$. 
Although the definition of the proposed rational invariants as Stokes constants of appropriate perturbative expansions in topological string theory does not rely on the existence of a non-perturbative completion for the asymptotic series of interest, the fermionic spectral traces significantly facilitate the computational tasks undertaken in this paper, since they can be expressed as matrix integrals and factorized in holomorphic/anti-holomorphic blocks.

We perform a detailed resurgent analysis of the first fermionic spectral trace in the limit $\hbar \rightarrow 0$ for two well-known examples of toric CY threefolds, namely, local $\IP^2$ and local $\IF_0$. 
In the case of local $\IP^2$, the resurgent structure of the asymptotic series $\phi_1(\hbar)$ turns out to be analytically solvable, leading to proven exact formulae for the Stokes constants, which are rational numbers and simply related to an interesting sequence of integers. The Stokes constants have a transparent and strikingly simple arithmetic meaning as divisor sum functions, and they possess a generating function given by $q$-series, while the perturbative coefficients are encoded in explicit $L$-functions. We note that, differently from the dual case of~\cite{GuM}, the Stokes constants for the exponentiated series $\re^{\phi_1(\hbar)}$ appear to be generally complex numbers, and they can be expressed in terms of the Stokes constants of the series $\phi_1(\hbar)$ by means of a closed partition-theoretic formula. The symmetries and arithmetic properties described above are less easily accessible after exponentiation. 
Our analytic approach is then straightforwardly applied to the dual weakly-coupled limit $g_s \rightarrow 0$ of topological strings on the same background, confirming the results of the numerical study of~\cite{GuM}. We find that the Stokes constants and the perturbative coefficients are manifestly related to their semiclassical analogues. The duality between the weakly- and strongly-coupled scaling regimes re-emerges in a concrete and exact number-theoretic form.
Let us stress that we do not yet have a clear understanding of the possible generalization of the enticing number-theoretic formalism presented for local $\IP^2$ to arbitrary toric CY geometries.
The case of local $\IF_0$ immediately appears more complex. The resurgent structure of the first fermionic spectral trace is only accessible via numerical methods, which allow us to unveil the presence of logarithmic-type terms in the sub-leading asymptotics.

We note that, in contrast with what occurs in the dual limit $\hbar \rightarrow \infty$ studied in~\cite{GuM}, the semiclassical perturbative expansion of the first fermionic spectral trace of both local $\IP^2$ and local $\IF_0$ does not have a global exponential behavior of the form $\re^{-1/\hbar}$ at leading order, thus suggesting that there is no dual analogue of the conifold volume conjecture. We extend this observation to a general statement on the dominant semiclassical asymptotics of the fermionic spectral traces of toric CY threefolds.
We study the topological string total grand potential in an appropriate WKB 't Hooft-like regime associated to the semiclassical limit of the spectral theory, which selects the contribution from the total free energy of the refined topological string in the NS limit. After a suitable change of local symplectic frame in the moduli space of the geometry, we obtain a new, non-trivial analytic prediction of the TS/ST correspondence on the WKB asymptotic behavior of the fermionic spectral traces, which implies, in particular, the statement above.

This paper is organized as follows. 
In Section~\ref{sec: background}, we review the basic ingredients in the construction of quantum-mechanical operators from toric CY geometries and in the definition of the refined topological string theory, and its two one-parameter specializations, compactified on a toric CY background. The section ends with a short summary of the recent conjectural correspondence between topological strings and spectral theory, which offers a powerful, practical perspective for the computational tasks undertaken in this paper.
In Section~\ref{sec: main}, we provide the necessary background from the theory of resurgence, and we present a conjectural class of enumerative invariants of topological strings on a CY target as Stokes constants of appropriate asymptotic series, which arise naturally in the strong coupling limit $g_s \rightarrow \infty$ of the string theory, and which can be promoted to resurgent trans-series. The rational invariants studied in this paper represent a natural complement of the conjectural proposal of~\cite{GuM}, which addresses the dual weakly-coupled limit $g_s \rightarrow 0$.
In Section~\ref{sec: localP2}, we present an exact and complete solution to the resurgent structure of the first fermionic spectral trace of the local $\IP^2$ geometry in both scaling regimes $\hbar \rightarrow 0$ and $\hbar \rightarrow \infty$, and we provide an independent numerical analysis for the semiclassical limit.
In Section~\ref{sec: localF0}, we perform a preliminary numerical study of the resurgent structure of the first fermionic spectral trace of the local $\IF_0$ geometry for $\hbar \rightarrow 0$. 
In Section~\ref{sec: dual_limit}, we present a new analytic prediction on the asymptotics of the fermionic spectral traces of toric CY threefolds in the WKB double-scaling regime, and we analyze the example of local $\IP^2$ in detail. 
In Section~\ref{sec: conclusions}, we conclude and mention further perspectives to be addressed by future work and problems opened by this investigation.
There are four Appendices. 

\sectiono{From topological string theory to spectral theory and back} \label{sec: background}
In this section, we review the two-way connection between the spectral theory of quantum-mechanical operators and the topological string theory on toric CY manifolds, which builds upon notions of quantization and local mirror symmetry, and it has recently found an explicit formulation in the conjectural statement of~\cite{GHM, CGM2}, known as Topological Strings/Spectral Theory (TS/ST) correspondence. The conjecture, following the precursory work of~\cite{NS, MiMo, ACDKV, MP0, DMP, MP, HMO0, HMO, HMO2, HMMO, KaMa, HW}, leads to exact formulae for the spectral traces of the quantum operators in terms of the enumerative invariants of the CY, and it provides a non-perturbative realization of the topological string on this background.
We refer to~\cite{Geo1, Geo2, Geo3, Geo4} for an introduction to toric geometry and mirror symmetry and to~\cite{Al, Kl} for an introduction to topological string theory.

\subsection{Geometric setup and local mirror symmetry} \label{sec: geometry}
Let $X$ be a toric CY threefold and $\bm{t} = (t_1, \dots, t_s)$, where $s= b_2(X)$, be the complexified K\"ahler moduli of $X$. 
Local mirror symmetry pairs $X$ with a mirror CY threefold $\hat{X}$ in such a way that the theory of variations of complex structures of the mirror $\hat{X}$ is encoded in an algebraic equation of the form 
\be \label{eq: MC}
W(\re^x, \re^y)=0 \, , 
\ee
which describes a Riemann surface $\Sigma$ embedded in $\IC^* \times \IC^*$, called the mirror curve to $X$, and determines the B-model topological string theory on $\hat{X}$~\cite{M, BKMP}. 
We denote the genus of $\Sigma$ by $g_{\Sigma}$.
The complex deformation parameters of $\hat{X}$ can be divided into $g_{\Sigma}$ true moduli of the geometry, denoted by $\bm{\kappa} = (\kappa_1, \dots, \kappa_{g_{\Sigma}})$, and $r_{\Sigma} = s - g_{\Sigma}$ mass parameters, denoted by $\bm{\xi} = (\xi_1, \dots, \xi_{r_{\Sigma}})$~\cite{HKP, HKRS}. They are related to the Batyrev coordinates $\bm{z} = (z_1, \dots, z_s)$ of $\hat{X}$ by\footnote{We can choose the Batyrev coordinates in such a way that the first $g_{\Sigma}$ correspond to true moduli and the remaining $r_{\Sigma}$ correspond to mass parameters.}
\be \label{eq: Batyrev}
- \log z_i = \sum_{j=1}^{g_{\Sigma}} C_{i j} \mu_j + \sum_{k=1}^{r_{\Sigma}} \alpha_{i k} \log \xi_k \, , \quad i = 1, \dots, s \, , 
\ee
where the constant coefficients $C_{i j}, \alpha_{i k}$ are determined by the toric data of $X$, and the chemical potentials $\mu_j$ are defined by $\kappa_j = \re^{\mu_j}$, $j=1, \dots, g_{\Sigma}$.
The mirror curve $\Sigma$ can be identified with the family of equivalent canonical forms
\be \label{eq: canonical}
\CO_j(x,y) + \kappa_j = 0 \, , \quad j=1, \dots, g_{\Sigma} \, ,
\ee
where $\CO_j(x,y)$ is a polynomial in the variables $\re^{x}, \, \re^{y}$. Different canonical forms are related by $SL(2, \IZ)$-transformations and global translations in $x,\, y \in \IC$.

The complex moduli of the mirror $\hat{X}$ are related to the K\"ahler parameters of $X$ via the mirror map 
\be \label{eq: MM}
- t_i(\bm{z}) = \log z_i + \tilde{\Pi}_i(\bm{z}) \, , \quad i=1, \dots, s \, , 
\ee
where $\tilde{\Pi}_i(\bm{z})$ is a power series in $\bm{z}$ with finite radius of convergence. Together with Eq.~\eqref{eq: Batyrev}, it implies that
\be \label{eq: MM2}
t_i(\bm{\mu}, \bm{\xi}) = \sum_{j=1}^{g_{\Sigma}} C_{i j} \mu_j + \sum_{k=1}^{r_{\Sigma}} \alpha_{i k} \log \xi_k + \CO(\re^{-\mu_j}) \, , \quad i=1, \dots, s \, .
\ee
Following a choice of symplectic basis $A_i, B_i$, $i=1, \dots, s$, of one-cycles on the spectral curve $\Sigma$, the classical periods of the meromorphic differential one-form $\lambda = y(x)dx$, where the function $y(x)$ is locally defined by Eq.~\eqref{eq: MC}, satisfy
\be \label{eq: periods}
t_i(\bm{z}) \propto \oint_{A_i} \lambda \, , \quad \partial_{t_i} F_0(\bm{z}) \propto \oint_{B_i} \lambda \, , \quad i =1, \dots, s \, ,
\ee
where the function $F_0(\bm{z})$ is the classical prepotential of the geometry~\cite{CGPO}, which represents the genus zero amplitude of the B-model topological string on $\hat{X}$, that is, the generating functional of the genus zero Gromow--Witten invariants of $X$ convoluted with the mirror map. 

Following~\cite{GHM, CGM2}, the mirror curve in Eq.~\eqref{eq: MC} can be quantized by making an appropriate choice of reality conditions for the variables $x,y \in \IC$, promoting $x,y$ to self-adjoint Heisenberg operators $\mx, \my$ on the real line satisfying the commutation relation $[\mx, \my] = \text{i} \hbar$, and applying the standard Weyl prescription for ordering ambiguities. 
Thus, the functions $\CO_j(x, y)$ appearing in the canonical forms in Eq.~\eqref{eq: canonical} are uniquely associated to $g_{\Sigma}$ different Hermitian quantum-mechanical operators $\mO_j$, $j=1, \dots, g_{\Sigma}$, acting on $L^2(\IR)$. The mass parameters $\bm{\xi}$ become parameters of the operators $\mO_j$, and a Planck constant $\hbar \in \IR_{>0}$ is introduced as a quantum deformation parameter. 
We define the inverse operators as 
\be \label{eq: rho}
\rho_j = \mO_j^{-1} \, , j=1, \dots, g_{\Sigma} \, , 
\ee
acting on $L^2(\IR)$. The classical mirror map $t_i(\bm{z})$ in Eq.~\eqref{eq: MM} is consequently promoted to a quantum mirror map $t_i(\bm{z}, \hbar)$ given by
\be \label{eq: qMM}
- t_i(\bm{z}, \hbar) = \log z_i + \tilde{\Pi}_i(\bm{z}, \hbar), , \quad i=1, \dots, s \, , 
\ee
which reproduces the conventional mirror map in the semiclassical limit $\hbar \rightarrow 0$, and it is determined as an A-period of a quantum-corrected version of the differential $\lambda$ obtained via the all-orders, perturbative WKB approximation~\cite{MiMo, ACDKV}.

For simplicity, in the rest of this paper, we will often consider the case of toric (almost) del Pezzo CY threefolds, which are defined as the total space of the canonical line bundle on a toric (almost) del Pezzo surface $S$, that is,
\be \label{eq: delPezzo}
X = \mathcal{O}(K_S) \rightarrow S \, ,
\ee
also called local $S$. 
Examples of toric del Pezzo surfaces are the projective space $\IP^2$, the Hirzebruch surfaces $\IF_n$ for $n=0,1,2$, and the blowups of $\IP^2$ at $n$ points, denoted by $\mathcal{B}_n$, for $n=1,2,3$.
In this case, the mirror curve $\Sigma$ has genus one and, correspondingly, there are one true complex modulus $\kappa$, which is written in terms of the chemical potential $\mu$ as $\kappa = \re^{\mu}$, and $s-1$ mass parameters $\xi_k$, $k=1, \dots, s-1$. 
At leading order in the limit $\mu \rightarrow \infty$, the classical mirror map in Eq.~\eqref{eq: MM2} has the form
\be
t_i = c_i \mu + \sum_{k=1}^{s-1} \alpha_{i k} \log \xi_k + \CO(\re^{-\mu}) \, , \quad i = 1, \dots, s \, ,
\ee
where $c_i = C_{i 1}$, and the mirror curve in Eq.~\eqref{eq: MC} admits a single canonical parametrization
\be \label{eq: MC2}
\mathcal{O}_S(x, y) + \kappa = 0 \, .
\ee
We observe that, when appropriate symmetry conditions are applied to the mass parameters, the relation between the single Batyrev coordinate $z$ of $X$ and its true modulus $\kappa$ simplifies to $z = 1/\kappa^r$, where the value of $r$ is determined by the geometry. For example, such symmetry restrictions trivially apply to local $\IP^2$, which has $r=3$, while they correspond to imposing $\xi = 1$ in the case of local $\IF_0$, which has then $r=2$.
The canonical Weyl quantization scheme of~\cite{GHM} applied to the mirror curve in Eq.~\eqref{eq: MC2} produces a single Hermitian differential operator $\mO_S$ acting on $L^2(\IR)$, whose inverse is denoted by $\rho_S = \mO_S^{-1}$.
We stress that, in what follows, $X$ will be a generic toric CY threefold, and the simplified genus one case will only be considered when explicitly stated.

\subsection{Standard and NS topological strings from the refinement} \label{sec: strings}
The total free energy of the A-model conventional topological string with target $X$ is formally given by the generating series\footnote{The superscript WS in Eq.~\eqref{eq: WS} stands for worldsheet.}
\be \label{eq: WS}
F^{\text{WS}}(\bm{t}, g_s) = \sum_{g \ge 0} F_g(\bm{t}) \, g_s^{2g-2} \, ,
\ee
where the variable $g_s$ is the topological string coupling constant, and $F_g(\bm{t})$ is the free energy at fixed worldsheet genus $g \ge 0$.
In the so-called large radius limit $\Re(t_i) \gg 1$ of the moduli space of $X$, Eq.~\eqref{eq: WS} has the expansion~\cite{GV}
\be \label{eq: WS2}
F^{\text{WS}}(\bm{t}, g_s) = \frac{1}{6 g_s^2} \sum_{i,j,k=1}^s a_{i j k} t_i t_j t_k + \sum_{i=1}^s b_i t_i + \sum_{g \ge 2} C_g g_s^{2g-2} + F^{\text{GV}}(\bm{t}, g_s) \, ,
\ee
where $\bm{d} = (d_1, \dots, d_s)$ is a vector of non-negative integers representing a class in the two-homology group $\text{H}_2(X, \IZ)$, called vector of degrees, the coefficients $a_{i j k}, \, b_i$ are cubic and linear couplings characterizing the perturbative genus zero and genus one topological amplitudes, the constant $C_g$ is the so-called constant map contribution~\cite{BCOV1, BCOV2}, and $F^{\text{GV}}(\bm{t}, g_s)$ is given by the formal power series
\be \label{eq: GV}
F^{\text{GV}}(\bm{t}, g_s) = \sum_{g \ge 0} \sum_{\bm{d}} \sum_{w=1}^{\infty} n_g^{\bm{d}} \, \frac{1}{w} \left( 2 \sin\frac{w g_s}{2} \right)^{2g-2} \, \re^{-w \bm{d} \cdot \bm{t}} \, ,
\ee
where $n_g^{\bm{d}} \in \IZ$ is the Gopakumar--Vafa enumerative invariant~\cite{PT} of $X$ at genus $g$ and degree $\bm{d}$. 

When defined on a toric CY manifold, the topological string partition function can be engineered as a special limit of the instanton partition function of Nekrasov~\cite{N}. A more general theory, known as refined topological string theory~\cite{IKV, KW, HK}, is constructed by splitting the string coupling constant into two independent parameters as $g_s^2 = - \epsilon_1 \epsilon_2$, where $\epsilon_1, \epsilon_2$ correspond to the two equivariant rotations of the space-time $\IC^2$. Together with $g_s$, a second coupling constant $\hbar = \epsilon_1 + \epsilon_2$ is introduced and identified with the quantum deformation parameter which appears in the quantization of the classical spectral curve in Eq.~\eqref{eq: MC} in the mirror B-model.
The total free energy of the A-model refined topological string on $X$ at large radius has a double perturbative expansion in $g_s$ and $\hbar$ of the form~\cite{CB, CB2, BMS}
\be \label{eq: ref}
F(\bm{t}, \epsilon_1, \epsilon_2) = \sum_{g, n \ge 0} F_{g, n}(\bm{t}) \, g_s^{2g-2} \, \hbar^{2n} \, , 
\ee
from which the genus expansion of the standard topological string in Eq.~\eqref{eq: WS} is recovered in the limit $g_s = \epsilon_1 = - \epsilon_2$, and we have $F_g(\bm{t}) = F_{g, 0}(\bm{t})$, $g \ge 0$.
Another remarkable one-parameter specialization of the refined theory is obtained when one of the two equivariant parameters $\epsilon_1, \epsilon_2$ is sent to zero and the other is kept finite, \textit{e.g.}, $\epsilon_2 \rightarrow 0$ while $\hbar = \epsilon_1$ is fixed, which is known as the Nekrasov--Shatashvili (NS) limit~\cite{NS}. 
Since the refined total free energy in Eq.~\eqref{eq: ref} has a simple pole in this limit, the NS total free energy is defined as the one-parameter generating series\footnote{The superscript NS in Eq.~\eqref{eq: NS} stands for Nekrasov--Shatashvili.}
\be \label{eq: NS}
F^{\rm NS}(\bm{t}, \hbar) = \lim_{\epsilon_2 \rightarrow 0} -\epsilon_2 F(\bm{t}, \epsilon_1, \epsilon_2) = \sum_{n \ge 0} F_n^{\rm NS}(\bm{t}) \, \hbar^{2n-1} \, ,
\ee
where $F_n^{\rm NS}(\bm{t}) = F_{0,n}(\bm{t})$ denotes the NS topological amplitude at fixed order $n$ in $\hbar$. 
In the refined framework, the Gopakumar--Vafa invariants are generalized to a wider set of integer enumerative invariants, called the refined BPS invariants~\cite{CKK, NO}. We denote them by $N_{j_{\rm L}, j_{\rm R}}^{\bm{d}}$, where $j_{\rm L}, j_{\rm R}$ are two non-negative half-integers, and $\bm{d}$ is the degree vector. The perturbative expansion at large radius of the NS total free energy is expressed as the generating functional
\be \label{eq: NS2}
\ba
F^{\rm NS}(\bm{t}, \hbar) &= \frac{1}{6 \hbar} \sum_{i,j,k=1}^s a_{i j k} t_i t_j t_k + \hbar \sum_{i=1}^s b_i^{\rm NS} t_i \\
&+ \sum_{j_{\rm L}, j_{\rm R}} \sum_{\bm{d}} \sum_{w=1}^{\infty} N_{j_{\rm L}, j_{\rm R}}^{\bm{d}} \, \frac{\sin\frac{\hbar w}{2}(2 j_{\rm L}+1) \, \sin\frac{\hbar w}{2}(2 j_{\rm R}+1)}{2 w^2 \sin^3\frac{\hbar w}{2}} \, \re^{-w \bm{d} \cdot \bm{t}} \, ,
\ea
\ee
which reproduces Eq.~\eqref{eq: NS} when expanded in powers of $\hbar$. The coefficients $a_{i j k}$ are the same ones that appear in Eq.~\eqref{eq: GV}, while the constants $b_i^{\rm NS}$ can be obtained via mirror symmetry~\cite{KW, HK}.
Furthermore, $F_0^{\rm NS}(\bm{t}) = F_0(\bm{t})$, and the higher-order NS free energies are given by the perturbative WKB quantum corrections to the classical prepotential~\cite{MiMo, ACDKV}. Namely, the refined topological string free energy in the NS limit is the quantum prepotential associated to the quantum-deformed version of the classical B-period of the differential $\lambda$ in Eq.~\eqref{eq: periods}.

We recall that the total grand potential of topological string theory on $X$ is defined as the sum~\cite{HMMO}
\be \label{eq: J_total}
J(\bm{\mu}, \bm{\xi}, \hbar) = J^{\rm WS}(\bm{\mu}, \bm{\xi}, \hbar) + J^{\rm WKB}(\bm{\mu}, \bm{\xi}, \hbar) \, .
\ee
The worldsheet grand potential is obtained from the generating functional of Gopakumar--Vafa invariants of $X$ in Eq.~\eqref{eq: GV} as
\be \label{eq: J_WS}
J^{\rm WS}(\bm{\mu}, \bm{\xi}, \hbar) = F^{\text{GV}} \left(\frac{2 \pi}{\hbar} \bm{t}(\hbar) + \pi \ri \bm{B}, \frac{4 \pi^2}{\hbar} \right) \, ,
\ee
where $\bm{t}(\hbar)$ is the quantum mirror map in Eq.~\eqref{eq: qMM}, and $\bm{B}$ is a constant vector determined by the geometry, called B-field, whose presence has the effect to introduce a sign $(-1)^{w \bm{d} \cdot \bm{B}}$ in the series in Eq.~\eqref{eq: GV}. 
The all-genus worldsheet generating functional above encodes the non-perturbative contributions in $\hbar$ due to complex instantons, which are contained in the standard topological string.
Note that there is a strong-weak coupling duality between the spectral theory of the operators arising from the quantization of the mirror curve $\Sigma$ and the standard topological string theory on $X$. Namely, 
\be \label{eq: duality}
g_s = \frac{4 \pi^2}{\hbar} \, .
\ee
The WKB grand potential is obtained from the NS generating functional in Eq.~\eqref{eq: NS2} as
\be \label{eq: J_WKB}
\ba
J^{\rm WKB}(\bm{\mu}, \bm{\xi}, \hbar) &= \sum_{i=1}^s \frac{t_i(\hbar)}{2 \pi} \frac{\partial F^{\rm NS}(\bm{t}(\hbar), \hbar)}{\partial t_i} + \frac{\hbar^2}{2 \pi} \frac{\partial}{\partial \hbar} \left( \frac{1}{\hbar} F^{\rm NS}(\bm{t}(\hbar), \hbar) \right) \\
&+ \frac{2 \pi}{\hbar} \sum_{i=1}^s b_i t_i(\hbar) + A(\bm{\xi}, \hbar) \, ,
\ea
\ee
where the derivative with respect to $\hbar$ in the second term on the RHS does not act on the $\hbar$-dependence of the quantum mirror map $\bm{t}(\hbar)$. 
The coefficients $b_i$ are the same ones appearing in Eq.~\eqref{eq: WS2}, while the function $A(\bm{\xi}, \hbar)$ is not known in closed form for arbitrary geometries, although it has been conjectured in many examples.
The all-orders WKB generating functional above takes into account the perturbative corrections in $\hbar$ to the quantum-mechanical spectral problem associated to $X$, which are captured by the NS refined topological string.
The total grand potential of $X$ can then be expressed as a formal power series expansion in the large radius limit $t_i \rightarrow \infty$ with the structure 
\be \label{eq: J_total2}
J(\bm{\mu}, \bm{\xi}, \hbar) = \frac{1}{12 \pi \hbar} \sum_{i,j,k =1}^s a_{i j k} t_i t_j t_k + \sum_{i = 1}^s \left( \frac{2 \pi}{\hbar} b_i + \frac{\hbar}{2 \pi} b_i^{\rm NS} \right) t_i + \CO(\re^{-t_i}, \re^{- 2 \pi t_i/\hbar}) \, ,
\ee
where the infinitesimally small corrections in $\re^{-t_i}$, $\re^{- 2 \pi t_i/\hbar}$ have $\hbar$-dependent coefficients. Rigorous results on the properties of convergence of this expansion are missing. However, extensive evidence suggests that it is analytic in a neighborhood of $t_i \rightarrow \infty$ when $\hbar$ is real~\cite{HMMO, CGM}, while it appears to inherit the divergent behavior of the generating functionals in Eqs.~\eqref{eq: WS2} and~\eqref{eq: NS2} for complex $\hbar$.

\subsection{The TS/ST correspondence} \label{sec: TSST}
Recall that the quantization of the mirror curve $\Sigma$ to the toric CY threefold $X$ naturally leads to the quantum-mechanical operators $\rho_j$, $j=1, \dots, g_{\Sigma}$, acting on $L^2(\IR)$, which are defined in Eq.~\eqref{eq: rho}.
It was conjectured in~\cite{GHM, CGM2}, and rigorously proved in~\cite{KM} in several examples, that the operators $\rho_j$ are positive-definite and of trace class, therefore possessing discrete, positive spectra, provided the mass parameters $\bm{\xi}$ of the mirror CY $\hat{X}$ satisfy suitable reality and positivity conditions. 
As shown in~\cite{CGM2, CGuM}, one can define a spectral (or Fredholm) determinant $\Xi_X(\bm{\kappa}, \bm{\xi}, \hbar)$ associated to the set of operators $\rho_j$, which is an entire function on the moduli space parametrized by $\bm{\kappa}$. 
The fermionic spectral traces $Z_X(\bm{N}, \bm{\xi}, \hbar)$, where $N_j$ is a non-negative integer for $j=1, \dots, g_{\Sigma}$, are then defined by a power series expansion of the analytically continued spectral determinant $\Xi_X(\bm{\kappa}, \bm{\xi}, \hbar)$ around the point $\bm{\kappa}=0$ in the moduli space of $X$, known as the orbifold point. Namely, 
\be \label{eq: expXi}
\Xi_X(\bm{\kappa}, \bm{\xi}, \hbar) = \sum_{N_1 \ge 0} \cdots \sum_{N_{g_{\Sigma}} \ge 0} Z_X(\bm{N}, \bm{\xi}, \hbar) \, \kappa_1^{N_1} \cdots \kappa_{g_{\Sigma}}^{N_{g_{\Sigma}}} \, , 
\ee
with $Z_X(0, \dots, 0, \bm{\xi}, \hbar) = 1$. Classical results in Fredholm theory~\cite{S1, S2, Gr} provide explicit determinant expressions for the fermionic spectral traces, which can be regarded as multi-cut matrix model integrals\footnote{The connection between fermionic spectral traces and matrix models has been developed in~\cite{MZ, KMZ}.}.

Based on the previous insights of~\cite{MP, HMO, HMO2, HMMO, KaMa, HW}, a conjectural duality was recently proposed in~\cite{GHM, CGM2}, which relates the topological string theory on a toric CY manifold to the spectral theory of the quantum-mechanical operators on the real line which are obtained by quantization of the corresponding mirror curve. This is known as the TS/ST correspondence, and it is now supported by a large amount of evidence obtained in applications to concrete examples~\cite{KM, MZ, KMZ, GuKMR, OK}. We refer to the detailed review in~\cite{reviewM} and references therein.
The main conjectural statement of the TS/ST correspondence provides exact expressions for the spectral determinant and the fermionic spectral traces in terms of the standard and NS topological string amplitudes on $X$. More precisely,
\be \label{eq: GHM}
\Xi_X(\bm{\kappa}, \bm{\xi}, \hbar) = \sum_{\bm{n} \in \IZ^{g_{\Sigma}}} \exp(J(\bm{\mu} + 2 \pi \text{i} \bm{n}, \bm{\xi}, \hbar)) = \re^{J(\bm{\mu}, \bm{\xi}, \hbar)} \Theta(\bm{\mu}, \bm{\xi}, \hbar) \, , 
\ee
where the sum over $\bm{n} \in \IZ^{g_{\Sigma}}$ produces a periodic function in the chemical potentials $\mu_j$, which can be equivalently recast by factoring out a quantum-deformed Riemann theta function $\Theta(\bm{\mu}, \bm{\xi}, \hbar)$. 
It follows that the fermionic spectral traces $Z_X(\bm{N}, \bm{\xi}, \hbar)$, $N_j \ge 0$, $j=1, \dots, g_{\Sigma}$, are determined by the orbifold expansion of the topological string theory on $X$.
Note that the expression on the RHS of Eq.~\eqref{eq: GHM} can be interpreted as a well-defined large-$\mu_j$ expansion in powers of $\re^{-\mu_j}$, $\re^{-2 \pi \mu_j/\hbar}$. Indeed, the total grand potential and the quantum theta function appear to have a common region of converge in a neighborhood of the limit $\mu_j \rightarrow \infty$, which corresponds to the large radius point of moduli space. However, being the spectral determinant an entire function of $\bm{\kappa}$, the conjecture in Eq.~\eqref{eq: GHM} implies that such a product in the RHS is, indeed, entire in $\bm{\mu}$.
Moreover, Eqs.~\eqref{eq: GHM} and~\eqref{eq: expXi} lead to an integral formula for $Z_X(\bm{N}, \bm{\xi}, \hbar)$ as an appropriate residue at the origin $\bm{\kappa}=0$. Namely,~\cite{HMO, GHM, CGM2}
\be \label{eq: contour}
Z_X(\bm{N}, \bm{\xi}, \hbar) = \frac{1}{(2 \pi \text{i})^{g_{\Sigma}}} \int_{- \text{i} \infty}^{\text{i} \infty} \rd \mu_1 \cdots \int_{- \text{i} \infty}^{\text{i} \infty} \rd \mu_{g_{\Sigma}} \, \exp(J(\bm{\mu}, \bm{\xi}, \hbar) - \bm{N} \cdot \bm{\mu}) \, ,
\ee
where the integration contour along the imaginary axes can be suitably deformed to make the integral convergent. 
Because of the trace-class property of the quantum operators $\rho_j$, the fermionic spectral traces  $Z_X(\bm{N}, \bm{\xi}, \hbar)$ are well-defined functions of $\hbar \in \IR_{>0}$, and, although being initially defined for positive integer values of $N_j$, the Airy-type integral in Eq.~\eqref{eq: contour} naturally extends them to entire functions of $\bm{N} \in \IC^{g_{\Sigma}}$~\cite{CGM}. 
In what follows, for simplicity, we will drop from our notation the explicit dependence on $\bm{\xi}$ of $\Xi_X(\bm{\kappa}, \bm{\xi}, \hbar)$ and $Z_X(\bm{N}, \bm{\xi}, \hbar)$.

\sectiono{Stokes constants in topological string theory} \label{sec: main}
In this section, we review how the resurgent analysis of formal power series with factorial growth unveils a universal mathematical structure, which involves a set of numerical data called Stokes constants. 
Following the recent works of~\cite{GrGuM, GGuM, GGuM2, GuM, GuM2, GuM3}, we apply the theory of resurgence to the asymptotic series that arise naturally as appropriate perturbative expansions in a strongly-coupled limit of the topological string on a toric CY threefold, and we make a general proposal on the resurgent structure of these series. 
See~\cite{ABS, MS, Dorigoni} for a formal introduction to the resurgence of asymptotic expansions and~\cite{lecturesM, bookM} for its application to gauge and string theories. The resurgent structure of topological string theory on the special geometry of the resolved conifold has been studied in~\cite{PS, ASTT, GHN, AHT}.

\subsection{Notions from the theory of resurgence} \label{sec: resurgence}
Let $\phi(z)$ be a factorially divergent formal power series of the form
\be \label{eq: phi}
\phi(z) = z^{-\alpha} \sum_{n=0}^{\infty} a_n z^n \in z^{-\alpha} \IC[\![z]\!] \, , \quad a_n \sim A^{-n} n! \quad n \gg 1 \, ,
\ee
for some constants $\alpha \in \IR \backslash \IZ_{+}$ and $A \in \IR$, which is a Gevrey-1 asymptotic series. Its Borel transform
\be \label{eq: phihat}
\hat{\phi}(\zeta) = \sum_{k=0}^{\infty} \frac{a_k}{\Gamma(k-\alpha+1)} \zeta^{k-\alpha}
\ee
is a holomorphic function in an open neighborhood of $\zeta=0$ of radius $|A|$. 
When extended to the complex $\zeta$-plane, also known as Borel plane, $\hat{\phi}(\zeta)$ will show a (possibly infinite) set of singularities $\zeta_{\omega} \in \IC$, which we label by the index $\omega \in \Omega$. 
A ray in the Borel plane of the form 
\be \label{eq: ray}
\CC_{\theta_{\omega}} = \re^{\text{i} \theta_{\omega}} \IR_{+} \, , \quad \theta_{\omega} = \arg (\zeta_{\omega}) \, ,
\ee 
which starts at the origin and passes through the singularity $\zeta_{\omega}$, is called a Stokes ray. 
The Borel plane is partitioned into sectors which are bounded by the Stokes rays. In each of these sectors, the Borel transform converges to a generally different holomorphic function.

We recall that a Gevrey-1 asymptotic series is called resurgent if its Borel transform has finitely many singularities on every finite line issuing from the origin and if there exists a path circumventing these singularities along which it can be analytically continued.
If, additionally, its Borel transform has only logarithmic singularities and simple poles, then it is called simple resurgent.
We will assume here that all formal power series are resurgent.
If the singularity $\zeta_{\omega}$ is a logarithmic branch cut, the local expansion of the Borel transform in Eq.~\eqref{eq: phihat} around it has the form
\be \label{eq: Stokes0}
\hat{\phi}(\zeta) = - \frac{S_{\omega}}{2 \pi \text{i}} \log(\zeta - \zeta_{\omega}) \hat{\phi}_{\omega}(\zeta - \zeta_{\omega}) + \dots \, ,
\ee
where the dots denote regular terms in $\zeta - \zeta_{\omega}$, and $S_{\omega} \in \IC$ is the Stokes constant at $\zeta_{\omega}$. If we introduce the variable $\xi = \zeta - \zeta_{\omega}$, the function
\be \label{eq: phihat2}
\hat{\phi}_{\omega}(\xi) = \sum_{k=0}^{\infty} \hat{a}_{k, \omega} \xi^{k-\beta} \, ,
\ee
where $\beta \in \IR \backslash \IZ_{+}$, is locally analytic at $\xi = 0$, and it can be regarded as the Borel transform of the Gevrey-1 asymptotic series 
\be \label{eq: phi2}
\phi_{\omega}(z) = z^{-\beta} \sum_{n=0}^{\infty} a_{n, \omega} z^n \in z^{-\beta} \IC[\![z]\!] , \, \quad a_{n, \omega} = \Gamma(n-\beta+1) \, \hat{a}_{n, \omega} \, .
\ee
Note that the value of the Stokes constant $S_{\omega}$ depends on a choice of normalization of the series $\phi_{\omega}(z)$.
If the analytically continued Borel transform $\hat{\phi}(\zeta)$ in Eq.~\eqref{eq: phihat} does not grow too fast at infinity, its Laplace transform at an arbitrary angle $\theta$ in the Borel plane is given by\footnote{Roughly, we require that the Borel transform grows at most exponentially in an open sector of the Borel plane containing the angle $\theta$.}
\be \label{eq: Laplace}
s_{\theta}(\phi)(z) = \int_0^{\re^{\text{i} \theta} \infty} \re^{-\zeta} \hat{\phi}(\zeta z) \, \rd \zeta = z^{-1}  \int_0^{\re^{\text{i} \theta} \infty} \re^{-\zeta/z} \hat{\phi}(\zeta) \, \rd \zeta \, , 
\ee
and its asymptotics near the origin reconstructs the original, divergent formal power series $\phi(z)$. 
If the Laplace integral in Eq.~\eqref{eq: Laplace}, for some choice of angle $\theta$, exists in some region of the complex $z$-plane, we say that the series $\phi(z)$ is Borel summable, and we call $s_{\theta}(\phi)(z)$ the Borel resummation of $\phi(z)$ along the direction $\theta$.
Note that the Borel resummation inherits the sectorial structure of the Borel transform. It is a locally analytic function with discontinuities at the special rays identified by 
\be
\arg(z)=\arg(\zeta_{\omega}) \, , \quad \omega \in \Omega \, .
\ee 
The discontinuity across $\theta$ is the difference between the Borel resummations along two rays in the complex $\zeta$-plane which lie slightly above and slightly below $\CC_{\theta}$. Namely,
\be \label{eq: disc}
\text{disc}_{\theta}\phi(z) = s_{\theta_+}(\phi)(z) - s_{\theta_-}(\phi)(z) = \int_{\mathcal{C}_{\theta_+} - \, \mathcal{C}_{\theta_-}} \re^{-\zeta} \hat{\phi}(\zeta z) \,  \rd \zeta \, , 
\ee
where $\theta_{\pm}= \theta \pm \epsilon$, for some small positive angle $\epsilon$, and $\mathcal{C}_{\theta_{\pm}}$ are the corresponding rays.
A standard contour deformation argument shows that the two lateral Borel resummations differ by exponentially small terms. More precisely, 
\be \label{eq: Stokes1}
\text{disc}_{\theta}\phi(z) = \sum_{\omega  \in \Omega_{\theta}} S_{\omega} \re^{-\zeta_{\omega}/z} s_{\theta_-}(\phi_{\omega})(z) \, ,
\ee
where the index $\omega$ labels the singularities $\zeta_{\omega}$ such that $\arg (\zeta_{\omega}) = \theta$, while $\phi_{\omega}(z)$ is the formal power series in Eq.~\eqref{eq: phi2}, and the complex numbers $S_{\omega}$ are the same Stokes constants which appear in Eq.~\eqref{eq: Stokes0}. 
If we regard the lateral Borel resummations as operators, the Stokes automorphism $\mathfrak{S}_{\theta}$ is defined by the convolution
\be \label{eq: autStokes}
s_{\theta_+} = s_{\theta_-} \circ \mathfrak{S}_{\theta} \, ,
\ee
and the discontinuity formula in Eq.~\eqref{eq: Stokes1} has the equivalent, more compact form
\be
\mathfrak{S}_{\theta}(\phi) = \phi +  \sum_{\omega  \in \Omega_{\theta}} S_{\omega} \re^{-\zeta_{\omega}/z} \phi_{\omega} \, .
\ee
Moreover, the Stokes automorphism can be written as
\be \label{eq: alienStokes}
\mathfrak{S}_{\theta} = \exp \left( \sum_{\omega \in \Omega_{\theta}} \re^{-\zeta_{\omega}/z} \Delta_{\zeta_{\omega}} \right) \, ,
\ee
where $\Delta_{\zeta_{\omega}}$ is the alien derivative associated to the singularity $\zeta_{\omega}$, $\omega \in \Omega_{\theta}$. A short introduction to alien calculus is provided in Appendix~\ref{app: alien}.

As we have shown, moving from one formal power series to a new one encoded in the singularity structure of its Borel transform, we eventually build a whole family of asymptotic series from the single input in Eq.~\eqref{eq: phi}. We can then repeat the procedure with each new series obtained in this way. We denote by $S_{\omega \omega'} \in \IC$ the Stokes constants of the asymptotic series $\phi_{\omega}(z)$. Let us define the basic trans-series
\be
\Phi_{\omega}(z) = \re^{-\zeta_{\omega}/z} \phi_{\omega}(z) \, ,
\ee
for each $\omega \in \Omega$, such that its Borel resummation along $\theta$ is given by
\be
s_{\theta}(\Phi_{\omega})(z) = \re^{-\zeta_{\omega}/z} s_{\theta}(\phi_{\omega})(z) \, ,
\ee
and the corresponding Stokes automorphism acts as
\be
\mathfrak{S}_{\theta}(\Phi_{\omega}) = \Phi_{\omega} + \sum_{\omega' \in \Omega_{\theta}} S_{\omega \omega'} \Phi_{\omega'} \, .
\ee
The minimal resurgent structure associated to $\phi(z)$ is defined as the smallest set of basic trans-series which resurge from it and which forms a closed set under Stokes automorphisms. We denote it by~\cite{GuM}
\be
\mathfrak{B}_{\phi} = \left\{ \Phi_{\omega}(z) \right\}_{\omega \in \bar{\Omega}} \, , \quad 
\ee
where $\bar{\Omega} \subseteq \Omega$. We observe that the minimal resurgent structure does not necessarily include all the basic trans-series arising from $\phi(z)$. 
As pointed out in~\cite{GuM}, an example of this situation is provided by complex Chern--Simons theory on the complement of a hyperbolic knot.
In this paper, we will focus on the (possibly infinite-dimensional) matrix of Stokes constants indexed by the distinct basic trans-series in the minimal resurgent structure of $\phi(z)$, as they incorporate information about the non-analytic content of the original asymptotic series. Namely,
\be
\mathcal{S}_{\phi} = \{S_{\omega \omega'} \}_{\omega, \omega' \in \bar{\Omega}} \, .
\ee

\subsection{The resurgent structure of topological strings} \label{sec: resurgent_strings}
We want to apply the machinery described above to understand the resurgent structure of the asymptotic series which arise naturally from the perturbative expansion of the refined topological string on a toric CY threefold $X$ in a specific scaling limit of the coupling constants. Under the assumption that the given series are resurgent, this will give us access to the hidden sectors of the topological string which are invisible in perturbation theory.
Let us go back to the definition of the fermionic spectral traces $Z_X(\bm{N}, \hbar)$, $N_j \ge 0$,  $j = 1, \dots, g_{\Sigma}$, in Eq.~\eqref{eq: expXi}. 
Building on numerical evidence obtained in some concrete genus-one examples, it was conjectured in~\cite{GuM} that the Stokes constants appearing in the resurgent structure of these objects, when perturbatively expanded in the limit $\hbar \rightarrow \infty$, with $\bm{N}$ fixed, are non-trivial integer invariants of the geometry related to the counting of BPS states. 
In this paper, we will illustrate how the same resurgent machine advocated in~\cite{GuM} can be applied to the asymptotic series that emerge in the dual semiclassical limit $\hbar \rightarrow 0$ of the spectral theory. Note that this corresponds to the strongly-coupled regime $g_s \rightarrow \infty$ of the topological string theory via the TS/ST correspondence. We will explore, in practice, the best known examples of toric CY threefolds with one true complex modulus.

Along the lines of~\cite{GuM}, let us describe a conjectural proposal for the resurgent structure of the topological string in the limit $\hbar \rightarrow 0$, which is supported by the concrete results obtained in the examples of local $\IP^2$ and local $\IF_0$ in Sections~\ref{sec: localP2} and~\ref{sec: localF0}, respectively.
We consider the semiclassical perturbative expansion of the fermionic spectral trace $Z_X(\bm{N}, \hbar)$, at fixed $\bm{N}$. The corresponding family of asymptotic series 
\be \label{eq: psiN}
\psi_{\bm{N}}(\hbar) = Z_X(\bm{N}, \hbar \rightarrow 0) \, ,
\ee
indexed by the set of non-negative integers $\bm{N}$, will be the main object of study in this paper. 
We will comment in Section~\ref{sec: dual_limit} how these asymptotic expansions can be independently defined on the topological strings side of the TS/ST correspondence via the integral in Eq.~\eqref{eq: contour}. 
We denote
\be \label{eq: phiN}
\phi_{\bm{N}}(\hbar) = \log \psi_{\bm{N}}(\hbar) \, ,
\ee
for each choice of $\bm{N}$.
\begin{remark}
Note that, in order to perform a resurgent analysis of the fermionic spectral traces, it will be necessary to consider the case of complex $\hbar$.
The issue of the complexification of $\hbar$, or, equivalently, of $g_s$, in the context of the TS/ST correspondence has been addressed in various studies~\cite{Ka, KS1, KS2, GM2}. In this paper, we will assume that the TS/ST correspondence can be extended to $\hbar \in \IC' = \IC \backslash \IR_{\le 0}$ in such a way that the quantum operators $\rho_j$ remain of trace class, and the fermionic spectral traces $Z_X(\bm{N}, \hbar)$ are analytically continued to $ \hbar \in \IC'$. This assumption will be explicitly tested in the examples considered in this paper.
\end{remark}

\begin{figure}[htb!]
\center
 \includegraphics[width=0.4\textwidth]{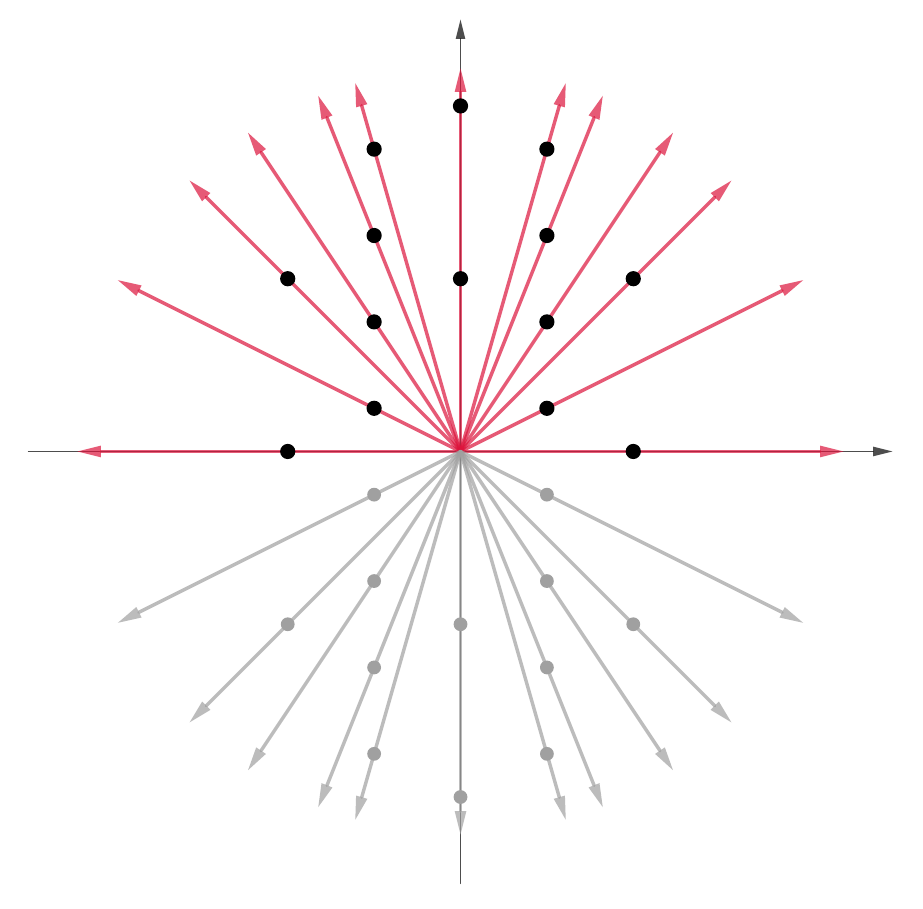}
 \caption{Infinite towers of singularities and the peacock arrangement of Stokes rays in the complex Borel plane.}
 \label{fig: peacock}
\end{figure}
We will verify in examples that the series $\phi_{\bm{N}}(\hbar)$ are Gevrey-1 and (simple) resurgent, and we assume that this is the case in general. 
Each of the formal power series in Eq.~\eqref{eq: phiN} is associated with a minimal resurgent structure $\mathfrak{B}_{\phi_{\bm{N}}}$ and a corresponding matrix of Stokes constants $\mathcal{S}_{\phi_{\bm{N}}}$. 
For fixed $\bm{N}$, we observe a finite number of Gevrey-1 asymptotic series 
\be \label{eq: seriesR}
\phi_{\sigma ; \bm{N}}(\hbar) \, , \quad \sigma \in \{ 0 , \dots , l \} \, ,
\ee
which resurge from the original perturbative expansion $\phi_{\bm{N}}(\hbar) = \phi_{0 ; \bm{N}}(\hbar)$, where the positive integer $l$ depends on $\bm{N}$ and on the CY geometry. 
For each value of $\sigma$, the singularities of the Borel transform $\hat{\phi}_{\sigma ; \bm{N}}(\zeta)$ are located along infinite towers in the Borel $\zeta$-plane, and every two singularities in the same tower are spaced by an integer multiple of some constant $\mathcal{A} \in \IC$, which depends on $\bm{N}$ and on the CY geometry. 
Such a global arrangement is known as a peacock pattern. See Fig.~\ref{fig: peacock} for a schematic illustration. It was recently conjectured in~\cite{GuM} that peacock patterns are typical of theories controlled by a quantum curve in exponentiated variables\footnote{Peacock patterns have been previously observed in complex Chern--Simons theory on the complement of a hyperbolic knot~\cite{GGuM, GGuM2} and in the weakly-coupled topological string theory on a toric CY threefold~\cite{C-SMS, GuM}.}.
Each asymptotic series $\phi_{\sigma ; \bm{N}}(\hbar)$ gives rise to an infinite family of basic trans-series, labelled by a non-negative integer $n$, that is,
\be \label{eq: bts}
\Phi_{\sigma, n ; \bm{N}}(\hbar) = \phi_{\sigma ; \bm{N}}(\hbar) \re^{- n \frac{\mathcal{A}}{\hbar}} \, , \quad n \in \IN \, ,
\ee
and the minimal resurgent structure has the particular form
\be
\mathfrak{B}_{\phi_{\bm{N}}} = \{ \Phi_{\sigma, n ; \bm{N}}(\hbar) \}_{\sigma = 0, \dots, l , n \in \IN} \, .
\ee
For fixed $\bm{N}$, the Stokes constants are labelled by two indices $\sigma, \sigma' = 0 , \dots , l$ and by the integer $n \in \IN$. Let us denote them as $S_{\sigma \sigma' , n ; \bm{N}}$. As we will find in examples, we expect the Stokes constants to be rational numbers, after choosing a canonical normalization of the asymptotic series in Eq.~\eqref{eq: seriesR}, and to be closely related to non-trivial sequences of integer constants. Moreover, we conjecture that they can be naturally organized as coefficients of generating functions in the form of $q$-series, which are determined by the original perturbative expansion in a unique way. Schematically, 
\be
S_{\sigma \sigma' ; \bm{N}}(q) = \sum_{n \in \IN}  S_{\sigma \sigma' , n ; \bm{N}} \, q^n \, ,
\ee
which we further expect to be intimately related to a non-trivial collection of topological invariants of the theory. We stress that, analogously to~\cite{GuM}, we do not yet have a direct, physical or geometrical interpretation of the proposed enumerative invariants. However, the exact solution to the resurgent structure of the first fermionic spectral trace of the local $\IP^2$ geometry, which is presented in Section~\ref{sec: localP2} for both limits $\hbar \rightarrow 0$ and $\hbar \rightarrow \infty$, shows that, when looking at the logarithm of the fermionic spectral trace, the Stokes constants have a manifest and strikingly simple arithmetic meaning as divisor sum functions. Moreover, the perturbative coefficients are encoded in $L$-functions which factorize explicitly as products of zeta functions and Dirichlet $L$-functions, while the duality between the weakly- and strongly-coupled scaling regimes emerges in anumber-theoretic form.
On the other hand, the Stokes constants for the exponentiated series in Eq.~\eqref{eq: psiN} appear to be generally complex numbers, and they can be expressed in terms of the Stokes constants of the series $\phi_{\bm{N}}(\hbar)$ by means of a closed partition-theoretic formula.
Let us stress that we do not yet have a clear understanding of the possible generalization of the arithmetic construction presented in Section~\ref{sec: localP2} to arbitrary toric CY geometries. The case of local $\IF_0$, which is analyzed partially in Section~\ref{sec: localF0} via numerical methods, is significantly more complex as the first resurgent asymptotic series shows a leading-order behavior of logarithmic type. 

\sectiono{The example of local \texorpdfstring{$\IP^2$}{P2}} \label{sec: localP2} 
The simplest example of a toric del Pezzo CY threefold is the total space of the canonical bundle over $\IP^2$, that is, $\CO(-3) \rightarrow \IP^2$, known as the local $\IP^2$ geometry. 
It has one true complex modulus $\kappa$ and no mass parameters. Its moduli space is identified with the one-parameter family of mirror curves described by the equation
\be \label{eq: mcP2}
\re^x + \re^y + \re^{-x-y} + \kappa = 0 \, , \quad x,y \in \IC \, , 
\ee
and the Batyrev coordinate $z$ is given by $z = \frac{1}{\kappa^3}$.
The large radius point, the maximal conifold point, and the orbifold point of the moduli space of local $\IP^2$ correspond to $z=0$, $z=-1/27$, and $z= \infty$, respectively.
The quantization of the mirror curve in Eq.~\eqref{eq: mcP2} gives the quantum operator
\be \label{eq: opP2}
\mO_{\IP^2}(\mx, \my) = \re^{\mx} + \re^{\my} + \re^{-\mx-\my} \, ,
\ee
acting on $L^2(\IR)$, where $\mx, \, \my$ are self-adjoint Heisenberg operators satisfying $[ \mx , \, \my ] = \text{i} \hbar$.
It was proven in~\cite{KM} that the inverse operator 
\be
\rho_{\IP^2} = \mO_{\IP^2}^{-1}
\ee 
is positive-definite and of trace class. The fermionic spectral traces of $\rho_{\IP^2}$ are well-defined and can be computed explicitly~\cite{MZ}.
In this section, we will study the resurgent structure of the first fermionic spectral trace 
\be
Z_{\IP^2}(1, \hbar) = \text{Tr}(\mrho_{\IP^2})
\ee 
in the semiclassical limit $\hbar \rightarrow 0$ and in the dual strongly-coupled limit $\hbar \rightarrow \infty$.

\subsection{Computing the perturbative series} \label{sec: WignerP2}
Let us apply the phase-space formulation of quantum mechanics to obtain the WKB expansion of the trace of the inverse operator $\rho_{\IP^2} $ at next-to-leading order (NLO) in $\hbar \rightarrow 0$, starting from the explicit expression of the operator $\mO_{\IP^2}$ in Eq.~\eqref{eq: opP2}, and following Appendix~\ref{app: Wigner}.
For simplicity, we denote by $O_W, \, \rho_W$ the Wigner transforms of the operators $\mO_{\IP^2}, \, \rho_{\IP^2}$, respectively. 
The Wigner transform of $\mO_{\IP^2}$ is obtained by performing the integration in Eq.~\eqref{eq: OW} directly. As we show in Example~\ref{ex: exampleWigner}, this simply gives the classical function
\be
O_{\rm W} = \re^{x} + \re^{y} + \re^{-x-y} \, .
\ee
Substituting it into Eqs.~\eqref{eq: G2} and~\eqref{eq: G3}, we have
\begin{subequations}
\begin{align}
\CG_2 &= -\frac{\hbar^2}{4} \left[ \re^{x+y} + \re^{-x} + \re^{-y} \right] + \CO(\hbar^4) \, , \\
\CG_3 &= -\frac{\hbar^2}{4} \re^{-2(x+y)} \left[ -3 \re^{2x+2y} + \re^{x+3y} + \re^{3x+4y} + \re^x + \left( x \leftrightarrow y \right) \right] + \CO(\hbar^4) \, , 
\end{align}
\end{subequations}
where $\left( x \leftrightarrow y \right)$ indicates the symmetric expression after exchanging the variables $x$ and $y$.
It follows from Eq.~\eqref{eq: rhoW2} that the Wigner transform of $\rho_{\IP^2}$, up to order $\hbar^2$, is then given by
\be \label{eq: rhoW2_P2}
\rho_{\rm W} = \frac{1}{O_{\rm W}} - \frac{9 \hbar^2}{4} \frac{1}{O_{\rm W}^4} + \CO(\hbar^4) \, .
\ee
We note that the same result can be obtained by solving Eq.~\eqref{eq: cosL2} order by order in powers of $\hbar^2$.
Integrating Eq.~\eqref{eq: rhoW2_P2} over phase space, as in Eq.~\eqref{eq: tracePS}, we obtain the NLO perturbative expansion in $\hbar$ of the trace, that is, 
\be
\text{Tr}(\rho_{\IP^2}) = \frac{1}{2 \pi \hbar} \int_{\IR^2} \rho_{\rm W} \, \rd x \rd y =  \frac{1}{2 \pi \hbar} \int_{\IR^2} \frac{1}{O_{\rm W}} \, \rd x \rd y - \frac{9 \hbar}{8 \pi} \int_{\IR^2} \frac{1}{O_{\rm W}^4} \, \rd x \rd y + \CO(\hbar^4)  \, ,
\ee
and evaluating the integrals explicitly, we find
\be \label{eq: WWfinalP2}
\text{Tr}(\rho_{\IP^2}) = \frac{\Gamma\left(\frac{1}{3}\right)^3}{6 \pi \hbar} \left\{ 1 - \frac{\hbar^2}{72} + \CO(\hbar^4) \right\} \, ,
\ee
where $\Gamma(z)$ denotes the Gamma function.
We stress that the phase-space formalism adopted above provides, in principle, the perturbative expansion of $\text{Tr}(\rho_{\IP^2})$ at all orders in $\hbar$ by systematically extending all intermediate computations beyond order $\hbar^2$. It is not, however, the most practical path. 

The integral kernel for the operator $\rho_{\IP^2}$ is given by~\cite{MZ}
\be \label{eq: P2kernel}
\rho_{\IP^2}(x_1, \, x_2) = \frac{\re^{\pi \mb (x_1+x_2) /3}}{2 \mb \cosh(\pi (x_1-x_2)/\mb + \ri \pi /6)} \frac{\Phi_{\mb} (x_2 + \ri \mb /3) }{\Phi_{\mb} (x_1 - \ri \mb/3)} \, ,
\ee
where $\mb$ is related to $\hbar$ by 
\be 
2 \pi \mb^2 = 3 \hbar \, , 
\ee
and $\Phi_{\mb}$ denotes Faddeev's quantum dilogarithm. Note that the integral kernel in Eq.~\eqref{eq: P2kernel} is well-defined for $\hbar \in \IC'$, since $\Phi_{\mb}$ can be analytically continued to all values of $\mb$ such that $\mb^2 \notin \IR_{\le 0}$. 
A summary of the properties of this function is provided in Appendix~\ref{app: Faddeev}. In what follows, we will assume that $\Re (\mb) >0$.
The first spectral trace has the integral representation~\cite{MZ}
\be \label{eq: P2int}
\text{Tr}(\rho_{\IP^2}) = \frac{1}{\sqrt{3} \mb} \int_{\IR} \re^{2 \pi \mb x /3} \frac{\Phi_{\mb} (x + \ri \mb/3)}{\Phi_{\mb} (x - \ri \mb/3)} \, \rd x \, ,
\ee
which is an analytic function of $\hbar \in \IC'$. As pointed out in~\cite{GuM}, the analytic continuation of the first spectral trace obtained in this way matches the natural analytic continuation of the total grand potential of local $\IP^2$ to complex values of $\hbar$ such that $\Re (\hbar) >0$. The TS/ST correspondence is, then, still applicable.
The integral in Eq.~\eqref{eq: P2int} can be evaluated by using the integral Ramanujan formula, or by analytically continuing $x$ to the complex domain, completing the integration contour from above, and summing over residues, yielding the closed formula~\cite{KM}
\be \label{eq: P2close}
\text{Tr}(\rho_{\IP^2}) = \frac{1}{\sqrt{3} \mb} \re^{- \frac{\pi \ri}{36} (12 c_{\mb}^2+4 \mb^2-3)} \frac{\Phi_{\mb} \left( c_{\mb} - \frac{\ri \mb}{3} \right)^2}{\Phi_{\mb} \left( c_{\mb} - \frac{2 \ri \mb}{3} \right)} = \frac{1}{\mb} \left| \Phi_{\mb} \left( c_{\mb} - \frac{\ri \mb}{3} \right)\right|^3 \, ,
\ee
where $c_{\mb} = \ri (\mb + \mb^{-1})/2$. 
Moreover, the expression in Eq.~\eqref{eq: P2close} can be factorized into a product of $q$- and $\tilde{q}$-series by applying the infinite product representation in Eq.~\eqref{eq: seriesPhib}. 
Namely, we have that 
\be
\Phi_{\mb} \left( c_{\mb} - \frac{\ri \mb}{3} \right) =\frac{(q^{2/3} ; \, q)_{\infty}}{(w^{-1} ; \, \tilde{q})_{\infty}}  \, , \quad \Phi_{\mb} \left( c_{\mb} - \frac{2 \ri \mb}{3} \right) = \frac{(q^{1/3} ; \, q)_{\infty}}{(w ; \, \tilde{q})_{\infty}} \, ,
\ee
where $(x q^{\alpha}; \, q)_{\infty}$ is the quantum dilogarithm defined in Eq.~\eqref{eq: dilog}, and therefore
\be \label{eq: P2fact}
\text{Tr}(\rho_{\IP^2}) = \frac{1}{\sqrt{3} \mb} \re^{- \frac{\pi \ri}{36} \mb^2 + \frac{\pi \ri}{12} \mb^{-2} + \frac{\pi \ri}{4}}  \frac{(q^{2/3} ; \, q)_{\infty}^2}{(q^{1/3} ; \, q)_{\infty}} \frac{(w ; \, \tilde{q})_{\infty}}{(w^{-1} ; \, \tilde{q})_{\infty}^2} \, ,
\ee
where $q = \re^{2 \pi \text{i} \mb^2}$, $\tilde{q} = \re^{- 2 \pi \text{i} \mb^{-2}}$, and $w = \re^{2 \pi \ri /3}$. Note that the factorization in Eq.~\eqref{eq: P2fact} is not symmetric in $q, \, \tilde{q}$.
We assume that $\Im(\mb^2) > 0$, which implies $|q|, |\tilde{q}| < 1$, so that the $q$- and $\tilde{q}$-series converge.

Let us consider the formula in Eq.~\eqref{eq: P2fact} and derive its all-orders perturbative expansion in the limit $\hbar \rightarrow 0$. 
The anti-holomorphic blocks contribute the constant factor
\be \label{eq: antiblockP2}
\frac{(w ; \, \tilde{q})_{\infty}}{(w^{-1} ; \, \tilde{q})_{\infty}^2} \sim \frac{1- w}{(1-w^{-1})^2} = \frac{- \ri}{\sqrt{3}} \, .
\ee
Applying the known asymptotic expansion formula for the quantum dilogarithm in Eq.~\eqref{eq: logPhiK}, with the choice of $\alpha=1/3, 2/3$, and recalling the identities~\cite{Gamma}
\be
\Gamma(2/3) = \frac{2 \pi}{\sqrt{3}} \Gamma(1/3)^{-1} \, , \quad B_{2n+1}= 0, \, \quad B_n(1/3) = (-1)^n B_n(2/3)  \, ,
\ee
where $n \in \IN$, we have that
\be \label{eq: blockP2}
\ba
2 \log (q^{2/3} ; \, q)_{\infty} - \log (q^{1/3} ; \, q)_{\infty} = &- \frac{\pi \ri}{12} \mb^{-2} - \frac{1}{2} \log(- 2 \pi \ri \mb^2) + \log \left( 3 \frac{ \Gamma(1/3)^3}{(2 \pi)^{3/2}} \right) \\
&+ \frac{\pi \ri}{36} \mb^2 - 3 \sum_{n=1}^{\infty} (2 \pi \ri \mb^2)^{2n} \frac{B_{2n} B_{2n+1}(2/3)}{2n (2n+1)!} \, ,
\ea
\ee
where $B_n(z)$ is the $n$-th Bernoulli polynomial, $B_n = B_n(0)$ is the $n$-th Bernoulli number, and $\Gamma(z)$ is the gamma function.
We note that the terms of order $\mb^2$ and $\mb^{-2}$ cancel with the opposite contributions from the exponential in Eq.~\eqref{eq: P2fact}, so that there is no global exponential pre-factor. However, the logarithmic term in $\mb^2$ gives a global pre-factor of the form $1/\mb^2$ after the exponential expansion.
Substituting Eqs.~\eqref{eq: antiblockP2} and~\eqref{eq: blockP2} into Eq.~\eqref{eq: P2fact}, and using $2 \pi \mb^2 = 3 \hbar$, we obtain the all-orders semiclassical expansion of the first spectral trace of local $\IP^2$ in the form\footnote{The formula in Eq.~\eqref{eq: expP2} has also been obtained in~\cite{unpublished_Hatsuda}.}
\be \label{eq: expP2}
\text{Tr}(\rho_{\IP^2}) = \frac{\Gamma\left(\frac{1}{3}\right)^3}{6 \pi \hbar} \exp \left( 3 \sum_{n = 1}^{\infty} (-1)^{n-1} \frac{B_{2n} B_{2n+1}(2/3)}{2n (2n+1)!} (3 \hbar)^{2n} \right) \, ,
\ee
which has coefficients in $\IQ$ of alternating sign up to the global pre-factor. We comment that $\omega_2 = \Gamma(1/3)^3/4 \pi$ is the real half-period of the Weierstrass elliptic function in the equianharmonic case, which corresponds to the elliptic invariants $g_2 = 0$ and $g_3 = 1$, while the other half-period is $\omega_1 = \re^{\pi \ri/3} \omega_2$~\cite{Constants}.
The formula in Eq.~\eqref{eq: expP2} allows us to compute the coefficients of the perturbative series for $Z_{\mathbb{P}^2}(1, \hbar \rightarrow 0)$ at arbitrarily high order. The first few terms are
\be \label{eq: first-terms-P2-zero}
1 - \frac{\hbar^2}{72} + \frac{23  \hbar^4}{51840} - \frac{491 \hbar^6}{11197440} + \frac{1119703 \hbar^8}{112870195200} -\frac{71569373 \hbar^{10}}{17878638919680} + \text{O}(\hbar^{12}) \, ,
\ee
multiplied by the global pre-factor in Eq.~\eqref{eq: expP2}, which confirms our analytic calculation at NLO in Eq.~\eqref{eq: WWfinalP2}.

\subsection{Exact solution to the resurgent structure for \texorpdfstring{$\hbar \rightarrow 0$}{hbar going to zero}} \label{sec: P2zero}

\subsubsection{Resumming the Borel transform} \label{sec: borel-zero}
Let us denote by $\phi(\hbar)$ the formal power series appearing in the exponent in Eq.~\eqref{eq: expP2}. Namely, 
\be \label{eq: phiP2}
\phi(\hbar) = \sum_{n=1}^{\infty} a_{2n} \hbar^{2n} \in \IQ[\![\hbar]\!] \, , \quad a_{2n} = (-1)^{n-1} \frac{B_{2n} B_{2n+1}(2/3)}{2n (2n+1)!} 3^{2n+1}  \quad n \ge 1 \, ,
\ee
which is simply related to the perturbative expansion in the limit $\hbar \rightarrow 0$ of the logarithm of the first spectral trace of local $\IP^2$ by 
\be
\log \text{Tr}(\rho_{\IP^2}) = \phi(\hbar) + 3 \log \Gamma(1/3) - \log (6 \pi \hbar) \, .
\ee
We recall that the Bernoulli polynomials have the asymptotic behavior~\cite{Dilcher}
\be \label{eq: asymBernoulli}
B_{2n}(z) \sim 2 (-1)^{n-1} \cos(2 \pi z) \frac{(2n)!}{(2 \pi)^{2n}} \, , \quad  B_{2n+1}(z) \sim 2 (-1)^{n-1} \sin(2 \pi z) \frac{(2n+1)!}{(2 \pi)^{2n+1}} \, , 
\ee
for $n \gg 1$. It follows that the coefficients of $\phi(\hbar)$ satisfy the factorial growth
\be
a_{2n} \sim (-1)^n (2n)! \left(\frac{4 \pi^2}{3} \right)^{-2n}  \quad n \gg 1 \, ,
\ee
and $\phi(\hbar)$ is a Gevrey-1 asymptotic series. 
Its Borel transform is given by
\be \label{eq: hatphizeta1}
\hat{\phi}(\zeta) = 3 \sum_{n=1}^{\infty} (-1)^{n-1} \frac{B_{2n} B_{2n+1}(2/3)}{2n (2n)! (2n+1)!} (3\zeta)^{2n} \in \IQ[\![\zeta]\!] \, ,
\ee
and it is the germ of an analytic function in the complex $\zeta$-plane.  
\begin{proposition}
Using the definition in Eq.~\eqref{eq: prodHa}, we can interpret the Borel transform $\hat{\phi}(\zeta)$ in Eq.~\eqref{eq: hatphizeta1} as the Hadamard product
\be \label{eq: P2Ha}
\hat{\phi}(\zeta) = (f \diamond g)(\zeta) \, ,
\ee
where the formal power series $f(\zeta)$ and $g(\zeta)$ have finite radius of convergence at the origin $\zeta = 0$, and they can be resummed explicitly as\footnote{We impose that $f(0)=g(0)=0$ in order to eliminate the removable singularities of $f(\zeta), g(\zeta)$ at the origin.}
\begin{subequations}
\begin{align}
f(\zeta) &= \sum_{n=1}^{\infty} \frac{B_{2n+1}(2/3)}{(2n+1)!} \zeta^{2n} =  \frac{1}{2+4 \cosh (\zeta/3)} -\frac{1}{6} \, , \quad |\zeta| < 2\pi \, , \label{eq: fP2} \\
g(\zeta) &= 3 \sum_{n=1}^{\infty} (-1)^{n-1} \frac{B_{2n}}{2n (2n)!} (3 \zeta)^{2n} = -3 \log \left(\frac{2}{3 \zeta} \sin \left( \frac{3 \zeta}{2} \right) \right) \, , \quad |\zeta| < 2\pi/3 \label{eq: gP2} \, .
\end{align}
\end{subequations}
\end{proposition}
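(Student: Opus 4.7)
The plan is to verify the proposition by direct manipulation of standard generating functions, handling separately the Hadamard factorization, the two closed-form resummations, and the radii of convergence. First I would compare the coefficient of $\zeta^{2n}$ on both sides of \eqref{eq: P2Ha}: the $n$-th coefficient of $f$ is $B_{2n+1}(2/3)/(2n+1)!$, the $n$-th coefficient of $g$ is $3(-1)^{n-1} 3^{2n} B_{2n}/(2n(2n)!)$, and their product reproduces exactly the coefficient of $\zeta^{2n}$ in $\hat{\phi}(\zeta)$ read off from \eqref{eq: hatphizeta1}. This reduces the task to establishing the two closed forms together with their radii of convergence.

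For $g(\zeta)$, I would invoke the classical Taylor expansion $\log(\sin x / x) = -\sum_{n \ge 1} (-1)^{n-1} (2x)^{2n} B_{2n}/(2n \, (2n)!)$, valid for $|x| < \pi$ and obtained by termwise integration of the Mittag--Leffler expansion of $\cot x$. Substituting $x = 3\zeta/2$ and multiplying both sides by $-3$ yields \eqref{eq: gP2}, and the singularities of $\sin(3\zeta/2)/(3\zeta/2)$ at $\zeta = 2\pi k/3$ with $k \in \IZ \setminus \{0\}$ give the radius of convergence $2\pi/3$.

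For $f(\zeta)$, I would begin from the exponential generating function $F(t) = t \re^{xt}/(\re^t - 1) = \sum_{n \ge 0} B_n(x) t^n/n!$ specialized at $x = 2/3$. A short computation shows $F(-t) = t \re^{t/3}/(\re^t - 1)$, so the odd part of $F$ divided by $t$ yields
\be
\sum_{n \ge 0} \frac{B_{2n+1}(2/3)}{(2n+1)!} \zeta^{2n} = \frac{\re^{2\zeta/3} - \re^{\zeta/3}}{2(\re^\zeta - 1)} \, .
\ee
Factoring $\re^{\zeta/2}$ out of numerator and denominator rewrites the right-hand side as $\sinh(\zeta/6)/(2\sinh(\zeta/2))$, and the triple-angle identity $\sinh(\zeta/2) = \sinh(\zeta/6)(2\cosh(\zeta/3) + 1)$ collapses it to $1/(2 + 4\cosh(\zeta/3))$. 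Subtracting the $n=0$ contribution $B_1(2/3) = 1/6$ then gives \eqref{eq: fP2}. The poles lie where $\cosh(\zeta/3) = -1/2$, that is at $\zeta = \pm 2\pi \ri + 6\pi \ri k$ with $k \in \IZ$, so the radius of convergence at the origin is $2\pi$.

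The only step that is not pure bookkeeping is the invocation of the triple-angle hyperbolic identity, which produces the clean closed form in \eqref{eq: fP2}; everything else is a standard manipulation of generating functions for Bernoulli numbers and polynomials. I therefore anticipate no real obstacle. The substance of the proposition lies in the observation that $\hat{\phi}(\zeta)$ admits a Hadamard factorization into two elementary pieces with explicit closed forms, since this is what will subsequently allow the singularity structure of $\hat{\phi}(\zeta)$ in the Borel plane to be analyzed in terms of the singularities of $f$ and $g$.
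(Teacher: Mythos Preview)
Your proposal is correct and the argument for $f(\zeta)$ is essentially identical to the paper's: both extract the odd part of the Bernoulli-polynomial generating function at $x=2/3$, reach $\sinh(\zeta/6)/(2\sinh(\zeta/2))$, and collapse it via a hyperbolic identity (your triple-angle formula versus the paper's sum-of-arguments and half-argument identities --- these are the same manipulation in different packaging). For $g(\zeta)$ you take a slightly more direct route than the paper: the paper first rewrites $B_{2n}$ in terms of $\zeta(2n)$, invokes the identity $\sum_{n\ge 1}\zeta(2n)x^{2n}/n = \log\bigl(\Gamma(1+x)\Gamma(1-x)\bigr)$ from its appendix, and then applies Euler's reflection formula to reach the sine; you instead integrate the classical $\cot$ expansion to obtain $\log(\sin x/x)$ in one step. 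Your path is marginally more elementary and avoids the Gamma-function detour, but the two arguments are clearly equivalent and the substance of the proposition --- the Hadamard splitting into two explicitly resummable factors --- is identical.
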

\begin{proof}
The Bernoulli polynomials with argument $2/3$ are defined by the generating function
\be \label{eq: fP2-origin}
\sum_{n=0}^{\infty} \frac{B_{n}(2/3)}{n!} \zeta^{n} = \frac{\zeta \re^{2 \zeta/3}}{\re^{\zeta}-1} \, , \quad |\zeta| < 2\pi \, .
\ee 
We apply the hyperbolic identities
\be
\frac{\re^{\zeta}-1}{2 \re^{\zeta/2}} = \sinh(\zeta/2) \, , \quad \re^{\zeta/6} = \sinh (\zeta/6) + \cosh (\zeta/6) \, ,
\ee
and we take the odd part of both sides of Eq.~\eqref{eq: fP2-origin}. We obtain in this way that
\be \label{eq: fP2-origin2}
\sum_{n=0}^{\infty} \frac{B_{2n+1}(2/3)}{(2n+1)!} \zeta^{2n+1} = \frac{\zeta}{2} \frac{\sinh(\zeta/6)}{\sinh(\zeta/2)} \, , \quad |\zeta| < 2\pi \, .
\ee
Using the sum-of-arguments and the half-argument identities for $\sinh(\zeta/3 + \zeta/6)$ and $\coth(\zeta/6)$, respectively, the formula in Eq.~\eqref{eq: fP2-origin2} yields the statement in Eq.~\eqref{eq: fP2}.
Let us consider Eq.~\eqref{eq: zetaseriesEven} for $a=1$ and apply the identity in Eq.~\eqref{eq: ztoBeven} for $\zeta(2n, 1)= \zeta(2n)$, $n \ge 1$. We find that
\be
\sum_{n=1}^{\infty} (-1)^{n-1} \frac{B_{2n}}{2n (2n)!} (3 \zeta)^{2n} = \log \left( \Gamma\left( 1 + \frac{3\zeta}{2\pi} \right) \Gamma\left( 1 - \frac{3\zeta}{2\pi}  \right) \right) \, , \quad |\zeta| < 2 \pi/3 \, ,
\ee
and the statement in Eq.~\eqref{eq: gP2} then follows from Euler's reflection formula for the gamma function, that is, 
\be \label{eq: reflectionGamma}
\Gamma\left( 1 + x \right) \Gamma\left( 1 - x \right) = \frac{\pi x}{\sin \left( \pi x \right) } \, , \quad x \in \IC \backslash \IZ \, ,
\ee
with the choice $x = 3 \zeta / 2 \pi$.
\end{proof}
After being analytically continued to the whole complex plane, the function $f(\zeta)$ has poles of order one along the imaginary axis at
\be
\mu^{\pm}_k = 2 \pi \ri (\pm 1 + 3 k) \, , \quad k \in \IZ \, ,
\ee
while the function $g(\zeta)$ has logarithmic branch points along the real axis at
\be
\nu_m = \frac{2 \pi}{3} m \, , \quad m \in \IZ_{\ne 0} \, .
\ee
We illustrate the singularities of $f(\zeta), g(\zeta)$ in the complex $\zeta$-plane in Fig.~\ref{fig: plotHadamard} on the left. 
\begin{figure}[htb!]
\center
 \includegraphics[width=0.8\textwidth]{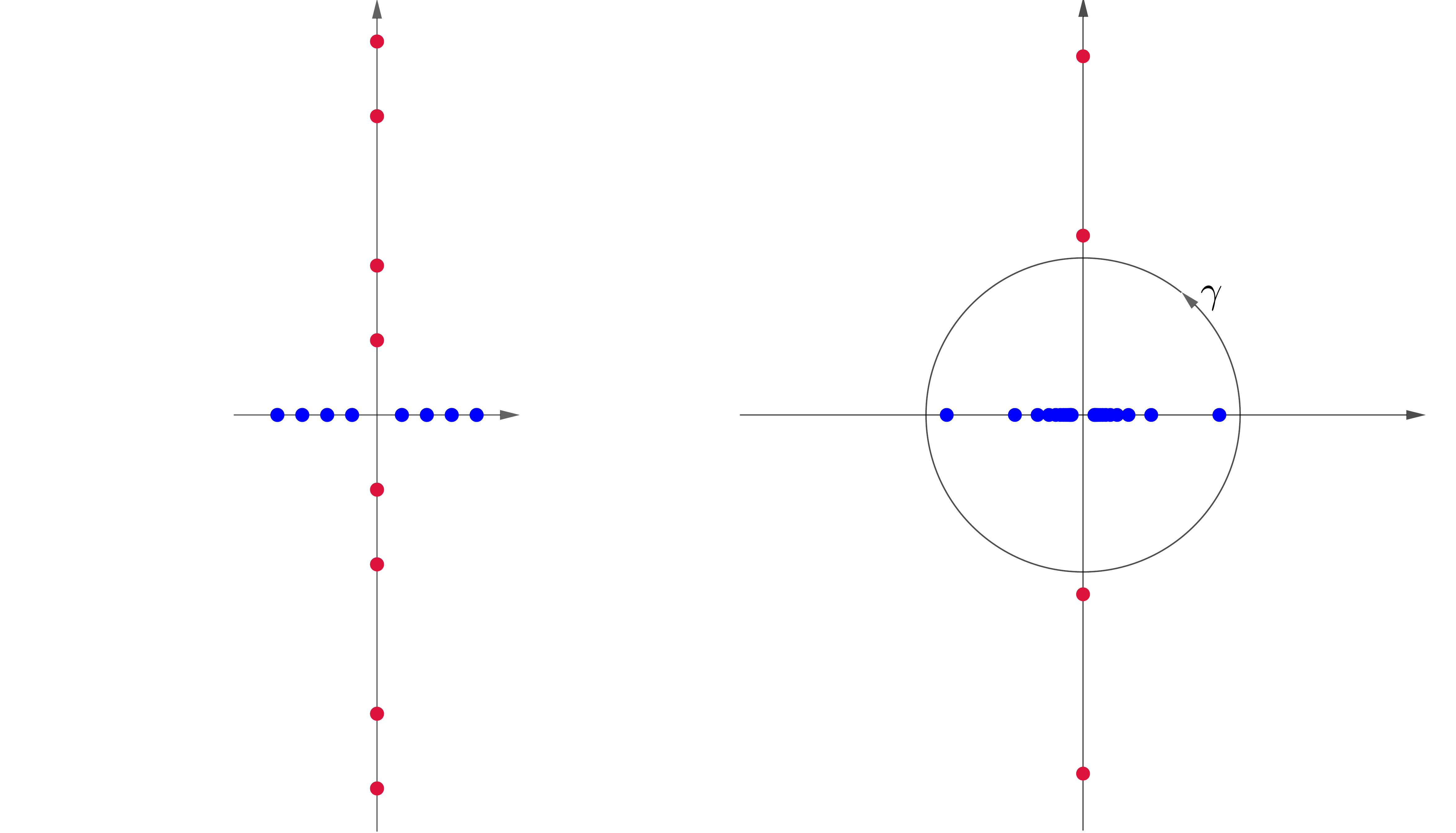}
 \caption{On the left, the first few singularities of $f(\zeta)$ (in red) and $g(\zeta)$ (in blue) in the complex $\zeta$-plane. On the right, the contour $\gamma$ and the first few singularities of $f(s)$ (in red) and $g(\zeta/s)$ (in blue) in the complex $s$-plane with reference values $r=5.5$ and $\zeta=10$.}
 \label{fig: plotHadamard}
\end{figure}

\begin{proposition}
The Borel transform $\hat{\phi}(\zeta)$ in Eq.~\eqref{eq: hatphizeta1} can be expressed as
\be \label{eq: intBorel2}
\hat{\phi}(\zeta) = - \frac{3 \sqrt{3}}{2 \pi} \sum_{k \in \IZ} \frac{1}{1+3k} \log \left( \frac{4 \pi \ri (1+3k)}{3 \zeta} \sin \left(\frac{3 \zeta}{4 \pi \ri (1+3k)} \right) \right) \, ,
\ee
which is a well-defined, exact function of $\zeta$.
\end{proposition}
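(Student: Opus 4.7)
The strategy is to apply the integral representation of the Hadamard product and evaluate it by residues. Since $f(\zeta)$ is holomorphic in the disc $|\zeta| < 2\pi$ and $g(\zeta)$ in $|\zeta| < 2\pi/3$, for $|\zeta|$ sufficiently small I would write
\begin{equation}
\hat{\phi}(\zeta) = (f \diamond g)(\zeta) = \frac{1}{2 \pi \ri} \oint_{\gamma} f(s) \, g(\zeta/s) \, \frac{\rd s}{s} \, ,
\end{equation}
where $\gamma$ is the positively oriented contour shown on the right of Fig.~\ref{fig: plotHadamard}: a loop around the origin contained in the annulus $3|\zeta|/(2\pi) < |s| < 2\pi$, which separates the imaginary-axis poles $\mu^{\pm}_k = 2\pi \ri (\pm 1 + 3k)$ of $f(s)$ from the real-axis logarithmic branch points $s = 3\zeta/(2\pi m)$, $m \in \IZ_{\ne 0}$, of $g(\zeta/s)$.

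The first step is to inflate $\gamma$ to infinity. Because $g$ has a double zero at the origin, one has $g(\zeta/s) = \CO(|s|^{-2})$ for large $|s|$, and $f(s)$ is bounded along a suitable sequence of expanding contours skirting its imaginary-axis poles, so the integrand decays as $\CO(|s|^{-3})$ and the arcs at infinity contribute nothing. The residue theorem then yields
\begin{equation}
\hat{\phi}(\zeta) = - \sum_{k \in \IZ} \sum_{\epsilon = \pm} \frac{\mathrm{Res}_{s = \mu^{\epsilon}_k} f(s)}{\mu^{\epsilon}_k} \, g\!\left( \frac{\zeta}{\mu^{\epsilon}_k} \right) \, ,
\end{equation}
where, from $(2 + 4 \cosh(s/3))'|_{\mu^{\pm}_k} = \pm 2 \ri \sqrt{3}/3$, one reads $\mathrm{Res}_{\mu^{\pm}_k} f = \mp \ri \sqrt{3}/2$.

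Next, I would exploit the evenness of $g$ in its argument (since only even powers of $\zeta$ appear in Eq.~\eqref{eq: gP2}) together with the identity $\mu^{-}_{-k} = -\mu^{+}_{k}$ to collapse the $\epsilon = \pm$ sum into a single sum over $k \in \IZ$ involving only $\mu^{+}_k = 2\pi\ri(1+3k)$. The opposite sign of the two residues, combined with the odd prefactor $1/\mu^{\epsilon}_k$, ensures that the two half-sums add constructively and produce the overall factor $\sqrt{3}/(2\pi)$. Substituting the closed form~\eqref{eq: gP2} at $w = \zeta/(2\pi\ri(1+3k))$ then gives the announced expression.

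The main obstacle is the justification of the contour deformation, since $f(s)$ does not decay at infinity: it tends to $-1/6$ along the real axis and oscillates along the imaginary axis with a dense string of simple poles. The cancellation at infinity therefore relies crucially on the quadratic vanishing of $g$ at the origin, and one must select the expanding contours so as to remain at a uniform distance from the $\mu^{\pm}_k$. Once this is in place, absolute convergence of the resulting series on compact sets of $\zeta$ is controlled by the bound $g(\zeta/\mu^{\pm}_k) = \CO(1/k^2)$ together with the explicit factor $1/(1+3k)$, so that the right-hand side defines an exact, multivalued function of $\zeta \in \IC$ that provides the analytic continuation of $\hat{\phi}$ beyond its original disc of convergence.
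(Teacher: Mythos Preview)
Your proposal is correct and follows essentially the same route as the paper: apply Hadamard's integral representation $\hat{\phi}(\zeta)=\frac{1}{2\pi\ri}\oint_{\gamma} f(s)\,g(\zeta/s)\,\frac{\rd s}{s}$, then deform the contour outward and collect the residues of $f$ at $\mu^{\pm}_k$. The paper computes the same residues $\mp \ri\sqrt{3}/2$ and sums them directly into the stated formula; your reduction of the $\pm$ sum via the evenness of $g$ and $\mu^{-}_{-k}=-\mu^{+}_k$ is just a reorganisation of the same computation. If anything, your discussion of the vanishing of the arcs at infinity (using $g(\zeta/s)=\CO(|s|^{-2})$ and suitably chosen expanding contours avoiding the $\mu^{\pm}_k$) is more explicit than the paper, which simply invokes Cauchy's theorem for the exterior region.
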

\begin{proof}
We will now apply Hadamard's multiplication theorem~\cite{Hadamard, Hadamard2}. We refer to Appendix~\ref{app: Hadamard} for a short introduction. 
Let $\gamma$ be a circle in the complex $s$-plane centered at the origin $s=0$ with radius $0 < r < 2 \pi$. As a consequence of the Hadamard decomposition in Eq.~\eqref{eq: P2Ha}, the Borel transform can be written as the integral
\be \label{eq: intBorel}
\hat{\phi}(\zeta) = \frac{1}{2 \pi \ri} \int_{\gamma} f(s) g(\zeta/s) \frac{ \rd s}{s} = -\frac{3}{4 \pi \ri} \int_{\gamma} \left( \frac{1}{1+2 \cosh (s/3)} -\frac{1}{3} \right) \log \left(\frac{2 s}{3 \zeta} \sin \left( \frac{3 \zeta}{2 s} \right) \right) \frac{ \rd s}{s} \, ,
\ee
for $|\zeta| < 2 \pi r /3$. 
We note that, for such values of $\zeta$, the function $s \mapsto g(\zeta/s)$ has logarithmic branch points at $s = \zeta/\nu_m$, $m \in \IZ_{\ne 0}$, which sit inside the contour of integration $\gamma$ and accumulate at the origin, and no singularities for $|s| > r$. The function $f(s)$ has simple poles at the points $s= \mu^{\pm}_k$ with residues
\be
\underset{s=2 \pi \ri (\pm 1 + 3 k)}{\text{Res}} f(s) = \mp \frac{\sqrt{3} \ri}{2} \, , \quad k \in \IZ \, .
\ee
We illustrate the singularities of $f(s), g(\zeta/s)$ in the complex $s$-plane in Fig.~\ref{fig: plotHadamard} on the right. 
By Cauchy's residue theorem, the integral in Eq.~\eqref{eq: intBorel} can be evaluated by summing the residues at the poles of the integrand which lie outside $\gamma$, allowing us to express the Borel transform as an exact function of $\zeta$. More precisely, we find the desired analytic formula
\be \label{eq: intBorel2-proof}
\ba
\hat{\phi}(\zeta) &= - \sum_{k \in \IZ} \underset{s=2 \pi \ri (1 + 3 k)}{\text{Res}} f(s) g(\zeta/s) \frac{1}{s} - \sum_{k \in \IZ} \underset{s=2 \pi \ri (- 1 + 3 k)}{\text{Res}} f(s) g(\zeta/s) \frac{1}{s} = \\
&= - \frac{3 \sqrt{3}}{2 \pi} \sum_{k \in \IZ} \frac{1}{1+3k} \log \left( \frac{4 \pi \ri (1+3k)}{3 \zeta} \sin \left(\frac{3 \zeta}{4 \pi \ri (1+3k)} \right) \right) \, .
\ea
\ee
The convergence of the infinite sum in the second line of Eq.~\eqref{eq: intBorel2-proof} can be easily verified by, \textit{e.g.}, the limit comparison test. 
\end{proof}
\begin{corollary}
The singularities of the Borel transform $\hat{\phi}(\zeta)$ in Eq.~\eqref{eq: intBorel2} are logarithmic branch points located along the imaginary axis at
\be \label{eq: zetakm}
\zeta_{k,m} = \mu^{+}_k \nu_m = \frac{4 \pi^2 \ri}{3} (1+3k) m \, , \quad k \in \IZ \, , \quad m \in \IZ_{\ne 0} \, ,
\ee
which we write equivalently as
\be \label{eq: zetan}
\zeta_n = \frac{4 \pi^2 \ri}{3} n \, , \quad n \in \IZ_{\ne 0} \, ,
\ee
that is, the branch points lie at all non-zero integer multiples of the two complex conjugate dominant singularities at $\pm 4 \pi^2 \ri/3$, as illustrated in Fig.~\ref{fig: peacockP2zero}. 
\end{corollary}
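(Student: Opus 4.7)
The plan is to read off the singularities of $\hat{\phi}(\zeta)$ directly from the exact representation in Eq.~\eqref{eq: intBorel2}. Setting $w_k=\frac{3\zeta}{4\pi\ri(1+3k)}$, each summand equals $\frac{1}{1+3k}\log\bigl(\sin(w_k)/w_k\bigr)$ up to an overall constant. Since $w\mapsto\sin(w)/w$ is entire with value $1$ at the origin and simple zeros exactly at $w=m\pi$ for $m\in\IZ_{\ne 0}$, the $k$-th summand is holomorphic in $\zeta$ everywhere except at logarithmic branch points located at
\be
\zeta_{k,m} = \frac{4\pi^2\ri}{3}(1+3k)m, \qquad m\in\IZ_{\ne 0},
\ee
which reproduces the parameterisation of Eq.~\eqref{eq: zetakm}. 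Taking the union over $k\in\IZ$, and noting that $k=0$ already realises every non-zero integer via $(1+3k)m=m$, while conversely every $(1+3k)m$ is a non-zero integer, I recover the rewriting in Eq.~\eqref{eq: zetan}; in particular every branch point lies on the imaginary axis.

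The substantive step is to verify that these candidate branch points survive summation over $k$, since \emph{a priori} contributions stacked at a common $\zeta_n$ could conspire to cancel. I would fix $n\in\IZ_{\ne 0}$ and split Eq.~\eqref{eq: intBorel2} into the finite subsum indexed by those $k$ for which $d:=1+3k$ divides $n$, plus the remaining tail. The tail converges uniformly on a punctured neighbourhood of $\zeta_n$ — this is the same uniform convergence that underpins the Hadamard residue evaluation in Eq.~\eqref{eq: intBorel2-proof} — and is therefore holomorphic there, contributing only regular terms. Each of the finitely many singular summands expands locally as $\frac{1}{d}\log(\zeta-\zeta_n)$ plus holomorphic pieces, so the coefficient of $\log(\zeta-\zeta_n)$ in $\hat{\phi}(\zeta)$ equals
\be
-\frac{3\sqrt{3}}{2\pi} \sum_{\substack{d\,\mid\, n \\ d\equiv 1\,(\mathrm{mod}\,3)}} \frac{1}{d}.
\ee

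The main obstacle is showing that this arithmetic sum does not vanish for any $n\in\IZ_{\ne 0}$, so that $\zeta_n$ is a genuine branch point and no additional singularities appear elsewhere. For small $|n|$ the sum is easy to evaluate directly — the divisor $d=1$ always contributes $1$, while the other admissible divisors produce fractional corrections of strictly smaller absolute value — but the cleanest general route is to identify the expression above with the divisor-type function that reappears as the Stokes constant at $\zeta_n$ in the remainder of Section~\ref{sec: localP2}, whose non-vanishing is part of the arithmetic structure developed there. Granted this identification, the corollary follows and the branch-point locations realise the peacock pattern of Fig.~\ref{fig: peacock}.
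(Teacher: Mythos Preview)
Your argument is correct and matches the paper's development. The paper actually states this corollary without proof, treating the location of the branch points as immediate from the summand-by-summand structure of Eq.~\eqref{eq: intBorel2}; the local expansion and the non-cancellation you address are then handled in the very next result, Corollary~\ref{cor: localexp}, and the explicit strictly positive value of the log-coefficient is established as the Stokes constant $S_n$ in Proposition~\ref{prop: formulaS1}. Your proof is essentially an anticipation of those two results.

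One point of notation worth tightening: in your displayed divisor sum the index $d=1+3k$ runs over \emph{signed} integers congruent to $1\pmod 3$ as $k$ ranges over $\IZ$, so the sum is really
\be
\sum_{\substack{d\mid n,\ d>0\\ d\equiv 1\ (3)}}\frac{1}{d}\ -\ \sum_{\substack{d\mid n,\ d>0\\ d\equiv 2\ (3)}}\frac{1}{d}\, ,
\ee
which is exactly $S_n/S_1$ in the paper's later notation. Your heuristic ``$d=1$ dominates the rest'' is not by itself a proof of non-vanishing (the negative contributions could in principle accumulate), but your deferral to the arithmetic analysis in the remainder of Section~\ref{sec: localP2} is the right move and is precisely what the paper does.
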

This is the simplest occurrence of the peacock pattern of singularities described in Section~\ref{sec: resurgent_strings}. There are two Stokes lines at the angles $\pm \pi/2$.
\begin{figure}[htb!]
\center
 \includegraphics[width=0.2\textwidth]{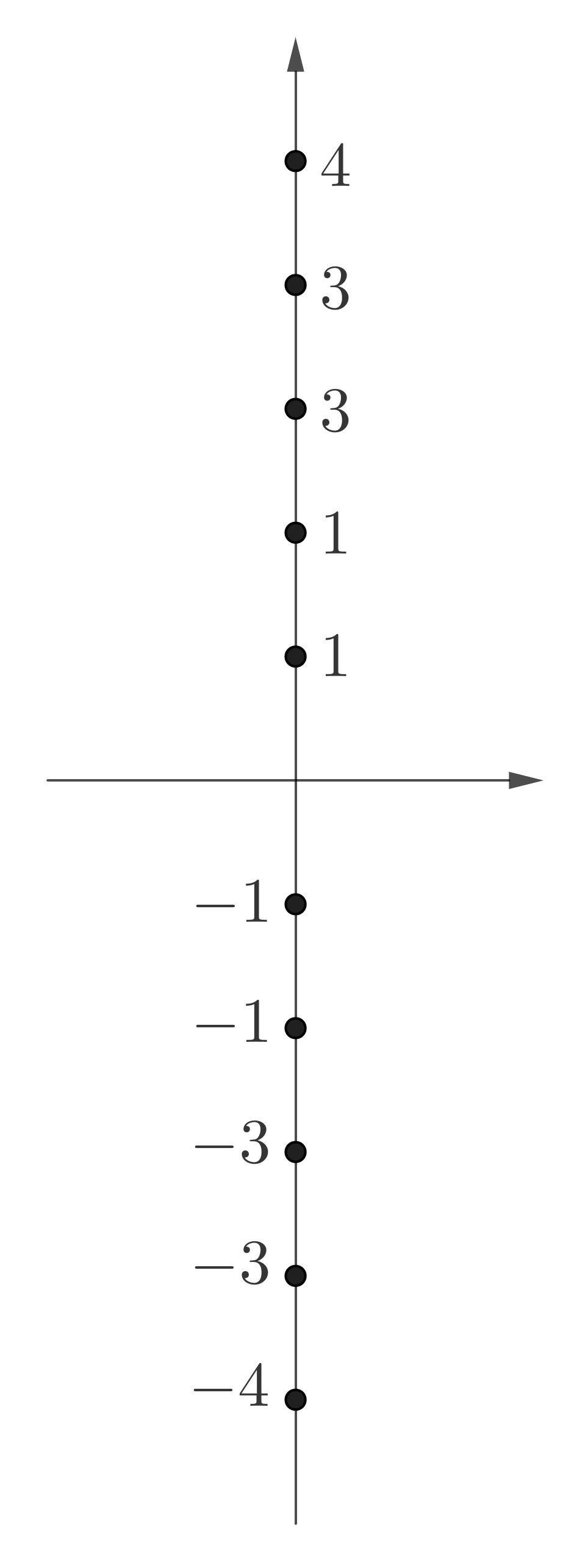}
 \caption{The first few singularities of the Borel transform of the asymptotic series $\phi(\hbar)$, defined in Eq.~\eqref{eq: phiP2}, and the associated integer constants $\alpha_n \in \IZ_{\ne 0}$, defined in Eq.~\eqref{eq: intStokesP2zero}.}
 \label{fig: peacockP2zero}
\end{figure}
Note that the analytic expression for the Borel transform in Eq.~\eqref{eq: intBorel2} is explicitly simple resurgent, and thus we expect its local singular behavior to be of the form in Eq.~\eqref{eq: Stokes0}.
\begin{corollary} \label{cor: localexp}
The local expansion of the Borel transform $\hat{\phi}(\zeta)$ in Eq.~\eqref{eq: intBorel2} at $\zeta = \zeta_n$, $n \in \IZ_{\ne 0}$, is given by
\be \label{eq: explocalP2}
\hat{\phi}(\zeta) =  - \frac{S_n}{2 \pi \text{i}} \log(\zeta - \zeta_n) + \dots \, ,
\ee
where $S_n \in \IC$ is the Stokes constant.
\end{corollary}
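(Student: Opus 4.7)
The plan is to work directly with the closed-form representation of $\hat\phi(\zeta)$ given in~(\ref{eq: intBorel2}) and to isolate the finitely many summands that produce a logarithmic branch point at $\zeta = \zeta_n$. Each summand is a scalar multiple of
\be
L_k(\zeta) := \log\!\left[\frac{4\pi\ri(1+3k)}{3\zeta}\sin\!\left(\frac{3\zeta}{4\pi\ri(1+3k)}\right)\right];
\ee
the rational prefactor $4\pi\ri(1+3k)/(3\zeta)$ appearing inside the logarithm is analytic and nonzero at every $\zeta_n$, so a singularity of $L_k$ at $\zeta_n$ can only arise from a zero of the sine, i.e.\ from $3\zeta_n/(4\pi\ri(1+3k)) = \pi m$ for some $m \in \IZ_{\ne 0}$. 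Using the definition of $\zeta_n$, this imposes the Diophantine constraint $(1+3k)\,m = n$. Let $D_n$ be the set of such pairs $(k,m)$; since $m$ must be a signed divisor of $n$ whose cofactor $n/m$ is congruent to $1 \pmod 3$, the set $D_n$ is finite.

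Next, for each $(k,m) \in D_n$ I would extract the singular part of $L_k$ by Taylor-expanding the sine around its zero. Setting $\xi = \zeta - \zeta_n$ and using $\sin(\pi m + \epsilon) = (-1)^m \sin\epsilon$,
\be
\sin\!\left(\frac{3\zeta}{4\pi\ri(1+3k)}\right) = (-1)^m \left[\frac{3\,\xi}{4\pi\ri(1+3k)} + \CO(\xi^3)\right] ,
\ee
so that $L_k(\zeta) = \log\xi + \tilde h_{k,m}(\xi)$ with $\tilde h_{k,m}$ analytic at $\xi = 0$ (the constant and higher-order terms coming from both the subleading Taylor coefficients of the sine and from the regular prefactor). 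Collecting the coefficients of $\log(\zeta - \zeta_n)$ from the finitely many $(k,m) \in D_n$ and matching against the canonical form~(\ref{eq: Stokes0}) yields the identification
\be
S_n = 3\sqrt{3}\,\ri \sum_{(k,m)\in D_n} \frac{1}{1+3k} = \frac{3\sqrt{3}\,\ri}{n}\sum_{(k,m)\in D_n} m ,
\ee
which is a well-defined finite element of $\ri\sqrt{3}\,\IQ$; its arithmetic interpretation as a divisor sum indexed by residue classes mod~$3$ is evidently the content of the subsequent sections of the paper.

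The remaining step is to verify that the sum of all \emph{other} summands in~(\ref{eq: intBorel2}), indexed by those $k$ for which $n/(1+3k)\notin \IZ_{\ne0}$, defines a function holomorphic in a neighbourhood of $\zeta_n$. Each such summand is individually analytic there, since its sine factor is nonzero at $\zeta_n$, so the only nontrivial point is uniform convergence on a small disc around $\zeta_n$ avoiding the other singularities. This is the main technical obstacle, but it is supplied by the same limit-comparison argument that was used to prove convergence of~(\ref{eq: intBorel2-proof}): once the finitely many terms indexed by $D_n$ have been removed, the majorants used there remain uniform on any compact subset of $\IC$ disjoint from the singular set $\{\zeta_{n'}\}_{n' \in \IZ_{\ne 0}}$. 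Combining the singular and regular pieces produces the local expansion~(\ref{eq: explocalP2}).
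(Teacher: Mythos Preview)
Your proposal is correct and follows essentially the same approach as the paper: both isolate the finitely many summands $f_k(\zeta)$ in the representation~(\ref{eq: intBorel2}) for which $(1+3k)m=n$ has an integer solution $m$, expand each such term to extract its $\log(\zeta-\zeta_n)$ coefficient, and sum these to define $S_n$. You are somewhat more explicit than the paper in two respects: you actually carry out the Taylor expansion of the sine and write down the resulting closed divisor-sum formula for $S_n$ (which the paper postpones to Proposition~\ref{prop: formulaS1}), and you address the uniform convergence needed to conclude that the remaining infinite tail is holomorphic near $\zeta_n$, a point the paper's proof leaves implicit.
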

\begin{proof}
The local expansion of $\hat{\phi}(\zeta)$ around the logarithmic singularity $\zeta = \zeta_n$ is obtained by summing the contributions from all pairs $(k,m) \in \IZ \times \IZ_{\ne 0}$ such that $n = (1+3k)m$. There is finitely many such pairs of integers, and we collect them into a set $I_n$ for each $n \in \IZ_{\ne 0}$.
For a fixed value of $k \in \IZ$, we denote the corresponding term in the sum in Eq.~\eqref{eq: intBorel2} by 
\be \label{eq: fk}
f_k(\zeta) = - \frac{3 \sqrt{3}}{2 \pi (1+3k)} \log \left( \frac{4 \pi \ri (1+3k)}{3 \zeta} \sin \left(\frac{3 \zeta}{4 \pi \ri (1+3k)} \right) \right) \, .
\ee 
We expand it locally around $\zeta= \zeta_{k,m}$ for every choice of $m \in \IZ_{\ne 0}$, and we obtain
\be
f_k(\zeta) = -\frac{s_{k,m}}{2 \pi \text{i}} \log(\zeta - \zeta_{k,m}) + \dots \, ,
\ee
where the dots denote regular terms in $\zeta - \zeta_{k,m}$, and $s_{k,m}$ is a complex number.
Since $\zeta_n = \zeta_{k,m}$ for all $(k,m) \in I_n$, it follows that the local expansion of $\hat{\phi}(\zeta)$ at $\zeta = \zeta_n$ is given by
\be \label{eq: explocalP2-proof}
\hat{\phi}(\zeta) = \sum_{k \in \IZ} f_k(\zeta) = - \frac{S_n}{2 \pi \text{i}} \log(\zeta - \zeta_n) + \dots \, ,
\ee
where the Stokes constant $S_n$ is the finite sum 
\be \label{eq: stokesSum}
S_n = \sum_{(k,m) \in I_n} s_{k,m} \, .
\ee
\end{proof}
It follows from Corollary~\ref{cor: localexp} that the locally analytic function that resurges at $\zeta=\zeta_n$ is trivially $\hat{\phi}_n(\zeta-\zeta_n) = 1$, $n \in \IZ_{\ne 0}$. 
We observe that the Laplace transform in Eq.~\eqref{eq: Laplace} acts trivially on constants, and thus we also have that $\phi_n(\hbar) = 1$, $n \in \IZ_{\ne 0}$, that is, there are no perturbative contributions coming from the higher-order instanton sectors.
Moreover, the procedure above allows us to derive analytically all the Stokes constants. 
After being suitably normalized, the Stokes constants $S_n$ are rational numbers, and they are simply related to an interesting sequence of integers $\alpha_n$,  $n \in \IZ_{\ne 0}$. In particular, we find that 
\begin{subequations} \label{eq: intStokesP2zero}
\be
 S_1 = 3 \sqrt{3} \ri \, , \quad S_n = S_1 \frac{\alpha_n}{n} \quad n \in \IZ_{\ne 0,1} \, , 
\ee
\be
\alpha_n = -\alpha_{-n} \, , \quad \alpha_n \in \IZ_{> 0}  \quad n \in \IZ_{> 0} \, .
\ee
\end{subequations}
Explicitly, the first several integer constants $\alpha_n$, $n>0$, are
\be
1, 1, 3, 3, 4, 3, 8, 5, 9, 4, 10, 9, 14, 8, 12, 11, 16, 9, 20, 12, \dots \, .
\ee
The pattern of singularities in the Borel plane and the associated $\alpha_n \in \IZ_{\ne 0}$ are shown in Fig.~\ref{fig: peacockP2zero}. 

\subsubsection{Closed formulae for the Stokes constants} \label{sec: closed-zero}
We will now present and prove a series of exact arithmetic formulae for the Stokes constants $S_n$ of the asymptotic series $\phi(\hbar)$, defined in Eq.~\eqref{eq: phiP2}, and the related integer constants $\alpha_n$, defined in Eq.~\eqref{eq: intStokesP2zero}, for $n \in \IZ_{\ne 0}$. Let us start by showing that both sequences $S_n$ and $\alpha_n$ define explicit divisor sum functions.
\begin{proposition} \label{prop: formulaS1} 
The normalized Stokes constant $S_n/S_1$, where $S_1 = 3 \sqrt{3} \ri$, is determined by the positive integer divisors of $n \in \IZ_{\ne 0}$ according to the closed formula
\be \label{eq: formulaS1}
\frac{S_n}{S_1} = \sum_{\substack{d | n \\ d \equiv_3 1}} \frac{1}{d} - \sum_{\substack{d | n \\ d \equiv_3 2}} \frac{1}{d} \, , 
\ee
which implies that $S_n = S_{-n}$ and $S_n/S_1 \in \IQ_{>0}$.
\end{proposition}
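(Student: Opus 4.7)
The approach is to explicitly compute the complex number $s_{k,m}$ defined in the proof of Corollary~\ref{cor: localexp} by locally expanding $f_k(\zeta)$ around the branch point $\zeta = \zeta_{k,m}$, and then to rewrite the resulting finite sum over $I_n$ as a divisor sum.

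First I would note that in the expression
\be
f_k(\zeta) = -\frac{3\sqrt{3}}{2\pi (1+3k)} \log\!\left( \frac{4\pi \ri (1+3k)}{3\zeta} \sin\!\left( \frac{3\zeta}{4\pi \ri (1+3k)} \right) \right),
\ee
the only source of a logarithmic singularity at $\zeta = \zeta_{k,m}$ is the vanishing of the sine, since the argument $3\zeta / (4\pi \ri(1+3k))$ equals $m\pi$ precisely at that point. Using $\sin(m\pi + x) = (-1)^m x + O(x^3)$ with $x = 3(\zeta-\zeta_{k,m})/(4\pi \ri (1+3k))$, one immediately reads off
\be
f_k(\zeta) = -\frac{3\sqrt{3}}{2\pi (1+3k)} \log(\zeta - \zeta_{k,m}) + (\text{analytic}),
\ee
and comparing with the normalization $f_k(\zeta) = -s_{k,m}\,(2\pi\ri)^{-1}\log(\zeta-\zeta_{k,m}) + \dots$ gives $s_{k,m} = 3\sqrt{3}\,\ri / (1+3k)$. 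In particular, the case $n=1$ has $I_1 = \{(0,1)\}$, confirming $S_1 = 3\sqrt{3}\,\ri$.

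Next I would substitute this into the finite sum formula~\eqref{eq: stokesSum}. Setting $d = 1+3k$, the index set $I_n$ is in bijection with the set of divisors $d$ of $n$ (positive or negative) satisfying $d \equiv 1 \pmod 3$, via $m = n/d$. Therefore
\be
\frac{S_n}{S_1} \;=\; \sum_{(k,m)\in I_n} \frac{1}{1+3k} \;=\; \sum_{\substack{d \mid n,\; d \in \IZ \\ d \equiv_3 1}} \frac{1}{d}.
\ee
Splitting this sum according to the sign of $d$ and writing $d = -d'$ with $d'>0$ in the negative case — so that $d \equiv 1 \pmod 3$ becomes $d' \equiv 2 \pmod 3$, while $1/d = -1/d'$ — yields exactly the claimed formula~\eqref{eq: formulaS1}.

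Finally, the symmetry $S_n = S_{-n}$ is immediate because the set of (signed) divisors of $n$ and of $-n$ coincide. For the rationality and positivity assertion $S_n/S_1 \in \IQ_{>0}$, I would exhibit the divisor sum as the multiplicative arithmetic function
\be
\frac{S_n}{S_1} \;=\; \sum_{d \mid n} \frac{\chi(d)}{d},
\ee
where $\chi$ is the non-trivial Dirichlet character modulo $3$, and then invoke the Euler product $\prod_{p \mid n}\bigl( \sum_{j=0}^{v_p(n)} \chi(p)^j p^{-j} \bigr)$: each local factor is either $1$ (for $p=3$), a strictly positive sum of positive rationals (for $p \equiv 1 \pmod 3$), or an alternating geometric sum $(1-(-1/p)^{v_p(n)+1})/(1+1/p) > 0$ (for $p \equiv 2 \pmod 3$). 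I expect no genuine obstacle here — the only subtle point is the bookkeeping of signs when converting negative divisors $d \equiv_3 1$ into positive divisors $d \equiv_3 2$, which is the step that produces the characteristic minus sign in the statement.
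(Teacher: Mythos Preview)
Your proof is correct and follows essentially the same route as the paper: compute $s_{k,m}=3\sqrt{3}\,\ri/(1+3k)$ from the local expansion of $f_k$, then reorganize the sum over $I_n$ as a divisor sum. The only cosmetic difference is that you parametrize by signed divisors $d=1+3k\equiv_3 1$ and split by sign, whereas the paper parametrizes directly by positive divisors and distinguishes the two residue classes; your additional Euler-product argument for strict positivity is what the paper establishes separately in the subsequent corollary.
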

\begin{proof}
Let us denote by $D_n$ the set of positive integer divisors of $n$. We recall that $n$ satisfies the factorization property $n = (1+3k)m$ for $k \in \IZ$ and $m \in \IZ_{\ne 0}$. It follows that either 
\be \label{eq: pairsPlus}
m = \frac{n}{d} \, , \quad k = \frac{d-1}{3} \, ,
\ee
where $d \in D_n$ such that $d-1$ is divisible by $3$, or 
\be \label{eq: pairsMinus}
m = -\frac{n}{d} \, , \quad k = -\frac{d+1}{3} \, ,
\ee
where $d \in D_n$ such that $d+1$ is divisible by $3$. 
In the first case of $d \equiv_3 1$, substituting the values of $k,m$ from Eq.~\eqref{eq: pairsPlus} into Eqs.~\eqref{eq: fk} and~\eqref{eq: zetakm}, we find that the contribution to the Stokes constant $S_n$ coming from the local expansion of $f_k(\zeta)$ around $\zeta_{k,m}$ is simply $s_{k,m} = 3\sqrt{3} \ri /d$. 
In the second case of $d \equiv_3 2$, substituting the values of $k,m$ from Eq.~\eqref{eq: pairsMinus} into Eqs.~\eqref{eq: fk} and~\eqref{eq: zetakm}, we find that the contribution to the Stokes constant $S_n$ coming from the local expansion of $f_k(\zeta)$ around $\zeta_{k,m}$ is simply $s_{k,m} = -3 \sqrt{3} \ri/d$.
Finally, for any divisor $d \in D_n$ which is a multiple of $3$, neither $d-1$ or $d+1$ are divisible by $3$, which implies that the choice $m = \pm n/d$ is not allowed, and the corresponding contribution is $s_{k,m}=0$.
Putting everything together and using Eq.~\eqref{eq: stokesSum}, we find the desired statement.
\end{proof}
We note that the arithmetic formula for the Stokes constants in Eq.~\eqref{eq: formulaS1} can be written equivalently as a closed expression for the integer constants $\alpha_n$,  $n \in \IZ_{\ne 0}$. Namely,
\be \label{eq: formulaA1}
\alpha_n = \sum_{\substack{d | n \\ \frac{n}{d} \equiv_3 1}} d - \sum_{\substack{d | n \\ \frac{n}{d} \equiv_3 2}} d \, , 
\ee
which implies that $\alpha_n = - \alpha_{-n}$, and $\alpha_n \in \IZ_{>0}$ for all $n>0$. 
Two corollaries then follow straightforwardly from Proposition~\ref{prop: formulaS1}. 
\begin{corollary} \label{prop: formulaA1mult} 
The positive integer constants $\alpha_n$, $n \in \IZ_{>0}$, satisfy the closed formulae
\be
\alpha_{p_1^{e_1}} = \frac{p_1^{e_1+1}-1}{p_1-1} \, , \quad \alpha_{p_2^{e_2}} = \frac{p_2^{e_2+1}+(-1)^{e_2}}{p_2+1} \, , \quad \alpha_{p_3^{e_3}} = p_3^{e_3} \, , 
\ee
where $e_i \in \IN$, and $p_i \in \IP$ are prime numbers such that $p_i \equiv_3 i$ for $i=1,2,3$. Moreover, they obey the multiplicative property
\be \label{eq: formulaA1mult}
\alpha_n = \prod_{p \in \IP} \alpha_{p^e} \, , \quad n = \prod_{p \in \IP} p^e \, , \quad e \in \IN \, .
\ee
\end{corollary}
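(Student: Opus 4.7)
The plan is to recast the arithmetic formula (4.84) for $\alpha_n$, obtained in the proof of Proposition~\ref{prop: formulaS1}, as a Dirichlet convolution. Specifically, let $\chi$ denote the non-principal Dirichlet character modulo $3$, i.e., $\chi(d) = 1$ if $d \equiv_3 1$, $\chi(d) = -1$ if $d \equiv_3 2$, and $\chi(d) = 0$ if $d \equiv_3 0$. Then (4.84) reads
\be
\alpha_n = \sum_{d \mid n} \chi(n/d)\, d = (\mathrm{id} * \chi)(n),
\ee
where $*$ is the Dirichlet convolution and $\mathrm{id}(d) = d$. The multiplicative property (4.86) will then follow at once from the standard fact that the Dirichlet convolution of two multiplicative functions is multiplicative: both $\mathrm{id}$ and $\chi$ are completely multiplicative, so $\alpha = \mathrm{id} * \chi$ is multiplicative, which gives
\be
\alpha_n = \prod_{p \in \IP} \alpha_{p^e}
\ee
for the prime factorization $n = \prod_{p} p^e$.

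To establish the three prime-power formulae, I would evaluate $\alpha_{p^e} = \sum_{j=0}^{e} p^j \,\chi(p^{e-j})$ by splitting into cases according to the residue class of $p$ modulo $3$. If $p = p_1 \equiv_3 1$, every power satisfies $p^{e-j} \equiv_3 1$, so $\chi(p^{e-j}) = 1$ uniformly and a geometric sum yields $\alpha_{p_1^{e_1}} = \sum_{j=0}^{e_1} p_1^j = (p_1^{e_1+1}-1)/(p_1-1)$. If $p = p_2 \equiv_3 2$, then $\chi(p^{e-j}) = (-1)^{e-j}$, and the signed geometric series
\be
\alpha_{p_2^{e_2}} = \sum_{j=0}^{e_2} (-1)^{e_2 - j} p_2^{\,j} = (-1)^{e_2}\,\frac{(-p_2)^{e_2+1} - 1}{-p_2 - 1}
\ee
simplifies to $(p_2^{e_2+1} + (-1)^{e_2})/(p_2+1)$. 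Finally, if $p = p_3 = 3$, then $\chi(p^{e-j})$ vanishes for $e - j \geq 1$ and equals $1$ only when $j = e_3$, leaving the single surviving term $\alpha_{p_3^{e_3}} = p_3^{e_3}$.

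There is no real obstacle here beyond packaging: the key conceptual step is the identification of the sign pattern $\mathbf{1}_{d \equiv_3 1} - \mathbf{1}_{d \equiv_3 2}$ in (4.84) with the Dirichlet character $\chi$, after which both multiplicativity and the prime-power evaluations are immediate consequences of elementary Dirichlet-convolution theory and geometric summation. The only subtle point worth writing out carefully is the case $p = 3$, where one must note that $\chi$ is supported on integers coprime to $3$, so the convolution collapses to a single nonzero term; this is precisely what makes $\alpha_{p_3^{e_3}}$ take the simple form $p_3^{e_3}$ rather than a nontrivial geometric sum.
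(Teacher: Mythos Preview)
Your proof is correct and takes a more conceptual route than the paper's. You identify the divisor-sum formula for $\alpha_n$ as the Dirichlet convolution $\alpha = \mathrm{id} * \chi$ with $\chi$ the non-principal character modulo $3$, and then invoke the standard fact that a convolution of multiplicative functions is multiplicative; the prime-power formulae fall out of geometric sums. The paper instead proves multiplicativity by a direct combinatorial argument: writing $n = pq$ with $p, q$ coprime, decomposing each divisor $d$ as $st$ with $s \mid p$ and $t \mid q$, and checking by cases on residues modulo $3$ that the double sum factors as $\alpha_p \alpha_q$. Both arguments are valid; yours is shorter and more structural, while the paper's is fully self-contained and does not appeal to an outside theorem about Dirichlet convolutions. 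The paper does eventually adopt your convolution viewpoint in Section~\ref{sec: dirichlet}, so your approach effectively anticipates and streamlines that later discussion.
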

\begin{proof}
The three closed formulae follow directly from Eq.~\eqref{eq: formulaA1}. Explicitly, let $n=p^e$ with $p \in \IP$ and $e \in \IN$. We have that
\begin{subequations}
\begin{align}
\sum_{\substack{d | n \\ \frac{n}{d} \equiv_3 1}} d = p^e \, , \quad  \sum_{\substack{d | n \\ \frac{n}{d} \equiv_3 2}} d = 0 \, , \quad &\mbox{if} \quad p \equiv 0 \mod 3 \, , \\
\sum_{\substack{d | n \\ \frac{n}{d} \equiv_3 1}} d = \sum_{i=0}^e p^i = \frac{p^{e+1}-1}{p-1} \, , \quad  \sum_{\substack{d | n \\ \frac{n}{d} \equiv_3 2}} d = 0 \, , \quad &\mbox{if} \quad p \equiv 1 \mod 3 \, , \\
\sum_{\substack{d | n \\ \frac{n}{d} \equiv_3 1}} d = \sum_{i=0}^{\lfloor e/2 \rfloor} p^{e-2i} \, , \quad  \sum_{\substack{d | n \\ \frac{n}{d} \equiv_3 2}} d = \sum_{i=0}^{\lfloor e/2 \rfloor} p^{e-(2i+1)} \, , \quad &\mbox{if} \quad p \equiv 2 \mod 3 \, .
\end{align}
\end{subequations}
Let us now prove the multiplicity property. We will prove a slightly stronger statement. We write $n = p q$ for $p,q \in \IZ_{> 0}$ coprimes. We choose a positive integer divisor $d | n$, and we write $d = s t$ where $s | p$ and $t | q$. Consider two cases:
\begin{itemize}
\item[(1)] Suppose that $n/d \equiv_3 1$. Then, either $p/s \equiv_3 q/t \equiv_3 1$, or $p/s \equiv_3 q/t \equiv_3 2$, and therefore
\be \label{eq: fact1formulaA1}
\sum_{\substack{d | n \\ \frac{n}{d} \equiv_3 1}} d = \sum_{\substack{s | p \\ \frac{p}{s} \equiv_3 1}} s  \sum_{\substack{t | q \\ \frac{q}{t} \equiv_3 1}} t + \sum_{\substack{s | p \\ \frac{p}{s} \equiv_3 2}} s  \sum_{\substack{t | q \\ \frac{q}{t} \equiv_3 2}} t \, .
\ee
\item[(2)] Suppose that $n/d \equiv_3 2$. Then, either $p/s \equiv_3 1$ and $q/t \equiv_3 2$, or $p/s \equiv_3 2$ and $q/t \equiv_3 1$, and therefore
\be \label{eq: fact2formulaA1}
\sum_{\substack{d | n \\ \frac{n}{d} \equiv_3 2}} d = \sum_{\substack{s | p \\ \frac{p}{s} \equiv_3 1}} s  \sum_{\substack{t | q \\ \frac{q}{t} \equiv_3 2}} t + \sum_{\substack{s | p \\ \frac{p}{s} \equiv_3 2}} s  \sum_{\substack{t | q \\ \frac{q}{t} \equiv_3 1}} t \, .
\ee
\end{itemize}
Substituting Eqs.~\eqref{eq: fact1formulaA1} and~\eqref{eq: fact2formulaA1} into Eq.~\eqref{eq: formulaA1}, we find that
\be
\alpha_n = \left( \sum_{\substack{s | p \\ \frac{p}{s} \equiv_3 1}} s - \sum_{\substack{s | p \\ \frac{p}{s} \equiv_3 2}} s \right) \left( \sum_{\substack{t | q \\ \frac{q}{t} \equiv_3 1}} t - \sum_{\substack{t | q \\ \frac{q}{t} \equiv_3 2}} t \right) = \alpha_p \alpha_q  \, , 
\ee
which proves that the sequence $\alpha_n$, $n \in \IZ_{>0}$, defines a multiplicative arithmetic function. Note that the proof breaks if $p,q$ are not coprimes, since the formulae above lead in general to overcounting the contributions coming from common factors. Therefore, the sequence $\alpha_n$ is not totally multiplicative. Note that the sequence of normalized Stokes constants $S_n/S_1$, $n \in \IZ_{>0}$, is also a multiplicative arithmetic function. 
\end{proof}
\begin{corollary} \label{prop: formulaA2} 
The positive integer constants $\alpha_n$, $n  \in \IZ_{>0}$, are encoded in the generating function 
\be \label{eq: formulaA2}
\sum_{n =1}^{\infty} \alpha_n x^n = \sum_{m =1}^{\infty} \frac{m x^m}{1+x^m+x^{2m}} \, .
\ee
\end{corollary}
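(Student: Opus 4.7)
The plan is to recognize the sum over divisors in Eq.~\eqref{eq: formulaA1} as a Dirichlet convolution of the identity with the non-trivial character $\chi$ modulo $3$, where $\chi(k)$ equals $1,-1,0$ according to whether $k \equiv_3 1,2,0$, and then swap the order of summation in the generating function.

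Concretely, I would write $\alpha_n = \sum_{d \mid n} d\,\chi(n/d)$, substitute this into the left-hand side of Eq.~\eqref{eq: formulaA2}, reparametrize by setting $n = d m$, and interchange the two resulting sums to obtain
\be
\sum_{n=1}^{\infty} \alpha_n x^n = \sum_{d=1}^{\infty} d \sum_{m=1}^{\infty} \chi(m) x^{d m} \, .
\ee
The absolute convergence needed to justify the interchange for $|x|<1$ is immediate since $|\alpha_n|$ grows at most like $\sigma_1(n)$.

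The next step is to evaluate the inner sum in closed form. Since $\chi$ is periodic of period $3$ with pattern $(1,-1,0)$, the series $\sum_{m \ge 1} \chi(m) y^m$ is a geometric-type sum that telescopes to
\be
\sum_{m=1}^{\infty} \chi(m) y^m = \frac{y - y^2}{1 - y^3} = \frac{y}{1 + y + y^2} \, ,
\ee
using the factorization $1-y^3 = (1-y)(1+y+y^2)$. Substituting $y = x^d$ yields $\sum_m \chi(m) x^{dm} = x^d/(1+x^d+x^{2d})$, which, after renaming $d$ as $m$, reproduces the right-hand side of Eq.~\eqref{eq: formulaA2}.

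I do not foresee a genuine obstacle: the proof is essentially a one-line Lambert-series manipulation once the arithmetic structure of $\alpha_n$ from Eq.~\eqref{eq: formulaA1} is recast in terms of $\chi$. The only point needing a little care is the exchange of summations, which is justified by absolute convergence on $|x|<1$ as noted above; everything else reduces to summing a geometric series with a periodic coefficient.
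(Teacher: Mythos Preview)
Your proof is correct and considerably more streamlined than the paper's. Both arguments ultimately rely on the same algebraic identity $\frac{1}{1+y+y^{2}}=\frac{1-y}{1-y^{3}}$, but they organise the computation very differently. The paper works from the right-hand side: it splits the generating function as $f_1(x)-f_2(x)$ with $f_i(x)=\sum_{m}\frac{m\,x^{im}}{1-x^{3m}}$, and then shows, by differentiating term by term and invoking the generalized binomial expansion of $(1-x^{3m})^{-1}$, that the $n$-th Taylor coefficient of $f_1$ (resp.\ $f_2$) equals $\sum_{d\mid n,\,n/d\equiv_3 1} d$ (resp.\ $\sum_{d\mid n,\,n/d\equiv_3 2} d$). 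You instead start from the left-hand side, recognise $\alpha_n$ as the Dirichlet convolution $\mathrm{id}*\chi_{3,2}$, and perform a Lambert-series swap of summations; the periodic inner sum then collapses to the rational function $y/(1+y+y^{2})$ in one step. Your route is shorter and more conceptual, and in fact anticipates the Dirichlet-character viewpoint that the paper only introduces later in Section~\ref{sec: dirichlet}; the paper's route has the minor advantage of being entirely self-contained at the level of power-series manipulations, without any number-theoretic vocabulary.
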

\begin{proof}
We denote by $f(x)$ the generating function in the RHS of Eq.~\eqref{eq: formulaA2}. We note that 
\be
f(x) = f_1(x) - f_2(x) \, ,
\ee
where the functions $f_1(x), f_2(x)$ are defined by
\be
f_1(x) = \sum_{m \in \IN_{\ne 0}} \frac{m x^m}{1-x^{3m}} \, , \quad f_2(x) = \sum_{m \in \IN_{\ne 0}} \frac{m x^{2m}}{1-x^{3m}} \, .
\ee
The formula in Eq.~\eqref{eq: formulaA2} follows from the stronger statement
\be
\sum_{\substack{d | n \\ \frac{n}{d} \equiv_3 1}} d = \frac{1}{n!} \frac{\rd^n f_1(0)}{\rd x^n} \, , \quad  \sum_{\substack{d | n \\ \frac{n}{d} \equiv_3 2}} d = \frac{1}{n!} \frac{\rd^n f_2(0)}{\rd x^n} \, , \quad n  \in \IZ_{>0} \, .
\ee
We will now prove this claim for the function $f_1(x)$. The case of $f_2(x)$ is proven analogously.
Let us denote by 
\be
f_{1,m}(x) = \frac{m x^m}{1-x^{3m}} \, , \quad m \in \IN_{\ne 0} \, , 
\ee
and consider the derivative $\rd^n f_{1,m}(x)/\rd x^n$ for fixed $m$. We want to determine its contributions to $\rd^n f_1(0)/\rd x^n$. Since we are interested in those terms that survive after taking $x=0$, we look for the monomials of order $x^{dm-n}$, where $d|n$, in the numerator of $\rd^n f_{1,m}(x)/\rd x^n$, and we take $m = n/d$. More precisely, deriving $a$-times the factor $m x^m$ and $(n-a)$-times the factor $(1-x^{3m})^{-1}$, we have the term
\be  \label{eq: monomials0}
\binom{n}{a} \frac{\rd^a (m x^m)}{\rd x^a} \frac{\rd^{n-a} (1-x^{3m})^{-1}}{\rd x^{n-a}} \, , \quad a \in \IN_{\ne 0} \, .
\ee
Recall that the generalized binomial theorem for the geometric series yields
\be \label{eq: geomseries}
\frac{\rd^{n-a} (1-x^{3m})^{-1}}{\rd x^{n-a}} = \sum_{k=0}^{\infty} \frac{(3mk)!}{(3mk-n+a)!} x^{3mk-n+a} \, .
\ee
Substituting Eq.~\eqref{eq: geomseries} into Eq.~\eqref{eq: monomials0} and performing the derivation, we have
\be
n! \sum_{k=0}^{\infty} m \binom{m}{m-a} \binom{3mk}{3mk-n+a} x^{(1+3k)m-n} \, . 
\ee
It follows then that the only non-zero term at fixed $m \in \IN_{\ne 0}$ comes from the values of $k \in \IN$ and $a \in \IN_{\ne 0}$ such that $(1+3k)m = n$ and $a=m$, which implies in turn that $m|n$ and $n/m \equiv_3 1$. Finally, summing the non-trivial contributions over $m$ gives precisely
\be
\frac{\rd^n f_1(0)}{\rd x^n} = \sum_{\substack{m|n \\ \frac{n}{m} \equiv_3 1}} n! m \binom{m}{0} \binom{n-m}{0} = n! \sum_{\substack{m|n \\ \frac{n}{m} \equiv_3 1}} m \, .
\ee
\end{proof}
A third notable consequence of Proposition~\ref{prop: formulaS1} is that the Stokes constants $S_n$, $n \in \IZ_{>0}$, can be naturally organized as coefficients of an exact generating function given by quantum dilogarithms.
\begin{corollary} \label{prop: formulaS2} 
The Stokes constants $S_n$, $n  \in \IZ_{>0}$, are encoded in the generating function 
\be \label{eq: formulaS2}
\sum_{n =1}^{\infty} S_n x^n = -\ri \pi - 3 \log \frac{(w ; \, x)_{\infty}}{(w^{-1} ; \, x)_{\infty}} \, , \quad |x|<1 \, ,
\ee
where $w = \re^{2 \pi \ri /3}$.
\end{corollary}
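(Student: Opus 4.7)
The plan is to expand the right-hand side as a power series in $x$ and match coefficients with Proposition~\ref{prop: formulaS1}. First I take logarithms of the infinite products and split off the $k=0$ factor: for $|x|<1$,
\be
\log \frac{(w;x)_\infty}{(w^{-1};x)_\infty} = \log \frac{1-w}{1-w^{-1}} + \sum_{k \ge 1}\!\bigl[ \log(1-wx^k) - \log(1-w^{-1}x^k) \bigr].
\ee
A direct computation gives $(1-w)/(1-w^{-1}) = \re^{-\ri\pi/3}$, so the $k=0$ piece contributes $-\ri\pi/3$. After multiplying by $-3$, this constant becomes $+\ri\pi$, which is cancelled by the explicit $-\ri\pi$ on the right-hand side of~\eqref{eq: formulaS2}, accounting for the absence of a constant term on the left-hand side.

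Next I expand each logarithm for $k \ge 1$ via $\log(1-y)=-\sum_{m\ge 1}y^m/m$, interchange the sums (justified by absolute convergence for $|x|<1$), and recognize a geometric series:
\be
\sum_{k \ge 1}\bigl[ \log(1-wx^k) - \log(1-w^{-1}x^k) \bigr] = -\sum_{m=1}^{\infty} \frac{w^m - w^{-m}}{m} \frac{x^m}{1-x^m}.
\ee
Using $w = \re^{2\pi\ri/3}$, the factor $w^m - w^{-m} = 2\ri\sin(2\pi m/3)$ vanishes when $m \equiv_3 0$, equals $\ri\sqrt{3}$ when $m \equiv_3 1$, and equals $-\ri\sqrt{3}$ when $m \equiv_3 2$. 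Collecting everything and multiplying by $-3$, the right-hand side of~\eqref{eq: formulaS2} becomes
\be
3\ri\sqrt{3}\!\left( \sum_{\substack{m \ge 1 \\ m \equiv_3 1}} \frac{1}{m}\frac{x^m}{1-x^m} - \sum_{\substack{m \ge 1 \\ m \equiv_3 2}} \frac{1}{m}\frac{x^m}{1-x^m} \right).
\ee

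Finally I expand $x^m/(1-x^m) = \sum_{k \ge 1} x^{km}$ and write the double sums as power series in $x$ by setting $n = km$: the coefficient of $x^n$ picks out those divisors $d = m$ of $n$ with the prescribed residue mod $3$. Therefore the coefficient of $x^n$ on the right equals
\be
3\ri\sqrt{3}\!\left( \sum_{\substack{d | n \\ d \equiv_3 1}} \frac{1}{d} - \sum_{\substack{d | n \\ d \equiv_3 2}} \frac{1}{d} \right) = S_1 \cdot \frac{S_n}{S_1} = S_n
\ee
by Proposition~\ref{prop: formulaS1} together with $S_1 = 3\sqrt{3}\ri$. This matches the coefficient on the left and proves~\eqref{eq: formulaS2}.

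The only delicate point is bookkeeping: one must correctly identify the constant contribution coming from the $k=0$ factor in the Pochhammer symbols so that it cancels the $-\ri\pi$ on the right-hand side, and one must carefully track the substitution $n=km$ so that the divisor sums from Proposition~\ref{prop: formulaS1} emerge with the correct residue conditions. Apart from these, the argument is a transparent reorganization of a Lambert-type series.
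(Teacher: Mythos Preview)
Your proof is correct and follows essentially the same route as the paper's: expand the logarithm of the infinite products, use the trichotomy for $w^m-w^{-m}$ according to $m\bmod 3$, and reorganize the resulting double sum as a divisor sum to invoke Proposition~\ref{prop: formulaS1}. The only cosmetic difference is that you isolate the constant term up front by splitting off the $k=0$ factor of the Pochhammer symbol and pass through the Lambert-series form $x^m/(1-x^m)$, whereas the paper keeps the full double sum $\sum_{m\ge 0}\sum_{k\ge 1}$ and separates the constant (the $m=0$ contribution) at the end.
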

\begin{proof}
We apply the definition of the quantum dilogarithm in Eq.~\eqref{eq: dilog} and Taylor expand the logarithm function for $|x| <1$. We obtain in this way that 
\be
\log (w ; \, x)_{\infty} =  \sum_{m=0}^{\infty} \log(1-w x^m) = -\sum_{m=0}^{\infty} \sum_{k=1}^{\infty} \frac{x^{mk}}{k} w^k \, ,
\ee
and therefore also
\be \label{eq: logfrac}
\log \frac{(w ; \, x)_{\infty}}{(w^{-1} ; \, x)_{\infty}} = -\sum_{m=0}^{\infty} \sum_{k=1}^{\infty} \frac{x^{mk}}{k} (w^k-w^{-k}) \, .
\ee
We observe that 
\be \label{eq: cases}
w^k-w^{-k} =  \re^{2 \pi \ri k /3} -  \re^{- 2 \pi \ri k/3} = 
\begin{cases}
         0 & \mbox{for} \quad k \equiv 0 \mod 3 \\ 
         \ri \sqrt{3}  & \mbox{for} \quad k \equiv 1 \mod 3 \\
         - \ri \sqrt{3} & \mbox{for} \quad k \equiv 2 \mod 3
\end{cases} \, .
\ee
Substituting Eq.~\eqref{eq: cases} into Eq.~\eqref{eq: logfrac}, and performing the change of variable $n=mk$, we find
\be \label{eq: logfrac2}
\log \frac{(w ; \, x)_{\infty}}{(w^{-1} ; \, x)_{\infty}} = - \ri \sqrt{3} \sum_{n =1}^{\infty} x^n  \left( \sum_{\substack{d|n \\ d \equiv_3 1}} \frac{1}{d} - \sum_{\substack{d|n \\ d \equiv_3 2}} \frac{1}{d}  \right) - \sum_{k=1}^{\infty} \frac{1}{k} (w^k-w^{-k}) \, ,
\ee
where the last term in the RHS is simply resummed to
\be
- \sum_{k=1}^{\infty} \frac{1}{k} (w^k-w^{-k}) = \log (1-w) - \log(1-w^{-1}) = \log(-w) = -\frac{\pi \ri}{3} \, .
\ee
Substituting the arithmetic formula for the Stokes constants in Eq.~\eqref{eq: formulaS1} into the expression in Eq.~\eqref{eq: logfrac2}, we obtain the desired statement.
\end{proof}
We note that, choosing $x = \tilde{q} = \re^{-4 \pi^2 \ri /3 \hbar}$, Corollary~\ref{prop: formulaS2} directly implies an exact expression in terms of $\tilde{q}$-series for the discontinuity of the asymptotic series $\phi(\hbar)$ across the positive imaginary axis, which borders the only two distinct Stokes sectors in the upper half of the Borel plane. Namely, following the definition in Eq.~\eqref{eq: Stokes1}, and recalling that the formal power series that resurges from the Borel singularity $\zeta_n = n 4 \pi^2 \ri /3$, $n \in \IZ_{>0}$, is trivially $\phi_n(\hbar)=1$, we have that
\be \label{eq: discP2series1}
\text{disc}_{\pi/2}\phi(\hbar) = s_{+}(\phi)(\hbar) - s_{-}(\phi)(\hbar) = \sum_{n=1}^{\infty} S_n \re^{-n 4 \pi^2 \ri /3 \hbar} \, ,
\ee
where $s_{\pm}(\phi)(\hbar)$ are the lateral Borel resummations at the angles $\pi/2 \pm \epsilon$ with $\epsilon \ll 1$, which lie slightly above and slightly below the Stokes line along the positive imaginary axis, respectively. Substituting Eq.~\eqref{eq: formulaS2} into Eq.~\eqref{eq: discP2series1}, we obtain the exact formula
\be \label{eq: discP2series2}
\text{disc}_{\pi/2}\phi(\hbar) = -\ri \pi - 3 \log (w ; \, \tilde{q})_{\infty} + 3 \log (w^{-1} ; \, \tilde{q})_{\infty} \, .
\ee
We stress that $(w ; \, \tilde{q})_{\infty}$ and $(w^{-1} ; \, \tilde{q})_{\infty}$ are the same $\tilde{q}$-series which appear as the anti-holomophic block of the first spectral trace of local $\IP^2$ in Eq.~\eqref{eq: P2fact}.

\subsubsection{Exact large-order relations} \label{sec: zeta-zero}
We provide here an alternative closed formula for the perturbative coefficients $a_{2n}$, $n \in \IN_{\ne 0}$, of the asymptotic series $\phi(\hbar)$ in Eq.~\eqref{eq: phi2}, which highlights an interesting link to analytic number theory. We recall that the large-$n$ asymptotics of the coefficients $a_{2n}$ is controlled at leading order by the singular behavior of the Borel transform $\hat{\phi}(\zeta)$ in the neighbourhood of its dominant complex conjugate singularities $\zeta_{\pm 1}$, which is encoded in the local expansion in Eq.~\eqref{eq: explocalP2}. We have the standard formula
\be
a_{2n} \sim \frac{(-1)^n}{\pi \ri} \frac{\Gamma(2n)}{A^{2n}} S_1 \, , \quad  n \gg 1 \, ,
\ee
where $A= 4 \pi^2/3$ and $S_1= 3 \sqrt{3} \ri$. By systematically including the contributions from all sub-dominant singularities in the Borel plane, the leading asymptotics can be upgraded to an exact large-order relation, which is\footnote{The formula in Eq.~\eqref{eq: exactlarge} has also been obtained in~\cite{unpublished_Gu}.}
\be \label{eq: exactlarge}
a_{2n} = \frac{(-1)^n}{\pi \ri} \frac{\Gamma(2n)}{A^{2n}} \sum_{m=1}^{\infty} \frac{S_m}{m^{2n}} \, , \quad  n \in \IN_{\ne 0} \, ,
\ee
where the Stokes constant $S_m$ is given explicitly in Eq.~\eqref{eq: intStokesP2zero}.
\begin{proposition} \label{prop: formulaS3} 
The Stokes constants $S_m$, $m \in \IZ_{> 0}$, satisfy the exact relations
\be \label{eq: formulaS3}
\sum_{m=1}^{\infty} \frac{S_m}{m^{2n}} = 3 \sqrt{3} \ri \frac{\zeta(2n)}{3^{2n+1}} \left(\zeta\left(2n+1,\frac{1}{3}\right) - \zeta\left(2n+1, \frac{2}{3} \right) \right) \, , \quad n \in \IN_{\ne 0} \, ,
\ee
where $\zeta(z)$ denotes the Riemann zeta function, and $\zeta(z, a)$ denotes the Hurwitz zeta function.
\end{proposition}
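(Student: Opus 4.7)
The plan is to reduce the identity to a straightforward Dirichlet series manipulation by substituting the closed arithmetic formula for $S_m$ from Proposition~\ref{prop: formulaS1} into the left-hand side of Eq.~\eqref{eq: formulaS3}, interchanging the order of summation, and recognizing the resulting inner sums as Hurwitz zeta values.

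First, I would write
\be \nonumber
\sum_{m=1}^{\infty} \frac{S_m}{m^{2n}} = 3 \sqrt{3} \ri \sum_{m=1}^{\infty} \frac{1}{m^{2n}} \left( \sum_{\substack{d | m \\ d \equiv_3 1}} \frac{1}{d} - \sum_{\substack{d | m \\ d \equiv_3 2}} \frac{1}{d} \right) \, ,
\ee
and swap the order of summation. Since $n \in \IZ_{>0}$, the double series is absolutely convergent (both $S_m$ and $1/m^{2n}$ are $O(m^{-2n+\epsilon})$), so Fubini applies. Reindexing $m = d \ell$ with $\ell \in \IZ_{>0}$ for each fixed divisor $d$, one obtains
\be \nonumber
\sum_{m=1}^{\infty} \frac{1}{m^{2n}} \sum_{\substack{d | m \\ d \equiv_3 j}} \frac{1}{d} = \sum_{\substack{d \ge 1 \\ d \equiv_3 j}} \frac{1}{d^{2n+1}} \sum_{\ell =1}^{\infty} \frac{1}{\ell^{2n}} = \zeta(2n) \sum_{\substack{d \ge 1 \\ d \equiv_3 j}} \frac{1}{d^{2n+1}} \, ,
\ee
for $j \in \{1,2\}$.

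Next, I would express each of these arithmetic progression sums in Hurwitz form. Parametrizing $d = 3k+j$ with $k \ge 0$ for $j \in \{1, 2\}$ gives
\be \nonumber
\sum_{\substack{d \ge 1 \\ d \equiv_3 j}} \frac{1}{d^{2n+1}} = \sum_{k = 0}^{\infty} \frac{1}{(3k+j)^{2n+1}} = \frac{1}{3^{2n+1}} \sum_{k=0}^{\infty} \frac{1}{(k+j/3)^{2n+1}} = \frac{1}{3^{2n+1}} \zeta\!\left(2n+1, \frac{j}{3}\right) \, ,
\ee
by the standard series definition of the Hurwitz zeta function. Combining the two contributions and factoring the common prefactor then yields
\be \nonumber
\sum_{m=1}^{\infty} \frac{S_m}{m^{2n}} = 3 \sqrt{3} \ri \, \frac{\zeta(2n)}{3^{2n+1}} \left( \zeta\!\left(2n+1, \frac{1}{3}\right) - \zeta\!\left(2n+1, \frac{2}{3}\right) \right) \, ,
\ee
which is exactly Eq.~\eqref{eq: formulaS3}.

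This proof is essentially mechanical once one has the arithmetic formula for $S_m$, so there is no serious obstacle. The only subtlety worth making explicit is the absolute convergence required to swap sums, which holds uniformly for all $n \ge 1$ because the normalized Stokes constants satisfy $|S_m/S_1| = O(m^\epsilon)$ (as a divisor-type sum) and $\sum_m m^{\epsilon-2n}$ converges. A concluding remark could point out that the factor $\zeta(2n+1, 1/3) - \zeta(2n+1, 2/3)$ is, up to a normalization, the Dirichlet $L$-value $L(2n+1, \chi)$ for the nontrivial character $\chi$ modulo $3$, so that the right-hand side factorizes as a product $\zeta(2n) L(2n+1,\chi)$ of special values of $L$-functions, which is consistent with the arithmetic picture advertised in Section~\ref{sec: main}.
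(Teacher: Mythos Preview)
Your proof is correct and takes a genuinely different route from the paper. The paper's argument proceeds by inverting the exact large-order relation in Eq.~\eqref{eq: exactlarge}, substituting the closed Bernoulli-polynomial form of the perturbative coefficients $a_{2n}$ from Eq.~\eqref{eq: phiP2}, and then invoking the identity $\zeta(2n)=(-1)^{n+1}(2\pi)^{2n}B_{2n}/2(2n)!$, the relation $B_{2n+1}(z)=-(2n+1)\zeta(-2n,z)$, and finally the functional equation for the Hurwitz zeta function to turn $\zeta(-2n,1/3)-\zeta(-2n,2/3)$ into $\zeta(2n+1,1/3)-\zeta(2n+1,2/3)$. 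In contrast, you bypass the large-order relation and all special-function identities entirely: you start from the arithmetic divisor-sum formula of Proposition~\ref{prop: formulaS1}, unfold the Dirichlet convolution, and read off the Hurwitz zeta values directly. Your argument is shorter and more elementary, and it is in fact precisely the Dirichlet-series factorization that the paper only spells out later, in Section~\ref{sec: dirichlet}, as $\sum_m S_m/(S_1 m^s)=L(s+1,\chi_{3,2})\zeta(s)$. The paper's approach, on the other hand, buys a nontrivial consistency check: it shows that the resurgent large-order relation, the Bernoulli expression for $a_{2n}$, and the arithmetic formula for $S_m$ all fit together via the Hurwitz functional equation, which is a useful internal verification of the resurgent structure.
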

\begin{proof}
Substituting the original expression for the perturbative coefficients $a_{2n}$, $n \in \IN_{\ne 0}$, in Eq.~\eqref{eq: phiP2} into the exact large-order relation in Eq.~\eqref{eq: exactlarge}, we have that
\be \label{eq: proofS3-1}
\sum_{m=1}^{\infty} \frac{S_m}{m^{2n}} = - 3\pi \ri (2 \pi)^{4n} \frac{B_{2n} B_{2n+1}(2/3)}{(2n)! (2n+1)!} \, .
\ee
Using the known identities
\begin{subequations}
\be \label{eq: ztoBeven}
\zeta(2n) = (-1)^{n+1} \frac{(2 \pi)^{2n} B_{2n}}{2 (2n)!} \, ,
\ee
\be
B_{2n+1}(2/3) = - B_{2n+1}(1/3) \, , \quad B_{2n+1}(z) = -(2n+1) \zeta(-2n, z) \, , \label{eq: betaID}
\ee
\end{subequations}
the formula in Eq.~\eqref{eq: proofS3-1} becomes
\be \label{eq: proofS3-2}
\sum_{m=1}^{\infty} \frac{S_m}{m^{2n}} = - 3\pi \ri (-1)^n (2 \pi)^{2n} \frac{\zeta(2n)}{(2n)!} \left(\zeta\left(-2n,\frac{2}{3}\right) - \zeta\left(-2n, \frac{1}{3} \right) \right) \, .
\ee
We recall that the Hurwitz zeta function satisfies the functional equation
\be \label{eq: functZeta}
\zeta\left(1-z, \frac{a}{b}\right) = \frac{2 \Gamma(z)}{(2 \pi b)^z} \sum_{j=1}^b \zeta\left( z, \frac{j}{b} \right) \cos\left(\frac{\pi z}{2} - \frac{2 \pi j a}{b} \right) \, ,
\ee
for integers $1 \le a \le b$, which gives in particular
\be \label{eq: proofS3-zeta}
\zeta\left(-2n,\frac{2}{3}\right) - \zeta\left(-2n, \frac{1}{3}\right) = (-1)^n \frac{2 \sqrt{3} (2n)!}{(6 \pi)^{2n+1}} \left( \zeta\left(2n+1,\frac{2}{3}\right) - \zeta\left(2n+1, \frac{1}{3}\right) \right) \, .
\ee
Substituting Eq.~\eqref{eq: proofS3-zeta} into Eq.~\eqref{eq: proofS3-2}, we obtain the desired statement.
\end{proof}
\begin{remark}
We note that the exact expression in Eq.~\eqref{eq: formulaS3} can be written equivalently in terms of the integer constants $\alpha_m$,  $m \in \IZ_{\ne 0}$. Namely, 
\be \label{eq: formulaA3}
\sum_{m=1}^{\infty} \frac{\alpha_m}{m^{2n+1}} = \frac{\zeta(2n)}{3^{2n+1}} \left(\zeta\left(2n+1,\frac{1}{3}\right) - \zeta\left(2n+1, \frac{2}{3} \right) \right) \, , \quad n \in \IN_{\ne 0} \, ,
\ee
which hints at a fascinating connection to the analytic theory of $L$-functions. More precisely, let us point out that the series in the LHS of Eq.~\eqref{eq: formulaA3} belongs to the family of Dirichlet series~\cite{Dirichlet}.
As a consequence of Corollary~\ref{prop: formulaA1mult}, the sequence of integers $\alpha_m$, $m \in \IZ_{\ne 0}$, defines a bounded multiplicative arithmetic function. Therefore, the corresponding Dirichlet series satisfies an expansion as an Euler product indexed by the set of prime numbers $\IP$, that is, 
\be \label{eq: L-series}
\sum_{m=1}^{\infty} \frac{\alpha_m}{m^{2n+1}} = \prod_{p \in \IP} \sum_{e=0}^{\infty} \frac{\alpha_{p^e}}{p^{e(2n+1)}}  \, , \quad n \in \IN_{\ne 0} \, ,
\ee
which proves that the given Dirichlet series is, indeed, an $L$-series. We will further explore in this direction in Section~\ref{sec: dirichlet}.
\end{remark}

\subsubsection{Exponentiating with alien calculus} \label{sec: exp-zero}
We will now translate our analytic solution to the resurgent structure of the asymptotic series $\phi(\hbar)$ in Eq.~\eqref{eq: phiP2} into results on the original, exponentiated perturbative series in Eq.~\eqref{eq: expP2}, which we denote by
\be \label{eq: psiP2}
\psi(\hbar) = \re^{\phi(\hbar)} = \exp \left( 3 \sum_{n = 1}^{\infty} (-1)^{n-1} \frac{B_{2n} B_{2n+1}(2/3)}{2n (2n+1)!} (3 \hbar)^{2n} \right)  \in \IQ[\![\hbar]\!] \, ,
\ee
and which is also a Gevrey-1 asymptotic series. Its Borel transform $\hat{\psi}(\zeta)$ inherits from $\hat{\phi}(\zeta)$ the same pattern of singularities in Eq.~\eqref{eq: zetan}. Namely, there are infinitely many and discrete logarithmic branch points located along the imaginary axis of the complex $\zeta$-plane at $\zeta_n = n 4 \pi^2 \ri/3$, $n \in \IZ_{\ne 0}$.
We denote by $s_{\pm}(\psi)(\hbar)$ the lateral Borel resummations at the angles $\pi/2 \pm \epsilon$ with $\epsilon \ll 1$, which lie slightly above and slightly below the Stokes line along the positive imaginary axis, respectively. 
Let us apply Eqs.~\eqref{eq: autStokes} and~\eqref{eq: alienStokes} and expand the exponential operator defining the Stokes automorphism. We find that
\be \label{eq: splus1}
\ba
s_{+}(\psi)(\hbar) &= s_{-} \circ \mathfrak{S}_{\pi/2}(\psi)(\hbar) = s_{-} \circ \exp \left( \sum_{n=1}^{\infty} \re^{-\zeta_n/\hbar} \Delta_{\zeta_n} \right)(\psi)(\hbar) = \\ 
&= s_{-}(\psi)(\hbar) + \sum_{r=1}^{\infty} \frac{1}{r!} \sum_{n_1, \dots, n_r =1}^{\infty} \re^{-(\zeta_{n_1} + \dots + \zeta_{n_r})/\hbar} s_{-} \left( \Delta_{\zeta_{n_1}} \cdots  \Delta_{\zeta_{n_r}} \psi \right)(\hbar) = \\
&= s_{-}(\psi)(\hbar) + \sum_{k=1}^{\infty} \re^{-\zeta_k/\hbar} \sum_{p \in \CP(k)} \frac{1}{r!} \binom{r}{N_1, \dots , N_k} s_{-} \left( \Delta_{\zeta_{n_1}} \cdots  \Delta_{\zeta_{n_r}} \psi \right)(\hbar) \, ,
\ea
\ee
where $\Delta_{\zeta_n}$ is the alien derivative associated to the singularity $\zeta_n$, $n \in \IZ_{>0}$, whose definition and basic properties are summarized in Appendix~\ref{app: alien}, and $\CP(k)$ is the set of all partitions of the positive integer $k$. A partition $p \in \CP(k)$ of length $|p|=r \in \IN_{\ne 0}$ has the form $p = (n_1, \dots , n_r)$ with $1 \le n_1 \le \dots \le n_r \le k$ such that $n_1 + \dots + n_r = k$. We denote by $N_i \in \IN$ the number of times that the positive integer $i \in \IZ_{> 0}$ is repeated in the partition $p$. Note that $\sum_{i=1}^k N_i = r$.

Since Eq.~\eqref{eq: exampleA} directly applies to the asymptotic series $\phi(\hbar)$, the action of the alien derivative $\Delta_{\zeta_n}$ on $\phi(\hbar)$ simplifies to give precisely the Stokes constant at the singularity $\zeta_n$, that is,\footnote{Let us stress that the output of alien derivation on a formal power series has, in general, a more complex dependence on the Stokes constants. For more details, see Appendix~\ref{app: alien}.}
\be
\Delta_{\zeta_n} \phi(\hbar) = S_n \, , 
\ee
where $S_n$ is written explicitly in Eq.~\eqref{eq: intStokesP2zero}, while the formula in Eq.~\eqref{eq: alienExp} becomes
\be
\Delta_{\zeta_n} \psi(\hbar) = \Delta_{\zeta_n} \re^{\phi(\hbar)} = S_n  \psi(\hbar) \, ,
\ee
and therefore we have
\be \label{eq: psiDs}
\Delta_{\zeta_{n_1}} \cdots  \Delta_{\zeta_{n_r}} \psi(\hbar) = S_{n_1} \cdots S_{n_r}  \psi(\hbar) \, , \quad r \in \IN_{\ne 0} \, .
\ee
Substituting Eq.~\eqref{eq: psiDs} into the last line of Eq.~\eqref{eq: splus1}, we obtain that
\be \label{eq: splus2}
s_{+}(\psi)(\hbar) = s_{-}(\psi)(\hbar) + s_{-} (\psi)(\hbar) \sum_{k=1}^{\infty} \re^{-\zeta_k/\hbar} \bar{S}_k  \, , 
\ee
where the asymptotic series $\psi_k(\hbar)$, which resurges from $\psi(\hbar)$ at the singularity $\zeta_k$, is simply
\be \label{eq: psik}
\psi_k(\hbar) = \psi(\hbar) \, , \quad k \in \IZ_{>0} \, ,
\ee
and the Stokes constant $\bar{S}_k \in \IC$ of $\psi(\hbar)$ at the singularity $\zeta_k$ is fully determined by the Stokes constants of $\phi(\hbar)$ via the closed formula
\be \label{eq: newStokesP2}
\bar{S}_k =  \sum_{p \in \CP(k)} \frac{1}{r!} \binom{r}{N_1, \dots , N_k} S_{n_1} \cdots  S_{n_r} \, , \quad k \in \IZ_{>0} \, .
\ee
We stress that the sum over partitions in Eq.~\eqref{eq: newStokesP2} is finite, and thus all the Stokes constants of the original perturbative series $\psi(\hbar)$ are known exactly. More precisely, the discontinuity formula in Eq.~\eqref{eq: splus2} solves analytically the resurgent structure of $\psi(\hbar)$. Note that the instanton sectors associated to the symmetric singularities along the negative imaginary axis are analytically derived from the resurgent structure of $\phi(\hbar)$ by applying the same computations above to the discontinuity of $\psi(\hbar)$ across the angle $3 \pi/2$. We find straightforwardly that, if we define $\CP(k) = \CP(|k|)$ when $k<0$, the formulae in Eqs.~\eqref{eq: psik} and~\eqref{eq: newStokesP2} hold for all values of $k \in \IZ_{\ne 0}$. In particular, we have that $\bar{S}_{k} = \bar{S}_{-k}$. 

Let us point out that the Stokes constants $\bar{S}_k$, $k \in \IZ_{\ne 0}$, are generally complex numbers. However, we can say something more. The discontinuity formula in Eq.~\eqref{eq: discP2series2} can be directly exponentiated to give an exact generating function in terms of known $\tilde{q}$-series for the Stokes constants $\bar{S}_k$. Namely, we find that
\be \label{eq: quotient}
\sum_{k=1}^{\infty} \bar{S}_k \tilde{q}^{k} = \re^{-\ri \pi}  \frac{(w^{-1} ; \, \tilde{q})_{\infty}^3}{(w ; \, \tilde{q})_{\infty}^3} \, ,
\ee
where $w= \re^{2 \pi \ri /3}$. As a consequence of the $q$-binomial theorem, the quotient of $\tilde{q}$-series in the RHS of Eq.~\eqref{eq: quotient} can be expanded in powers of $\tilde{q}$, and the resulting numerical coefficients are combinations of integers, related to the enumerative combinatorics of counting partitions, and complex numbers, arising as integer powers of the complex constants $w, w^{-1}$. Explicitly, the first several Stokes constants $\bar{S}_k$, $k>0$, are
\be
3 \sqrt{3} \ri , -\frac{27}{2} + \frac{3 \sqrt{3} \ri}{2}, -\frac{27}{2} - \frac{21 \sqrt{3} \ri}{2}, -18 \sqrt{3} \ri, 27 - 30 \sqrt{3} \ri, \frac{189}{2} - \frac{51 \sqrt{3} \ri }{2}, 162 + 15 \sqrt{3} \ri,  \dots \, .
\ee
We stress that a special kind of simplification occurs when factoring out the contribution from $S_1=3 \sqrt{3} \ri$. More precisely, if we divide the discontinuity formula in Eq.~\eqref{eq: discP2series2} by $S_1$ and take the exponential of both sides, we find a new generating series, that is, 
\be \label{eq: quotient2}
\sum_{k=1}^{\infty} \bar{S}'_k \tilde{q}^{k} = \re^{-\frac{\pi}{3\sqrt{3}}} \left( \frac{(w ; \, \tilde{q})_{\infty}}{(w^{-1} ; \, \tilde{q})_{\infty}} \right)^{\frac{\ri}{\sqrt{3}}} \, ,
\ee
where the new constants $\bar{S}'_k$ are, notably, rational numbers.
Let us remark that these rational Stokes constants $\bar{S}'_k$ appear naturally in the resurgent study of the normalized perturbative series $\phi'(\hbar)= \phi(\hbar)/3\sqrt{3}\ri$ after exponentiation. Explicitly, the first several values of $\bar{S}'_k$, $k>0$, are
\be
1, 1, \frac{5}{3}, \frac{13}{6}, \frac{83}{30}, \frac{299}{90}, \frac{419}{90}, \frac{409}{72}, \frac{23137}{3240} , \frac{138761}{16200} , \frac{1894921}{178200}, \frac{14008261}{1069200}, \frac{3406991}{213840}, \dots \, .
\ee
Finally, we note that the numbers $k! \bar{S}'_k$, $k \in \IZ_{> 0}$, define a sequence of positive integers, which is
\be
1, 2, 10, 52, 332, 2392, 23464, 229040, 2591344, 31082464, 424462304, 6275700928 , \dots \, .
\ee

\subsection{Exact solution to the resurgent structure for \texorpdfstring{$\hbar \rightarrow \infty$}{hbar going to infinity}} \label{sec: P2infty}
Let us go back to the exact formula for the first spectral trace of local $\IP^2$ in Eq.~\eqref{eq: P2fact} and derive its all-orders perturbative expansion in the dual limit $\hbar \rightarrow \infty$. In the strong-weak coupling duality of Eq.~\eqref{eq: duality} between the spectral theory of the operator $\rho_{\IP^2}$ and the standard topological string theory on local $\IP^2$, this regime corresponds to the weakly-coupled limit $g_s \rightarrow 0$ of the topological string. The resurgent structure of the perturbative series $Z_{\IP^2}(1, \hbar \rightarrow \infty)$ has been studied numerically in~\cite{GuM}. We will show here how the same procedure that we have presented in Section~\ref{sec: P2zero} for the semiclassical limit $\hbar \rightarrow 0$ can be straightforwardly applied to the dual case. We obtain in this way a fully analytic solution to the resurgent structure of $\text{Tr}(\rho_{\IP^2})$ for $\hbar \rightarrow \infty$.

Let us start by applying the known asymptotic expansion formula for the quantum dilogarithm in Eq.~\eqref{eq: logPhiNC} to the anti-holomorphic blocks in Eq.~\eqref{eq: P2fact} and explicitly evaluate the special functions that appear.
We recall that
\begin{subequations} \label{eq: dilog0}
\begin{align}
\log(1-w) - 2 \log(1-w^{-1}) &= -\frac{\pi \ri}{2} - \frac{1}{2} \log(3) \, , \\
\text{Li}_2(w)-2\text{Li}_2(w^{-1}) &= \frac{\pi^2}{18} + \ri V \, , \\
\text{Li}_0(w)-2\text{Li}_0(w^{-1}) &= \frac{1}{2} + 3\sqrt{3} \ri B_1(2/3) \, ,
\end{align}
\end{subequations}
where we have defined $V = 2 \Im \left( \text{Li}_2(\re^{\pi \ri/3}) \right)$ and $w=\re^{2 \pi \ri/3}$, as before. For integer $n \ge 2$, the dilogarithm functions give
\be \label{eq: dilog1}
\ba
&\text{Li}_{2-2n}(w) - 2 \text{Li}_{2-2n}(w^{-1}) = \sum_{s=1}^{\infty} \frac{1}{s^{2-2n}}(w^s - 2 w^{-s}) = \\
&= - 3^{2n-2} \left[ \zeta(2-2n) + \left( \frac{1}{2} + \frac{3 \sqrt{3} \ri}{2} \right) \zeta\left(2-2n , \frac{1}{3} \right) + \left( \frac{1}{2} - \frac{3 \sqrt{3} \ri}{2} \right) \zeta\left(2-2n , \frac{2}{3} \right) \right] \, .
\ea
\ee
Using the identity
\be
\zeta\left(2-2n , \frac{1}{3} \right) + \zeta\left(2-2n , \frac{2}{3} \right) \propto \zeta(2-2n) = 0 \, , \quad n \in \IZ_{>1} \, , 
\ee
and the formulae in Eq.~\eqref{eq: betaID}, the expression in Eq.~\eqref{eq: dilog1} simplifies to
\be \label{eq: dilog2}
\text{Li}_{2-2n}(w) - 2 \text{Li}_{2-2n}(w^{-1}) = 3^{2n-1} \sqrt{3}\ri  \frac{B_{2n-1}(2/3)}{2n-1} \, .
\ee
Substituting Eqs.~\eqref{eq: dilog0} and~\eqref{eq: dilog2} into the asymptotic expansion formula in Eq.~\eqref{eq: logPhiNC}, we find that the anti-holomorphic blocks contribute in the limit $\hbar \rightarrow \infty$ as
\be \label{eq: antiholoblockP2}
\ba
\log(w; \, \tilde{q})_{\infty} - 2 \log(w^{-1}; \, \tilde{q})_{\infty} &= -\frac{\pi \ri}{12} \mb^{-2} - \frac{\pi \ri}{4} - \frac{1}{4} \log(3) + \left(\frac{\pi \ri}{36}- \frac{V}{2 \pi} \right) \mb^2 \\
& - \sqrt{3} \ri \sum_{n=1}^{\infty} (6 \pi \ri \mb^{-2})^{2n-1} \frac{B_{2n}B_{2n-1}(2/3)}{(2n-1)(2n)!} \, , 
\ea
\ee
while the holomorphic blocks contribute trivially as
\be \label{eq: holoblockP2}
\frac{(q^{2/3} ; \, q)_{\infty}^2}{(q^{1/3} ; \, q)_{\infty}} \sim 1 \, .
\ee
\begin{remark}
We note that the terms of order $\mb^2$ and $\mb^{-2}$ in Eq.~\eqref{eq: antiholoblockP2} only partially cancel with the opposite contributions from the exponential in Eq.~\eqref{eq: P2fact}, leaving the exponential factor
\be \label{eq: pre-factorExp}
\text{Tr}(\rho_{\IP^2}) \sim  \exp\left(-\frac{3 V}{g_s} \right) \, , \quad g_s \rightarrow 0 \, ,
\ee
which proves the statement of the conifold volume conjecture\footnote{The conifold volume conjecture for toric CY manifolds has been tested in examples of genus one and two in~\cite{CGM2, CGuM, MZ, KMZ, Volume}.} in the special case of local $\IP^2$. 
A dominant exponential of the form in Eq.~\eqref{eq: pre-factorExp} was already found numerically in~\cite{GuM} for both local $\IP^2$ and local $\IF_0$ in the weakly-coupled limit $g_s \rightarrow 0$. However, we show in this paper that, for the same geometries, the perturbative expansion in the limit $\hbar \rightarrow 0$ of the first fermionic spectral trace does not have such a global exponential pre-factor, being dominated by a leading term of order $\hbar^{-1}$. This suggests that there is no analogue of the conifold volume conjecture in the semiclassical regime.
\end{remark}

Substituting Eqs.~\eqref{eq: antiholoblockP2} and~\eqref{eq: holoblockP2} into Eq.~\eqref{eq: P2fact}, and using $2 \pi \mb^2 = 3 \hbar$, we obtain the all-orders perturbative expansion for $\hbar \rightarrow \infty$ of the first spectral trace of local $\IP^2$ in the form
\be \label{eq: expP2infty}
\text{Tr}(\rho_{\IP^2}) = \sqrt{\frac{2 \pi}{3^{5/2} \hbar}} \re^{-\frac{3 V}{4 \pi^2} \hbar} \exp \left( \sqrt{3} \sum_{n = 1}^{\infty} (-1)^{n-1} \frac{B_{2n} B_{2n-1}(2/3)}{(2n)! (2n-1)} \left( \frac{4 \pi^2}{\hbar} \right)^{2n-1} \right) \, ,
\ee
which has coefficients in $\IQ[\pi, \sqrt{3}]$ up to the global pre-factor. 
The formula in Eq.~\eqref{eq: expP2infty} allows us to compute the coefficients of the perturbative series for $Z_{\mathbb{P}^2}(1, \hbar \rightarrow \infty)$ at arbitrarily high order. The first few terms are
\be \label{eq: first-terms-P2-infty}
1 + \frac{\pi^2}{6 \sqrt{3} \hbar} + \frac{\pi^4}{216 \hbar^2} - \frac{59 \pi^6}{19440 \sqrt{3}\hbar^3}- \frac{251 \pi^8}{1399680\hbar^4} + \frac{23687 \pi^{10}}{58786560 \sqrt{3} \hbar^5} + \text{O}(\hbar^{-6}) \, ,
\ee
multiplied by the global pre-factor in Eq.~\eqref{eq: expP2infty}.

\subsubsection{Resumming the Borel transform} \label{sec: borel-infty}
Let us introduce the parameter $\tau = -1/\mb^2 = - 2 \pi /3\hbar$ and denote by $\phi(\tau)$ the formal power series appearing in the exponent in Eq.~\eqref{eq: expP2infty}. Namely, 
\be \label{eq: phiP2infty}
\phi(\tau) = \sum_{n=1}^{\infty} a_{2n} \tau^{2n-1} \in \IQ[\pi, \sqrt{3}] [\![\tau]\!] \, , \quad a_{2n} = (-1)^{n} \sqrt{3} \frac{B_{2n} B_{2n-1}(2/3)}{(2n)! (2n-1)} (6 \pi)^{2n-1}  \quad n \ge 1 \, ,
\ee
which is simply related to the perturbative expansion in the limit $\hbar \rightarrow \infty$ of the logarithm of the first spectral trace of local $\IP^2$ by 
\be
\log \text{Tr}(\rho_{\IP^2}) = \phi(\tau) + \frac{V}{2 \pi \tau} +\frac{1}{2} \log(\tau) -\frac{3}{4} \log(3) + \frac{\pi \ri}{2}  \, .
\ee
As a consequence of the known asymptotic behavior of the Bernoulli polynomials in Eq.~\eqref{eq: asymBernoulli}, we obtain that the coefficients of $\phi(\tau)$ satisfy the expected factorial growth
\be
a_{2n} \sim (-1)^n (2n)! \left(\frac{2 \pi}{3} \right)^{-2n}  \quad n \gg 1 \, ,
\ee
and $\phi(\tau)$ is a Gevrey-1 asymptotic series. 
Its Borel transform is given by
\be \label{eq: hatphizeta2}
\hat{\phi}(\zeta) = \sqrt{3} \sum_{n=1}^{\infty} (-1)^{n} \frac{B_{2n} B_{2n-1}(2/3)}{(2n)! (2n-1) (2n-1)!} (6 \pi \zeta)^{2n-1} \in \IQ[\pi, \sqrt{3}] [\![\zeta]\!] \, ,
\ee
and it is the germ of an analytic function in the complex $\zeta$-plane.  
\begin{proposition}
Using the definition in Eq.~\eqref{eq: prodHa}, we can interpret the Borel transform $\hat{\phi}(\zeta)$ in Eq.~\eqref{eq: hatphizeta2} as the Hadamard product
\be
\hat{\phi}(\zeta) = ( f \diamond g )(\zeta) \, ,
\ee
where the formal power series $f(\zeta)$ and $g(\zeta)$ have finite radius of convergence at the origin $\zeta=0$, and they can be resummed explicitly as\footnote{We impose that $f(0)=g(0)=0$ in order to eliminate the removable singularities of $f(\zeta), g(\zeta)$ at the origin.}
\begin{subequations}
\begin{align}
f(\zeta) &= \sum_{n=1}^{\infty} \frac{B_{2n}}{(2n)!} \zeta^{2n-1} =  -\frac{1}{2 \zeta} \left( 2-\zeta \coth \left(\frac{\zeta}{2} \right) \right) \, , \quad |\zeta| < 2\pi \, , \label{eq: fP2infty} \\
g(\zeta) &= \sum_{n=1}^{\infty} (-1)^n \frac{B_{2n-1}(2/3)}{(2n-1) (2n-1)!} \sqrt{3} (6 \pi)^{2n-1} \zeta^{2n-1} = \frac{3}{2} \log \left(\frac{ \cos(\pi/6 + \pi \zeta) }{ \cos(\pi/6 - \pi \zeta) } \right) \, , \quad |\zeta| < 1/3 \label{eq: gP2infty} \, .
\end{align}
\end{subequations}
\end{proposition}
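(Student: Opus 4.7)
The plan is to mirror the strategy used in the analogous proposition for $\hbar \to 0$: interpret $\hat{\phi}(\zeta)$ as a Hadamard product by matching Taylor coefficients, then resum each factor in closed form using the generating functions for Bernoulli numbers and Bernoulli polynomials at $2/3$, together with trigonometric/hyperbolic manipulations.

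First, I would verify the Hadamard decomposition directly from the definition in Eq.~\eqref{eq: prodHa}. The coefficient of $\zeta^{2n-1}$ in $\hat{\phi}(\zeta)$ is $\sqrt{3}(-1)^{n}\frac{B_{2n}\,B_{2n-1}(2/3)\,(6\pi)^{2n-1}}{(2n)!\,(2n-1)\,(2n-1)!}$, which factors into the product of the $\zeta^{2n-1}$-coefficients of $f$ and $g$ divided by the appropriate factorial, so $\hat{\phi}=f\diamond g$ as formal series; it then suffices to resum $f$ and $g$.

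For $f(\zeta)$, the plan is to start from the standard Bernoulli generating function $\sum_{n\ge 0}\frac{B_n}{n!}\zeta^n=\frac{\zeta}{\re^\zeta-1}$. Taking the even part and using $\frac{\zeta}{\re^\zeta-1}+\frac{\zeta}{2}=\frac{\zeta}{2}\coth(\zeta/2)$ yields $\sum_{n\ge 1}\frac{B_{2n}}{(2n)!}\zeta^{2n}=\frac{\zeta}{2}\coth(\zeta/2)-1$. Dividing by $\zeta$ immediately gives the claimed closed form for $f(\zeta)$, with the radius of convergence $2\pi$ dictated by the nearest poles of $\coth(\zeta/2)$ at $\pm 2\pi\ri$.

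For $g(\zeta)$, I would handle the extra factor $1/(2n-1)$ by differentiation. Set $u=6\pi\zeta$ and write $g(\zeta)=\sqrt{3}\,G(u)$ with $G(u)=\sum_{n\ge 1}(-1)^n\frac{B_{2n-1}(2/3)}{(2n-1)(2n-1)!}u^{2n-1}$. To resum $u\,G'(u)$, I use the odd-part identity $\sum_{m\ge 0}\frac{B_{2m+1}(2/3)}{(2m+1)!}z^{2m+1}=\frac{z}{2}\frac{\sinh(z/6)}{\sinh(z/2)}$ already established in Eq.~\eqref{eq: fP2-origin2}, then substitute $z=\ri u$ to convert $\sinh$ into $\sin$ and absorb the alternating signs via $(\ri u)^{2n-1}=\ri(-1)^{n-1}u^{2n-1}$. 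This yields $G'(u)=-\tfrac{1}{2}\sin(u/6)/\sin(u/2)$, hence
\be
g'(\zeta)=-3\pi\sqrt{3}\,\frac{\sin(\pi\zeta)}{\sin(3\pi\zeta)}=-\frac{3\sqrt{3}\pi}{(2\cos(\pi\zeta)-1)(2\cos(\pi\zeta)+1)},
\ee
after using $\sin(3x)=\sin(x)(2\cos x-1)(2\cos x+1)$. The final step is to match this with the derivative of the proposed closed form $\frac{3}{2}\log\frac{\cos(\pi/6+\pi\zeta)}{\cos(\pi/6-\pi\zeta)}$: using the tangent addition identity $\tan(a+b)+\tan(a-b)=\sin(2a)/[\cos(a+b)\cos(a-b)]$ with $a=\pi/6$ and the product-to-sum formula $\cos(\pi/6+\pi\zeta)\cos(\pi/6-\pi\zeta)=\tfrac{1}{4}(1+2\cos(2\pi\zeta))=\tfrac{1}{4}(2\cos(\pi\zeta)-1)(2\cos(\pi\zeta)+1)$, the two derivatives coincide. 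Since both sides vanish at $\zeta=0$, they agree as analytic functions, and the radius $1/3$ is determined by the first zeros of $\cos(\pi/6\pm\pi\zeta)$ at $\zeta=\pm 1/3$.

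The main obstacle is the trigonometric bookkeeping in the $g$-identity: the $\ri u$ substitution, the integration step, and in particular the factorization $\sin(3x)=\sin x(2\cos x-1)(2\cos x+1)$ combined with the half-angle manipulations are what make the cosine-ratio closed form emerge. Once those identities are in place the argument is straightforward, and the constraints $f(0)=g(0)=0$ simply remove the removable singularities at the origin, as indicated by the footnote.
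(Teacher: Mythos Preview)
Your proposal is correct. For $f(\zeta)$ your argument is essentially the same as the paper's. For $g(\zeta)$, however, you take a genuinely different route: the paper converts $B_{2n-1}(2/3)$ to Hurwitz zeta values via Eq.~\eqref{eq: BernoulliToZeta}, applies the functional equation~\eqref{eq: functZeta}, invokes the series~\eqref{eq: zetaseriesOdd} to obtain a ratio of gamma functions, and then uses Euler's reflection formula to reach the cosine ratio. Your approach instead differentiates to strip the $1/(2n-1)$, reuses the already-established identity~\eqref{eq: fP2-origin2} from the $\hbar\to 0$ proposition under the substitution $z=\ri u$, and then matches derivatives via the elementary factorisation $\sin(3x)=\sin x\,(2\cos x-1)(2\cos x+1)$ and a product-to-sum identity. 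Your path is more self-contained (it stays within trigonometric identities and recycles earlier work rather than appealing to Hurwitz-zeta machinery), while the paper's route has the advantage of exhibiting the intermediate $\Gamma$-function form in Eq.~\eqref{eq: gP2infty-origin}, which parallels the structure of the $\hbar\to 0$ case more transparently. One small slip: the Hadamard product in Eq.~\eqref{eq: prodHa} is the plain product of Taylor coefficients with no factorial division, so your phrase ``divided by the appropriate factorial'' should be dropped --- the coefficient matching is direct.
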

\begin{proof}
The Bernoulli numbers are defined by the generating function
\be \label{eq: fP2infty-origin}
\sum_{n=0}^{\infty} \frac{B_{n}}{n!} \zeta^{n} = \frac{\zeta}{2} \left( \coth \left(\frac{\zeta}{2} \right) - 1 \right) \, , \quad |\zeta| < 2\pi \, .
\ee 
Taking the even part of both sides of Eq.~\eqref{eq: fP2infty-origin}, and multiplying by $1/\zeta$, we obtain the statement in Eq.~\eqref{eq: fP2infty}.
Let us apply the second identity in Eq.~\eqref{eq: BernoulliToZeta} to the power series in the LHS of Eq.~\eqref{eq: gP2infty} and use the functional equation for the Hurwitz zeta function in Eq.~\eqref{eq: functZeta} for $\zeta(2-2n, 2/3)$, $n \ge 2$.
We find that
\be
g(\zeta) = - \pi \sqrt{3} \zeta + 3 \sum_{n=2}^{\infty} \left(\zeta \left( 2n-1, \frac{2}{3} \right) - \zeta \left( 2n-1, \frac{1}{3} \right) \right) \frac{\zeta^{2n-1}}{2n-1} \, .
\ee
Let us now use the formula in Eq.~\eqref{eq: zetaseriesOdd} for $a=2/3, 1/3$ and recall the known identity
\be
\Psi (2/3) - \Psi(1/3) = \frac{\pi}{\sqrt{3}} \, ,
\ee
where $\Psi(a)$ denotes the digamma function. We obtain in this way that
\be \label{eq: gP2infty-origin}
g(\zeta) =  \frac{3}{2} \log \left( \frac{\Gamma(2/3-\zeta) \Gamma(1/3+\zeta)}{\Gamma(2/3+\zeta) \Gamma(1/3-\zeta)} \right) \, , \quad |\zeta| < 1/3 \, .
\ee
After we apply Euler's reflection formula in Eq.~\eqref{eq: reflectionGamma} with $x=1/3+\zeta, 1/3 - \zeta$ and use the trigonometric identities 
\be
\cos(\pi/6 \pm \pi \zeta) = \sin (\pi/3 \mp \pi \zeta) \, ,
\ee
the formula in Eq.~\eqref{eq: gP2infty-origin} then yields the statement in Eq.~\eqref{eq: gP2infty}.
\end{proof}
After being analytically continued to the whole complex plane, the function $f(\zeta)$ has poles of order one along the imaginary axis at
\be
\mu_m = 2 \pi \ri m \, , \quad m \in \IZ_{\ne 0} \, ,
\ee
while the function $g(\zeta)$ has logarithmic branch points along the real axis at
\be
\nu^{-}_k = -\frac{1}{3} + 2k \, , \quad \nu^{+}_k = \frac{2}{3} + 2k \, , \quad k \in \IZ \, .
\ee

\begin{proposition}
The Borel transform $\hat{\phi}(\zeta)$ in Eq.~\eqref{eq: hatphizeta2} can be expressed as\footnote{We remark that each of the infinite sums giving the Borel transforms in Eqs.~\eqref{eq: intBorel2} and~\eqref{eq: intBorel2infty} can be straightforwardly written as the logarithm of an infinite product.}
\be \label{eq: intBorel2infty}
\hat{\phi}(\zeta) = - \frac{3}{2 \pi \ri} \sum_{m \in \IZ_{\ne 0}} \frac{1}{m} \log \left( \cos \left(\frac{\pi}{6} + \frac{\zeta}{2 \ri m }\right) \right) \, ,
\ee
which is a well-defined, exact function of $\zeta$.
\end{proposition}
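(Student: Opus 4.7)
The plan is to apply Hadamard's multiplication theorem in direct analogy with the proof of Eq.~\eqref{eq: intBorel2}. Let $\gamma$ be a circle in the complex $s$-plane centered at the origin with radius $0 < r < 2\pi$. By the Hadamard product decomposition $\hat{\phi}(\zeta) = (f \diamond g)(\zeta)$ together with the formula in Eq.~\eqref{eq: prodHa}, the Borel transform admits the integral representation
\be
\hat{\phi}(\zeta) = \frac{1}{2 \pi \ri} \int_{\gamma} f(s) \, g(\zeta/s) \, \frac{\rd s}{s} \, ,
\ee
valid for $|\zeta| < r/3$, so that the logarithmic branch points of $s \mapsto g(\zeta/s)$ at $s = \zeta/\nu_k^{\pm}$, $k \in \IZ$, lie strictly inside $\gamma$ and accumulate at the origin. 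The poles $\mu_m = 2 \pi \ri m$, $m \in \IZ_{\ne 0}$, of $f(s)$ lie outside $\gamma$, and there are no further singularities of the integrand for $|s| > r$.

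Next I would compute the residues of $f(s)$ at $s = \mu_m$. Using the local behavior $\coth(s/2) \sim 2/(s - 2 \pi \ri m)$ near $s=\mu_m$ in Eq.~\eqref{eq: fP2infty}, one finds straightforwardly that $\underset{s=\mu_m}{\text{Res}} f(s) = 1$ for every $m \in \IZ_{\ne 0}$. By Cauchy's residue theorem, the integral equals minus the sum of the residues of the integrand at the points outside $\gamma$, so
\be
\hat{\phi}(\zeta) = - \sum_{m \in \IZ_{\ne 0}} \frac{1}{\mu_m} g(\zeta/\mu_m) = - \frac{3}{4 \pi \ri} \sum_{m \in \IZ_{\ne 0}} \frac{1}{m} \log \left( \frac{\cos(\pi/6 + \zeta/(2 \ri m))}{\cos(\pi/6 - \zeta/(2 \ri m))} \right) \, ,
\ee
after substituting the closed form of $g$ given in Eq.~\eqref{eq: gP2infty}. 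A brief manipulation of the summand under $m \mapsto -m$ shows that this expression coincides with the RHS of Eq.~\eqref{eq: intBorel2infty}: the ratio of cosines produces a factor of two after symmetrization, which precisely cancels the factor of two in the denominator $4 \pi \ri$.

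Finally, I would verify that the resulting series defines an analytic function of $\zeta$ well beyond the disc $|\zeta| < r/3$, thereby providing the claimed exact continuation. Since $\log \cos(\pi/6 + \zeta/(2 \ri m)) = O(1/m^2)$ as $m \to \infty$ for fixed $\zeta$ away from the singularities, the sum converges absolutely by the limit comparison test, uniformly on compact subsets that avoid the branch locus. The only expected obstacle is the bookkeeping of the $m \mapsto -m$ symmetrization and the justification of applying Cauchy's theorem for the full family $|\zeta| < r/3$ with a uniform contour $\gamma$; both are handled exactly as in the proof of Eq.~\eqref{eq: intBorel2}, with analytic continuation in $\zeta$ then extending the identity to all complex $\zeta$ outside the logarithmic singularities of the RHS.
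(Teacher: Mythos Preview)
Your proof is correct and follows exactly the same approach as the paper's: Hadamard's integral representation on a circle $|s|=r<2\pi$, residues of $f$ at $\mu_m=2\pi\ri m$ equal to $1$, and Cauchy's theorem summing over the exterior poles, with the paper simply suppressing the intermediate ratio-of-cosines form and the $m\mapsto -m$ symmetrization that you spell out. One small slip: the convergence estimate as written is off, since $\log\cos(\pi/6+\zeta/(2\ri m))\to\log(\sqrt{3}/2)\neq 0$ and is $O(1)$, not $O(1/m^2)$; the correct statement is that either the ratio $\log\bigl(\cos(\pi/6+\zeta/(2\ri m))/\cos(\pi/6-\zeta/(2\ri m))\bigr)$ is $O(1/m)$, or equivalently the $m$ and $-m$ terms in the final sum combine to $O(1/m^2)$, which is what makes the limit comparison test go through.
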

\begin{proof}
We consider a circle $\gamma$ in the complex $s$-plane with center $s=0$ and radius $0 < r < 2 \pi$ and apply Theorem~\ref{theo: HadamardT}. The Borel transform can be written as the integral
\be \label{eq: intBorelinfty}
\hat{\phi}(\zeta) = \frac{1}{2 \pi \ri} \int_{\gamma} f(s) g(\zeta/s) \frac{ \rd s}{s} = -\frac{3}{4 \pi \ri} \int_{\gamma} \frac{2-s \coth(s/2)}{s} \log \left(\frac{ \cos(\pi/6 + \pi \zeta/s) }{ \cos(\pi/6 - \pi \zeta/s) } \right) \frac{ \rd s}{s} \, ,
\ee
for $|\zeta| < r /3$. 
We note that, for such values of $\zeta$, the function $s \mapsto g(\zeta/s)$ has logarithmic branch points at $s = \zeta/\nu^{\pm}_k$, $k \in \IZ$, which sit inside the contour of integration $\gamma$ and accumulate at the origin, and no singularities for $|s| > r$. The function $f(s)$ has simple poles at the points $s= \mu_m$ with residues
\be
\underset{s=2 \pi \ri m}{\text{Res}} f(s) = 1 \, , \quad m \in \IZ_{\ne 0} \, . 
\ee
By Cauchy's residue theorem, the integral in Eq.~\eqref{eq: intBorelinfty} can be evaluated by summing the residues at the poles of the integrand which lie outside $\gamma$, allowing us to express the Borel transform as an exact function of $\zeta$. More precisely, we find the desired analytic formula
\be \label{eq: intBorel2infty-proof}
\hat{\phi}(\zeta) = - \sum_{m \in \IZ_{\ne 0}} \underset{s=2 \pi \ri m}{\text{Res}} f(s) g(\zeta/s) \frac{1}{s} = - \frac{3}{2 \pi \ri} \sum_{m \in \IZ_{\ne 0}} \frac{1}{m} \log \left( \cos \left(\frac{\pi}{6} + \frac{\zeta}{2 \ri m }\right) \right) \, .
\ee
The convergence of the infinite sum in the RHS of Eq.~\eqref{eq: intBorel2infty-proof} can be easily verified by, \textit{e.g.}, the limit comparison test. 
\end{proof}
\begin{corollary}
The singularities of the Borel transform $\hat{\phi}(\zeta)$ in Eq.~\eqref{eq: intBorel2infty} are logarithmic branch points located along the imaginary axis at
\be \label{eq: zetakm-infty}
\zeta^{-}_{k,m} = \nu^{-}_{-k} \mu_{-m} = \frac{2 \pi \ri}{3} (1+6k) m \, , \quad \zeta^{+}_{k,m} = \nu^{+}_{-k} \mu_{-m} = \frac{2 \pi \ri}{3} (-2+6k) m \, , \quad k \in \IZ \, , \quad m \in \IZ_{\ne 0} \, ,
\ee
which we write equivalently as
\be \label{eq: zetan-infty}
\zeta_n = \frac{2 \pi \ri}{3} n \, , \quad n \in \IZ_{\ne 0} \, ,
\ee
that is, the branch points lie at all non-zero integer multiples of the two complex conjugate dominant singularities at $\pm 2 \pi \ri/3$, as illustrated in Fig.~\ref{fig: peacockP2infty}. 
\end{corollary}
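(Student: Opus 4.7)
The plan is to mirror, step by step, the proof of the analogous semiclassical corollary culminating in Eq.~\eqref{eq: zetan}, exploiting the fact that the Borel transform has already been resummed in closed form in Eq.~\eqref{eq: intBorel2infty} as an infinite series of logarithms of cosines. First I would locate the singularities summand by summand. For each fixed $m\in\IZ_{\ne 0}$, the $m$-th term
\[
-\frac{3}{2\pi\ri m}\log\cos\!\left(\frac{\pi}{6}+\frac{\zeta}{2\ri m}\right)
\]
is holomorphic in $\zeta$ except where the cosine vanishes, i.e.\ at $\pi/6+\zeta/(2\ri m)=\pi/2+\pi\ell$ with $\ell\in\IZ$. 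Solving for $\zeta$ yields $\zeta=(2\pi\ri m/3)(1+3\ell)$, and splitting by the parity of $\ell$ produces the two families $(2\pi\ri m/3)(1+6j)$ and $(2\pi\ri m/3)(-2+6j)$, $j\in\IZ$, which are precisely $\zeta^{-}_{k,m}$ and $\zeta^{+}_{k,m}$ up to the sign convention $m\mapsto -m$, $k\mapsto -k$. This matches the prediction of Hadamard's theorem: the singularities of $f\diamond g$ arise as products $\mu_{-m}\nu^{\pm}_{-k}$ of the simple poles of $f$ with the branch points of $g$.

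Next I would verify that each such singularity is of logarithmic type, exactly as in Corollary~\ref{cor: localexp}. Near any zero of the cosine its argument vanishes to first order, so each contributing summand behaves locally as a constant multiple of $\log(\zeta-\zeta^{\pm}_{k,m})$ plus an analytic remainder. For any candidate point $\zeta_\star\in(2\pi\ri/3)\IZ_{\ne 0}$, only finitely many pairs $(k,m)$ produce a singular term there, while the remaining tail of the sum converges normally on a punctured neighborhood of $\zeta_\star$ by the same limit-comparison estimate that justified the convergence of Eq.~\eqref{eq: intBorel2infty-proof}. Collecting the finitely many singular contributions then yields a local expansion of the form $\hat\phi(\zeta)=-(S_n/2\pi\ri)\log(\zeta-\zeta_n)+\dots$, with $S_n$ a finite sum of elementary residues.

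Finally I would repackage the doubly-indexed set $\{\zeta^{\pm}_{k,m}\}$ into the single-index form $\zeta_n=(2\pi\ri/3)n$ with $n\in\IZ_{\ne 0}$. The inclusion $\zeta^{\pm}_{k,m}\in(2\pi\ri/3)\IZ$ is immediate from the formulas, and conversely every nonzero integer $n$ is already attained by the trivial choice $m=n$, $k=0$ in the family $\zeta^{-}_{k,m}$, since $(1+6\cdot 0)\,n=n$. Hence the singular support is exactly $(2\pi\ri/3)\IZ_{\ne 0}$, producing the peacock arrangement of Fig.~\ref{fig: peacockP2infty} and the two Stokes rays at $\pm\pi/2$.

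The main obstacle will be the bookkeeping in the final step once one wants to track not merely the set of singularities but how many $(k,m)$ pairs of each $\pm$-type contribute to a given $\zeta_n$. This combinatorics is governed by the divisor structure of $n$ modulo $6$, and although it does not affect the statement of the corollary itself, it is what will eventually determine the explicit arithmetic formulas for the Stokes constants $S_n$ in the spirit of Proposition~\ref{prop: formulaS1}.
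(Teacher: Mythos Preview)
Your proposal is correct and follows exactly the natural route: read off the zeros of each cosine summand in the closed-form resummation of Eq.~\eqref{eq: intBorel2infty}, check they are simple so that the logarithm produces a logarithmic branch point, and then verify that the doubly-indexed set $\{\zeta^{\pm}_{k,m}\}$ coincides with $(2\pi\ri/3)\IZ_{\ne 0}$. The paper in fact gives no separate proof of this corollary, treating it as an immediate consequence of the explicit formula; your write-up simply spells out the details, and your second paragraph (finitely many singular summands, normally convergent tail, local expansion $-\tfrac{S_n}{2\pi\ri}\log(\zeta-\zeta_n)+\dots$) already anticipates the content and proof of the next result, Corollary~\ref{cor: localexp-infty}. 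One minor remark: the sign convention comment ``up to $m\mapsto -m$, $k\mapsto -k$'' is unnecessary, since your $(2\pi\ri m/3)(1+6j)$ and $(2\pi\ri m/3)(-2+6j)$ already match $\zeta^{-}_{k,m}$ and $\zeta^{+}_{k,m}$ directly with $j=k$.
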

Analogously to the dual case of $\hbar \rightarrow 0$, there are only two Stokes lines at the angles $\pm \pi/2$.
\begin{figure}[htb!]
\center
 \includegraphics[width=0.2\textwidth]{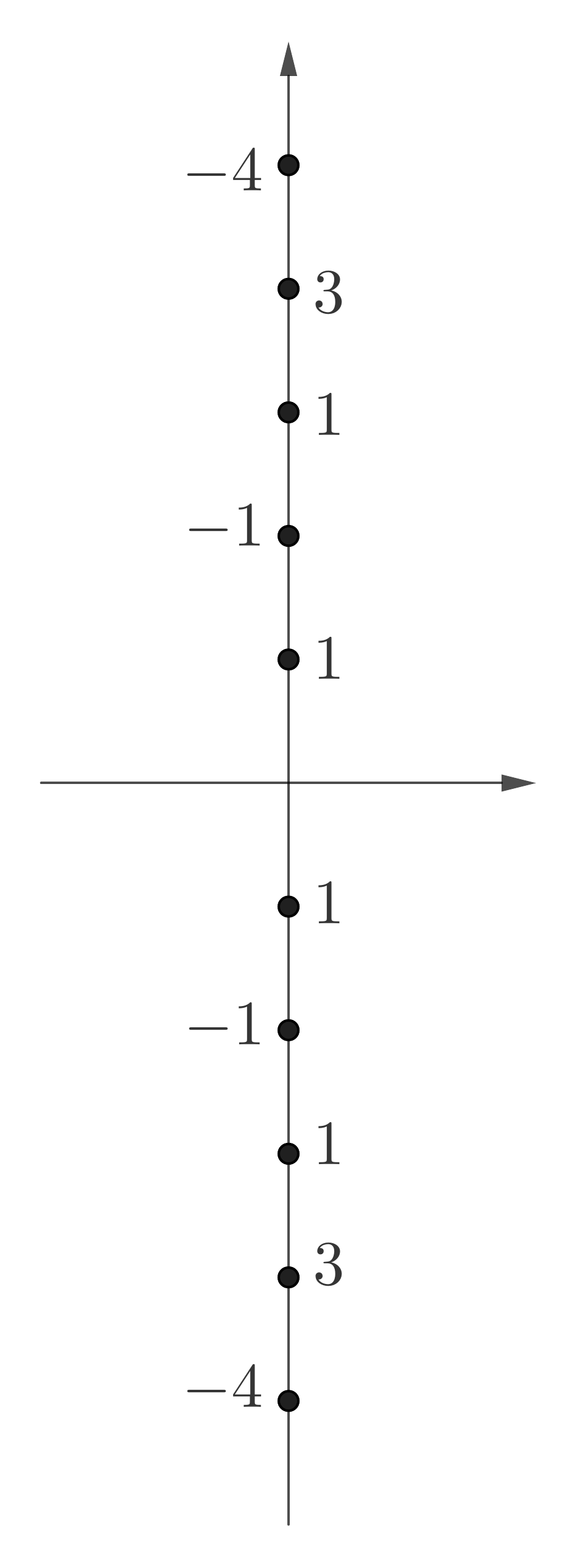}
 \caption{The first few singularities of the Borel transform of the asymptotic series $\phi(\tau)$, defined in Eq.~\eqref{eq: phiP2infty}, and the associated integer constants $\beta_n \in \IZ_{\ne 0}$, defined in Eq.~\eqref{eq: intStokesP2infty}.}
 \label{fig: peacockP2infty}
\end{figure}
Moreover, the analytic expression in Eq.~\eqref{eq: intBorel2infty} is explicitly simple resurgent.
\begin{corollary} \label{cor: localexp-infty}
The local expansion of the Borel transform $\hat{\phi}(\zeta)$ in Eq.~\eqref{eq: intBorel2infty} at $\zeta = \zeta_n$, $n \in \IZ_{\ne 0}$, is given by
\be \label{eq: explocalP2infty}
\hat{\phi}(\zeta) = - \frac{R_n}{2 \pi \text{i}} \log(\zeta - \zeta_n) + \dots \, ,
\ee
where $R_n \in \IC$ is the Stokes constant.
\end{corollary}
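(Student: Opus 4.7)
The plan is to mirror the argument used to prove Corollary~\ref{cor: localexp} in the dual semiclassical case. The Borel transform in Eq.~\eqref{eq: intBorel2infty} is an explicit sum over $m \in \IZ_{\ne 0}$ of terms of the form
\be
h_m(\zeta) = - \frac{3}{2 \pi \ri m} \log \cos\left(\frac{\pi}{6} + \frac{\zeta}{2 \ri m}\right) \, ,
\ee
so I would first isolate these building blocks and analyze each one separately. The function $h_m(\zeta)$ is a logarithm of a cosine, so its only singularities are logarithmic branch points located where the cosine vanishes, namely at $\zeta = \zeta^{-}_{k,m}$ and $\zeta = \zeta^{+}_{k,m}$ for $k \in \IZ$, as identified in Eq.~\eqref{eq: zetakm-infty}. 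Near such a zero of the cosine, the standard expansion $\log \cos(w) = \log(w_0 - w) + (\text{regular in }w - w_0)$, with an appropriate sign, yields immediately
\be
h_m(\zeta) = -\frac{r^{\pm}_{k,m}}{2 \pi \ri} \log(\zeta - \zeta^{\pm}_{k,m}) + \dots \, ,
\ee
where $r^{\pm}_{k,m} \in \IC$ is a constant obtained from the chain rule, and the dots denote terms regular in $\zeta - \zeta^{\pm}_{k,m}$.

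Second, I would reorganize the double sum by level sets of $\zeta^{\pm}_{k,m}$. For a fixed non-zero integer $n$, collect into a set $J_n$ all pairs $(k,m,\epsilon)$ with $\epsilon \in \{+,-\}$ such that $\zeta^{\epsilon}_{k,m} = \zeta_n$. Equivalently, $J_n$ records all factorizations $n = (1+6k)m$ or $n=(-2+6k)m$, which boils down to listing the divisors $d \mid n$ such that $d \equiv 1 \pmod 6$ (for the $\nu^{-}$ branch) or $d \equiv -2 \equiv 4 \pmod 6$ (for the $\nu^{+}$ branch), and taking $m = n/d$. This set is finite because $n$ has finitely many divisors, so the logarithmic singularity of $\hat{\phi}(\zeta)$ at $\zeta_n$ is a finite sum of elementary logarithmic singularities of the $h_m$'s, giving
\be
\hat{\phi}(\zeta) = -\frac{R_n}{2 \pi \ri} \log(\zeta - \zeta_n) + \dots \, , \qquad R_n = \sum_{(k,m,\epsilon) \in J_n} r^{\epsilon}_{k,m} \, .
\ee
This is exactly the statement to be proved, so I would conclude by noting convergence of the leftover regular terms: the remaining summands $h_{m'}(\zeta)$ with $(k',m',\epsilon)\notin J_n$ are holomorphic in a neighborhood of $\zeta_n$, and their sum converges uniformly on compact subsets of any disk centered at $\zeta_n$ that avoids the other branch points, by the limit comparison argument already used to establish convergence of Eq.~\eqref{eq: intBorel2infty-proof}.

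The only real subtlety — and the step that requires the most care — is bookkeeping the branch-point contributions and their signs when the term $h_m$ has the argument $\pi/6 + \zeta/(2\ri m)$ crossing through the \emph{two} distinct cosine zeros at $\pi/2 + \pi \ell$ and $-\pi/2 + \pi \ell$, which must be matched to the $\nu^{+}$ and $\nu^{-}$ branches. Once this dictionary is set up cleanly, the rest of the argument is mechanical and parallels the proof of Corollary~\ref{cor: localexp} verbatim. The explicit evaluation of the coefficients $r^{\epsilon}_{k,m}$, and hence of $R_n$ as an arithmetic divisor-like function of $n$, is naturally deferred to the subsequent analogue of Proposition~\ref{prop: formulaS1}.
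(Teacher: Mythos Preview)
Your proposal is correct and follows essentially the same route as the paper: define the individual summands (your $h_m$, the paper's $f_m$), locally expand each at its logarithmic zeros $\zeta^{\pm}_{k,m}$, collect the finitely many pairs $(k,m,\epsilon)$ hitting a given $\zeta_n$ into a set (your $J_n$, the paper's $I^{\pm}_n$), and sum the resulting constants to obtain $R_n$. Your added remark on uniform convergence of the remaining regular terms is a nice extra that the paper leaves implicit; your divisor enumeration ($d\equiv 1,4 \bmod 6$ with $m=n/d$) is only half the story, since negative $m$ also contributes via $d\equiv 2,5 \bmod 6$, but you correctly defer the full bookkeeping to the analogue of Proposition~\ref{prop: formulaS1}, so this does not affect the corollary itself.
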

\begin{proof}
The local expansion around the logarithmic singularity $\zeta = \zeta_n$ is obtained by summing the contributions from all pairs $(k,m) \in \IZ \times \IZ_{\ne 0}$ such that $n = (-1+6k)m$ or $n=(2+6k)m$. Let us collect such pairs of integers into the finite sets $I^+_n, I^-_n$ for each $n \in \IZ_{\ne 0}$, accordingly.
For a fixed value of $m \in \IZ_{\ne 0}$, we denote the corresponding term in the sum in Eq.~\eqref{eq: intBorel2infty} by 
\be \label{eq: fm-infty}
f_m(\zeta) = - \frac{3}{2 \pi \ri m} \log \left( \cos \left(\frac{\pi}{6} + \frac{\zeta}{2 \ri m }\right) \right) \, ,
\ee 
whose expansion around $\zeta= \zeta^{\pm}_{k,m}$, for fixed $k \in \IZ$, is given by
\be
f_m(\zeta) = -\frac{s^{\pm}_{k,m}}{2 \pi \text{i}} \log(\zeta - \zeta^{\pm}_{k,m}) + \dots \, ,
\ee
where the dots denote regular terms in $\zeta - \zeta^{\pm}_{k,m}$, and $s^{\pm}_{k,m}$ is a complex number.
Since $\zeta_n = \zeta^{\pm}_{k,m}$ for all $(k,m) \in I^{\pm}_n$, it follows that the local expansion of $\hat{\phi}(\zeta)$ at $\zeta = \zeta_n$ is again given by
\be \label{eq: explocalP2infty-proof}
\hat{\phi}(\zeta) = \sum_{m \in \IZ_{\ne 0}} f_m(\zeta) = - \frac{R_n}{2 \pi \text{i}} \log(\zeta - \zeta_n) + \dots \, ,
\ee
where the Stokes constant $R_n$ is now the finite sum 
\be \label{eq: stokesSuminfty}
R_n = \sum_{(k,m) \in I^+_n} s^+_{k,m} + \sum_{(k,m) \in I^-_n} s^-_{k,m} \, .
\ee
\end{proof}
It follows from Corollary~\ref{cor: localexp-infty} that the locally analytic function that resurges at $\zeta=\zeta_n$ is trivially $\hat{\phi}_n(\zeta-\zeta_n) = 1$, which also implies that $\phi_n(\tau) = 1$, $n \in \IZ_{\ne 0}$.
Once again, the procedure above allows us to derive analytically all the Stokes constants. 
In the limit $\hbar \rightarrow \infty$, the Stokes constants $R_n$ are rational numbers, and they too are simply related to an interesting sequence of integers $\beta_n$,  $n \in \IZ_{\ne 0}$. In particular, we find that 
\begin{subequations} \label{eq: intStokesP2infty}
\be
R_1 = 3 \, , \quad R_n = R_1 \frac{\beta_n}{n} \quad n \in \IZ_{\ne 0,1} \, , 
\ee
\be
\beta_n = \beta_{-n} \, , \quad \beta_n \in \IZ_{\ne 0}  \quad n \in \IZ_{> 0} \, .
\ee
\end{subequations}
Explicitly, the first several integer constants $\beta_n$, $n>0$, are
\be
1, -1, 1, 3, -4, -1, 8, -5, 1, 4, -10, 3, 14, -8, -4, 11, -16, -1, 20, -12, \dots \, .
\ee
The pattern of singularities in the Borel plane and the associated $\beta_n \in \IZ_{\ne 0}$ are shown in Fig.~\ref{fig: peacockP2infty}. 

\subsubsection{Closed formulae for the Stokes constants} \label{sec: closed-infty}
The exact resummation of the Borel transform in Eq.~\eqref{eq: intBorel2infty} allows us to obtain and prove a series of exact arithmetic formulae for the Stokes constants $R_n$ of the asymptotic series $\phi(\tau)$, defined in Eq.~\eqref{eq: phiP2infty}, and the related integer constants $\beta_n$, defined in Eq.~\eqref{eq: intStokesP2infty}, for $n \in \IZ_{\ne 0}$. These new arithmetic statements are manifestly dual to the analogous formulae that we have presented in Section~\ref{sec: closed-zero} for the semiclassical limit of $\hbar \rightarrow 0$. Besides, their proofs follow very similar arguments. Let us start by showing that both sequences $R_n, \beta_n$ are number-theoretic divisor sum functions.
\begin{proposition} \label{prop: formulaS1infty} 
The normalized Stokes constant $R_n/R_1$, where $R_1 = 3$, is determined by the positive integer divisors of $n \in \IZ_{\ne 0}$ according to the closed formula
\be \label{eq: formulaS1infty}
\frac{R_n}{R_1} = \sum_{\substack{d | n \\ d \equiv_3 1}} \frac{d}{n} - \sum_{\substack{d | n \\ d \equiv_3 2}} \frac{d}{n} \, , 
\ee
which implies that $R_n = -R_{-n}$ and $R_n/R_1 \in \IQ_{\ne 0}$.
\end{proposition}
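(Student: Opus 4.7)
The plan is to closely mimic the proof of Proposition~\ref{prop: formulaS1}. Each summand $f_m(\zeta)$ in Eq.~\eqref{eq: fm-infty} is logarithmic at the simple zeros of $\cos(\pi/6 + \zeta/(2\ri m))$, and by Eq.~\eqref{eq: stokesSuminfty} the Stokes constant $R_n$ is assembled from the finitely many local contributions collected from all pairs $(k,m)\in I^{+}_n\cup I^{-}_n$. The task is therefore to enumerate these pairs in terms of divisors of $n$ and evaluate each contribution.

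The first step is to compute the universal local expansion of $f_m(\zeta)$ at any of its branch points. Expanding $\cos(\pi/6 + \zeta/(2\ri m))$ to first order about a simple zero $\zeta_{0} = (2\pi\ri/3)\,m\,j$, with $j\in\IZ$ and $j\equiv_3 1$, and comparing with Eq.~\eqref{eq: explocalP2infty}, one finds $s^{\pm}_{k,m} = 3/m$, independent of $k$. Note that, in contrast with the dual situation of Proposition~\ref{prop: formulaS1} in which $m$ is itself a divisor of $n$, here $m$ is a \emph{complementary} divisor; this structural difference is precisely what produces $d/n$ in the final formula rather than $1/d$. It then remains to translate the factorization $n = m j$ with $j\equiv_3 1$ into a parameterization by a positive divisor $d$ of $n$: if $d\equiv_3 1$, take $(j,m)=(d,\,n/d)$, contributing $3d/n$; if $d\equiv_3 2$, take $(j,m)=(-d,\,-n/d)$, contributing $-3d/n$; if $d\equiv_3 0$, then neither $\pm d$ is $\equiv_3 1$, so no admissible pair exists and the contribution vanishes. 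Summing over positive divisors of $n$ and normalizing by $R_1=3$ yields Eq.~\eqref{eq: formulaS1infty}. The antisymmetry $R_n=-R_{-n}$ is then manifest, since replacing $n$ by $-n$ flips the sign of each summand $d/n$ while leaving the set of positive divisors and their residue classes mod $3$ unchanged, and rationality $R_n/R_1\in\IQ_{\ne 0}$ is immediate, with non-vanishing verifiable from $R_1=3\neq 0$ and prime-power bookkeeping parallel to Corollary~\ref{prop: formulaA1mult}.

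The main obstacle I expect is bookkeeping rather than analysis: one must carefully unpack the two families $I^{\pm}_n$ defined via Eq.~\eqref{eq: zetakm-infty}, check that the combined enumeration over $(k,m)$ is in bijection with the positive divisors of $n$ coprime to $3$ (partitioned by $d\bmod 3$), and track signs through the choice $m=\pm n/d$ without double-counting. This is exactly the mod-$3$ combinatorics handled in Proposition~\ref{prop: formulaS1}, translated here to the mod-$6$ setting arising from the two families of branch points $\nu^{\pm}_k$; once this is in place the argument proceeds in direct parallel with Section~\ref{sec: closed-zero}.
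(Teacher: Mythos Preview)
Your proposal is correct and follows essentially the same approach as the paper's proof. The only cosmetic difference is that the paper splits the enumeration into four explicit cases according to $d\bmod 6$ (namely $d\equiv 1,2,4,5\pmod 6$, matching the two families $I^{\pm}_n$ coming from $\nu^{\pm}_k$), whereas you collapse these at the outset by noting that the zeros of $\cos(\pi/6+\zeta/(2\ri m))$ sit at $\zeta=(2\pi\ri/3)\,m\,j$ with $j\equiv_3 1$, which is a slightly more streamlined but logically equivalent bookkeeping; your computation $s^{\pm}_{k,m}=3/m$ and the resulting sign pattern via $m=\pm n/d$ are exactly what the paper obtains.
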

\begin{proof}
Let us denote by $D_n$ the set of positive integer divisors of $n$. We recall that $n$ satisfies one of the two factorization properties $n = (1+6k)m$ or $n=(-2+6k)m$ for $k \in \IZ$ and $m \in \IZ_{\ne 0}$. It follows that one of four cases apply. Namely,
\begin{subequations}
\begin{align}
m = \frac{n}{d} \, , \quad k = \frac{d-1}{6} \, , \quad &\mbox{if} \quad d \equiv 1 \mod 6 \, , \label{eq: pairs1} \\
m = -\frac{n}{d} \, , \quad k = -\frac{d-2}{6} \, , \quad &\mbox{if} \quad d \equiv 2 \mod 6 \, , \label{eq: pairs2} \\
m = \frac{n}{d} \, , \quad k = \frac{d+2}{6} \, , \quad &\mbox{if} \quad d \equiv 4 \mod 6 \, , \label{eq: pairs4} \\
m = -\frac{n}{d} \, , \quad k = -\frac{d+1}{6} \, , \quad &\mbox{if} \quad d \equiv 5 \mod 6 \, ,  \label{eq: pairs5}
\end{align}
\end{subequations}
where $d \in D_n$.
In both cases of Eqs.~\eqref{eq: pairs1} and~\eqref{eq: pairs4}, which represent together the congruence class of $d \equiv_3 1$, substituting the given values of $k,m$ into Eqs.~\eqref{eq: fm-infty} and~\eqref{eq: zetakm-infty}, we find that the contribution to the Stokes constant $R_n$ coming from the local expansion of $f_m(\zeta)$ around $\zeta^{\pm}_{k,m}$ is simply $s^{\pm}_{k,m} = 3d/n$. 
Furthermore, in both cases of Eqs.~\eqref{eq: pairs2} and~\eqref{eq: pairs5}, which populate the congruence class of $d \equiv_3 2$, substituting the given values of $k,m$ into Eqs.~\eqref{eq: fm-infty} and~\eqref{eq: zetakm-infty}, we find that the contribution to the Stokes constant $R_n$ coming from the local expansion of $f_m(\zeta)$ around $\zeta^{\pm}_{k,m}$ is simply $s^{\pm}_{k,m} = -3d/n$.
Finally, for any divisor $d \in D_n$ which is a multiple of $3$, neither $d\pm1$ or $d\pm2$ are divisible by $6$, which implies that the choice $m = \pm n/d$ is not allowed, and the corresponding contribution is $s^{\pm}_{k,m}=0$.
Putting everything together and using Eq.~\eqref{eq: stokesSuminfty}, we find the desired statement.
\end{proof}
We note that the arithmetic formula for the Stokes constants in Eq.~\eqref{eq: formulaS1infty} can be written equivalently as
\be \label{eq: formulaA1infty}
\beta_n = \sum_{\substack{d | n \\ d \equiv_3 1}} d - \sum_{\substack{d | n \\ d \equiv_3 2}} d \, , 
\ee
which implies that $\beta_n = \beta_{-n}$ and $\beta_n \in \IZ_{\ne 0}$ for all $n>0$, as expected. We highlight the simple symmetry between the formulae in Eqs.~\eqref{eq: formulaS1},~\eqref{eq: formulaA1} and the formulae in Eqs.~\eqref{eq: formulaS1infty},~\eqref{eq: formulaA1infty}. More precisely, the Stokes constants $R_n$ in the limit $\hbar \rightarrow \infty$ are obtained from the Stokes constants $S_n$ in the semiclassical limit $\hbar \rightarrow 0$ via the simple exchange of divisors $d \mapsto n/d$ in the arguments of the sums.
As before, two corollaries follow straightforwardly from Proposition~\ref{prop: formulaS1infty}.
\begin{corollary} \label{prop: formulaA1mult-infty} 
The integer constants $\beta_n$, $n \in \IZ_{>0}$, satisfy the closed formulae
\be
\beta_{p_1^{e_1}} = \frac{p_1^{e_1+1}-1}{p_1-1} \, , \quad \beta_{p_2^{e_2}} = \frac{(-1)^{e_2} p_2^{e_2+1}+1}{p_2+1} \, , \quad \beta_{p_3^{e_3}} = 1 \, , 
\ee
where $e_i \in \IN$, and $p_i \in \IP$ are prime numbers such that $p_i \equiv_3 i$ for $i=1,2,3$. Moreover, they obey the multiplicative property
\be \label{eq: formulaA1mult-infty}
\beta_n = \prod_{p \in \IP} \beta_{p^e} \, , \quad n = \prod_{p \in \IP} p^e \, , \quad e \in \IN \, .
\ee
\end{corollary}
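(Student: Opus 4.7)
The plan is to mirror the proof of Corollary~\ref{prop: formulaA1mult}, starting directly from the divisor sum formula in Eq.~\eqref{eq: formulaA1infty}, and to exploit the fact that the conditions $d \equiv_3 1$ and $d \equiv_3 2$ now constrain the divisor $d$ itself, rather than the complementary divisor $n/d$. This makes the arithmetic structure of $\beta_n$ formally cleaner than that of $\alpha_n$: if we let $\chi$ denote the unique non-trivial Dirichlet character modulo $3$, with $\chi(1)=1$, $\chi(2)=-1$, and $\chi(0)=0$, then Eq.~\eqref{eq: formulaA1infty} reads compactly as $\beta_n = \sum_{d \mid n} \chi(d)\, d$, which will guide the whole argument.

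First, I would establish the three closed formulae for prime powers by direct enumeration of divisors. For $p_3 \equiv_3 0$ (i.e.\ $p_3=3$), the only divisor of $p_3^{e_3}$ satisfying $d \equiv_3 1$ is $d=1$, and none satisfies $d \equiv_3 2$, giving $\beta_{p_3^{e_3}}=1$. For $p_1 \equiv_3 1$, every divisor $p_1^i$ automatically lies in the class $d \equiv_3 1$, and one obtains the geometric sum $\sum_{i=0}^{e_1} p_1^i = (p_1^{e_1+1}-1)/(p_1-1)$. For $p_2 \equiv_3 2$, the divisors $p_2^i$ alternate between the classes $d \equiv_3 1$ (for $i$ even) and $d \equiv_3 2$ (for $i$ odd), which yields
\begin{equation*}
\beta_{p_2^{e_2}} \;=\; \sum_{i=0}^{e_2} (-p_2)^i \;=\; \frac{1-(-p_2)^{e_2+1}}{1+p_2} \;=\; \frac{(-1)^{e_2} p_2^{e_2+1}+1}{p_2+1} \, ,
\end{equation*}
which is precisely the stated closed form.

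Next, I would prove the multiplicative property in Eq.~\eqref{eq: formulaA1mult-infty}. The quickest route is to observe that $d \mapsto \chi(d)\, d$ is the pointwise product of two completely multiplicative arithmetic functions, hence is itself multiplicative, and that the divisor sum $n \mapsto \sum_{d \mid n} f(d)$ is multiplicative whenever $f$ is. To keep the argument fully self-contained and parallel to the proof of Corollary~\ref{prop: formulaA1mult}, I would instead take coprime $p,q$ with $n=pq$, write each divisor $d \mid n$ uniquely as $d=st$ with $s \mid p$ and $t \mid q$, and use $\chi(d)=\chi(s)\chi(t)$ to split the divisor sum along the four combinations of congruence classes. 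This produces the factorisation $\beta_n = \beta_p\, \beta_q$, which iterates over the prime decomposition of $n$ to give the desired identity.

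There is no substantive obstacle here: the technical care is entirely in bookkeeping the three congruence classes modulo $3$ and in avoiding any double counting of contributions from the prime $3$, which is automatic because $\chi(3)=0$ kills every divisor divisible by $3$. As in Corollary~\ref{prop: formulaA1mult}, the same argument shows that the normalised Stokes constants $R_n/R_1$, $n \in \IZ_{>0}$, likewise define a multiplicative arithmetic function, though neither $\beta_n$ nor $R_n/R_1$ is completely multiplicative, since the splitting argument fails on common prime factors.
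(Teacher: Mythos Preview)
Your proof is correct and follows essentially the same approach as the paper: direct enumeration of divisors for the prime-power formulae, and the coprime factorisation $n=pq$, $d=st$ with a four-way split over congruence classes for multiplicativity. Your framing via the Dirichlet character $\chi$ is a tidy notational shortcut (anticipating Section~\ref{sec: dirichlet}), and your handling of the $p_2 \equiv_3 2$ case via the single alternating geometric sum $\sum_i (-p_2)^i$ is slightly more compact than the paper's separation into even and odd exponents, but the substance is identical.
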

\begin{proof}
The three closed formulae follow directly from Eq.~\eqref{eq: formulaA1infty}. Explicitly, let $n=p^e$ with $p \in \IP$ and $e \in \IN$. We have that
\begin{subequations}
\begin{align}
\sum_{\substack{d | n \\ d \equiv_3 1}} d = 1 \, , \quad  \sum_{\substack{d | n \\ d \equiv_3 2}} d = 0 \, , \quad &\mbox{if} \quad p \equiv 0 \mod 3 \, , \\
\sum_{\substack{d | n \\ d \equiv_3 1}} d = \sum_{i=0}^e p^i = \frac{p^{e+1}-1}{p-1} \, , \quad  \sum_{\substack{d | n \\ d \equiv_3 2}} d = 0 \, , \quad &\mbox{if} \quad p \equiv 1 \mod 3 \, , \\
\sum_{\substack{d | n \\ d \equiv_3 1}} d = \sum_{i=0}^{\lfloor e/2 \rfloor} p^{2i} \, , \quad  \sum_{\substack{d | n \\ d \equiv_3 2}} d = \sum_{i=0}^{\lfloor e/2 \rfloor} p^{2i+1} \, , \quad &\mbox{if} \quad p \equiv 2 \mod 3 \, .
\end{align}
\end{subequations}
Let us now prove the multiplicity property. We will prove a slightly stronger statement. We write $n = p q$ for $p,q \in \IZ_{> 0}$ coprimes. We choose a positive integer divisor $d | n$, and we write $d = s t$ where $s | p$ and $t | q$. Consider two cases:
\begin{itemize}
\item[(1)] Suppose that $d \equiv_3 1$. Then, either $s \equiv_3 t \equiv_3 1$, or $s \equiv_3 t \equiv_3 2$, and therefore
\be \label{eq: fact1formulaA1-infty}
\sum_{\substack{d | n \\ d \equiv_3 1}} d = \sum_{\substack{s | p \\ s \equiv_3 1}} s  \sum_{\substack{t | q \\ t \equiv_3 1}} t + \sum_{\substack{s | p \\ s \equiv_3 2}} s  \sum_{\substack{t | q \\ t \equiv_3 2}} t \, .
\ee
\item[(2)] Suppose that $d \equiv_3 2$. Then, either $p/s \equiv_3 1$ and $q/t \equiv_3 2$, or $p/s \equiv_3 2$ and $q/t \equiv_3 1$, and therefore
\be \label{eq: fact2formulaA1-infty}
\sum_{\substack{d | n \\ d \equiv_3 2}} d = \sum_{\substack{s | p \\ s \equiv_3 1}} s  \sum_{\substack{t | q \\ t \equiv_3 2}} t + \sum_{\substack{s | p \\ s \equiv_3 2}} s  \sum_{\substack{t | q \\ t \equiv_3 1}} t \, .
\ee
\end{itemize}
Substituting Eqs.~\eqref{eq: fact1formulaA1-infty} and~\eqref{eq: fact2formulaA1-infty} into Eq.~\eqref{eq: formulaA1infty}, we find that
\be
\beta_n = \left( \sum_{\substack{s | p \\ s \equiv_3 1}} s - \sum_{\substack{s | p \\ s \equiv_3 2}} s \right) \left( \sum_{\substack{t | q \\ t \equiv_3 1}} t - \sum_{\substack{t | q \\ t \equiv_3 2}} t \right) = \beta_p \beta_q  \, , 
\ee
which proves that the sequence $\beta_n$, $n \in \IZ_{>0}$, defines a multiplicative arithmetic function. Note that the proof breaks if $p,q$ are not coprimes, since the formulae above lead in general to overcounting the contributions coming from common factors. Therefore, the sequence $\beta_n$ is not totally multiplicative. Note that the sequence of normalized Stokes constants $R_n/R_1$, $n \in \IZ_{>0}$, is also a multiplicative arithmetic function.
\end{proof}
\begin{corollary} \label{prop: formulaA2infty} 
The integer constants $\beta_n$, $n  \in \IZ_{>0}$, are encoded in the generating function 
\be \label{eq: formulaA2infty}
\sum_{n =1}^{\infty} \beta_n x^n = \sum_{m =1}^{\infty} \frac{x^m(1-x^{2m})}{(1+x^m+x^{2m})^2} \, .
\ee
\end{corollary}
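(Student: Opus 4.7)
The plan is to parallel the strategy of Corollary~\ref{prop: formulaA2}: I would expand each summand on the RHS of Eq.~\eqref{eq: formulaA2infty} as a power series in $x$, sum over $m$, and identify the coefficient of $x^n$ with the divisor sum defining $\beta_n$ in Eq.~\eqref{eq: formulaA1infty}. The approach bypasses the Leibniz-rule bookkeeping invoked in Corollary~\ref{prop: formulaA2} by leaning on a derivative identity that reduces the problem to the simpler generating function $x^m/(1+x^m+x^{2m})$.

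First, I would rewrite $1+x^m+x^{2m} = (1-x^{3m})/(1-x^m)$ to obtain the clean geometric expansion
\be
\frac{x^m}{1+x^m+x^{2m}} = \frac{x^m(1-x^m)}{1-x^{3m}} = \sum_{\substack{k \ge 1 \\ k \equiv_3 1}} x^{km} - \sum_{\substack{k \ge 1 \\ k \equiv_3 2}} x^{km}, \quad |x|<1 \, .
\ee
A direct computation then gives the key algebraic identity
\be
\frac{x^m(1-x^{2m})}{(1+x^m+x^{2m})^2} = \frac{x}{m}\, \frac{\rd}{\rd x} \left( \frac{x^m}{1+x^m+x^{2m}} \right) \, ,
\ee
so applying the operator $(x/m)\, \rd/\rd x$ termwise to the expansion above yields
\be
\frac{x^m(1-x^{2m})}{(1+x^m+x^{2m})^2} = \sum_{\substack{k \ge 1 \\ k \equiv_3 1}} k\, x^{km} - \sum_{\substack{k \ge 1 \\ k \equiv_3 2}} k\, x^{km} \, .
\ee

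To conclude, I would sum over $m \in \IZ_{>0}$, interchange the order of summation (justified by absolute convergence for $|x|<1$), and collect terms via the substitution $n = km$. The resulting coefficient of $x^n$ reads $\sum_{k|n,\, k \equiv_3 1} k - \sum_{k|n,\, k \equiv_3 2} k$, which is precisely $\beta_n$ by Eq.~\eqref{eq: formulaA1infty}. No serious obstacle is expected: the only delicate points are the legitimacy of the termwise differentiation and the exchange of summations, both of which are routine inside the unit disk $|x|<1$, where all the series involved converge absolutely. As a sanity check, the symmetry between Eqs.~\eqref{eq: formulaA1} and~\eqref{eq: formulaA1infty} — which exchanges $d$ with $n/d$ — is mirrored here by the appearance of the factor $k$ in front of $x^{km}$, to be contrasted with the factor $m$ appearing in the Lambert-type expansion used for $\alpha_n$.
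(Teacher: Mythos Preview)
Your argument is correct and complete. The derivative identity
\[
\frac{x^m(1-x^{2m})}{(1+x^m+x^{2m})^2} = \frac{x}{m}\,\frac{\rd}{\rd x}\!\left(\frac{x^m}{1+x^m+x^{2m}}\right)
\]
is easily verified (set $u=x^m$ and compute $u\,\rd_u\bigl(u/(1+u+u^2)\bigr)$), and the geometric expansion of $x^m/(1+x^m+x^{2m})$ via $1+u+u^2=(1-u^3)/(1-u)$ is immediate. Termwise differentiation and interchange of summations are unproblematic for $|x|<1$, so the coefficient of $x^n$ is exactly the divisor sum in Eq.~\eqref{eq: formulaA1infty}.

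Your route is genuinely different from the paper's. The paper first asserts (without explicit justification) the Lambert-series decomposition
\[
\sum_{m\ge 1}\frac{x^m(1-x^{2m})}{(1+x^m+x^{2m})^2}
= \sum_{m\ge 0}\frac{(3m+1)x^{3m+1}}{1-x^{3m+1}}
- \sum_{m\ge 0}\frac{(3m+2)x^{3m+2}}{1-x^{3m+2}},
\]
and then extracts Taylor coefficients of each Lambert summand by a Leibniz-rule/binomial bookkeeping argument. Your approach instead works directly on the $m$-th summand of the stated RHS, turning the rational function into a power series whose coefficients are manifest, and only then sums over $m$. This buys you two things: it avoids the combinatorics of repeated differentiation of $(1-x^q)^{-1}$, and it actually \emph{derives} the Lambert decomposition as a byproduct (swap the order of summation in your double sum over $k$ and $m$). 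Conversely, the paper's method has the minor virtue of reusing the exact template of Corollary~\ref{prop: formulaA2} verbatim, but at the cost of a heavier computation.
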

\begin{proof}
We denote by $f(x)$ the generating function in the RHS of Eq.~\eqref{eq: formulaA2infty}. We note that 
\be
f(x) = f_1(x) - f_2(x) \, ,
\ee
where the functions $f_1(x), f_2(x)$ are defined by
\be
f_1(x) = \sum_{m \in \IN} \frac{(3m+1) x^{3m+1}}{1-x^{3m+1}} \, , \quad f_2(x) = \sum_{m \in \IN} \frac{(3m+2) x^{3m+2}}{1-x^{3m+2}} \, .
\ee
The formula in Eq.~\eqref{eq: formulaA2infty} follows from the stronger statement
\be
\sum_{\substack{d | n \\ d \equiv_3 1}} d = \frac{1}{n!} \frac{\rd^n f_1(0)}{\rd x^n} \, , \quad  \sum_{\substack{d | n \\ d \equiv_3 2}} d = \frac{1}{n!} \frac{\rd^n f_2(0)}{\rd x^n} \, , \quad n  \in \IZ_{>0} \, .
\ee
We will now prove this claim for the function $f_1(x)$. The case of $f_2(x)$ is proven analogously.
Let us denote by 
\be
f_{1,m}(x) = \frac{(3m+1) x^{3m+1}}{1-x^{3m+1}} \, , \quad m \in \IN \, , 
\ee
and consider the derivative $\rd^n f_{1,m}(x)/\rd x^n$ for fixed $m$. We want to determine its contributions to $\rd^n f_1(0)/\rd x^n$. Since we are interested in those terms that survive after taking $x=0$, we look for the monomials of order $x^{d(3m+1)-n}$, where $d|n$, in the numerator of $\rd^n f_{1,m}(x)/\rd x^n$, and we take $3m+1 = n/d$. More precisely, let us introduce the parameter $q = 3m+1$ for simplicity of notation. Deriving $a$-times the factor $q x^q$ and $(n-a)$-times the factor $(1-x^q)^{-1}$, we have the term
\be  \label{eq: monomials0-infty}
\binom{n}{a} \frac{\rd^a (q x^q)}{\rd x^a} \frac{\rd^{n-a} (1-x^q)^{-1}}{\rd x^{n-a}} \, , \quad a \in \IN_{\ne 0} \, .
\ee
Recall that the generalized binomial theorem for the geometric series yields
\be \label{eq: geomseries-infty}
\frac{\rd^{n-a} (1-x^q)^{-1}}{\rd x^{n-a}} = \sum_{k=0}^{\infty} \frac{(qk)!}{(qk-n+a)!} x^{qk-n+a} \, .
\ee
Substituting Eq.~\eqref{eq: geomseries-infty} into Eq.~\eqref{eq: monomials0-infty} and performing the derivation, we find
\be
n! \sum_{k=0}^{\infty} q \binom{q}{q-a} \binom{qk}{qk-n+a} x^{(1+k)q-n} \, . 
\ee
It follows then that the only non-zero term at fixed $m \in \IN$ comes from the values of $k \in \IN$ and $a \in \IN_{\ne 0}$ such that $(1+k)q = n$ and $a=q=3m+1$, which implies in turn that $q|n$ with $q \equiv_3 1$. Finally, summing the non-trivial contributions over $q$ gives precisely
\be
\frac{\rd^n f_1(0)}{\rd x^n} = \sum_{\substack{q|n \\ q \equiv_3 1}} n! q \binom{q}{0} \binom{n-q}{0} = n! \sum_{\substack{q|n \\ q \equiv_3 1}} q \, .
\ee
\end{proof}
Finally, Proposition~\ref{prop: formulaS1infty} implies that the Stokes constants $R_n$, $n \in \IZ_{>0}$, can be naturally organized as coefficients of an exact generating function given by quantum dilogarithms.
\begin{corollary} \label{prop: formulaS2infty} 
The Stokes constants $R_n$, $n  \in \IZ_{>0}$, are encoded in the generating function 
\be \label{eq: formulaS2infty}
\sum_{n =1}^{\infty} R_n x^{n/3} = 3 \log \frac{(x^{2/3} ; \, x)_{\infty}}{(x^{1/3} ; \, x)_{\infty}} \, , \quad |x|<1 \, .
\ee
\end{corollary}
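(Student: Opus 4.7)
The plan is to mirror the proof of Corollary~\ref{prop: formulaS2} in the dual semiclassical case, exploiting the product representation of the quantum dilogarithm together with the divisor sum formula for $R_n$ established in Proposition~\ref{prop: formulaS1infty}.

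First I would rewrite the RHS of Eq.~\eqref{eq: formulaS2infty} by expanding each logarithm using the defining product $(x^{\alpha}; x)_{\infty} = \prod_{m \ge 0}(1 - x^{m+\alpha})$ and the Taylor series of $\log(1-u)$, which is valid term-by-term for $|x|<1$. The RHS becomes the double sum $3 \sum_{m \ge 0} \sum_{k \ge 1} k^{-1}(x^{k(m+1/3)} - x^{k(m+2/3)})$, and absolute convergence for $|x|<1$ guarantees that all subsequent reorderings are legitimate.

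Next, I would set $y = x^{1/3}$ and observe that the integers $3m+1$ (respectively $3m+2$) with $m \ge 0$ exhaust the positive integers congruent to $1$ (respectively $2$) modulo $3$. The double sum thus collapses into two Lambert-type series indexed by residue classes, and after the change of summation variable $n = d k$ the whole expression becomes a single power series in $y$ in which the coefficient of $y^n$ is exactly $\tfrac{3}{n}\bigl(\sum_{d\mid n, \, d \equiv_3 1} d - \sum_{d\mid n, \, d \equiv_3 2} d\bigr)$.

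Finally, I would recognize the bracketed quantity as the integer $\beta_n$ via Eq.~\eqref{eq: formulaA1infty}, and then use the normalization $R_n = R_1 \beta_n/n$ with $R_1 = 3$ from Eq.~\eqref{eq: intStokesP2infty} to identify the resulting power series as $\sum_{n \ge 1} R_n y^n = \sum_{n \ge 1} R_n x^{n/3}$, matching the LHS. I expect no real obstacle: the argument is a direct dualization of the proof of Corollary~\ref{prop: formulaS2}, with the generators $w, w^{-1}$ of the cube roots of unity replaced by the fractional powers $x^{1/3}, x^{2/3}$, and with the divisor sum reorganization provided in advance by Proposition~\ref{prop: formulaS1infty}; the only point requiring a line of care is the absolute convergence that underwrites the swap of the two sums, which is automatic on the disk $|x|<1$.
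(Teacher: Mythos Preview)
Your proposal is correct and follows essentially the same approach as the paper: expand each $\log(x^{\alpha};x)_\infty$ via the product definition and the Taylor series of $\log(1-u)$, reindex the resulting double sum by $n=dk$ so that the coefficient of $x^{n/3}$ is a divisor sum over residue classes modulo~3, and then invoke the arithmetic formula for $R_n$. The only cosmetic difference is that the paper substitutes Eq.~\eqref{eq: formulaS1infty} directly at the last step, whereas you pass through $\beta_n$ via Eq.~\eqref{eq: formulaA1infty} and the normalization $R_n=3\beta_n/n$; these are equivalent.
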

\begin{proof}
We apply the definition of the quantum dilogarithm in Eq.~\eqref{eq: dilog} and Taylor expand the logarithm function for $|x| <1$. We obtain in this way that 
\be
\log (x^{2/3} ; \, x)_{\infty} = \sum_{k=0}^{\infty} \log(1-x^{2/3+k}) = -\sum_{k=0}^{\infty} \sum_{m=1}^{\infty} \frac{x^{2m/3+mk}}{m} \, ,
\ee
and therefore also
\be \label{eq: logfrac-infty}
\log \frac{(x^{2/3} ; \, x)_{\infty}}{(x^{1/3} ; \, x)_{\infty}} = \sum_{k=0}^{\infty} \sum_{m=1}^{\infty} \frac{1}{m} x^{(1+3k)m/3}  - \sum_{k=0}^{\infty} \sum_{m=1}^{\infty} \frac{1}{m} x^{(2+3k)m/3} \, .
\ee
Renaming $n=(1+3k)m$ and $n=(2+3k)m$ in the first and second terms of the RHS, respectively, we find
\be \label{eq: logfrac2-infty}
\log \frac{(x^{2/3} ; \, x)_{\infty}}{(x^{1/3} ; \, x)_{\infty}} = \sum_{n =1}^{\infty} x^{n/3}  \left( \sum_{\substack{d|n \\ d \equiv_3 1}} \frac{d}{n} - \sum_{\substack{d|n \\ d \equiv_3 2}} \frac{d}{n}  \right) \, .
\ee
Substituting the arithmetic formula for the Stokes constants in Eq.~\eqref{eq: formulaS1infty} into the expression in Eq.~\eqref{eq: logfrac2-infty}, we obtain the desired statement.
\end{proof}
We note that, choosing $x = q = \re^{2 \pi \ri \mb^2} = \re^{-2 \pi \ri \tau^{-1}}$, Corollary~\ref{prop: formulaS2infty} directly provides an exact $q$-series expression for the discontinuity of the asymptotic series $\phi(\tau)$ across the positive imaginary axis, which borders the only two distinct Stokes sectors in the upper half of the Borel plane. Namely, recalling that $\phi_n(\tau)=1$, $n \in \IZ_{>0}$, we have that
\be \label{eq: discP2series2infty}
\text{disc}_{\pi/2}\phi(\tau) = \sum_{n=1}^{\infty} R_n \re^{-n 2 \pi \ri /3 \tau} = 3 \log (q^{2/3} ; \, q)_{\infty} - 3 \log (q^{1/3} ; \, q)_{\infty} \, ,
\ee
which is dual to the discontinuity formula in Eq.~\eqref{eq: discP2series2}. We stress that the $q$-series $(q^{2/3} ; \, q)_{\infty}$ and $(q^{1/3} ; \, q)_{\infty}$ occur as the holomophic block of the first spectral trace of local $\IP^2$ in Eq.~\eqref{eq: P2fact}. 

\subsubsection{Exact large-order relations} \label{sec: zeta-infty}
Following the same arguments of Section~\ref{sec: zeta-zero}, we provide here a number-theoretic characterization of the perturbative coefficients $a_{2n}$, $n \in \IN_{\ne 0}$, of the asymptotic series $\phi(\tau)$ in Eq.~\eqref{eq: phiP2infty}. 
Once again, we upgrade the large-$n$ asymptotics of the coefficients $a_{2n}$ by systematically including the contributions from all sub-dominant singularities in the Borel plane and obtain in this way an exact large-order relation, that is, 
\be \label{eq: exactlarge-infty}
a_{2n} = \frac{(-1)^n}{\pi} \frac{\Gamma(2n-1)}{A^{2n-1}} \sum_{m=1}^{\infty} \frac{R_m}{m^{2n-1}} \, , \quad  n \in \IN_{\ne 0} \, ,
\ee
where the Stokes constant $R_m$ is given explicitly in Eq.~\eqref{eq: intStokesP2infty}, and we have defined $A=2\pi/3$.
\begin{proposition} \label{prop: formulaS3infty} 
The Stokes constants $R_m$, $m \in \IZ_{> 0}$, satisfy the exact relations
\be \label{eq: formulaS3infty}
\sum_{m=1}^{\infty} \frac{R_m}{m^{2n-1}} = 3 \frac{\zeta(2n)}{3^{2n-1}} \left(\zeta\left(2n-1,\frac{1}{3}\right) - \zeta\left(2n-1, \frac{2}{3} \right) \right) \, , \quad n \in \IN_{\ne 0} \, ,
\ee
where $\zeta(z)$ denotes the Riemann zeta function, and $\zeta(z, a)$ denotes the Hurwitz zeta function.
\end{proposition}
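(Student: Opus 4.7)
The plan is to mirror the proof of Proposition~\ref{prop: formulaS3} step by step, adapting the bookkeeping to the modified exponent structure (powers $2n-1$ instead of $2n+1$) and to the shift $B_{2n+1}(2/3)\mapsto B_{2n-1}(2/3)$ appearing in the coefficients. First I would substitute the explicit expression for the perturbative coefficients $a_{2n}$ from Eq.~\eqref{eq: phiP2infty} into the exact large-order relation in Eq.~\eqref{eq: exactlarge-infty} and solve for $\sum_{m\ge 1} R_m/m^{2n-1}$. Using $A=2\pi/3$ together with $\Gamma(2n-1)=(2n-2)!$ and the elementary identity $(6\pi)^{2n-1}(2\pi/3)^{2n-1}=(4\pi^2)^{2n-1}$, the $(-1)^n$ factors cancel and the left-hand side reduces to a single rational multiple of $\pi^{4n-1}B_{2n}B_{2n-1}(2/3)/[(2n)!(2n-1)!]$ (up to an overall $\sqrt 3$).

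Next, I would convert Bernoulli data into zeta values using the two identities already invoked in the proof of Proposition~\ref{prop: formulaS3}: the classical relation $\zeta(2n)=(-1)^{n+1}(2\pi)^{2n}B_{2n}/(2(2n)!)$ from Eq.~\eqref{eq: ztoBeven}, which absorbs the factor $B_{2n}/(2n)!$ into $\zeta(2n)/(4\pi^2)^n$, and the identities in Eq.~\eqref{eq: betaID}, namely $B_{2n-1}(2/3)=-B_{2n-1}(1/3)$ and $B_{2n-1}(z)=-(2n-1)\zeta(2-2n,z)$. Together these rewrite $B_{2n-1}(2/3)/(2n-1)!$ as a symmetric combination involving $\zeta(2-2n,2/3)-\zeta(2-2n,1/3)$.

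The heart of the computation is to trade these Hurwitz zeta values at the non-positive integer $2-2n$ for values at $2n-1$ via the functional equation~\eqref{eq: functZeta} with $b=3$ and $a\in\{1,2\}$. Since $\cos(\pi(2n-1)/2)=0$ and $\sin(\pi(2n-1)/2)=(-1)^{n-1}$, only the $\sin(2\pi j a/3)$ pieces survive the sum over $j=1,2,3$; the $j=3$ term vanishes and the $j=1,2$ contributions combine to give
\be \label{eq: proofS3infty-zeta}
\zeta\!\left(2-2n,\tfrac{2}{3}\right)-\zeta\!\left(2-2n,\tfrac{1}{3}\right) = -\frac{2\sqrt{3}\,(-1)^{n-1}(2n-2)!}{(6\pi)^{2n-1}}\left(\zeta\!\left(2n-1,\tfrac{1}{3}\right)-\zeta\!\left(2n-1,\tfrac{2}{3}\right)\right),
\ee
which is the direct analogue of Eq.~\eqref{eq: proofS3-zeta} for the semiclassical case. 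Substituting \eqref{eq: proofS3infty-zeta} back produces cancellations $\sqrt 3\cdot\sqrt 3=3$, $2\cdot 4^{n-1}/6^{2n-1}=1/3^{2n-1}$, and $(-1)^{n+1}(-1)^{n-1}=1$, yielding exactly the right-hand side of Eq.~\eqref{eq: formulaS3infty}.

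No genuinely new ideas are required beyond those of Proposition~\ref{prop: formulaS3}; the only delicate point is the sign and $\sqrt 3$ bookkeeping in \eqref{eq: proofS3infty-zeta}, where one must verify that the odd-in-$a$ part of $\sin(2\pi j a/3)$ selects precisely the antisymmetric combination $\zeta(2n-1,1/3)-\zeta(2n-1,2/3)$ with the correct sign, and that the shift from $2n+1$ to $2n-1$ in the exponent relocates one factor of $3$ from the numerator of Eq.~\eqref{eq: formulaS3} into the overall $3/3^{2n-1}$ prefactor here. That is the main accounting hurdle; everything else is a direct transcription of the argument of Proposition~\ref{prop: formulaS3}.
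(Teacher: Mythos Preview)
Your proposal is correct and follows essentially the same route as the paper's own proof: substitute the explicit $a_{2n}$ from Eq.~\eqref{eq: phiP2infty} into the exact large-order relation Eq.~\eqref{eq: exactlarge-infty}, convert $B_{2n}$ and $B_{2n-1}(2/3)$ into $\zeta(2n)$ and Hurwitz zeta values via Eqs.~\eqref{eq: ztoBeven} and~\eqref{eq: betaID}, and then apply the functional equation~\eqref{eq: functZeta} to pass from $\zeta(2-2n,\cdot)$ to $\zeta(2n-1,\cdot)$. Your intermediate identity~\eqref{eq: proofS3infty-zeta} is exactly the paper's Eq.~\eqref{eq: proofS3-zetainfty} after flipping the sign and the order of arguments in the Hurwitz difference, and the final bookkeeping you describe matches the paper's substitution step.
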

\begin{proof}
Substituting the original expression for the perturbative coefficients $a_{2n}$, $n \in \IN_{\ne 0}$, in Eq.~\eqref{eq: phiP2infty} into the exact large-order relation in Eq.~\eqref{eq: exactlarge-infty}, we have that
\be \label{eq: proofS3-1infty}
\sum_{m=1}^{\infty} \frac{R_m}{m^{2n-1}} =  \pi \sqrt{3} (2 \pi)^{4n-2} \frac{B_{2n} B_{2n-1}(2/3)}{(2n)! (2n-1)!} \, .
\ee
Using the known identities
\begin{subequations}
\be
\zeta(2n) = (-1)^{n+1} \frac{(2 \pi)^{2n} B_{2n}}{2 (2n)!} \, ,
\ee
\be \label{eq: BernoulliToZeta}
B_{2n-1}(2/3) = - B_{2n-1}(1/3) \, , \quad B_{2n-1}(z) = -(2n-1) \zeta(2-2n, z) \, ,
\ee
\end{subequations}
the formula in Eq.~\eqref{eq: proofS3-1infty} becomes
\be \label{eq: proofS3-2infty}
\sum_{m=1}^{\infty} \frac{R_m}{m^{2n-1}} = \pi \sqrt{3} (-1)^n (2 \pi)^{2n-2} \frac{\zeta(2n)}{(2n-2)!} \left(\zeta\left(2-2n,\frac{2}{3}\right) - \zeta\left(2-2n, \frac{1}{3} \right) \right) \, .
\ee
The functional identity for the Hurwitz zeta function in Eq.~\eqref{eq: functZeta} yields
\be \label{eq: proofS3-zetainfty}
\zeta\left(2-2n,\frac{2}{3}\right) - \zeta\left(2-2n, \frac{1}{3}\right) = \frac{2 \sqrt{3} (2n-2)!}{(-1)^{n-1} (6 \pi)^{2n-1}} \left( \zeta\left(2n-1,\frac{2}{3}\right) - \zeta\left(2n-1, \frac{1}{3}\right) \right) \, ,
\ee
and substituting this into Eq.~\eqref{eq: proofS3-2infty}, we obtain the desired statement.
\end{proof}
\begin{remark}
We note that the exact expression in Eq.~\eqref{eq: formulaS3infty} can be written equivalently in terms of the integer constants $\beta_m$,  $m \in \IZ_{\ne 0}$. Namely, 
\be \label{eq: formulaA3infty}
\sum_{m=1}^{\infty} \frac{\beta_m}{m^{2n}} = \frac{\zeta(2n)}{3^{2n-1}} \left(\zeta\left(2n-1,\frac{1}{3}\right) - \zeta\left(2n-1, \frac{2}{3} \right) \right) \, , \quad n \in \IN_{\ne 0} \, ,
\ee
which is dual to the formula in Eq.~\eqref{eq: formulaA3}. As before, let us point out that the series in the LHS of Eq.~\eqref{eq: formulaA3infty} belongs to the family of Dirichlet series~\cite{Dirichlet}.
As a consequence of Corollary~\ref{prop: formulaA1mult-infty}, the sequence of integers $\beta_m$, $m \in \IZ_{\ne 0}$, defines a bounded multiplicative arithmetic function, and the corresponding Dirichlet series satisfies an expansion as an Euler product indexed by the set of prime numbers $\IP$, that is, 
\be \label{eq: L-series-infty}
\sum_{m=1}^{\infty} \frac{\beta_m}{m^{2n}} = \prod_{p \in \IP} \sum_{e=0}^{\infty} \frac{\beta_{p^e}}{p^{e(2n)}} \, , \quad n \in \IN_{\ne 0} \, ,
\ee
which proves that the given Dirichlet series is, indeed, an $L$-series. We will further explore in this direction in Section~\ref{sec: dirichlet}.
\end{remark}

\subsubsection{Exponentiating with alien calculus} \label{sec: exp-infty}
Let us now translate our analytic solution to the resurgent structure of the asymptotic series $\phi(\tau)$ in Eq.~\eqref{eq: phiP2infty} into results on the original, exponentiated perturbative series in Eq.~\eqref{eq: expP2infty}, which we denote by
\be \label{eq: psiP2infty}
\psi(\tau) = \re^{\phi(\tau)} = \exp \left( \sqrt{3} \sum_{n = 1}^{\infty} (-1)^{n} \frac{B_{2n} B_{2n-1}(2/3)}{(2n)! (2n-1)} (6 \pi \tau)^{2n-1} \right) \in \IQ[\pi, \sqrt{3}] [\![\tau]\!] \, ,
\ee
and which is also a Gevrey-1 asymptotic series. Its Borel transform $\hat{\psi}(\zeta)$ inherits from $\hat{\phi}(\zeta)$ the same pattern of singularities in Eq.~\eqref{eq: zetan-infty}. Namely, there are infinitely many and discrete logarithmic branch points located along the imaginary axis of the complex $\zeta$-plane at $\zeta_n = n 2 \pi \ri/3$, $n \in \IZ_{\ne 0}$.
Let us denote by $s_{\pm}(\psi)(\tau)$ the lateral Borel resummations at the angles $\pi/2 \pm \epsilon$ with $\epsilon \ll 1$, which lie slightly above and slightly below the Stokes line along the positive imaginary axis, respectively. The same arguments developed in Section~\ref{sec: exp-zero} with the use of alien derivation apply here as well. In particular, we find that
\be \label{eq: splus2infty}
s_{+}(\psi)(\tau) = s_{-}(\psi)(\tau) + s_{-} (\psi)(\tau) \sum_{k=1}^{\infty} \re^{-\zeta_k/\tau} \bar{R}_k  \, ,
\ee
where the asymptotic series $\psi_k(\tau)$, which resurges from $\psi(\tau)$ at the singularity $\zeta_k$, is simply
\be \label{eq: psikinfty}
\psi_k(\tau) = \psi(\tau) \, , \quad k \in \IZ_{>0} \, ,
\ee
and the Stokes constant $\bar{R}_k \in \IC$ of $\psi(\tau)$ at the singularity $\zeta_k$ is fully determined by the Stokes constants of $\phi(\tau)$ via the closed combinatorial formula
\be \label{eq: newStokesP2infty}
\bar{R}_k =  \sum_{p \in \CP(k)} \frac{1}{r!} \binom{r}{N_1, \dots , N_k} R_{n_1} \cdots R_{n_r}  \, , \quad k \in \IZ_{>0} \, ,
\ee
where $\CP(k)$ is the set of all partitions $p = (n_1, \dots , n_r)$ of the positive integer $k$, $r=|p|$ denotes the length of the partition, and $N_i \in \IN$ is the number of times that the positive integer $i \in \IZ_{> 0}$ is repeated in the partition $p$. Note that $\sum_{i=1}^k N_i = r$.
We stress that the sum over partitions in Eq.~\eqref{eq: newStokesP2infty} is finite, and thus all the Stokes constants of the original perturbative series $\psi(\tau)$ are known exactly. More precisely, the discontinuity formula in Eq.~\eqref{eq: splus2infty} solves analytically the resurgent structure of $\psi(\tau)$. Applying the same computations above to the discontinuity of $\psi(\tau)$ across the angle $3 \pi/2$, and recalling that $R_n = -R_{-n}$ for all $n \in \IZ_{\ne 0}$, we find straightforwardly that
\be \label{eq: newStokesP2infty-minus}
\bar{R}_k =  \sum_{p \in \CP(-k)} \frac{(-1)^r}{r!} \binom{r}{N_1, \dots , N_k} R_{n_1} \cdots R_{n_r}  \, , \quad k \in \IZ_{<0} \, ,
\ee
which implies that $\bar{R}_{k} \ne \pm \bar{R}_{-k}$ in general. 

Let us point out that the formulae in Eqs.~\eqref{eq: newStokesP2infty} and~\eqref{eq: newStokesP2infty-minus} immediately prove that $\bar{R}_k \in \IQ$, $k \in \IZ_{\ne 0}$. However, we can say more. The discontinuity formula in Eq.~\eqref{eq: discP2series2infty} can be directly exponentiated to give an exact generating function in terms of known $q$-series for the Stokes constants $\bar{R}_k$. Namely, we find that
\be \label{eq: quotient-infty}
\sum_{k=1}^{\infty} \bar{R}_k q^{k/3} = \frac{(q^{2/3} ; \, q)_{\infty}^3}{(q^{1/3} ; \, q)_{\infty} ^3} \, , \quad \sum_{k=1}^{\infty} \bar{R}_{-k} q^{k/3} = \frac{(q^{1/3} ; \, q)_{\infty}^3}{(q^{2/3} ; \, q)_{\infty} ^3} \, .
\ee
As a consequence of the $q$-binomial theorem, the quotients of $q$-series in Eq.~\eqref{eq: quotient-infty} can be expanded in powers of $q^{1/3}$, and the resulting numerical coefficients possess a natural interpretation in terms of counting partitions. In particular, they are integer numbers.  
Explicitly, the first several Stokes constants $\bar{R}_k$, $k>0$, are 
\be
3, 3, 1, 3, 6, 0, -3, 9, 9, -9, 0, 19, -6, -15, 27, 12, -33, 17, 33,  \dots \, ,
\ee
while the first several Stokes constants $\bar{R}_k$, $k<0$, are
\be
-3, 6, -10, 12, -9, 1, 9, -15, 8, 15, -42, 54, -36, -15, 73, -90, 39, 62, -153,  \dots \, .
\ee
We comment that our exact solution is in full agreement with the numerical investigation of~\cite{GuM}.

\subsection{A number-theoretic duality} \label{sec: dirichlet}
We will now further develop the simple arithmetic symmetry observed in the closed formulae in Sections~\ref{sec: closed-zero} and~\ref{sec: closed-infty}, and we will show how it can be reformulated into a full-fledged analytic number-theoretic duality, shedding new light on the statements of Sections~\ref{sec: zeta-zero} and~\ref{sec: zeta-infty}. We refer to~\cite{NT1, NT2, NT3} for background material on analytic number theory.
Let us recall that the Stokes constants associated to the limits $\hbar \rightarrow 0$ and $\hbar \rightarrow \infty$ are given by the explicit divisor sum functions in Eqs.~\eqref{eq: formulaS1} and~\eqref{eq: formulaS1infty}, respectively. We can write these formulae equivalently as
\begin{subequations} \label{eq: convolution}
\begin{align}
\frac{S_n}{S_1} &= \sum_{d | n} \chi_{3,2} \left( d \right) F_{-1}\left( d \right) F_0\left( \frac{n}{d} \right) = \left(\chi_{3,2} F_{-1} * F_0 \right)(n) \, , \quad n \in \IN_{\ne 0} \, ,\\
\frac{R_n}{R_1} &= \sum_{d | n} \chi_{3,2} \left( d \right) F_0\left( d \right)  F_{-1}\left( \frac{n}{d} \right) = \left(\chi_{3,2} F_0 * F_{-1} \right)(n) \, , \quad n \in \IN_{\ne 0} \, ,
\end{align}
\end{subequations}
where the product $*$ denotes the Dirichlet convolution of arithmetic functions. We have introduced $F_{\alpha}(n) = n^{\alpha}$, $\alpha \in \IR$, and $\chi_{3,2}(n)$ is the unique non-principal Dirichlet character modulo $3$, which is defined by
\be \label{eq: character32}
\chi_{3,2} (n) = 
\begin{cases}
1 & \quad \text{if} \quad n \equiv 1 \quad \text{mod} \; 3 \, , \\
0 & \quad \text{if} \quad n \equiv 0 \quad \text{mod} \; 3 \, , \\
-1 & \quad \text{if} \quad n \equiv -1 \quad \text{mod} \; 3 \, ,
\end{cases}
\ee
for $n \in \IN$. Note that, despite the arithmetic functions $\chi_{3,2}, F_0, F_{-1}$ being totally multiplicative, the convolutions $S_n/S_1, R_n/R_1$ are multiplicative only.

We have shown that the Dirichlet series associated to the Stokes constants naturally appear in the exact large-order formulae for the perturbative coefficients in Eqs.~\eqref{eq: formulaS3} and~\eqref{eq: formulaS3infty}, and that they are, in particular, $L$-series. We can say something more. Since the multiplication of Dirichlet series is compatible with the Dirichlet convolution~\cite{Dirichlet}, it follows directly from the decomposition in Eq.~\eqref{eq: convolution} that we have the formal factorization
\begin{subequations} \label{eq: convolution2}
\begin{align}
\sum_{m=1}^{\infty} \frac{S_m/S_1}{m^{s}} &= \sum_{m=1}^{\infty} \frac{\chi_{3,2}(m) F_{-1}(m)}{m^{s}} \sum_{m=1}^{\infty} \frac{F_0(m)}{m^{s}} =  L(s+1, \chi_{3,2}) \zeta(s) \, , \\
\sum_{m=1}^{\infty} \frac{R_m/R_1}{m^{s}} &= \sum_{m=1}^{\infty} \frac{\chi_{3,2}(m) F_0(m)}{m^{s}} \sum_{m=1}^{\infty} \frac{F_{-1}(m)}{m^{s}} = L(s, \chi_{3,2}) \zeta(s+1) \, ,
\end{align}
\end{subequations}
where $s \in \IC$ such that $\Re (s) >1$, and $L(s, \chi_{3,2})$ is the Dirichlet $L$-series of the primitive character $\chi_{3,2}$. We have found, in this way, that the arithmetic duality which relates the weak- and strong-coupling Stokes constants $S_n, R_n$, $n \in \IN_{\ne 0}$, in Eq.~\eqref{eq: convolution} is translated at the level of the $L$-series encoded in the perturbative coefficients into a simple unitary shift of the arguments of the factors in the RHS of Eq.~\eqref{eq: convolution2}. 
Furthermore, $L(s, \chi_{3,2})$ is absolutely convergent for $\Re (s) >1$, and it can be analytically continued to a meromorphic function on the whole complex $s$-plane, called a Dirichlet $L$-function. 
Each of the two $L$-series in the LHS of Eq.~\eqref{eq: convolution2} is, therefore, the product of two well-known $L$-functions, and such a remarkable factorization explicitly proves the convergence in the right half-plane of the complex numbers $\Re (s) >1$ and the existence of a meromorphic continuation throughout the complex $s$-plane. Namely, our $L$-series are, themselves, $L$-functions.
Let us comment that the formulae in Eqs.~\eqref{eq: formulaS3} and ~\eqref{eq: formulaS3infty} follow from Eq.~\eqref{eq: convolution2} by means of the known relation between Dirichlet $L$-functions and Hurwitz zeta functions at rational values. Specifically, we have that
\be
L(s, \chi_{3,2}) = \frac{1}{3^s} \left(\zeta\left(s,\frac{1}{3}\right) - \zeta\left(s, \frac{2}{3} \right) \right) \, , \quad \Re (s) >1 \, .
\ee
Moreover, the Dirichlet $L$-function satisfies the Euler product expansion
\be
L(s, \chi_{3,2}) = \prod_{p \in \IP} \left( 1- \frac{\chi_{3,2}(p)}{p^s} \right)^{-1} \, , \quad \Re (s) >1 \, ,
\ee
where $\IP$ is the set of prime numbers.

\subsection{Numerical tests} \label{sec: numericsP2}
Let us conclude with a parallel and independent numerical analysis which cross-checks and confirms our analytic results on the resurgent structure of the asymptotic series $\psi(\hbar)$ in Eq.~\eqref{eq: psiP2}.
We recall that this corresponds, up to a global pre-factor, to the perturbative expansion of the first fermionic spectral trace of local $\IP^2$ in the semiclassical limit $\hbar \rightarrow 0$ in Eq.~\eqref{eq: expP2}.
An analogous numerical investigation of the asymptotic series in Eq.~\eqref{eq: psiP2infty} in the dual weakly-coupled limit $g_s \rightarrow 0$ is performed in~\cite{GuM}. 
Let us truncate the sum to a very high but finite order $d \gg 1$. We denote the resulting $\IQ$-polynomial by $\psi_d(\hbar)$ and its Borel transform by $\hat{\psi}_d(\zeta)$. Namely,
\be \label{eq: psi_d}
\psi_d(\hbar) = \sum_{n=0}^d b_{2n} \hbar^{2n} \in \IQ [\hbar] \, , \quad \hat{\psi}_d(\zeta) = \sum_{n=0}^d \frac{b_{2n}}{(2n)!} \zeta^{2n} \in \IQ [\zeta] \, ,
\ee
where the coefficients $b_{2n}$, $1 \le n \le d$, are computed by Taylor expanding the exponential in the RHS of Eq.~\eqref{eq: psiP2}. The first few terms of $\psi_d(\hbar)$ are shown in Eq.~\eqref{eq: first-terms-P2-zero}. It is straightforward to verify that the perturbative coefficients satisfy the expected factorial growth
\be
b_{2n} \sim (-1)^n (2n)! \left(\frac{4 \pi^2}{3} \right)^{-2n} \quad n \gg 1 \, .
\ee
We assume $\hbar \in \IC'$ and perform a full numerical Pad\'e--Borel analysis~\cite{CD1, CD2, CD3} in the complex $\zeta$-plane. Let $d$ be even. We compute the singular points of the diagonal Pad\'e approximant of order $d/2$ of the truncated Borel expansion $\hat{\psi}_d(\zeta)$, which we denote by $\hat{\psi}^{\rm PB}_d(\zeta)$, and we observe two dominant complex conjugate branch points at $\zeta = \pm 4 \pi^2 \ri /3$ and their respective arcs of accumulating spurious poles mimicking two branch cuts along the positive and negative imaginary axis. Let us introduce $A= 4 \pi^2/3$ and $\zeta_n = n 4 \pi^2 \ri/3$, $n \in \IZ_{\ne 0}$, as before. A zoom-in to the poles of the Pad\'e approximant in the bottom part of the upper-half complex $\zeta$-plane is shown in the leftmost plot in Fig.~\ref{fig: PB-P2}.
\begin{figure}[htb!]
\centering
\includegraphics[width=1.\textwidth]{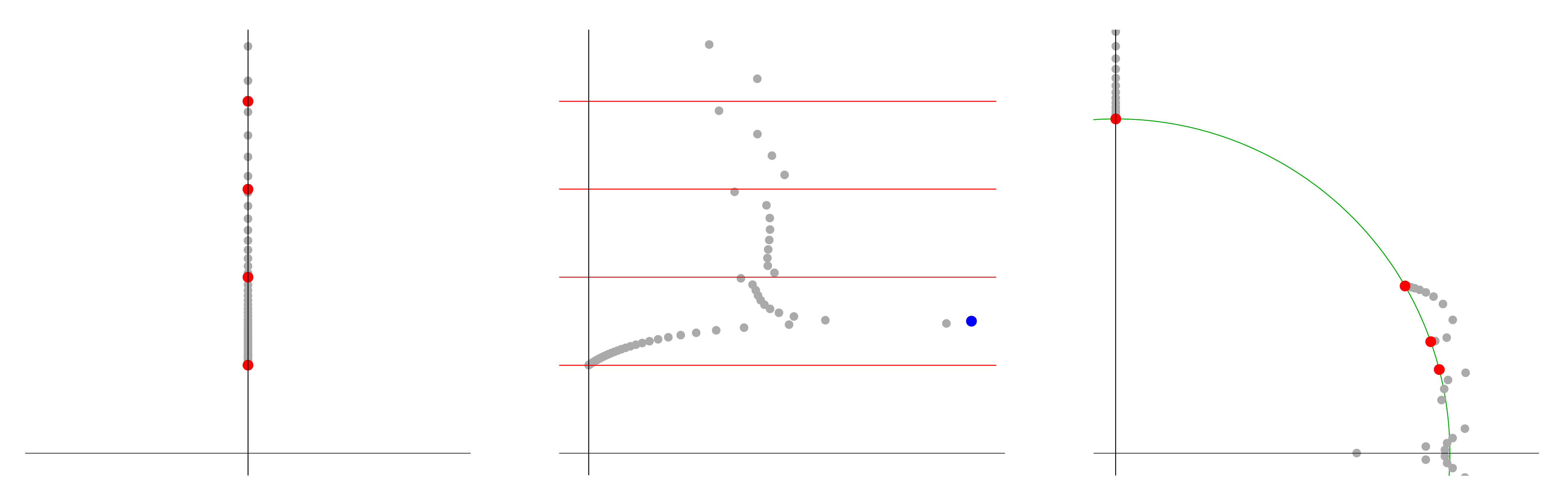}
\caption{In the leftmost plot, a zoom-in to the poles of $\hat{\psi}^{\rm PB}_d(\zeta)$ (in gray) in the bottom part of the upper-half complex $\zeta$-plane. We show the first few integer multiples of the dominant pole (in red). In the central plot, a zoom-in to the poles of $\hat{\psi}^{\rm PB}_{d, {\rm probe}}(\zeta)$ (in gray) in the bottom part of the upper-half complex $\zeta$-plane, including the test charge singularity at $\zeta = \zeta_{\text{probe}}$ (in blue). We show the horizontal lines intersecting the positive imaginary axis at the first few physical branch points (in red). In the rightmost plot, a zoom-in to the poles of $\hat{\psi}^{\rm PCB}_{d}(\eta)$ (in gray) in the upper-right quarter of the complex $\eta$-plane. We show the first few physical branch points (in red) and the unit circle (in green). The plots are obtained with $d=300$.}
\label{fig: PB-P2}
\end{figure}
To reveal the presence of subdominant singularities which are hidden by the unphysical Pad\'e poles, we apply the potential theory interpretation of Pad\'e approximation~\cite{Stahl, Saff}. More concretely, to test the presence of the suspected next-to-leading order branch point at $\zeta = \zeta_2$, we introduce by hand the singular term $\hat{\psi}_{\text{probe}}(\zeta) = \left( \zeta - \zeta_{\text{probe}} \right)^{-1/5}$, where $\zeta_{\text{probe}} = A  \left( 1/14 + 3 \ri/2 \right)$, and we compute the $(d/2)$-diagonal Pad\'e approximant of the sum
\be
\hat{\psi}^{\rm PB}_{d, {\rm probe}}(\zeta) = (\hat{\psi}_d + \hat{\psi}_{\text{probe}})^{\rm PB} (\zeta)\, . 
\ee
The resulting Pad\'e--Borel poles distribution is distorted as shown in the central plot in Fig.~\ref{fig: PB-P2}. The targeted true branch point at $\zeta =  \zeta_2$ is now clearly visible, and so it is the test charge singularity at $\zeta = \zeta_{\text{probe}}$. The other true branch points at $\zeta = \zeta_3, \zeta_4$ are outlined. We can further improve the precision of the Pad\'e extrapolation and analytic continuation of the truncated Borel series $\hat{\psi}_d(\zeta)$ with the use of conformal maps. Namely, we perform the change of variable
\be
\zeta = A \frac{2 \eta}{1-\eta^2} \, , \quad \eta \in \IC \, ,
\ee
which maps the cut Borel $\zeta$-plane $\mathbb{C} \backslash \{(-\infty, \zeta_{-1}] \cup  [\zeta_1, +\infty) \}$ into the interior of the unit disk $|\eta| < 1$. 
The dominant branch points $\zeta= \zeta_{\pm 1}$ are mapped into $\eta = \pm \text{i}$, while the point at infinity is mapped into $\eta = \pm 1$. Correspondingly, the branch cuts $(-\infty, \zeta_{-1}]$ and $[\zeta_1, +\infty)$ along the imaginary $\zeta$-axis split each one into two identical copies which lie onto the two lower-half and upper-half quarters of the unit circle in the $\eta$-plane, respectively. The inverse conformal map is explicitly given by
\be
\eta = \pm \sqrt{\frac{-1 + \sqrt{1+(\zeta/A)^2}}{1 + \sqrt{1+(\zeta/A)^2}}} \,  , \quad \zeta \in \IC \, .
\ee
We compute the $(d/2)$-diagonal Pad\'e approximant of the conformally mapped Borel expansion, which we denote by $\hat{\psi}^{\rm PCB}_{d}(\eta)$. Its singularities in the complex $\eta$-plane are shown in the rightmost plot in Fig.~\ref{fig: PB-P2}.
We observe two symmetric arcs of spurious poles emanating from the conformal map images of $\zeta = \zeta_{\pm 1}$ along the imaginary axis in opposite directions. The smaller arcs of poles jumping along the unit circle towards the real axis represent the Pad\'e boundary of convergence joining the conformal map images of the repeated singularities at $\zeta = \zeta_n$, $n \in \IZ_{\ne 0}$. 
Thus, the numerical analysis confirms the pattern of singularities previously found analytically.

Let us now test the expected resurgent structure of the asymptotic series $\psi(\hbar)$. We recall that the local expansions of the Borel transform $\hat{\psi}(\zeta)$ in the neighbourhoods of its dominant singularities at $\zeta = \zeta_{\pm 1}$ are governed by the one-instanton perturbative corrections $\psi_{\pm 1}(\hbar)$ and the first Stokes constants $\bar{S}_{\pm 1}$. Let us consider the standard functional ansatz 
\be \label{eq: ansatzP2}
\psi_1(\hbar) = \sum_{k=0}^{\infty} c_k \hbar^{k-b}  \in \hbar^{-b} \IC [\![\hbar]\!] \, ,
\ee
where the coefficient $c_k \in \IC$ can be interpreted as the $(k+1)$-loop contribution around the one-instanton configuration in the upper-half Borel $\zeta$-plane, and $b \in \IR \backslash \IZ_+$ is the so-called characteristic exponent. 
We fix the normalization condition $c_0=1$, and we further assume that $\psi_{-1}(\hbar)=\psi_1(\hbar)$ and $\bar{S}_{-1} = \bar{S}_1$. By Cauchy's integral theorem, the large-$n$ asymptotics of the perturbative coefficients $b_{2n}$ in Eq.~\eqref{eq: psi_d} is controlled at leading order by 
\be \label{eq: large_order_standard} 
b_{2n} \sim \frac{(-1)^n  \bar{S}_1}{\pi \ri } \frac{\Gamma(2n+b)}{A^{2n+b}} \sum_{k=0}^{\infty} \frac{c_k A^k}{\prod_{j=1}^k (2n+b-j)} \quad n \gg 1 \, .
\ee
An independent numerical estimate of $A$ is obtained from the convergence of the sequences
\begin{subequations} 
\begin{align}
4 n^2 \frac{b_{2n}}{b_{2n+2}} &= |A|^2 + \CO \left(1/n\right) \, , \\
\frac{|A|^2 b_{2n+2}}{4 n^2 b_{2n}} + \frac{4 n^2 b_{2n-2}}{|A|^2 b_{2n}} &= 2 \cos(2 \theta_A) + \mathcal{O}\left(1/n\right) \, ,
\end{align}
\end{subequations}
which give the absolute value $|A| \approx 4 \pi^2/3$ and the phase $\theta_A \approx 0$, as expected. Analogously, a numerical estimate of the characteristic exponent $b$ is obtained from the convergence of the sequence
\be
\left( \frac{A^2}{4 n^2} \frac{b_{2n+2}}{b_{2n}} - 1 \right) 2n = 1+ 2b + \mathcal{O}\left(1/n\right) \, ,
\ee
which gives $b \approx 0$. Finally, we estimate the Stokes constant $\bar{S}_1$ as the large-$n$ limit of the sequence\footnote{Using the first 300 perturbative coefficients, the numerical estimate for the first Stokes constant agrees with the exact value up to 32 digits.} 
\begin{equation}
\pi \ri (-1)^n \frac{A^{2n}}{\Gamma(2n)}  b_{2n} = \bar{S}_1 + \mathcal{O}\left(1/n\right) \, ,
\end{equation}
which gives $\bar{S}_1 \approx 3 \sqrt{3} \ri$.
Let us proceed to systematically extract the coefficients $c_k$, $k \in \IN$. We first Taylor expand the quotients appearing in the RHS of Eq.~\eqref{eq: large_order_standard} in the large-$n$ limit and rearrange them to give
\be \label{eq: large_order_standard_2}
b_{2n} \sim \frac{(-1)^n  \bar{S}_1}{\pi \ri } \frac{\Gamma(2n)}{A^{2n}} \sum_{i=0}^{\infty} \frac{\mu_i}{(2n)^i} \quad n \gg 1 \, ,
\ee
where the new coefficients $\mu_ i$ are expressed in closed form as
\be \label{eq: coeff-change}
\mu_0 = c_0 = 1 \, , \quad \mu_i = \sum_{k=1}^{i} c_k A^k \left( \sum_{\substack{1 \le m_1, \dots, m_k \le i \\ m_1+ \dots +m_k = i}} \, \prod_{j=1}^k j^{m_j-1} \right) \, , \quad i \in \IN_{\ne 0} \, .
\ee
We define the sequence $Q_{2n}$, $n \in \IN$, such that
\begin{equation}
Q_{2n} = \pi \ri (-1)^n \frac{A^{2n}}{\bar{S}_1 \Gamma(2n)}  b_{2n} \sim \sum_{i = 0}^{\infty} \frac{\mu_i}{(2n)^i} \quad n \gg 1 \, ,
\end{equation}
and we obtain a numerical estimate of the coefficients $\mu_i$, $i \in \IN$, as the large-$n$ limits of the recursively-defined sequences
\begin{subequations}
\begin{align}
Q_{2n}^{(1)} &= 2n (Q_{2n}-1) =  \mu_1+ \mathcal{O}\left(1/n\right) \, , \\
Q_{2n}^{(i)} &= 2n (Q_{2n}^{(i-1)}-\mu_{i-1}) = \mu_i + \mathcal{O}\left(1/n\right) \, , \quad i \in \IN_{>0} \, .
\end{align}
\end{subequations}
We substitute the numerical values for the coefficients $\mu_i$ in their explicit relations with the coefficients $c_k$ in Eq.~\eqref{eq: coeff-change}, and term-by-term we find in this way that 
\begin{equation}
c_{2k} \approx b_{2k} \, , \quad c_{2k+1} \approx 0 \, , \quad k \in \IN \, ,
\end{equation}
that is, the coefficients of the one-instanton asymptotic series $\psi_1(\hbar)$ in Eq.~\eqref{eq: ansatzP2} identically correspond to the coefficients of the original perturbative series $\psi(\hbar)$ in Eq.~\eqref{eq: psi_d}. We conclude that $\psi_1(\hbar) = \psi(\hbar)$, as expected from the analytic solution. We comment that the numerical convergence of the large-$n$ limits of all sequences above has been accelerated using Richardson transforms.

Finally, we perform a numerical test of the discontinuity formula in Eq.~\eqref{eq: splus2}. More precisely, let us rotate the Borel $\zeta$-plane by an angle of $-\pi/2$ in order to move the branch cuts of $\hat{\psi}(\zeta)$ to the real axis. The corresponding change of variable is $z = - \ri \zeta$. We analogously rotate the complex $\hbar$-plane and introduce the variable $x = - \ri \hbar$. We fix a small positive angle $\epsilon \ll 1$ and a small positive value of $x \ll 1$. We compute numerically the lateral Borel resummations across the positive real axis as\footnote{Following~\cite{CD1, CD2}, the numerical precision of the lateral Borel resummations can be improved with the use of conformal maps.}
\begin{equation}
s^{\rm PB}_{\pm}(\psi)(x) = e^{\pm \text{i} \epsilon} \int_0^{\infty} \hat{\psi}_d^{\text{PB}} (z e^{\pm \text{i} \epsilon} x)  \, \re^{-z e^{\pm \text{i} \epsilon}} \rd z \, ,
\end{equation}
where $\hat{\psi}_d^{\text{PB}}(z)$ is the diagonal Pad\'e approximant of order $d/2$ of the truncated Borel series $\hat{\psi}_d(z)$. The corresponding discontinuity is evaluated as
\begin{equation}
\text{disc}^{\rm PB}(\psi)(x) = s^{\rm PB}_{+}(\psi)(x) - s^{\rm PB}_{-}(\psi)(x) = 2 \ri \Im \left( s^{\rm PB}_{+}(\psi)(x) \right) \, .
\end{equation}
Assuming that all higher-order perturbative series $\psi_n(x)$ are trivially equal to $\psi(x)$, the Stokes constants $\bar{S}_n$, $n \in \IZ_{>0}$, are estimated numerically by means of the recursive relation
\begin{equation}
\text{disc}^{\rm PB}(\psi)(x) \, e^{n A/x} - s^{\rm PB}_{-}(\psi)(x) \sum_{k = 1}^{n-1} \bar{S}_k e^{(n-k)A/x} = s^{\rm PB}_{-}(\psi)(x) \bar{S}_n + \mathcal{O}(e^{-A/x}) \, ,
\end{equation}
which reproduces the exact results obtained in Section~\ref{sec: exp-zero}. Analogously, we obtain a numerical estimate of the Stokes constants associated to the branch points on the negative real axis in the rotated Borel $z$-plane.
We remark that all numerical checks described here for $\psi(\hbar)$ are straightforwardly and successfully applied to the asymptotic series $\phi(\hbar) = \log \psi(\hbar)$ in Eq.~\eqref{eq: phi2}, confirming once more our analytic solution. 

\sectiono{The example of local \texorpdfstring{$\IF_0$}{F0}} \label{sec: localF0}
The total space of the canonical bundle over the Hirzebruch surface $\IF_0 = \IP^1 \times \IP^1$, which is $\CO(-2, -2) \rightarrow \IP^1 \times \IP^1$, called the local $\IF_0$ geometry, has one complex deformation parameter $\kappa$ and one mass parameter $\xi_{\IF_0}$. 
Its moduli space is identified with the family of mirror curves described by the equation
\be \label{eq: mcF0}
\re^x + \xi_{\IF_0} \re^{-x} + \re^{y} + \re^{-y} + \kappa = 0 \, , \quad x,y \in \IC \, .
\ee
For simplicity, we will impose the condition\footnote{We remark that the TS/ST correspondence is expected to hold for arbitrary values of the mass parameters, as suggested by the evidence provided in~\cite{CGuM}.} $\xi_{\IF_0}=1$, which implies the parametrization $z=\frac{1}{\kappa^2}$.
The large radius point, the maximal conifold point, and the orbifold point of the moduli space of local $\IF_0$ correspond to $z=0$, $z=1/16$, and $z= \infty$, respectively.
The quantization of the mirror curve in Eq.~\eqref{eq: mcF0}, under the assumption $\xi_{\IF_0} = 1$, gives the quantum operator
\be \label{eq: opF0}
\mO_{\IF_0}(\mx, \my) = \re^{\mx} + \re^{-\mx} + \re^{\my} + \re^{-\my} \, ,
\ee
acting on $L^2(\IR)$, where $\mx, \, \my$ are self-adjoint Heisenberg operators satisfying $[ \mx , \, \my ] = \text{i} \hbar$.
It was proven in~\cite{KM} that the inverse operator 
\be
\rho_{\IF_0} = \mO_{\IF_0}^{-1}
\ee
is positive-definite and of trace class. The fermionic spectral traces of $\rho_{\IF_0}$ are well-defined and can be computed explicitly~\cite{KMZ}.
In this section, we will study the resurgent structure of the first fermionic spectral trace
\be
Z_{\IF_0}(1, \hbar) = \text{Tr}(\rho_{\IF_0})
\ee
in the semiclassical limit $\hbar \rightarrow 0$.

\subsection{Computing the perturbative series} \label{sec: WignerF0}
As we have done for local $\IP^2$ in Section~\ref{sec: WignerP2}, we apply the phase-space formulation of quantum mechanics to obtain the WKB expansion of the trace of the inverse operator $\rho_{\IF_0} $ at NLO in $\hbar \rightarrow 0$, starting from the explicit expression of the operator $\mO_{\IF_0}$ in Eq.~\eqref{eq: opF0}, and following Appendix~\ref{app: Wigner}.
For simplicity, we denote by $O_W, \, \rho_W$ the Wigner transforms of the operators $\mO_{\IF_0}, \, \rho_{\IF_0}$, respectively. 
The Wigner transform of $\mO_{\IF_0}$ is obtained by performing the integration in Eq.~\eqref{eq: OW}. As we show in Example~\ref{ex: exampleWigner}, this simply gives the classical function
\be
O_{\rm W} =  \re^{x} + \re^{-x} + \re^{y} + \re^{-y} \, .
\ee
Substituting it into Eqs.~\eqref{eq: G2} and~\eqref{eq: G3}, we have
\begin{subequations}
\begin{align}
\CG_2 &= -\frac{\hbar^2}{4} \left[ \re^{x+y} + \re^{-x+y} + \re^{x-y} + \re^{-x-y} \right] + \CO(\hbar^4) \, , \\
\CG_3 &= -\frac{\hbar^2}{4} \left[ \re^{2x+y} + \re^{2x-y} + \re^{-2x+y} + \re^{-2x-y} - 2 \re^{x} -2 \re^{-x} + \left( x \leftrightarrow y \right) \right] + \CO(\hbar^4) \, , 
\end{align}
\end{subequations}
where $\left( x \leftrightarrow y \right)$ indicates the symmetric expression after exchanging the variables $x$ and $y$.
It follows from Eq.~\eqref{eq: rhoW2} that the Wigner transform of $\rho_{\IF_0}$, up to order $\hbar^2$, is then given by
\be \label{eq: rhoW2_F0}
\rho_{\rm W} = \frac{1}{O_{\rm W}} - \hbar^2 \frac{1}{O_{\rm W}^3} + \CO(\hbar^4) \, .
\ee
We note that the same result can be obtained by solving Eq.~\eqref{eq: cosL2} order by order in powers of $\hbar^2$.
Integrating Eq.~\eqref{eq: rhoW2_F0} over phase space, as in Eq.~\eqref{eq: tracePS}, we obtain the NLO perturbative expansion in $\hbar$ of the trace, that is, 
\be
\text{Tr}(\rho_{\IF_0}) = \frac{1}{2 \pi \hbar} \int_{\IR^2} \rho_{\rm W} \, \rd x \rd y =  \frac{1}{2 \pi \hbar} \int_{\IR^2} \frac{1}{O_{\rm W}} \, \rd x \rd y - \frac{\hbar}{2 \pi} \int_{\IR^2} \frac{1}{O_{\rm W}^3} \, \rd x \rd y + \CO(\hbar^4)  \, ,
\ee
and evaluating the integrals explicitly, we find
\be \label{eq: WWfinalF0}
\text{Tr}(\rho_{\IF_0}) = \frac{\pi}{4 \hbar} \left\{ 1 - \frac{\hbar^2}{64} + \CO(\hbar^4) \right\} \, .
\ee
We stress that the phase-space formalism adopted above provides, in principle, the perturbative expansion of $\text{Tr}(\rho_{\IF_0})$ at all orders in $\hbar$ by systematically extending all intermediate computations beyond order $\hbar^2$. However, as for the case of local $\IP^2$, a more efficient way to extract the perturbative coefficients is described below.

The integral kernel for the operator $\rho_{\IF_0}$ is given by~\cite{KMZ}
\be \label{eq: F0kernel}
\rho_{\IF_0}(x_1, \, x_2) = \frac{\re^{\pi \mb (x_1+x_2)/2}}{2 \mb \cosh(\pi (x_1-x_2)/\mb)} \frac{\Phi_{\mb} (x_1 + \text{i} \mb/4) }{\Phi_{\mb} (x_1 - \text{i} \mb/4)} \frac{\Phi_{\mb} (x_2 + \text{i} \mb/4)}{\Phi_{\mb} (x_2 - \text{i} \mb/4)} \, ,
\ee
where $\mb$ is related to $\hbar$ by 
\be
\pi \mb^2 =\hbar \, , 
\ee
and $\Phi_{\mb}$ is Faddeev's quantum dilogarithm. A summary of the properties of this function is provided in Appendix~\ref{app: Faddeev}.
As in the case of local $\IP^2$, the integral kernel in Eq.~\eqref{eq: F0kernel} can be analytically continued to $\hbar \in \IC'$. 
The first spectral trace has the integral representation~\cite{KMZ}
\be \label{eq: F0int}
\text{Tr}(\rho_{\IF_0}) = \frac{1}{2 \mb} \int_{\IR} \re^{\pi \mb x} \frac{\Phi_{\mb} (x + \text{i} \mb/4)^2}{\Phi_{\mb} (x - \text{i} \mb/4)^2} \, \rd x \, ,
\ee
which is a well-defined, analytic function of $\hbar \in \IC'$, under the assumption that $\Re (\mb) >0$.
The integral in Eq.~\eqref{eq: F0int} can be evaluated explicitly by analytically continuing $x$ to the complex domain, closing the integration contour from above, and applying Cauchy's residue theorem. The resulting expression for the first spectral trace of local $\IF_0$ is the sum of products of holomorphic and anti-holomorphic blocks given by $q$- and $\tilde{q}$-series, respectively. Namely, we have that~\cite{GuM}
\be \label{eq: F0fact}
\text{Tr}(\rho_{\IF_0}) = - \frac{\text{i}}{2} \left( G(q) \tilde{g}(\tilde{q}) + 8 \mb^{-2} g(q) \tilde{G}(\tilde{q}) \right) \, ,
\ee
where $q = \re^{2 \pi \text{i} \mb^2}$ and $\tilde{q} = \re^{- 2 \pi \text{i} \mb^{-2}}$. Note that the factorization in Eq.~\eqref{eq: F0fact} is not symmetric in $q, \, \tilde{q}$, and the holomorphic and anti-holomorphic blocks are given by different series. 
More precisely, they are
\begin{subequations} \label{eq: seriesF0}
\begin{align}
g(q) &= \sum_{m=0}^{\infty} \frac{(q^{1/2} ; \, q)_m^2}{(q ; \, q)_m^2} q^{m/2} = {}_2\phi_1\left(
\begin{matrix}
q^{1/2} ,  & q^{1/2}\\
  & q
\end{matrix}
; \, q , \, q^{1/2} \right) \, ,  \label{eq: seriesF0-g} \\
G(q) &= \sum_{m=0}^{\infty} \frac{(q^{1/2} ; \, q)_m^2}{(q ; \, q)_m^2} q^{m/2} \left( 1 + 4 \sum_{s=1}^{\infty} \frac{q^{s(m+1/2)}}{1+q^{s/2}} \right) \, , \label{eq: seriesF0-G} \\
\tilde{g}(\tilde{q}) &= \frac{1}{2} \sum_{m=0}^{\infty} \frac{(-1 ; \, \tilde{q})_m^2}{(\tilde{q} ; \, \tilde{q})_m^2} (-\tilde{q})^m = \frac{1}{2} {}_2\phi_1\left(
\begin{matrix}
-1 ,  & -1\\
  & \tilde{q}
\end{matrix}
; \, \tilde{q} , \, -\tilde{q} \right) \, , \label{eq: seriesF0-gtilde} \\
\tilde{G}(\tilde{q}) &=  \sum_{m=0}^{\infty} \frac{(-\tilde{q} ; \, \tilde{q})_m^2}{(\tilde{q} ; \, \tilde{q})_m^2} (-1)^m \left( \sum_{s=0}^{\infty} \frac{\tilde{q}^{(2s+1)(m+1)}}{1-\tilde{q}^{2s+1}} \right) \, , \label{eq: seriesF0-Gtilde}
\end{align}
\end{subequations}
where $(x ; \, q)_m$ is the $q$-shifted factorial defined in Eq.~\eqref{eq: qFactor}, and ${}_{r+1}\phi_s$ is the $q$-hypergeometric series defined in Eq.~\eqref{eq: qHypergeo}.
As in the case of local $\IP^2$, we will assume that $\Im(\mb^2) > 0$, so that $|q|, |\tilde{q}| < 1$, and the $q$- and $\tilde{q}$-series converge.

Let us consider the integral representation in Eq.~\eqref{eq: F0int} and derive its all-orders perturbative expansion in the limit $\hbar \rightarrow 0$. We perform the change of variable $y = 2 \pi \mb x$ and write it equivalently as
\be \label{eq: F0int2}
\text{Tr}(\rho_{\IF_0}) = \frac{1}{4 \pi \mb^2} \int_{\mathbb{R}} \exp \left( \frac{y}{2} + 2 \log \Phi_{\mb} \left( \frac{y + \ri \pi \mb^2/2}{2 \pi \mb} \right) - 2 \log \Phi_{\mb} \left( \frac{y - \ri \pi \mb^2/2}{2 \pi \mb} \right) \right)  \, \rd y \, .
\ee
The asymptotic expansion formula in Eq.~\eqref{eq: logPhib} for $\log(\Phi_{\mb})$ in the limit $\mb \rightarrow 0$ yields
\be \label{eq: logPhibF0}
\log \Phi_{\mb}\left( \frac{y \pm \ri \pi \mb^2/2}{2 \pi \mb} \right) = \sum_{k=0}^{\infty} (2 \pi \ri \mb^2)^{2k-1} \frac{B_{2k}(1/2)}{(2k)!} \text{Li}_{2-2k}(-\re^{y \pm \ri \pi \mb^2/2}) \, ,
\ee
where $B_n(z)$ is the $n$-th Bernoulli polynomial, and $\text{Li}_n(z)$ is the polylogarithm of order $n$. We eliminate the remaining $\mb$-dependence of the polylogarithms in Eq.~\eqref{eq: logPhibF0} by expanding them in turn around $\mb \rightarrow 0$. 
More precisely, we recall that the derivative of the polylogarithm function is
\be \label{eq: polyderiv}
\frac{\d \text{Li}_{s}(\re^{\mu})}{\d \mu} = \text{Li}_{s-1}(\re^{\mu})  \, , \quad s, \mu \in \IC \, ,
\ee
and we derive the Taylor expansion
\be \label{eq: polyderivF0}
\text{Li}_{2-2k}(-\re^{y \pm \ri \pi \mb^2/2}) = \sum_{m=0}^{\infty} \frac{1}{m!} \left( \pm \frac{ \ri \pi \mb^2}{2} \right)^m \text{Li}_{2-2k-m}(-\re^{y}) \, ,
\ee
for all $k \ge 0$. Let us denote the exponent of the integrand in Eq.~\eqref{eq: F0int2} by $V(y, \mb)$.
After recombining the terms in the nested expansions in Eqs.~\eqref{eq: logPhibF0} and~\eqref{eq: polyderivF0}, we obtain a well-defined, fully-determined perturbation series in $\mb^2$, which is
\be
V(y, \mb) = \frac{y}{2} + 2 \sum_{k,m=0}^{\infty} (2 \pi \ri \mb^2)^{2k+m-1} \frac{B_{2k}(1/2)}{4^m m! (2k)!} \text{Li}_{2-2k-m}(-\re^{y}) [1-(-1)^m] \, .
\ee
We note that the factor $ [1-(-1)^m]$ is zero for even values of $m$. Therefore, introducing the notation $m=2q+1$ and $p = k+q$, we have
\be \label{eq: Vinter}
V(y, \mb) = \frac{y}{2} +  \sum_{p=0}^{\infty} (2 \pi \text{i} \mb^2 )^{2p} \text{Li}_{1-2p}(-e^y) \sum_{q=0}^p \frac{B_{2p-2q}(1/2)}{4^{2q} (2q+1)! (2p-2q)!} \, .
\ee
This formula can be further simplified by using the symmetry and translation identities for the Bernoulli polynomials. Namely,
\be
B_n(1-z) = (-1)^n B_n(z) \, , \quad B_n(z+v) = \sum_{k=0}^n \binom{n}{k} B_k(z) v^{n-k} \, , 
\ee
where $n \in \IN$ and $z,v \in \IC$. Choosing $z=1/2$, $v=1/4$, and $n = 2p+1$, a simple computation shows that
\be \label{eq: bernoulliSum}
\frac{4}{(2p+1)!} B_{2p+1}(3/4) = \sum_{q=0}^p \frac{B_{2p-2q}(1/2)}{4^{2q} (2q+1)! (2p-2q)!} \, .
\ee
Substituting Eq.~\eqref{eq: bernoulliSum} into Eq.~\eqref{eq: Vinter}, and recalling the special case
\be \label{eq: poly1}
\text{Li}_1(- \re^y) = - \log(1+ \re^y) \, ,
\ee
we finally obtain the perturbative expansion
\be \label{eq: F0int3}
V(y, \mb) = \frac{y}{2} - \log(1+ \re^y) +  4 \sum_{p=1}^{\infty} (2 \pi \text{i} \mb^2 )^{2p} \frac{B_{2p+1}(3/4)}{(2p+1)!} \text{Li}_{1-2p}(-e^y)  \, ,
\ee 
where only one infinite sum remains.
We note that the term of order $\mb^{-2}$ in Eq.~\eqref{eq: F0int3} vanishes, that is, the potential $V(y, \mb)$ does not have a critical point around which to perform a saddle point approximation of the integral in Eq.~\eqref{eq: F0int2}. 
Moreover, for each $p \ge 1$, the coefficient function of $(2 \pi \ri \mb^2)^{2p}$ can be written explicitly as a rational function in the variable $t = \re^y$ with coefficients in $\IQ$.
Indeed, the polylogarithm of negative integer order is
\be \label{eq: polyneg}
\text{Li}_{-n}(z) = \frac{1}{(1-z)^{n+1}} \sum_{k=0}^{n-1} \genfrac\langle\rangle{0pt}{0}{n}{k} z^{n-k} \, , \quad n \in \IZ_{>0} \, , \quad z \in \IC \, ,
\ee
where $\genfrac\langle\rangle{0pt}{1}{n}{k}$ are the Eulerian numbers. Applying Eq.~\eqref{eq: polyneg} to Eq.~\eqref{eq: F0int3}, we find that
\be \label{eq: F0potential}
V(y , \, \mb) = \frac{1}{2} \log(t) - \log(1 + t) + \sum_{p=1}^{\infty} \tilde{\mb}^{4p} \frac{P_p(t)}{(1+t)^{2p}} \, ,
\ee
where $\tilde{\mb}^2 = 2 \pi \ri \mb^2$, and $P_p(t)$ is a $\IQ$-polynomial in $t$ of degree $2p-1$. Explicitly,
\be \label{eq: F0poly}
P_p(t) =  4 \frac{B_{2p+1}(3/4)}{(2p+1)!}  \sum_{m=1}^{2p-1} (-1)^m \genfrac\langle\rangle{0pt}{0}{2p-1}{2p-1-m} \, t^m\, , \quad p \ge 1 \, .
\ee

We will now show how, by Taylor expanding the exponential $\re^{V(y, \, \mb)}$ in the limit $\mb \rightarrow 0$, we obtain a second perturbative $\mb$-series with coefficients which are identified $\IQ$-rational functions in $t$, and which can be explicitly integrated term-by-term to give the all-orders $\hbar$-expansion of the first spectral trace $\text{Tr}(\rho_{\IF_0})$ in Eq.~\eqref{eq: F0int}. 
In particular, we find that
\be \label{eq: F0potential2}
\begin{aligned}
\frac{1+t}{t^{1/2}} \, \re^{V(y , \, \mb)} & = 1 + \sum_{r=1}^{\infty}  \frac{1}{r!} \left( \sum_{p=1}^{\infty} \tilde{\mb}^{4p} \frac{P_p(t)}{(1+t)^{2p}} \right)^r  = \\
& = 1 + \sum_{k=1}^{\infty} \frac{\tilde{\mb}^{4k}}{(1+t)^{2k}} \sum_{m \in \CP(k)} \frac{1}{|m|!} \binom{|m|}{N_1, \dots , N_k}  P_{m_1}(t) \cdots P_{m_{|m|}}(t) \, ,
\end{aligned}
\ee
where $\CP(k)$ is the set of all partitions $m = (m_1, \dots , m_{|m|})$ of the positive integer $k$, $|m|$ denotes the length of the partition, and $N_i \in \IN$ is the number of times that the positive integer $i \in \IZ_{> 0}$ is repeated in the partition $m$. Note that $\sum_{i=1}^k N_i = |m|$.
The expansion in Eq.~\eqref{eq: F0potential2} can be written in a more compact form as
\be \label{eq: F0potential3}
\re^{V(y , \, \mb)} = \frac{t^{1/2}}{1 + t} \left( 1 + \sum_{k=1}^{\infty} \tilde{\mb}^{4k}  \frac{P'_{k}(t)}{(1+t)^{2k}} \right) \, ,
\ee
where $P'_{k}(t)$ is a new $\IQ$-polynomial in $t$ of degree $2k-1$, which is given explicitly by
\be \label{eq: F0poly2}
P'_{k}(t) =  \sum_{m \in \CP(k)} \frac{1}{N_1! \cdots N_k!} P_{m_1}(t) \cdots P_{m_{|m|}}(t) = \sum_{n=1}^{2k-1} c_{k,n} t^n \, , \quad k \ge 1 \, .
\ee
The numbers $c_{k,n} \in \IQ$ are directly determined by the coefficients of the polynomials $P_{p}(t)$, $p \ge 1$, in Eq.~\eqref{eq: F0poly} via the exponential expansion formula above.
Let us now substitute Eqs.~\eqref{eq: F0potential3} and~\eqref{eq: F0poly2} into the integral representation for the first spectral trace in Eq.~\eqref{eq: F0int2}, which gives
\be
\text{Tr}(\rho_{\IF_0}) = \frac{1}{4 \pi \mb^2} \int_{\mathbb{R}} \frac{\re^{y/2}}{1 + \re^y} \rd y +  \frac{1}{4 \pi \mb^2} \sum_{k=1}^{\infty} (2 \pi \ri \mb^2)^{2k} \sum_{n=1}^{2k-1} c_{k,n} \int_{\mathbb{R}} \frac{\re^{y/2+ny}}{(1 + \re^y)^{1+2k}} \rd y \, .
\ee
Explicitly evaluating the integrals in terms of gamma functions, and recalling the property 
\be
\Gamma\left(\frac{1}{2} + n \right) = \frac{(2n)! \sqrt{\pi}}{4^n n!} \, , \quad n \in \IN \, ,
\ee
we obtain that
\be \label{eq: Gammaint}
\int_{\mathbb{R}} \frac{\re^{y/2 + ny}}{(1+\re^y)^{1 + 2k}} \rd y =  \frac{\Gamma \left(\frac{1}{2} + 2k - n \right) \Gamma \left(\frac{1}{2} + n \right)}{\Gamma \left(1+ 2k  \right)} = \pi \frac{(2n-1)!! (4k-2n-1)!!}{4^{k} (2k)!} \, ,
\ee
for all $k,n \ge 0$. Therefore, the all-orders expansion in $\mb \rightarrow 0$ of the first spectral trace of the local $\IF_0$ geometry can be written as
\be \label{eq: F0expfinal}
\text{Tr}(\rho_{\IF_0}) = \frac{1}{4 \mb^2} \left( 1 +  \sum_{k=1}^{\infty} \mb^{4k} (-1)^k \frac{\pi^{2k}}{(2k)!} \sum_{n=1}^{2k-1} c_{k,n} (2n-1)!! (4k-2n-1)!! \right) \, .
\ee
We note that, substituting $\pi \mb^2 = \hbar$ and factoring out $\pi/4\hbar$, Eq.~\eqref{eq: F0expfinal} proves that the resulting perturbative series in $\hbar^2$ has coefficients in $\IQ$ of alternating sign.
We describe in Section~\ref{sec: algorithm} how to numerically implement the algorithmic procedure above in order to efficiently compute the perturbative series for $Z_{\IF_0}(1, \hbar \rightarrow 0)$ up to very high order. The first few terms are
\be \label{eq: first-terms-F0}
1 - \frac{\hbar^2}{64} + \frac{19 \hbar^4}{49152} - \frac{1013 \hbar^6}{47185920} + \frac{814339 \hbar^8}{338228674560} - \frac{449996063 \hbar^{10}}{974098582732800} + \text{O}(\hbar^{12}) \, ,
\ee
multiplied by the global pre-factor in Eq.~\eqref{eq: F0expfinal}, which confirms our analytic calculation at NLO in Eq.~\eqref{eq: WWfinalF0}.

\subsubsection{Comments on the algorithmic implementation} \label{sec: algorithm} 
In order to obtain several hundreds of terms of the perturbative series in $\hbar$ for the first fermionic spectral trace of local $\IF_0$ reasonably fast, we write a numerical algorithm which reproduces the analytic procedure described above.
We comment here briefly on some technical details. 
Let us denote by $d_{\rm max}$ the maximum order in $\mb$ to be computed numerically. We will work in the variable $\tilde{\mb}$, which is related to $\mb$ by
\be
\tilde{\mb}^2 = 2 \pi \ri \mb^2 \, .
\ee
After removing all non-rational factors from the intermediate steps, and truncating all calculations at the power $\tilde{\mb}^{d_{\rm max}+1}$ at every step, the computational complexity is dominated by the heavy multiplication of large multivariate polynomials, which is required at the early stage of the exponential expansion, and by the manipulation of the special functions that appear at the last stage of the integral evaluation. We implement our algorithm as the following two-step process. 
\begin{itemize}
\item[(1)] Starting from the series in Eq.~\eqref{eq: F0potential}, removing by hand the factor $1/(1 + t)^{2p}$, $p \ge 1$, and truncating at order $d_{\rm max}+1$ in $\tilde{\mb}$, we introduce the polynomial
\be \label{eq: polyMulti}
\varphi_1(\tilde{\mb}, t) = \sum_{p=1}^{d_{\rm max}/4} \tilde{\mb}^{4p} P_p(t) \, , 
\ee
where $P_p(t)$ is defined in Eq.~\eqref{eq: F0poly}, and its coefficients are computed explicitly. Then, we apply the variable redefinition 
\be \label{eq: redef}
\tilde{\mb} = t^{d_{\rm max}/2} \, ,
\ee
which transforms the two-variables polynomial in Eq.~\eqref{eq: polyMulti} into a polynomial in the single variable $t$, which we denote by $\varphi_1(t)$. Note that there is a one-to-one map between the coefficients of $\varphi_1(\tilde{\mb}, t)$ and the coefficients of $\varphi_1(t)$.
This allows us to perform the exponential expansion in Eq.~\eqref{eq: F0potential2} in the univariate polynomial ring $\IQ[t]$, instead of the bivariate polynomial ring $\IQ[t][\mb]$, without loss of information.
Truncating at order 
\be
b_{\rm max} + 1 = (d_{\rm max}+1)d_{\rm max}/2 \, ,
\ee 
we denote the resulting polynomial after the exponential expansion as
\be \label{eq: polyUni}
\varphi_2(t) = \re^{\varphi_1(t)} = 1 + \sum_{m=1}^{b_{\rm max}} C_m t^m \, , 
\ee
where $C_m \in \IQ$ is known numerically.
\item[(2)] Note that $\varphi_2(t)$ corresponds to the series in brackets in Eq.~\eqref{eq: F0potential3}. Indeed, we can write
\be \label{eq: polyUni2}
\varphi_2(t) = 1+ \sum_{k=1}^{d_{\rm max}/4} \left( t^{d_{\rm max}/2} \right)^{4k} \sum_{n=1}^{2k-1} c_{k,n} t^n = 1+ \sum_{k=1}^{d_{\rm max}/4} \sum_{n=1}^{2k-1} c_{k,n} t^{2k d_{\rm max}+n} \, ,
\ee
where $c_{k,n} \in \IQ$ is defined in Eq.~\eqref{eq: F0poly2}. 
Comparing Eqs.~\eqref{eq: polyUni} and~\eqref{eq: polyUni2}, we have that
\be
c_{k,n} = C_{2 k d_{\rm max}+n}, 
\ee
for all $1 \le n \le 2k-1$ and $k \ge 1$. Therefore, we extract the polynomial $P'_k(t)$ in Eq.~\eqref{eq: F0potential3} by selecting the monomials of order $t^m$ in Eq.~\eqref{eq: polyUni} such that
\be
2 k d_{\rm max} + 1 \le m \le 2k d_{\rm max} + 2k - 1 \, . 
\ee
Finally, the numerical coefficient of the term $\tilde{\mb}^{4k}$, $k \ge 1$, in the perturbative expansion of the first spectral trace $\text{Tr}(\rho_{\IF_0})$, up to the global pre-factor, is given by the finite sum 
\be
\sum_{n=1}^{2k-1} C_{2 k d_{\rm max}+n} I(k, \, n) \, ,
\ee
where $I(k,\, n) \in \pi \IQ$ denotes the numerical result of the pre-evaluated integral in Eq.~\eqref{eq: Gammaint}. We stress that numerical integration is not necessary.
\end{itemize}

\subsection{Exploratory tests of higher instanton sectors} \label{sec: borelF0}
Let us denote by $\psi(\hbar)$ the formal power series appearing in the RHS of Eq.~\eqref{eq: F0expfinal}, that is, 
\be \label{eq: psiF0}
\psi(\hbar) = \text{Tr}(\rho_{\IF_0}) \frac{4 \hbar}{\pi} \in \IQ[\![\hbar]\!] \, .
\ee
We truncate the series to a very high but finite order $d \gg 1$ and denote the resulting $\IQ$-polynomial by $\psi_d(\hbar)$ and its Borel transform by $\hat{\psi}_d(\zeta)$. Explicitly,
\be \label{eq: psi_d_F0}
\psi_d(\hbar) = \sum_{n=0}^d b_{2n} \hbar^{2n} \in \IQ [\hbar] \, , \quad \hat{\psi}_d(\zeta) = \sum_{n=0}^d \frac{b_{2n}}{(2n)!} \zeta^{2n} \in \IQ [\zeta] \, ,
\ee
where the coefficients $b_{2n}$, $1 \le n \le d$, have been computed numerically as described in Section~\ref{sec: WignerF0}. The first few terms of $\psi_d(\hbar)$ are shown in Eq.~\eqref{eq: first-terms-F0}. It is straightforward to verify numerically that the perturbative coefficients satisfy the factorial growth
\be \label{eq: coeff-factorialF0}
b_{2n} \sim (-1)^n (2n)! (2 \pi^2)^{-2n} \quad n \gg 1 \, ,
\ee
and we conclude that $\psi(\hbar)$ is a Gevrey-1 asymptotic series. As we have done in Section~\ref{sec: numericsP2} in the case of local $\IP^2$, we assume here $\hbar \in \IC'$ and apply the machinery of Pad\'e--Borel approximation~\cite{CD1, CD2, CD3} to the truncated series $\hat{\psi}_d(\zeta)$ in order to extrapolate the complex singularity structure of the exact analytically-continued Borel function $\hat{\psi}(\zeta)$. 

Let $d$ be even. We compute the singular points of the diagonal Pad\'e approximant of order $d/2$ of the truncated Borel expansion $\hat{\psi}_d(\zeta)$, which we denote by $\hat{\psi}^{\rm PB}_d(\zeta)$, and we observe two dominant complex conjugate branch points at $\zeta = \pm 2 \pi^2 \ri$, which match the leading divergent growth of the perturbative coefficients in Eq.~\eqref{eq: coeff-factorialF0}. Two symmetric arcs of complex conjugate spurious poles accumulate at the dominant singularities, mimicking two branch cuts which emanate straight from the branch points along the positive and negative imaginary axis in opposite directions. 
Let us introduce $A= 2 \pi^2$ and $\zeta_n = n 2 \pi^2 \ri$, $n \in \IZ_{\ne 0}$. A zoom-in to the poles of the Pad\'e approximant in the bottom part of the upper-half complex $\zeta$-plane is shown in the leftmost plot in Fig.~\ref{fig: PB-F0}.
\begin{figure}[htb!]
\centering
\includegraphics[width=1.\textwidth]{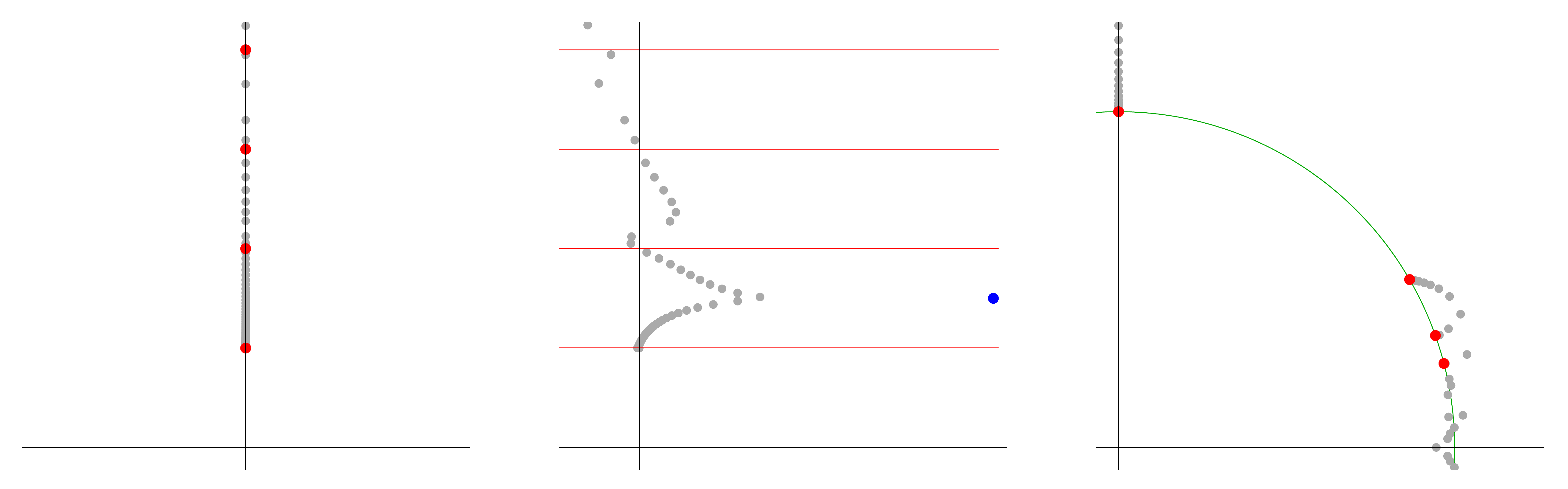}
\caption{In the leftmost plot, a zoom-in to the poles of $\hat{\psi}^{\rm PB}_d(\zeta)$ (in gray) in the bottom part of the upper-half complex $\zeta$-plane. We show the first few integer multiples of the dominant pole (in red). In the central plot, a zoom-in to the poles of $\hat{\psi}^{\rm PB}_{d, {\rm probe}}(\zeta)$ (in gray) in the bottom part of the upper-half complex $\zeta$-plane, including the test charge singularity at $\zeta = \zeta_{\text{probe}}$ (in blue). We show the horizontal lines intersecting the positive imaginary axis at the first few physical branch points (in red). In the rightmost plot, a zoom-in to the poles of $\hat{\psi}^{\rm PCB}_{d}(\eta)$ (in gray) in the upper-right quarter of the complex $\eta$-plane. We show the first few physical branch points (in red) and the unit circle (in green). The plots are obtained with $d=300$.}
\label{fig: PB-F0}
\end{figure}
Note that there might be subdominant true singularities of $\hat{\psi}_d(\zeta)$ which are obscured by the unphysical Pad\'e poles representing the dominant branch cuts. In particular, educated by the example of local $\IP^2$, we might guess that the dominant branch points at $\zeta = \zeta_{\pm 1}$ are repeated at integer multiples of $A$ along the imaginary axis to form a discrete tower of singularities. 
To reveal their presence, we apply again the interpretation of Pad\'e approximation in terms of electrostatic potential theory~\cite{Stahl, Saff}. More concretely, to test the presence of the suspected next-to-leading order branch point at $\zeta = \zeta_2$, we introduce by hand the singular term $\hat{\psi}_{\text{probe}}(\zeta) = \left( \zeta - \zeta_{\text{probe}} \right)^{-1/5}$, where $\zeta_{\text{probe}} = A  \left( 1/20 + 3 \ri/2 \right)$, and compute the $(d/2)$-diagonal Pad\'e approximant of the sum
\be
\hat{\psi}^{\rm PB}_{d, {\rm probe}}(\zeta) = (\hat{\psi}_d + \hat{\psi}_{\text{probe}})^{\rm PB} (\zeta) \, . 
\ee
The resulting Pad\'e--Borel poles distribution is distorted as shown in the central plot in Fig.~\ref{fig: PB-F0}, but the genuine physical singularities of $\hat{\psi}_d(\zeta)$ did not move. As a consequence, the targeted true branch point at $\zeta =  \zeta_2$ is now clearly visible, and so it is the test charge singularity at $\zeta = \zeta_{\text{probe}}$. The other true branch points at $\zeta = \zeta_3, \zeta_4$ are outlined. 

We can further improve the precision of the Pad\'e extrapolation and analytic continuation of the truncated Borel series $\hat{\psi}_d(\zeta)$ with the use of conformal maps. Namely, we perform the change of variable
\be
\zeta = A \frac{2 \eta}{1-\eta^2} \, , \quad \eta \in \IC \, ,
\ee
which maps the cut Borel $\zeta$-plane $\mathbb{C} \backslash \{(-\infty, \zeta_{-1}] \cup  [\zeta_1, +\infty) \}$ into the interior of the unit disk $|\eta| < 1$. 
The dominant branch points $\zeta= \zeta_{\pm 1}$ are mapped into $\eta = \pm \text{i}$, while the point at infinity is mapped into $\eta = \pm 1$. Correspondingly, the branch cuts $(-\infty, \zeta_{-1}]$ and $[\zeta_1, +\infty)$ along the imaginary $\zeta$-axis split each one into two identical copies which lie onto the two lower-half and upper-half quarters of the unit circle in the $\eta$-plane, respectively. The inverse conformal map is explicitly given by
\be
\eta = \pm \sqrt{\frac{-1 + \sqrt{1+(\zeta/A)^2}}{1 + \sqrt{1+(\zeta/A)^2}}} \,  , \quad \zeta \in \IC \, .
\ee
We compute the $(d/2)$-diagonal Pad\'e approximant of the conformally mapped Borel expansion, which we denote by $\hat{\psi}^{\rm PCB}_{d}(\eta)$. Its singularities in the complex $\eta$-plane are shown in the rightmost plot in Fig.~\ref{fig: PB-F0}.
We observe two symmetric arcs of spurious poles emanating from the conformal map images of $\zeta = \zeta_{\pm 1}$ along the imaginary axis in opposite directions. The smaller arcs of poles jumping along the unit circle towards the real axis represent the Pad\'e boundary of convergence joining the conformal map images of the repeated singularities at $\zeta = \zeta_n$, $n \in \IZ_{\ne 0}$. Note that the convolution of Pad\'e approximation and conformal maps naturally solves the problem of hidden singularities by separating the repeated branch points into different accumulation points on the unit circle in the conformally mapped complex $\eta$-plane. 
Thus, our numerical analysis motivates the following ansatz. The singularities of the exact Borel series $\hat{\psi}(\zeta)$ are logarithmic branch points at $ \zeta= \zeta_n$, $n \in \IZ_{\ne 0}$.
We remark that the complex singularity pattern unveiled here is entirely analogous to what has been found in Section~\ref{sec: numericsP2} for the local $\IP^2$ geometry, and again it is a particularly simple example of the peacock configurations described in Section~\ref{sec: resurgent_strings}. However, it turns out that the resurgent structure of the asymptotic series $\psi(\hbar)$ is more complex than what has been observed in other examples.

Let us go back to the factorization formula for the first spectral trace of local $\IF_0$ in Eq.~\eqref{eq: F0fact}. 
In the semiclassical limit $\hbar \rightarrow 0$, we have that $\tilde{q}=\re^{-2 \pi^2 \ri/\hbar} \rightarrow 0$ as well, and the anti-holomorphic blocks $\tilde{g}(\tilde{q}), \tilde{G}(\tilde{q})$ in Eqs.~\eqref{eq: seriesF0-gtilde} and~\eqref{eq: seriesF0-Gtilde} contribute trivially at leading order as
\be
\tilde{g}(\tilde{q}) \sim \frac{1}{2} \, , \quad \tilde{G}(\tilde{q}) \sim \tilde{q} \, .
\ee
On the other hand, the asymptotics of the holomorphic blocks $g(q), G(q)$ in Eqs.~\eqref{eq: seriesF0-g} and~\eqref{eq: seriesF0-G} for $q=\re^{2 \ri \hbar} \rightarrow 1$ depends a priori on the ray in the complex $\hbar$-plane along which the limit $\hbar \rightarrow 0$ is taken. Let us introduce the variable $\tau = \hbar/\pi$, such that $q = \re^{2 \pi \ri \tau}$, and take $\tau = \re^{\ri \alpha}/N$ with $\alpha \in \IR$ fixed and $N \rightarrow \infty$. The formula in Eq.~\eqref{eq: F0fact} gives the semiclassical asymptotics
\be \label{eq: F0fact-asym}
\text{Tr}(\rho_{\IF_0}) \sim - \frac{\ri}{4} G(q) -\frac{\ri 4}{\tau} \re^{-2 \pi \ri/\tau} g(q) \, .
\ee
Note that the contribution of $g(q)$ is suppressed by the exponentially-small factor $\re^{-2 \pi^2 \ri / \hbar}$, corresponding to the one-instanton non-perturbative sector, yielding that $\text{Tr}(\rho_{\IF_0}) \sim -\ri G(q)/4$ at leading order. We expect then to be able to recover the perturbative series $\psi(\hbar)$ in Eq.~\eqref{eq: psiF0} from the radial asymptotic behaviour of $G(q)$. Let us show that this is indeed the case.
We test the expected functional form
\be
G \left( \re^{2 \pi \ri \tau} \right) \sim C \tau^{-1} \left( 1 + \sum_{n=1}^{\infty} a_n \tau^n \right) \, , 
\ee
where $C \in \IC$, $a_n \in \IR$. We fix $0 < \alpha < \pi/2$ and take $N \in \IN$ to infinity. 
A numerical estimate of the overall constant $C$ is obtained from the convergence of the sequences
\begin{subequations}
\begin{align}
\Re \left( \tau G \left( \re^{2 \pi \ri \tau} \right) \right) &= \Re ( C ) + \mathcal{O}\left(1/N \right) \, , \\
\Im \left( \tau G \left( \re^{2 \pi \ri \tau} \right) \right) &= \Im ( C ) + \mathcal{O}\left(1/N \right) \, ,
\end{align}
\end{subequations}
which give $C \approx \ri$, and we proceed to systematically extract the coefficients $a_n$ as the large-$N$ limits of the recursively-built sequences
\be
\Re \left( \frac{1}{\tau^n} \left( - \ri \tau G \left( \re^{2 \pi \ri \tau} \right) - \sum_{j=0}^{n-1} a_j \tau^j \right) \right) = a_n + \mathcal{O}\left(1/N \right) \, , \quad n \in \IN_{>0} \, ,
\ee
where $a_0 =1$. We obtain in this way the high-precision numerical estimates
\be
a_{2n} \approx \pi^{2n} b_{2n} \, , \quad a_{2n+1} \approx 0 \, , \quad n \in \IN \, ,
\ee
where $b_{2n} \in \IQ$ are the coefficients of the perturbative series $\psi(\hbar)$ in Eq.~\eqref{eq: psiF0}, as expected. The numerical convergence of the large-$N$ limits of all sequences above has been accelerated with the help of Richardson transforms.
Let us now move on to the sub-dominant term in the RHS of Eq.~\eqref{eq: F0fact-asym} and determine the leading-order radial asymptotics of the $q$-hypergeometric series $g(q)$, for which we do not have an independent prediction. We observe that the standard functional ansatz
\be
g  \left( \re^{2 \pi \ri \tau} \right) \sim C_1 \re^{C_2/\tau} \tau^b  \, ,
\ee
where $C_1, C_2 \in \IC$, $b \in \IR$, fails our numerical tests, hinting at a possible logarithmic-type behavior.
We formulate the new ansatz\footnote{Note that the leading asymptotics of the standard hypergeometric function ${}_2 F_1(1/2, 1/2; 1; \re^{\pi \ri \tau})$ in the limit $\tau \rightarrow 0$ is known to be $- \frac{1}{\pi} \log(- \frac{\pi \text{i}}{16} \tau)$.} 
\begin{equation} \label{eq: new_ansatz}
g  \left( \re^{2 \pi \ri \tau} \right) \sim C_1 \log(\tau) + C_2 \, ,
\end{equation}
where $C_1, C_2 \in \IC$, and we test it as follows. Once more, let us fix $0 < \alpha < \pi/2$. Fore each $N \in \IN$, we define the numerical sequences $R_N = \Re (g(q))$ and $I_N = \Im (g(q))$, which satisfy
\begin{subequations} 
\begin{align}
R_N &\sim  -\Re (C_1) \log(N) + \Re (C_2) - \alpha \Im (C_1) \, , \\
I_N &\sim -\Im (C_1) \log(N) + \Im (C_2) + \alpha \Re (C_1) \, ,
\end{align}
\end{subequations}
for $N \gg 1$. We find $\Re (C_1) = -1/\pi$ from the convergence of the large-$N$ relation 
\be
S_N^{(1)}= N (R_{N+1} - R_N) \sim -\Re (C_1) \, .
\ee 
Analogously, we compute the large-$N$ limit of the sequence
\be
S_N^{(2)}= R_N + \Re (C_1) \log(N) \sim \Re (C_2) - \alpha \Im (C_1)  \, ,
\ee
which turns out to be independent of $\alpha$, giving $\Im(C_1) = 0$ and $\Re (C_2) =0.9225325 \dots$. Finally, the estimate $\Im (C_2) = 1/2$ is obtained from the convergence of the sequence 
\be
S_N^{(3)}= I_N - \alpha \Re (C_1) \sim \Im (C_2) \, ,
\ee
for $N \gg 1$. Again, all sequences are accelerated using Richardson transforms, as shown in the plots in Fig.~\ref{fig: radial_F0}.
\begin{figure}[htb!]
\centering
\includegraphics[width=1.\textwidth]{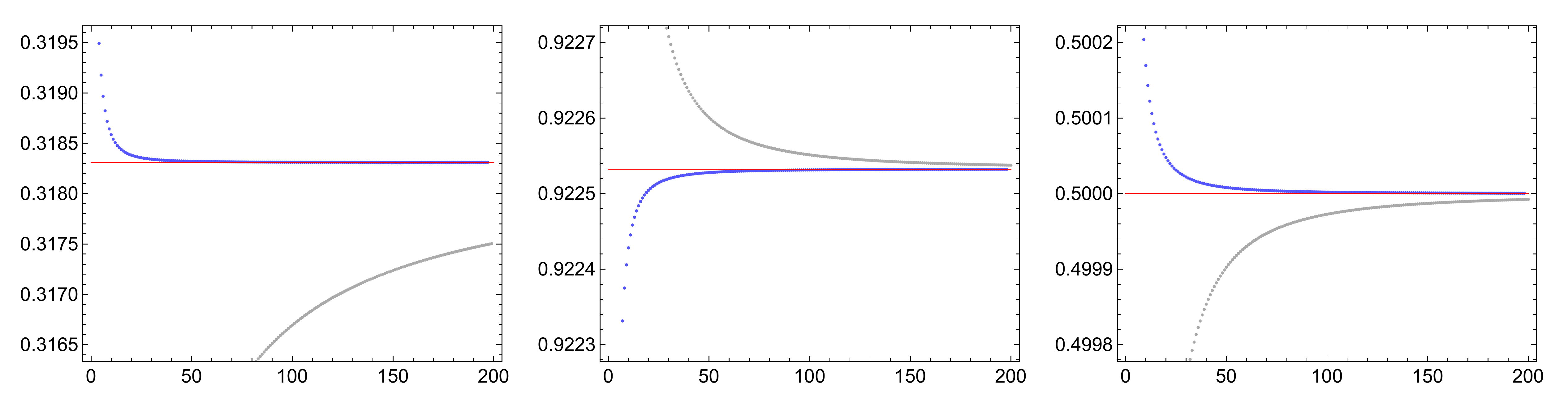}
\caption{The sequences $S_N^{(1)}$ (left), $S_N^{(2)}$ (center), and $S_N^{(3)}$ (right) are shown in gray, their second Richardson transforms in blue, and their estimated asymptotic limits in red. The plots are obtained with $\alpha = \pi/3$ and $N$ up to $200$.}
\label{fig: radial_F0} 
\end{figure}
We conclude that the $q$-series $g(q)$ has the proposed leading order asymptotics in Eq.~\eqref{eq: new_ansatz}. More precisely, we have that
\begin{equation} \label{eq: g_lead}
g \left( \re^{2 \pi \ri \tau} \right) \sim -\frac{1}{\pi} \log(\tau) + B + \frac{\text{i}}{2} \, ,
\end{equation}
where $B=0.9225325 \dots$. Note that the results that we have obtained from the radial asymptotic analysis of $g(q)$ and $G(q)$ are effectively independent of the choice of angle $0< \alpha < \pi/2$.

We remark that, as a consequence of Eq.~\eqref{eq: F0fact-asym}, the holomorphic block $g(q)$ contains sub-dominant, exponentially-suppressed perturbative corrections to the semiclassical perturbative expansion of the first spectral trace of local $\IF_0$, which is itself captured by the $q$-series $G(q)$. 
It follows then that the one-instanton resurgent contribution to the discontinuity, which we have denoted before by $\psi_1(\hbar)$, has a leading-order logarithmic behavior of the functional form in Eq.~\eqref{eq: new_ansatz}. Indeed, a straightforward independent test shows that the standard guess for $\psi_1(\hbar)$ in Eq.~\eqref{eq: ansatzP2} fails, since the corresponding numerical extrapolation of the one-instanton coefficients does not converge, in agreement with the radial asymptotics prediction. As a next step, we can simply extend the leading-order behavior in Eq.~\eqref{eq: new_ansatz} to a full perturbative expansion captured by a complete ansatz of the form
\begin{equation} \label{eq: new_ansatz_full}
g  \left( \re^{2 \pi \ri \tau} \right) \sim C_1 \log(\tau) \left( 1 + \sum_{n=1}^{\infty} d_n \tau^n \right) + C_2 \left( 1 + \sum_{n=1}^{\infty} e_n \tau^n \right) \, ,
\end{equation}
where $d_n, e_n \in \IC$, whose numerical investigation we leave for future work.

\sectiono{A new analytic prediction of the TS/ST correspondence} \label{sec: dual_limit}
Let us go back to the general framework of Section~\ref{sec: background} and consider the topological string on a toric CY threefold $X$. 
Because of the functional forms of the worldsheet and WKB grand potentials in Eqs.~\eqref{eq: J_WS} and~\eqref{eq: J_WKB}, respectively, there are appropriate scaling regimes in the coupling constants in which only one of the two components effectively contributes to the total grand potential in Eq.~\eqref{eq: J_total}. In the standard double-scaling limit~\cite{MP, KKN, GM}
\be \label{eq: SS1}
\hbar \rightarrow \infty \, , \quad \mu_j \rightarrow \infty \, , \quad \frac{\mu_j}{\hbar} = \zeta_j \; \; \text{fixed} \, , \quad j=1, \dots, g_{\Sigma} \, , 
\ee
under the assumption that the mass parameters $\bm{\xi}$ scale in such a way that $m_k = \xi_k^{2 \pi / \hbar}$, $k = 1, \dots, r_{\Sigma}$, are fixed as $\hbar \rightarrow \infty$, the quantum mirror map in Eq.~\eqref{eq: qMM} becomes trivial, and the total grand potential has the asymptotic genus expansion
\be \label{eq: tHooft}
J^{\text{'t Hooft}}(\bm{\zeta}, \bm{m}, \hbar) = \sum_{g =0}^{\infty} J_g(\bm{\zeta}, \bm{m}) \, \hbar^{2-2g} \, ,
\ee
where $J_g(\bm{\zeta}, \bm{m})$ is essentially the genus $g$ free energy of the conventional topological string at large radius after the B-field has been turned on.
As a consequence of the TS/ST correspondence, there is a related 't Hooft limit for the fermionic spectral traces which extracts the perturbative, all-genus expansion of the conventional topological string on $X$. Namely, 
\be \label{eq: SS2}
\hbar \rightarrow \infty \, , \quad N_j \rightarrow \infty \, , \quad \frac{N_j}{\hbar} = \lambda_j \; \; \text{fixed} \, , \quad j=1, \dots, g_{\Sigma} \, .
\ee
The saddle-point evaluation of the integral in Eq.~\eqref{eq: contour} in the double-scaling regime in Eq.~\eqref{eq: SS2}, which is performed using the standard 't Hooft expansion in Eq.~\eqref{eq: tHooft} for the total grand potential, represents a symplectic transformation from the large radius point in moduli space to the so-called maximal conifold point\footnote{The maximal conifold point can be defined as the unique point in the conifold locus of moduli space where its connected components intersect transversally.}~\cite{CGM2, CGuM}.
Specifically, it follows from the geometric formalism of~\cite{ABK} that the 't Hooft parameters $\lambda_j$ are flat coordinates on the moduli space of $X$ corresponding to the maximal conifold frame of the geometry, and we have the asymptotic expansion
\be \label{eq: SSexp}
\log Z_X^{\text{'t Hooft}}(\bm{N}, \bm{m}, \hbar) = \sum_{g =0}^{\infty} \mathcal{F}_g (\bm{\lambda}, \bm{m}) \, \hbar^{2-2g} \, ,
\ee
where the coefficient $\mathcal{F}_g (\bm{\lambda}, \bm{m})$ can be interpreted as the genus $g$ free energy of the standard topological string on $X$ in the maximal conifold frame.
We remark that, as a consequence of Eq.~\eqref{eq: SSexp}, the fermionic spectral traces provide a well-defined, non-perturbative completion of the conventional topological string theory on $X$. 

In this Section, we will show how a dual, WKB double-scaling regime associated to the limit $\hbar \rightarrow 0$ can be similarly introduced, in such a way that the symplectic transformation encoded in the integral in Eq.~\eqref{eq: contour} can be interpreted as a change of frame in the moduli space of $X$. We will obtain a new analytic prediction for the semiclassical asymptotics of the fermionic spectral traces in terms of the free energies of the refined topological string in the NS limit, which allows us to propose a non-trivial analytic test of the conjecture of~\cite{GHM, CGM2}.

\subsection{The WKB double-scaling regime} \label{sec: dual_tHooft}
We examine a second scaling limit of the total grand potential, which is dual to the standard 't Hooft limit in Eq.~\eqref{eq: SS1}. Namely, in the semiclassical regime
\be \label{eq: DS1}
\hbar \rightarrow 0 \, , \quad \mu_j, \xi_k \; \; \text{fixed} \, , \quad j=1, \dots, g_{\Sigma} \, , \quad k=1, \dots, r_{\Sigma} \, , 
\ee
the quantum mirror map $\bm{t}(\bm{z}, \hbar)$ becomes classical by construction, reducing to the formula in Eq.~\eqref{eq: MM2}, and the total grand potential in Eq.~\eqref{eq: J_total} retains only the WKB contribution coming from the NS limit of the refined topological string in Eq.~\eqref{eq: J_WKB}, which can be formally expanded in powers of $\hbar$ as
\be \label{eq: dual}
J^{\text{WKB}}(\bm{\mu}, \bm{\xi}, \hbar) = \sum_{n=0}^{\infty} J_n(\bm{\mu}, \bm{\xi}) \, \hbar^{2n-1} \, ,
\ee
where the fixed-order WKB grand potentials $J_n(\bm{\mu}, \bm{\xi})$ are given by
\begin{subequations} \label{eq: dual-fixed}
\begin{align}
J_0(\bm{\mu}, \bm{\xi}) &= \sum_{i=1}^s \frac{t_i}{2 \pi} \frac{\partial F^{\rm NS}_0(\bm{t})}{\partial t_i} - \frac{1}{\pi} F^{\rm NS}_0(\bm{t}) + 2 \pi \sum_{i=1}^s b_i t_i + A_0(\bm{\xi}) \, , \label{eq: dual-fixed-zero} \\ 
J_n(\bm{\mu}, \bm{\xi}) &= \sum_{i=1}^s \frac{t_i}{2 \pi} \frac{\partial F^{\rm NS}_n(\bm{t})}{\partial t_i} + \frac{(2n-2)}{2 \pi} F^{\rm NS}_n(\bm{t}) + A_n(\bm{\xi}) \, , \quad n \ge 1 \, , \label{eq: dual-fixed-n}
\end{align}
\end{subequations}
under the assumption that the function $A(\bm{\xi}, \hbar)$, which is defined in Eq.~\eqref{eq: J_WKB}, has the perturbative power series expansion\footnote{The assumption on $A(\bm{\xi}, \hbar)$ can be easily tested in examples.}
\be \label{eq: An-assumption}
A(\bm{\xi}, \hbar) = \sum_{n \ge 0} A_n(\bm{\xi}) \, \hbar^{2n-1} \, .
\ee
The constants $b_i$ are the same ones that appear in Eq.~\eqref{eq: WS2}, while $F_n^{\rm NS}(\bm{t})$ is the NS topological string amplitude of order $n$ in Eq.~\eqref{eq: NS}. Note that the worldsheet grand potential in Eq.~\eqref{eq: J_WS} does not contribute to the semiclassical perturbative expansion of the total grand potential, but it contains explicit non-perturbative exponentially-small effects in $\hbar$. 

In view of the integral representation in Eq.~\eqref{eq: contour}, we can define a second 't Hooft-like limit for the fermionic spectral traces $Z_X(\bm{N}, \bm{\xi}, \hbar)$, which corresponds to the semiclassical regime for the total grand potential in Eq.~\eqref{eq: DS1}, and it is dual to the standard 't Hooft limit in Eq.~\eqref{eq: SS2}. We refer to it as the WKB double-scaling regime. Namely,
\be \label{eq: DS2}
\hbar \rightarrow 0 \, , \quad N_j \rightarrow \infty \, , \quad N_j \hbar = \sigma_j \; \; \text{fixed} \, , \quad j=1, \dots, g_{\Sigma} \, .
\ee
Let us consider the case of a toric del Pezzo CY threefold $X$, that is, $g_{\Sigma} = 1$, for simplicity. The following arguments can then be straightforwardly generalized to the case of arbitrary genus.
Using the 't Hooft-like expansion in Eq.~\eqref{eq: dual} for the total grand potential, the integral formula in Eq.~\eqref{eq: contour} in the WKB double-scaling regime in Eq.~\eqref{eq: DS2} becomes
\be \label{eq: contourWKB}
Z_X(N, \bm{\xi}, \hbar) = \frac{1}{2 \pi \text{i}} \int_{\mathcal{C}} \rd \mu \, \exp \left(J^{\rm WKB}(\mu, \bm{\xi}, \hbar) - \frac{1}{\hbar} \mu \sigma \right) \, ,
\ee
where $\mathcal{C}$ is an integration contour going from $\re^{- \ri \pi/3} \infty$ to $\re^{+ \ri \pi/3} \infty$ in the complex plane of the chemical potential. Let us introduce the functions
\be
S(\mu, \bm{\xi}, \sigma) = \mu \sigma - J_0(\mu, \bm{\xi}) \, , \quad Z(\mu, \bm{\xi}, \hbar) = \exp \left( \sum_{n=1}^{\infty} J_n(\mu, \bm{\xi}) \hbar^{2n-1} \right) \, ,
\ee
and write Eq.~\eqref{eq: contourWKB} equivalently as
\be \label{eq: contourWKB2}
Z_X(N, \bm{\xi}, \hbar) = \frac{1}{2 \pi \text{i}} \int_{\mathcal{C}} \rd \mu \, \re^{-\frac{1}{\hbar} S(\mu, \bm{\xi}, \sigma)} \, Z(\mu, \bm{\xi}, \hbar) \, .
\ee
We identify the critical point $\mu = \mu^*$ of the integrand by solving the classical relation
\be \label{eq: saddle}
\frac{\d J_0(\mu^*, \bm{\xi})}{\d \mu} = \sigma \, ,
\ee
which gives $\sigma$ as a function of $\mu^*$, and vice-versa. Evaluating the integral in Eq.~\eqref{eq: contourWKB2} via saddle-point approximation around $\mu^*$ in the limit $\hbar \rightarrow 0$, we obtain the perturbative power series expansion
\be \label{eq: expWKB}
Z_X(N, \bm{\xi}, \hbar) =  \exp \left( \sum_{n=0}^{\infty} \CJ_n(\sigma, \bm{\xi}) \hbar^{2n-1} \right) \, .
\ee
The leading-order contribution is given by the Legendre transform
\be \label{eq: Legendre}
\CJ_0(\sigma, \bm{\xi}) = J_0(\mu^*, \bm{\xi}) -  \sigma \mu^* \, ,
\ee
where the saddle point $\mu^*$ is expressed as a function of $\sigma$ via Eq.~\eqref{eq: saddle}, the next-to-leading order correction is given by the one-loop approximation to the integral in Eq.~\eqref{eq: contourWKB2}, that is, 
\be \label{eq: one-loop}
\CJ_1(\sigma, \bm{\xi}) = J_1(\mu^*, \bm{\xi}) -  \frac{1}{2} \log \left( 2 \pi \frac{\d^2 J_0(\mu^*, \bm{\xi})}{\d \mu^2 } \right) \, ,
\ee
and the higher-order contributions $\CJ_n (\sigma, \bm{\xi})$, $n \ge 1$, can be computed systematically by summing over higher-loop Feynman diagrams.
Note that differentiating the formula in Eq.~\eqref{eq: Legendre} gives 
\be
\frac{\d \CJ_0(\sigma, \bm{\xi})}{\d \sigma} = - \mu^* \, .
\ee

\subsection{Interpreting the change of frame} \label{sec: sigma}
Let us consider the example of the local $\IP^2$ geometry. We recall that, in the parametrization of the moduli space given by the Batyrev coordinate $z$, the Picard--Fuchs differential equation associated to the mirror of local $\IP^2$ is given by
\be \label{eq: PF}
\CL_z \Pi = \left( \Theta^3-3z(3 \Theta+1)(3 \Theta+2)\Theta \right) \Pi = 0 \, ,
\ee
where $\Theta = z \rd/\rd z$, and $\Pi$ is the full period vector of the meromorphic differential one-form $\lambda = y(x) dx$. The Picard--Fuchs differential operator $\CL_z$ has three singular points: the large radius point at $z=0$, the conifold point at $z=-1/27$, and the orbifold point at $1/z=0$.
Solving the Picard--Fuchs equation locally around $z=0$ and using, for instance, the Frobenius method, gives a trivial constant solution and the two non-trivial independent solutions
\begin{subequations} \label{eq: w12}
\begin{align}
w_1(z) =& \begin{cases}
\log(-z) + \tilde{w}_1(z) \, ,  \quad &-1/27 < z < 0 \, ,\\
\log(z) + \tilde{w}_1(z) \, , \quad &z > 0 \, ,
\end{cases} \\
w_2(z) =& \begin{cases}
\log(-z)^2 + 2 \tilde{w}_1(z) \log(-z) + \tilde{w}_2(z)  \, , \quad &-1/27 < z < 0 \, ,\\
\log(z)^2 + 2 \tilde{w}_1(z) \log(z) + \tilde{w}_2(z)  \, , \quad &z > 0 \, ,
\end{cases}
\end{align}
\end{subequations}
where $\tilde{w}_1(z), \tilde{w}_2(z)$ are the formal power series
\begin{subequations} \label{eq: w12tilde}
\begin{align}
\tilde{w}_1(z) &= 3 \sum_{j=1}^{\infty} (-z)^j \frac{(3j-1)!}{(j!)^3} = -6 z + 45 z^2 - 560 z^3 + \frac{17325 z^4}{2} + \dots \, , \\
\tilde{w}_2(z) &= 18 \sum_{j=1}^{\infty} (-z)^j \frac{(3j-1)!}{(j!)^3} \sum_{n=j+1}^{3j-1} \frac{1}{n} = -18 z + \frac{423 z^2}{2} - 2972 z^3 + \frac{389415 z^4}{8}  + \dots \, .
\end{align}
\end{subequations}
Note that these series expansions converge for $|z|<1/27$, and, with an appropriate choice of normalization, the formulae in Eq.~\eqref{eq: periods} become
\be \label{eq: classicalP}
t = - w_1(z) \, , \quad \d_t F_0 = \frac{w_2(z)}{6} \, .
\ee
Let us derive exact expressions in terms of special functions for the analytic continuation of the classical periods at large radius. We remark that such closed formulae have already been computed in the well-known context of the standard 't Hooft regime~\cite{MZ}. However, this involves the conventional topological free energies after the B-field has been turned on, which is taken into account by implementing the change of sign $z \mapsto -z$. Using the explicit results of~\cite{MZ}, and reversing the effects of the change of sign in $z$, we find that the classical periods of local $\IP^2$ at large radius can be analytically continued and resummed in closed form in the two distinct regions $-1/27 < z < 0$ and $z > 0$ of moduli space. Choosing the branch of the logarithm functions appropriately, the first period $w_1(z)$ is given by
\be \label{eq: w1-closed} 
w_1(z) = \begin{cases}
\log(-z) - 6z \, {}_4F_3 \left( 1,1, \frac{4}{3}, \frac{5}{3} ; \, 2,2,2 ; \, -27z \right) \, , \quad &-1/27 < z < 0 \, ,\\
\log(z) - 6z \, {}_4F_3 \left( 1,1, \frac{4}{3}, \frac{5}{3} ; \, 2,2,2 ; \, -27z \right) \, , \quad &z > 0 \, ,
\end{cases} 
\ee
while the second period $w_2(z)$ is given by
\be \label{eq: w2-closed} 
w_2(z) = \begin{cases}
-\frac{5 \pi^2}{3} + \frac{3}{\pi \sqrt{3}} G_{3,3}^{3,2} \left(
\begin{matrix}
1/3 , & 2/3 , & 1 \\
0 , & 0 , & 0
\end{matrix} ; \, -27z
\right)  \, , \quad &-1/27 < z < 0 \, , \\
-\frac{2 \pi^2}{3} + \frac{3}{\pi \sqrt{3}} G_{3,3}^{3,2} \left(
\begin{matrix}
1/3 , & 2/3 , & 1 \\
0 , & 0 , & 0
\end{matrix} ; \, -27z
\right)  \\ \quad - 2 \pi \ri \log(z) +12 \pi \ri z \, {}_4F_3 \left( 1,1, \frac{4}{3}, \frac{5}{3} ; \, 2,2,2 ; \, -27z \right) \, ,  \quad &z > 0 \, , 
\end{cases}
\ee
where ${}_4F_3$ is a generalized hypergeometric function, and $G_{3,3}^{3,2}$ is a Meijer $G$-function. We show in the plots in Fig.~\ref{fig: periods} the analytically continued periods together with their series expansions at the origin.
\begin{figure}[htb!]
\center
 \includegraphics[width=0.9\textwidth]{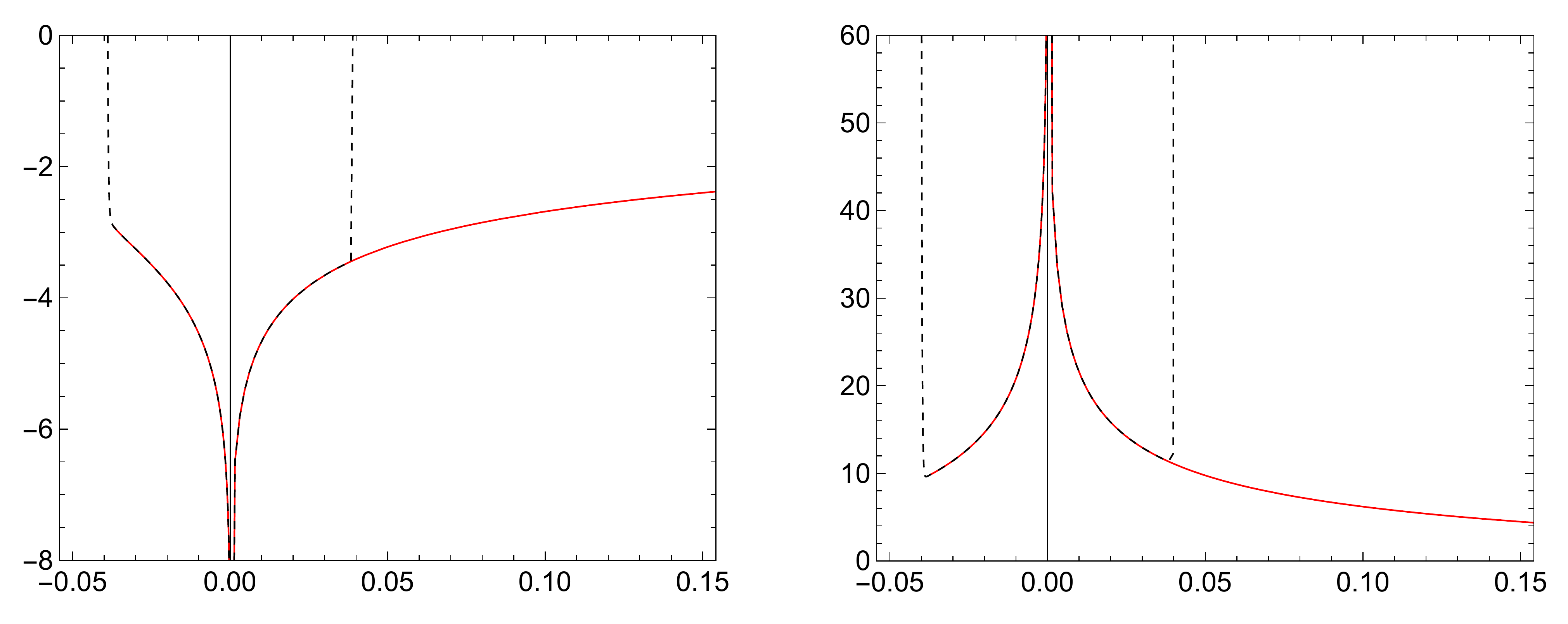}
 \caption{The functions $w_1(z)$ and $w_2(z)$ in Eqs.~\eqref{eq: w1-closed} and~\eqref{eq: w2-closed} (in solid red) on the left and on the right, respectively, and their large radius expansions in Eq.~\eqref{eq: w12} (in dashed black), for $-1/27< z < 0$ and $z > 0$. The formal power series in Eq.~\eqref{eq: w12tilde} are truncated at $j = 200$.} 
 \label{fig: periods}
\end{figure}
Let us comment briefly on their behavior at the critical points of the geometry. Both functions $w_1(z), w_2(z)$ have a vertical asymptote at the large radius point $z=0$, where they approach $\mp \infty$ from both sides, respectively. In the orbifold limit $z \rightarrow +\infty$, both periods have a horizontal asymptote corresponding to
\be \label{eq: orb-limits}
\lim_{z \rightarrow + \infty} w_1(z) = 0 \, , \quad \lim_{z \rightarrow + \infty} w_2(z) = -\frac{2 \pi^2}{3} \, .
\ee
Finally, at the conifold point $z = -1/27$, they reach the finite limits
\begin{subequations}
\begin{align}
\lim_{z \rightarrow -\frac{1}{27}^+} w_1(z) &= -3\log(3) +\frac{2}{9} \, {}_4F_3 \left( 1,1, \frac{4}{3}, \frac{5}{3} ; \, 2,2,2 ; \, 1 \right) = -2.907593524 \dots \, , \\
\lim_{z \rightarrow -\frac{1}{27}^+} w_2(z) &= \pi^2 \, .
\end{align}
\end{subequations}
We show in the plots in Fig.~\ref{fig: periods_deriv} the derivatives $\d_z w_1(z), \d_z w_2(z)$ together with their series expansions at the origin.
Note that  the function $w_1(z)$ is singular at the conifold, while $w_2(z)$ is not, and we have that
\be
\lim_{z \rightarrow -\frac{1}{27}^+} \d_z w_1(z) = -\infty \, , \quad \lim_{z \rightarrow -\frac{1}{27}^+} \d_z w_2(z) = 36 \sqrt{3} \pi \, .
\ee
\begin{figure}[htb!]
\center
 \includegraphics[width=0.9\textwidth]{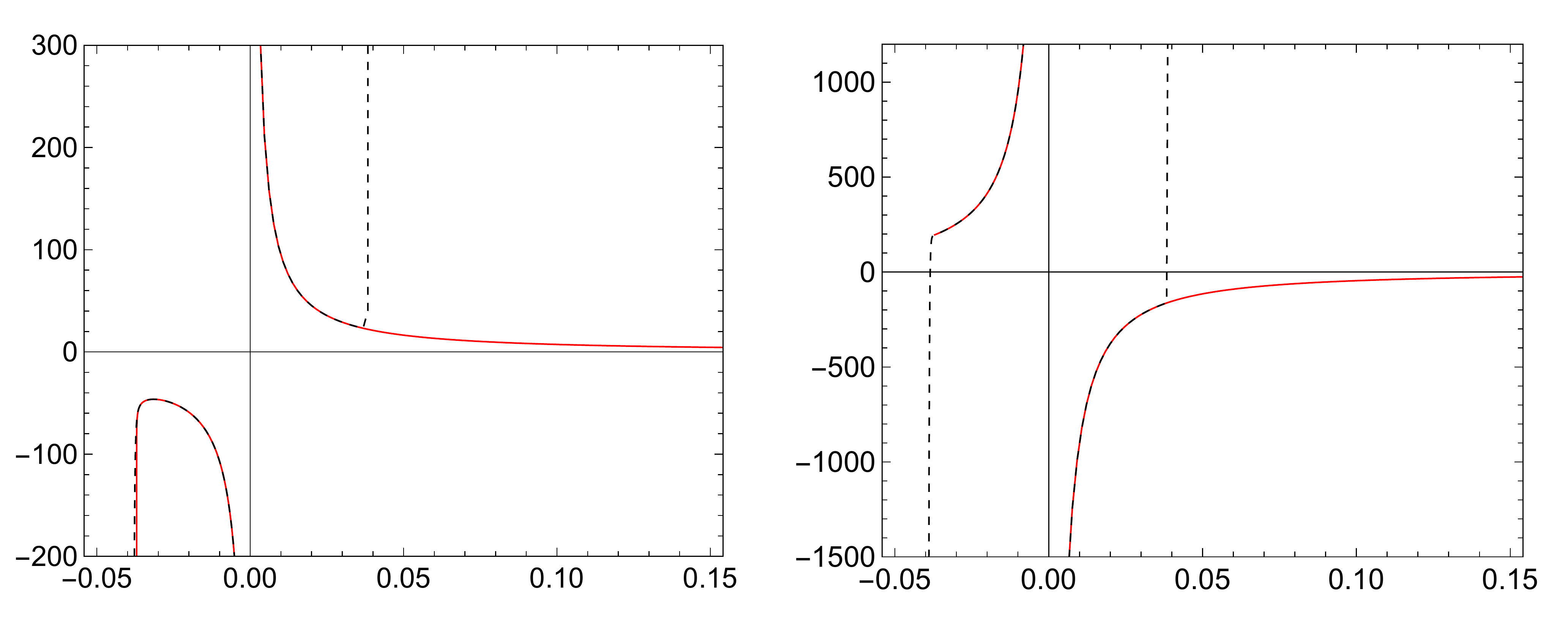}
 \caption{The derivatives $\d_z w_1(z)$ and $\d_z w_2(z)$ of the periods in Eqs.~\eqref{eq: w1-closed} and~\eqref{eq: w2-closed} (in solid red) on the left and on the right, respectively, and their large radius expansions following Eq.~\eqref{eq: w12} (in dashed black), for $-1/27< z < 0$ and $z > 0$. The formal power series in Eq.~\eqref{eq: w12tilde} are truncated at $j = 200$.} 
 \label{fig: periods_deriv}
\end{figure}

We recall that the function $A(\hbar)$, which appears in Eq.~\eqref{eq: J_WKB}, is conjectured in closed form for the local $\IP^2$ geometry, and it is given by~\cite{MP}
\be
A(\hbar) = \frac{3}{4} A_c (\hbar/\pi) - \frac{1}{4}A_c(3\hbar/\pi)
\ee
where the function $A_c (\hbar)$ can be expressed as~\cite{Ha+, HO}
\be
A_c(\hbar) = \frac{2 \zeta(3)}{\pi^2 \hbar} \left( 1 - \frac{\hbar^3}{16} \right) + \frac{\hbar^2}{\pi^2} \int_0^{\infty} \frac{x}{\re^{\hbar x}-1} \log \left( 1 - \re^{-2x} \right) \rd x \, .
\ee
It follows straightforwardly that the perturbative expansion in the limit $\hbar \rightarrow 0$ of $A(\hbar)$ satisfies the functional form in Eq.~\eqref{eq: An-assumption}. More precisely, we find that
\be \label{eq: A0}
A(\hbar) = \frac{4 \zeta(3)}{3 \pi \hbar} +  \frac{\hbar}{8 \pi} + \frac{\hbar^3}{2880 \pi} - \frac{\hbar^5}{604800 \pi} + \frac{\hbar^7}{33868800 \pi} + \CO (\hbar^9) \, .
\ee
Substituting the values $A_0= 4 \zeta(3)/3 \pi$ and $b=1/12$ in Eq.~\eqref{eq: dual-fixed-zero}, we obtain that the leading order WKB grand potential is given by
\be \label{eq: J-zero}
J_0(\mu) = \frac{t}{2 \pi} \d_t F_0(t) -\frac{1}{\pi} F_0(t) + \frac{\pi}{6} t + \frac{4 \zeta(3)}{3 \pi} \, ,
\ee
where $F_0(t) = F_0^{\rm NS}(t)$ is the genus-zero topological free energy at large radius. Integrating Eq.~\eqref{eq: classicalP} and fixing the integration constant appropriately, one has that~\cite{CKYZ}
\be \label{eq: genus-zero}
F_0(t) = \frac{t^3}{18}+3 \re^{-t}-\frac{45}{8} \re^{-2t}+\frac{244}{9} \re^{-3t}-\frac{12333}{64} \re^{-4t}+\frac{211878}{125} \re^{-5t} + \CO \left( \re^{-6t} \right) \, .
\ee
We can now apply Eqs.~\eqref{eq: saddle},~\eqref{eq: classicalP}, and~\eqref{eq: J-zero} in order to express the 't Hooft parameter $\sigma$ as a function of $z$. Recalling that the K\"ahler parameter $t$ is related to the chemical potential by $t = 3 \mu$, we find that
\be \label{eq: sigma-exact}
\sigma = \frac{3}{2 \pi} \left(t \, \d_t^2 F_0(t) - \d_t F_0(t) \right) + \frac{\pi}{2} = \frac{1}{4 \pi}\left( \frac{w_1(z) \, \d_z w_2(z)}{\d_z w_1(z)} - w_2(z)\right) + \frac{\pi}{2} \, .
\ee
Using the exact formulae for $w_1(z), w_2(z)$ in Eqs.~\eqref{eq: w1-closed} and~\eqref{eq: w2-closed}, we obtain the explicit dependence of $\sigma$ on the coordinate $z \in (-1/27, 0) \cup (0, +\infty)$, which is shown in the plot in Fig.~\ref{fig: sigma}.
\begin{figure}[htb!]
\center
 \includegraphics[width=0.7\textwidth]{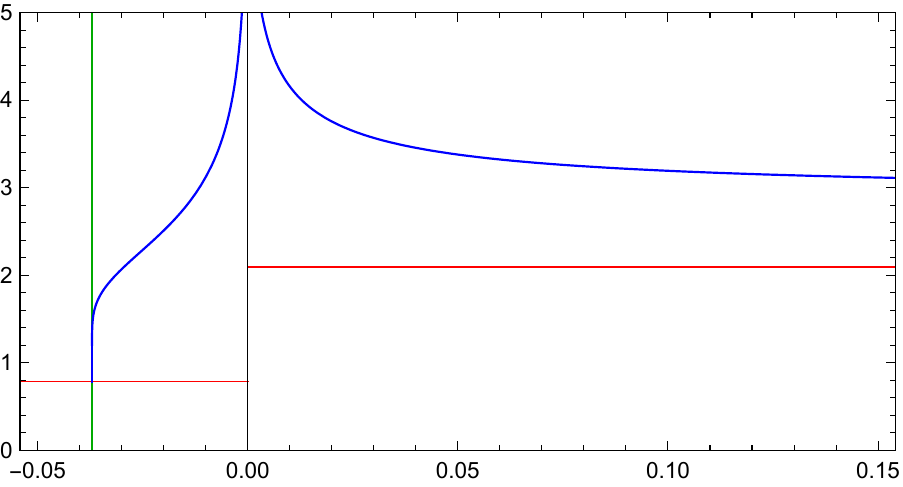}
 \caption{The 't Hooft parameter $\sigma$ (in solid blue) in Eq.~\eqref{eq: sigma-exact} as a function of the Batyrev coordinate $z \in (-1/27, 0) \cup (0, +\infty)$. We show the horizontal lines corresponding to the values $\sigma = \pi/4, 2\pi/3$ (in solid red) and the vertical line at $z=-1/27$ (in solid green).} 
 \label{fig: sigma}
\end{figure}
The two distinct analytic continuations of the classical periods correspond to two distinct analyticity regions of the 't Hooft parameter, whose behavior at the critical points is captured by the limits
\be
\lim_{z \rightarrow -\frac{1}{27}^+} \sigma(z) = \frac{\pi}{4} \, , \quad \lim_{z \rightarrow 0^{\pm}} \sigma(z) = +\infty \, , \quad \lim_{z \rightarrow + \infty} \sigma(z) = \frac{2 \pi}{3} \, .
\ee
Note that, analogously to the function $w_1(z)$, also $\sigma$ is singular at the conifold point of moduli space. In particular, we have that 
\be
\lim_{z \rightarrow -\frac{1}{27}^+} \d_z \sigma(z) = +\infty \, .
\ee
We stress that $\sigma$ is strictly positive for all values of $z$ considered here.

\subsection{Analysis of the leading-order behavior}
Let us now go back to the formula in Eq.~\eqref{eq: expWKB} and the new analytic prediction of the TS/ST correspondence which is contained in it. 
The asymptotic behavior of the fermionic spectral traces $Z_X(N, \bm{\xi}, \hbar)$ in the WKB double-scaling regime in Eq.~\eqref{eq: DS2} is determined by the WKB grand potential of the topological string theory, that is, by the total free energy of the refined topological string in the NS limit, after the transformation of local symplectic frame of the moduli space which is encoded in the integral in Eq.~\eqref{eq: contourWKB}.
We can write Eq.~\eqref{eq: expWKB} at leading order as
\be \label{eq: expWKB-zero}
Z_X(N, \bm{\xi}, \hbar) =  \exp \left(\CJ_0(\sigma, \bm{\xi}) \frac{1}{\hbar} + \CO(\hbar) \right) \, ,
\ee
where $\CJ_0(\sigma, \bm{\xi})$ is a highly non-trivial function of the 't Hooft parameter, and it is given explicitly by the Legendre transform in Eq.~\eqref{eq: Legendre}. We stress that the parametric dependence of the functional coefficients in Eq.~\eqref{eq: expWKB} on the moduli space makes studying directly the full resurgent structure of the given asymptotic expansion a much more difficult task than it is to investigate the numerical series $Z_X(N, \bm{\xi}, \hbar \rightarrow 0)$ at fixed $N$, as we have done in the previous sections of this paper. We can, however, perform a detailed analysis of the dominant contribution in Eq.~\eqref{eq: expWKB-zero}.

Let us consider the example of the local $\IP^2$ geometry again. Using Eqs.~\eqref{eq: classicalP},~\eqref{eq: J-zero}, and~\eqref{eq: sigma-exact}, we find that
\be \label{eq: CJ-zero}
\begin{aligned}
\CJ_0(\sigma) &= - \frac{t^2}{2 \pi} \d_t^2 F_0(t) + \frac{t}{\pi} \d_t F_0(t) - \frac{1}{\pi} F_0(t) + \frac{4 \zeta(3)}{3 \pi} = \\
&=-\frac{1}{6\pi} w_1(z) \left(\frac{ w_1(z) \d_z w_2(z)}{2 \d_z w_1(z)}+w_2(z) \right) - \frac{1}{\pi} F_0(t) + \frac{4 \zeta(3)}{3 \pi} \, ,
\end{aligned}
\ee
where we apply Eqs.~\eqref{eq: w1-closed} and~\eqref{eq: w2-closed} in order to express the classical periods as analytic functions of $z \in (-1/27, 0) \cup (0, +\infty)$, as before. If we consider the large radius expansion of the genus-zero topological free energy $F_0(t)$, which is given in Eq.~\eqref{eq: genus-zero}, we find that $\CJ_0(\sigma)$ has a vertical asymptote for $z \rightarrow 0^{\pm}$, where it approaches $- \infty$ from both sides, it remains negative for positive values of $z$, while it reaches zero at the value $z = z_0 = -0.0232698 \dots$, corresponding to $\sigma = \sigma_0 = 2.36250 \dots$, and it becomes positive for values of $z$ smaller than $z_0$. 
We show the resulting explicit dependence of $\CJ_0(\sigma)$ on $z$ near the large radius point in the plot in Fig.~\ref{fig: prediction}.
\begin{figure}[htb!]
\center
 \includegraphics[width=0.7\textwidth]{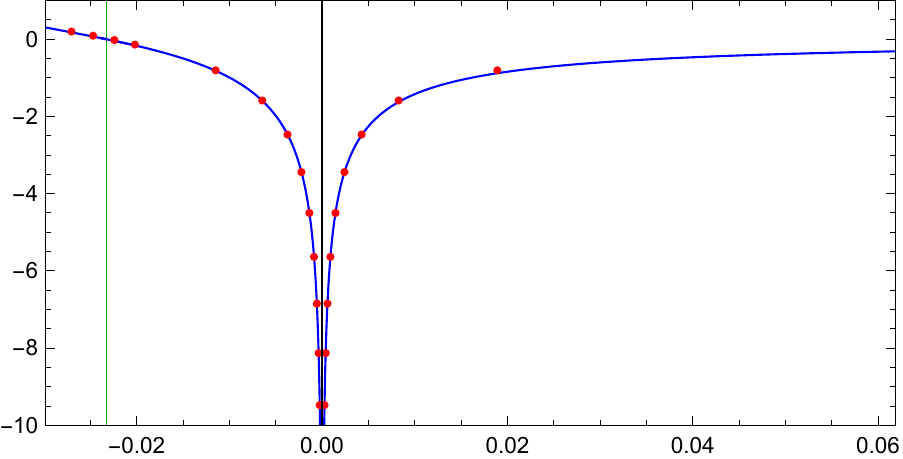}
 \caption{The transformed order-zero WKB grand potential $\CJ_0(\sigma)$ (in solid blue) in Eq.~\eqref{eq: CJ-zero} as a function of the Batyrev coordinate $z \approx 0$. We show the vertical line at $z=z_0$ (in solid green). We truncate the large radius expansion of $F_0(t)$ in Eq.~\eqref{eq: genus-zero} at order $\CO \left( \re^{-15 t} \right)$. We show the numerical data obtained from the sequences in Eq.~\eqref{eq: airy-seq} (as red dots) for $k=7$, $N$ up to $80$, and $\sigma = 2.2 , \, 2.3 , \, 2.4 , \, 2.5 , \, 3 , \, 3.5 , \, 4 , \, 4.5 , \, 5 , \, 5.5 , \, 6 , \, 6.5 , \, 7 $. The convergence is accelerated with a Richardson transform of order 2.}  
  \label{fig: prediction}
\end{figure}
We can similarly access the asymptotic behavior of $\CJ_0(\sigma)$ in the orbifold limit of the special geometry, where the 't Hoooft parameter $\sigma$ approaches the natural limit of $2 \pi/3$.
Recall that the Batyrev coordinate $z$ is related to the true complex structure parameter $\kappa$ by $z=\kappa^{-3}$. We express the analytically continued classical periods $w_1(z),w_2(z)$ for $z>0$ in Eqs.~\eqref{eq: w1-closed} and~\eqref{eq: w2-closed} as functions of $\kappa$. Substituting these into Eq.~\eqref{eq: classicalP}, and integrating in $t$, we find the orbifold expansion of the classical prepotential $F_0(\kappa)$. More precisely, we have that
\be \label{eq: orb-zero}
F_0(\kappa) = \frac{4 \zeta(3)}{3}-\frac{\pi^2 \Gamma(\frac{1}{3})}{9 \Gamma(\frac{2}{3})^2}\kappa + \frac{\pi \Gamma(\frac{1}{3})^2}{6 \sqrt{3}\Gamma(\frac{2}{3})^4}\kappa^2 + \frac{\pi^2 \Gamma(\frac{2}{3})^3}{18 \Gamma(\frac{1}{3})} \kappa^3 + \CO(\kappa^4) \, , 
\ee
after fixing the integration constant appropriately~\cite{GHM}. As a consequence of the orbifold limits in Eq.~\eqref{eq: orb-limits} and the orbifold expansion in Eq.~\eqref{eq: orb-zero}, the formula in Eq.~\eqref{eq: CJ-zero} gives the leading-order asymptotics
\be \label{eq: CJ-zero-orb}
\CJ_0 \left(\sigma \rightarrow \frac{2 \pi}{3}\right) \sim J_0 \left(\mu \rightarrow 0\right) \sim - \frac{1}{\pi} F_0(\kappa) + \frac{4 \zeta(3)}{3 \pi} = 0 + \CO(\kappa) \, ,
\ee
that is, the dominant term in the WKB expansion of $\log Z_{\IP^2}(N, \hbar)$, as given in Eq.~\eqref{eq: expWKB-zero}, vanishes at the orbifold point $\kappa = 0$ of moduli space, where it matches the order-zero WKB grand potential in Eq.~\eqref{eq: J-zero}. 
Let us point out that, expanding the conjectural formula for the spectral determinant in Eq.~\eqref{eq: GHM} at leading order in the semiclassical regime in Eq.~\eqref{eq: DS1}, we obtain that
\be \label{eq: XiWKB}
\log \Xi_{\IP^2}(\kappa, \hbar) \sim J^{\text{WKB}}(\mu, \hbar) \, , \quad \hbar \rightarrow 0 \, , \quad \mu \; \; \text{fixed} \, ,
\ee
while the theta function $\Theta(\mu, \hbar)$ gives sub-leading, oscillatory corrections to the dominant asymptotics. 
Since the fixed-$N$ fermionic spectral traces are originally defined as the functional coefficients in the orbifold expansion of $\Xi_{\IP^2}(\kappa, \hbar)$ in Eq.~\eqref{eq: expXi}, the statements in Eqs.~\eqref{eq: CJ-zero-orb} and~\eqref{eq: XiWKB} predict the absence of a global exponential behavior of the form $\re^{-1/\hbar}$ in their semiclassical asymptotic expansion, contrary to what occurs in the dual regime $\hbar \rightarrow \infty$, as observed in~\cite{GuM} and in Section~\ref{sec: P2infty}. This new prediction of the TS/ST correspondence is explicitly verified for the first spectral trace of both local $\IP^2$ and local $\IF_0$ in Sections~\ref{sec: WignerP2} and~\ref{sec: WignerF0}, thus providing a first successful quantum-mechanical test of the analytic formulation encoded in Eq.~\eqref{eq: expWKB}.
We comment that obtaining analytic results from Eqs.~\eqref{eq: expXi} and~\eqref{eq: GHM} is generally a difficult task, as it requires knowledge of the analytic continuation of the total grand potential to the orbifold frame of the geometry for the values of $\hbar$ of interest. 

We conclude by presenting here a numerical test of consistency of the analytic prediction in Eq.~\eqref{eq: CJ-zero} at large radius for the local $\IP^2$ geometry. It is well-known that, by expanding the total grand potential $J(\mu, \hbar)$ in Eq.~\eqref{eq: contour} in the large-$\mu$ limit, the fermionic spectral traces $Z_{\IP^2}(N, \hbar)$ can be decomposed into an infinite sum of Airy functions and their derivatives. More precisely, one has that~\cite{GHM}
\be \label{eq: expAiry}
\exp \left( J(\mu, \hbar) \right) = \exp \left(\frac{C(\hbar)}{3} \mu^3 + B(\hbar) \mu + A(\hbar) \right) \sum_{l,n} a_{l,n} \mu^n \re^{-l \mu} \, ,
\ee
where $a_{l,n} \in \IC$, and the sum runs over $n \in \IN$ and $l = 3 p+6 \pi q /\hbar$ for $p,q \in \IN$. We have denoted by $A(\hbar)$ the same function that appears in Eq.~\eqref{eq: J_WKB}, and we have introduced
\be
B(\hbar) = \frac{\pi}{2 \hbar} - \frac{\hbar}{16 \pi} \, , \quad C(\hbar) = \frac{9}{4 \pi \hbar} \, .
\ee 
Under some assumptions on the convergence of the expansion in Eq.~\eqref{eq: expAiry}, one can substitute it into the formula in Eq.~\eqref{eq: contour} and perform the integration term-by-term, which gives~\cite{GHM}
\be \label{eq: Airy}
Z_{\IP^2}(N, \hbar) = \frac{\re^{A(\hbar)}}{C(\hbar)^{1/3}} \sum_{l,n} a_{l,n} \left( - \frac{\d}{\d N} \right)^n \text{Ai} \left( \frac{N+l-B(\hbar)}{C(\hbar)^{1/3}} \right) \, ,
\ee
where $\text{Ai}(x)$ is the Airy function, and the indices $l,n$ are defined as above. The leading-order behavior for $N \rightarrow \infty$ and $\hbar$ fixed is given by the Airy function
\be \label{eq: Airy-zero}
Z_{\IP^2}(N, \hbar) \sim  \text{Ai} \left( \frac{N-B(\hbar)}{C(\hbar)^{1/3}} \right) \, , \quad N \gg 1 \, ,
\ee
while all additional terms in the RHS of Eq.~\eqref{eq: Airy} are exponentially-small corrections. Note that the series in Eq.~\eqref{eq: Airy} appears to be convergent, and it allows us to obtain highly accurate numerical estimates of the fermionic spectral traces~\cite{reviewM}. More precisely, we truncate it to the finite sum $Z_{\IP^2}^{(k)}(N, \hbar)$, where $k \in \IN$ denotes the number of terms that have been retained.
Since we are interested in the WKB double-scaling regime in Eq.~\eqref{eq: DS2}, let us fix a value of the 't Hooft parameter $\sigma \in \IR_+$ and take $\hbar = \sigma/N$ for $N \in \IN$. 
Thus, the sequence of numerical approximations $Z_{\IP^2}^{(k)}(N, \sigma/N)$ tends to the true function $Z_{\IP^2}(N, \hbar)$ for $k,N \rightarrow \infty$ for each choice of $\sigma$. Note, however, that we can only access in this way those values of $\sigma$ which correspond to the large radius frame in moduli space, that is, $z \approx 0$. As we have found in Section~\ref{sec: sigma}, this corresponds to $\sigma \gg 1$.
We obtain a numerical estimate of the transformed order-zero WKB grand potential $\CJ_0(\sigma)$ in Eq.~\eqref{eq: CJ-zero} near the large radius point from the convergence of the sequence
\be \label{eq: airy-seq}
\frac{\sigma}{N} \log \left( Z_{\IP^2}^{(k)}\left( N, \frac{\sigma}{N} \right) \right) \sim \CJ_0(\sigma) \, , \quad N \gg 1 \, ,
\ee
which is accelerated with the help of Richardson transforms. 
The resulting numerical data correctly captures the analytic behavior described by Eq.~\eqref{eq: CJ-zero} at large radius, including the change of sign of $\CJ_0(\sigma)$ occurring at $z=z_0$, as shown for a selection of points in the plot in Fig.~\ref{fig: prediction}.
As expected, the agreement increases as the points get closer to $z=0$, and it is systematically improved by taking larger values of $k \in \IN$. 
The two $z$-coordinates corresponding to each choice of $\sigma$ are obtained by inverting the relation in Eq.~\eqref{eq: sigma-exact}.

\sectiono{Conclusions} \label{sec: conclusions}
In this paper, we have described how the machinery of resurgence can be effectively applied to study the perturbative expansions of the fermionic spectral traces in the semiclassical regime $\hbar \rightarrow 0$ of the spectral theory dual to the topological string theory on a toric CY threefold. The resurgent analysis of these asymptotic series uncovers a rich mathematical structure of non-perturbative sectors, which appear in the complex Borel plane as infinite towers of periodic singularities, whose arrangement is known as a peacock pattern. We have conjectured that the Stokes constants associated to the logarithm of the fermionic spectral traces are rational numbers, thus representing a new, conjectural class of enumerative invariants of the CY.

We have analytically solved the full resurgent structure of the logarithm of the first spectral trace of the local $\IP^2$ geometry, which unveils a remarkable arithmetic construction. 
The Stokes constants are given by explicit divisor sum functions, which can be expressed as the Dirichlet convolution of simple arithmetic functions, and they are encoded in a generating function given in closed form by $q$-series. The perturbative coefficients are captured by special values of a known $L$-function, which admits a notable factorization as the product of a Riemann zeta function and a Dirichlet $L$-function, corresponding to the Dirichlet factors in the decomposition of the Stokes constants.
Analogously, we have presented a complete analytic solution to the resurgent structure of the perturbative series arising in the dual weakly-coupled limit $g_s \rightarrow 0$ of the conventional topological string on the same background, previously studied numerically in~\cite{GuM}. We have found that the Stokes constants are manifestly related to their semiclassical analogues, the mapping being realised by a simple exchange of divisors, while the perturbative coefficients are special values of the same $L$-function above after unitary shifts in the arguments of its factors. The more complex example of the local $\IF_0$ geometry appears to be only accessible via numerical tools, and we have found a logarithmic-type sub-leading asymptotics of the perturbative series of the first spectral trace. Finally, we have analyzed the topological string total grand potential of a toric CY threefold in an appropriate WKB 't Hooft-like regime associated to the semiclassical limit of the spectral theory, and we have obtained a non-trivial analytic prediction on the asymptotics of the fermionic spectral traces in terms of the total free energy of the refined topological string in the NS limit in a suitable symplectic frame.

Several questions and open problems follow from the investigation performed in this paper. We will now give a short account of possible directions for future work. 
We have often mentioned that an independent geometric and physical understanding of the non-perturbative sectors studied in~\cite{GuM} and in this paper is still missing, as it is an explicit identification our rational Stokes constants as enumerative invariants of the topological string. The number-theoretic fabric that we have discovered in the simple case of the local $\IP^2$ geometry in Section~\ref{sec: localP2} sheds some light on this problem, but it leaves room for further exploration. We would like to understand in a similar way other examples of toric CY threefolds, and possibly study higher-order fermionic spectral traces, in support of a potential generalization.  

We have described how the integral representation of the fermionic spectral traces in the WKB double-scaling regime can be interpreted as a symplectic transformation of the WKB grand potential at large radius. As explained in~\cite{ABK}, a change of symplectic frame in the moduli space of the CY $X$ corresponds to an electromagnetic duality transformation in $\text{Sp}(2s, \IZ)$ of the periods, where $s = b_2(X)$. However, a full geometric understanding of the effect of the change of symplectic basis of Section~\ref{sec: dual_limit} on the WKB grand potential would require further work, and it would involve, in particular, the use of the specific modular transformation properties of the topological string amplitudes in the NS limit, which have been studied in~\cite{Grassi}.

In the first part of this work, we have considered the numerical series obtained from the semiclassical expansion of the fermionic spectral traces at fixed $\bm{N}$. However, as we have seen in Section~\ref{sec: dual_limit}, we can rigorously define the perturbative expansion of the fermionic spectral traces in the WKB double-scaling regime directly. An advantage of this formulation is that it allows us to make a precise statement on the TS/ST prediction of the asymptotic behavior of these series. Studying their resurgent structure is a much more complex endeavor than it is for the numerical series, because the perturbative coefficients in this 't Hooft-like limit retain a full parametric dependence on the moduli space of the CY. Some work on similar problems of parametric resurgence has been done for the standard 't Hooft regime of conventional topological string theory~\cite{C-SMS} and in the context of the large-$N$ expansion of gauge theories~\cite{lecturesM}.

Let us mention that the analytic prediction that we have presented for the WKB double-scaling regime of the fermionic spectral traces is, in principle, verifiable from a matrix model perspective. A successful quantum-mechanical validation of our full statement would represent an important additional piece of evidence in support of the conjecture of~\cite{GHM, CGM2}. 

Finally, in the recent works of~\cite{GuM2, GuM3}, an operator formulation of the holomorphic anomaly equations, or BCOV equations~\cite{BCOV1, BCOV2}, satisfied by the conventional and NS topological string free energies is applied to derive the formal structure of the exact trans-series solutions, although a determination of which of the possible Borel singularities are realized and the values of their Stokes constants is missing. A detailed study of the connection and complementarity between the formalism of~\cite{GuM2, GuM3} and the framework proposed in~\cite{GuM} and further advanced in the present work might help us achieve a more comprehensive understanding of the resurgent structure of topological string theory.

\acknowledgments
I am thankful to Marcos Mari\~no for many discussions throughout the development of the project. I also thank Alba Grassi, Jie Gu, and Nikita Nikolaev for useful comments.
This work has been supported by the ERC-SyG project ``Recursive and Exact New Quantum Theory'' (ReNewQuantum), which received funding from the European Research Council (ERC) within the European Union's Horizon 2020 research and innovation program under Grant No. 810573, and by the Swiss National Centre of Competence in Research SwissMAP (NCCR 51NF40-141869 The Mathematics of Physics).

\appendix

\sectiono{Wigner transform of the inverse operator} \label{app: Wigner} 
Let $\mO$ be a quantum-mechanical operator acting on $L^2(\IR)$. We describe how to obtain the WKB expansion in phase space of the inverse operator $\rho = \mO^{-1}$ at NLO in the semiclassical limit $\hbar \rightarrow 0$. See~\cite{bookM2} for an introduction to the phase-space formulation of quantum mechanics. 
The Wigner transform of the operator $\mO$ is defined as
\be \label{eq: OW}
O_{\rm W}(x,y) = \int_{\IR} \re^{\frac{\ri y x'}{\hbar}}  \left\langle x - \frac{x'}{2} \right\rvert \mO \left\lvert x + \frac{x'}{2} \right\rangle \rd x' \, ,
\ee
where $x, \, y \in \IR$ are the phase-space coordinates, and the diagonal element of $\mO$ in the coordinate representation is given by
\be
\left\langle x \right\rvert \mO \left\lvert x  \right\rangle = \frac{1}{2 \pi \hbar} \int_{\IR} O_{\rm W}(x,y) \, \rd y \, . 
\ee
The trace of $\mO$ is obtained by integrating its Wigner transform over phase space, that is, 
\be \label{eq: tracePS}
\text{Tr}(\mO) =  \int_{\IR} \left\langle x \right\rvert \mO \left\lvert x  \right\rangle \, \rd x  =  \frac{1}{2 \pi \hbar} \int_{\IR^2} O_{\rm W}(x,y) \, \rd x \rd y \, . 
\ee
\begin{example} \label{ex: exampleWigner}
Let us consider the quantum-mechanical operator 
\be
\mO = \re^{m \mx + n \my}  = \re^{- \ri \hbar m n /2} \re^{m \mx} \re^{n \my} \, ,  \quad m, n \in \IZ \, , 
\ee
where $\mx, \, \my$ are the Heisenberg operators corresponding to the classical variables $x, \, y$ and satisfying $[ \mx , \, \my ] = \text{i} \hbar$, and the second equality follows from the Baker--Campbell--Hausdorff formula.
The Wigner transform in Eq.~\eqref{eq: OW} becomes
\be \label{eq: OWex1}
\ba
O_{\rm W}(x,y) &=  \re^{- \ri \hbar m n /2} \int_{\IR} \rd x' \re^{\frac{\ri y x'}{\hbar}}  \left\langle x - \frac{x'}{2} \right\rvert \re^{m \mx} \re^{n \my} \left\lvert x + \frac{x'}{2} \right\rangle= \\
&= \re^{- \ri \hbar m n /2} \int_{\IR} \rd x' \re^{\frac{\ri y x'}{\hbar}}  \re^{m (x-x'/2)} \left\langle x - \frac{x'}{2} \right\rvert \re^{n \my} \left\lvert x + \frac{x'}{2} \right\rangle= \\
&= \re^{- \ri \hbar m n /2} \re^{m x} \int_{\IR^2} \frac{\rd x' \rd y' }{2 \pi \hbar} \re^{\frac{\ri (y-y') x'}{\hbar}}  \re^{-mx'/2} \re^{n y'} \, .
\ea
\ee
After performing the change of variable $u = x'/\hbar$, and Taylor expanding the exponential factor $\re^{-m u \hbar / 2}$ around $\hbar = 0$, we get
\be \label{eq: OWex}
\ba
O_{\rm W}(x,y) &= \re^{- \ri \hbar m n /2} \re^{m x} \sum_{k = 0}^{\infty} \frac{1}{k!} \left( - \frac{\ri \hbar m}{2} \right)^k \int_{\IR} \rd y' \re^{n y'} \int_{\IR} \frac{\rd u}{2 \pi} \re^{\ri (y-y') u} (- \ri u)^k = \\ 
&= \re^{- \ri \hbar m n /2} \re^{m x} \sum_{k = 0}^{\infty} \frac{1}{k!} \left( - \frac{\ri \hbar m}{2} \right)^k \int_{\IR} \rd y' \re^{n y'} \delta^{(k)}(y-y') = \\ 
&= \re^{- \ri \hbar m n /2} \re^{m x} \sum_{k = 0}^{\infty} \frac{1}{k!} \left( \frac{\ri \hbar m n}{2} \right)^k \re^{n y}  = \re^{m x + n y} \, , 
\ea
\ee
where $\delta^{(k)}$ denotes the $k$-th derivative of the Dirac delta function. 
\end{example}

We recall now the definition of the Moyal $\star$-product of two quantum operators $\mA, \, \mB$ acting on $L^2(\IR)$, that is, 
\be \label{eq: moyal}
\mA \star \mB = A_{\rm W}(x, y) \, \exp \left[ \frac{\ri \hbar}{2} \overleftrightarrow{\Lambda} \right] \, B_{\rm W}(x,y) \, ,
\ee
where $\overleftrightarrow{\Lambda} = \overleftarrow{\partial_x} \overrightarrow{\partial_y} - \overleftarrow{\partial_y} \overrightarrow{\partial_x}$, and the arrows indicate the direction in which the derivatives act. 
Expanding around $\hbar = 0$, we get
\be \label{eq: moyal2}
\mA \star \mB = \sum_{n=0}^{\infty} \sum_{m=0}^n (-1)^m \binom{n}{m} \frac{1}{n!} \left( \frac{ \ri \hbar}{2} \right)^n \d_x^m \d_y^{n-m} A_{\rm W}(x, y) \, \d_y^m \d_x^{n-m} B_{\rm W}(x, y)  \, .
\ee
\begin{theorem}
The Wigner transform of the inverse operator $\rho = \mO^{-1}$ can be expressed in terms of $O_W$ as
\be \label{eq: rhoW}
\rho_{\rm W} = \sum_{r = 0}^{\infty} (-1)^r \frac{\CG_r}{O_{\rm W}^{r+1}} \, ,
\ee
where the quantities
\be \label{eq: functionsGr}
\CG_r = \left[ (\mO - O_{\rm W}(x,y))^r \right]_{\rm W}
\ee
are evaluated at the same point $(x, \, y)$ in phase space. 
\end{theorem}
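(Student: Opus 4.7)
The plan is to derive the identity from the operator equation $\mO \rho = \mathbb{1}$ by means of a Neumann-type series expansion that isolates the c-number value of the Wigner symbol at each phase-space point. First I fix a point $(x,y) \in \IR^2$ and observe the trivial operator decomposition
\be
\mO = O_{\rm W}(x,y)\, \mathbb{1} + \bigl(\mO - O_{\rm W}(x,y)\, \mathbb{1}\bigr) \, ,
\ee
in which $O_{\rm W}(x,y)$ is a scalar. Because the operators considered in this paper are positive-definite, so is $O_{\rm W}$, and I can formally invert $\mO$ as a geometric series in the ``deviation'' operator $\mO - O_{\rm W}(x,y)\,\mathbb{1}$, obtaining
\be
\rho = \mO^{-1} = \sum_{r=0}^{\infty} (-1)^r \frac{\bigl(\mO - O_{\rm W}(x,y)\,\mathbb{1}\bigr)^r}{O_{\rm W}(x,y)^{r+1}} \, .
\ee
Since $O_{\rm W}(x,y)$ is a c-number commuting with every operator, each term in this series is unambiguously defined, and the ordering inside the power $(\,\cdot\,)^r$ is irrelevant.

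The next step is to apply the Wigner transform to both sides and evaluate the resulting symbol at the same reference point $(x,y)$. By $\IC$-linearity of the Wigner transform and the fact that the scalars $O_{\rm W}(x,y)$ factor out, the right-hand side becomes $\sum_{r=0}^{\infty} (-1)^r \CG_r / O_{\rm W}^{r+1}$ by the very definition of $\CG_r$ in Eq.~\eqref{eq: functionsGr}. This formally reproduces the stated expansion~\eqref{eq: rhoW}.

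The main obstacle is justifying the Neumann series in a precise analytic sense, since $\mO - O_{\rm W}(x,y)\,\mathbb{1}$ is not small in any operator-norm sense; the appropriate interpretation is as an asymptotic expansion in $\hbar$. The decisive observation is that, at the chosen reference point $(x,y)$, the Wigner symbol of $\mO - O_{\rm W}(x,y)\,\mathbb{1}$ is the function $(x',y') \mapsto O_{\rm W}(x',y') - O_{\rm W}(x,y)$, which vanishes at $(x',y') = (x,y)$. Consequently, when $\CG_r$ is computed via iterated Moyal $\star$-products as in Eq.~\eqref{eq: moyal2}, all the purely classical (no-derivative) contributions vanish identically for every $r \ge 1$, and each $\CG_r$ is of order $\hbar^2$ or higher. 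Organized by powers of $\hbar$, the series then produces only finitely many nonzero contributions at each order and defines a well-posed asymptotic expansion of $\rho_{\rm W}$. As a consistency check I would compute $\CG_0 = 1$, $\CG_1 = 0$, and the leading $\hbar^2$ part of $\CG_2, \CG_3$ directly from Eq.~\eqref{eq: moyal2}, recovering the explicit expressions used in Sections~\ref{sec: WignerP2} and~\ref{sec: WignerF0} and confirming that the series collapses to $\rho_{\rm W} = 1/O_{\rm W} + \CO(\hbar^2)$ at NLO.
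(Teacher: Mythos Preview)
The paper does not actually supply a proof of this theorem; it is stated as a known result (with a reference to a textbook on phase-space quantum mechanics) and then used. Your derivation via the Neumann/geometric series for $\mO^{-1}$ around the scalar $O_{\rm W}(x,y)\,\mathbb{1}$ is exactly the standard argument behind this formula, and at the formal level it is correct.

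There is one small logical gap in your justification of the asymptotic nature of the series. You correctly observe that the symbol of $\mO - O_{\rm W}(x,y)\,\mathbb{1}$ vanishes at $(x,y)$, so the purely classical contribution to $\CG_r$ is zero for $r\ge 1$, and you conclude that ``each $\CG_r$ is of order $\hbar^2$ or higher''. But this alone does \emph{not} imply that only finitely many $\CG_r$ contribute at a given order in $\hbar$: if every $\CG_r$ merely started at $\hbar^2$, the $\hbar^2$ coefficient of $\rho_{\rm W}$ would still be an infinite sum. What you need is the sharper estimate $\CG_r = \CO(\hbar^{\lceil r/2\rceil})$. This follows from a derivative count: in the iterated Moyal product~\eqref{eq: moyal2}, the contribution at total order $\hbar^n$ involves exactly $2n$ partial derivatives distributed among the $r$ factors of $g = O_{\rm W}-O_{\rm W}(x,y)$; since $g$ vanishes at $(x,y)$, each factor must receive at least one derivative to survive evaluation, forcing $2n\ge r$. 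Hence at fixed order $\hbar^n$ only the terms with $r\le 2n$ contribute, and the expansion is well defined order by order. (Note that the paper's explicit computation confirms this: $\CG_2$ and $\CG_3$ both start at $\hbar^2$, and one can check that $\CG_4$ starts at $\hbar^4$.) With this refinement your argument is complete.
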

We can compute the functions in Eq.~\eqref{eq: functionsGr} explicitly by using the Moyal $\star$-product in Eq.~\eqref{eq: moyal} and expand them in formal power series in $\hbar$. More precisely, the first four functions are $\CG_0=1$, $\CG_1=0$, 
\begin{subequations}
\begin{align}
\CG_2 &= O_{\rm W} \star O_{\rm W} - O_{\rm W}^2 = -\frac{\hbar^2}{4} \left[ \frac{\partial^2 O_{\rm W}}{\partial x^2} \frac{\partial^2 O_{\rm W}}{\partial y^2}  - \left( \frac{\partial^2 O_{\rm W}}{\partial x \partial y} \right)^2 \right] + \CO(\hbar^4) \, , \label{eq: G2} \\
\CG_3 &= O_{\rm W} \star O_{\rm W} \star O_{\rm W} - 3 (O_{\rm W} \star O_{\rm W}) O_{\rm W} + 2 O_{\rm W}^3 = \label{eq: G3} \\
&= -\frac{\hbar^2}{4} \left[ \left(\frac{\partial O_{\rm W}}{\partial x} \right)^2 \frac{\partial^2 O_{\rm W}}{\partial y^2}  + \frac{\partial^2 O_{\rm W}}{\partial x^2} \left( \frac{\partial O_{\rm W}}{\partial y} \right)^2  - 2 \frac{\partial O_{\rm W}}{\partial x} \frac{\partial O_{\rm W}}{\partial y} \frac{\partial^2 O_{\rm W}}{\partial x \partial y} \right] + \CO(\hbar^4) \, . \nonumber
\end{align}
\end{subequations}
It follows, then, from Eq.~\eqref{eq: rhoW} that the Wigner transform of $\rho$, up to order $\hbar^2$, is obtained by substituting Eqs.~\eqref{eq: G2} and~\eqref{eq: G3} into
\be \label{eq: rhoW2}
\rho_{\rm W} = \frac{1}{O_{\rm W}} + \frac{\CG_2}{O_{\rm W}^3} -  \frac{\CG_3}{O_{\rm W}^4} + \dots
\ee
We note that the same result can be obtained by using the properties of the $\star$-product only. Indeed, the identity 
\be
\rho_{\rm W} \star O_{\rm W} = O_{\rm W} \star \rho_{\rm W} = 1 \, , 
\ee
together with the definition in Eq.~\eqref{eq: moyal}, implies that
\be \label{eq: cosL}
\rho_{\rm W} \cos\left[ \frac{\hbar}{2} \overleftrightarrow{\Lambda}\right] O_{\rm W} = 1  \, .
\ee
Expanding Eq.~\eqref{eq: cosL} in powers of $\hbar^2$, we have
\be \label{eq: cosL2}
\rho_{\rm W} O_{\rm W} - \frac{\hbar^2}{8} \rho_{\rm W} \left( \overleftrightarrow{\Lambda}\right)^2 O_{\rm W} + \CO(\hbar^4) = 1  \, ,
\ee
where
\be \label{eq: lambda2}
\rho_{\rm W} \left( \overleftrightarrow{\Lambda}\right)^2 O_{\rm W} =  \frac{\d^2 \rho_{\rm W}}{\d x^2} \frac{\d^2 O_{\rm W}}{\d y^2} + \frac{\d^2 \rho_{\rm W}}{\d y^2} \frac{\d^2 O_{\rm W}}{\d x^2} - 2 \frac{\d^2 \rho_{\rm W}}{\d x \d y} \frac{\d^2 O_{\rm W}}{\d x \d y} \, ,
\ee
and, solving order by order in $\hbar$, we find the Wigner transform of $\rho$ at NLO in $\hbar \rightarrow 0$.
Note that the formalism described here can be used to systematically extract the expansion up to any order by extending all intermediate computations beyond order $\hbar^2$.

\sectiono{Quantum dilogarithms and other useful formulae} \label{app: Faddeev}
We call quantum dilogarithm the function of two variables defined by the series~\cite{Faddeev2, Kirillov}
\be \label{eq: dilog}
(x q^{\alpha}; \, q)_{\infty} = \prod_{n=0}^{\infty} (1- x q^{\alpha+n}) \, , \quad \alpha \in \IR \, ,
\ee
which is analytic in $x,q \in \IC$ with $|x|, |q| <1$, and it has asymptotic expansions around $q$ a root of unity.
Furthermore, we denote by
\be \label{eq: qFactor}
(x; \, q)_m = \prod_{n=0}^{m-1} (1- x q^n) = \frac{(x; \, q)_{\infty}}{(x q^m; \, q)_{\infty}} \, , \quad m \in \IZ \, ,
\ee
with $(x; \, q)_0=1$, the $q$-shifted factorials, also known as $q$-Pochhammer symbols, and by
\be \label{eq: qHypergeo}
{}_{r+1}\phi_s\left(
\begin{matrix}
a_0 , & a_1 , & \dots , & a_r\\
b_1 , & b_2 , & \dots , & b_s
\end{matrix}
; \, q , \, x \right) = \sum_{n=0}^{\infty} \frac{(a_0; \, q)_n (a_1; \, q)_n \dots (a_r; \, q)_n}{(q; \, q)_n (b_1; \, q)_n \dots (b_s; \, q)_n} \left((-1)^n q^{\binom{n}{2}} \right)^{s-r} x^n \, ,
\ee
where $a_i, b_j, x \in \IC$, $r,s \in \IN$, the (unilateral) $q$-hypergeometric series, also called (unilateral) basic hypergeometric series.
Faddeev's quantum dilogarithm $\Phi_{\mb}(x)$ is defined in the strip $| \Im (z) | < | \Im (c_{\mb}) |$ as~\cite{Faddeev1, Faddeev2}
\be \label{eq: intPhib}
\Phi_{\mb}(x) = \exp \left( \int_{\IR + \ri \epsilon} \frac{\re^{-2 \ri x z}}{4 \sinh(z \mb ) \sinh(z \mb^{-1})} \frac{\rd z}{z} \right) \, ,
\ee
which can be analytically continued to all values of $\mb$ such that $\mb^2 \notin \IR_{\le 0}$. When $\Im(\mb^2) >0$, the formula in Eq.~\eqref{eq: intPhib} is equivalent to
\be \label{eq: seriesPhib}
\Phi_{\mb}(x) = \frac{( \re^{2 \pi \mb (x + c_{\mb})}; \, q)_{\infty}}{(  \re^{2 \pi \mb^{-1} (x - c_{\mb})}; \, \tilde{q})_{\infty}} = \prod_{n=0}^{\infty} \frac{1-  \re^{2 \pi \mb (x + c_{\mb})} q^n}{1- \re^{2 \pi \mb^{-1} (x - c_{\mb})} \tilde{q}^n}\, ,
\ee
where $q = \re^{2 \pi \text{i} \mb^2}$, $\tilde{q} = \re^{- 2 \pi \text{i} \mb^{-2}}$, and $c_{\mb} = \ri (\mb + \mb^{-1})/2$. 
It follows that $\Phi_{\mb}(x)$ is a meromorphic function with poles at the points $ x = c_{\mb} + \ri m \mb + \ri n \mb^{-1}$ and zeros at the points $x = -c_{\mb} - \ri m \mb - \ri n \mb^{-1}$, for $m,n \in \IN$. 
It satisfies the inversion formula
\be
\Phi_{\mb}(x) \Phi_{\mb}(-x) = \re^{\pi \ri x^2} \Phi_{\mb}(0)^2 \, , \quad \Phi_{\mb}(0) = \left( \frac{q}{\tilde{q}} \right)^{1/48} = \re^{\pi \ri (\mb^2 + \mb^{-2})/24} \, ,
\ee
and the complex conjugation formula
\be
\Phi_{\mb}(x)^* = \frac{1}{\Phi_{\mb^*}(x^*) } \, ,
\ee
and it is a quasi-periodic function. More precisely, we have that
\begin{subequations}
\be
\frac{\Phi_{\mb}(x+ c_{\mb} + \ri \mb)}{\Phi_{\mb}(x + c_{\mb})} = \frac{1}{1-q \re^{2 \pi \mb x}} \, , 
\ee
\be
\frac{\Phi_{\mb}(x+ c_{\mb} + \ri \mb^{-1})}{\Phi_{\mb}(x + c_{\mb})} = \frac{1}{1-\tilde{q}^{-1} \re^{2 \pi \mb^{-1} x}} \, .
\ee
\end{subequations}
In the limit $\mb \rightarrow 0$, under the assumption that $\Im(\mb^2) >0$, Faddeev's quantum dilogarithm has the asymptotic expansion~\cite{AK}
\be \label{eq: logPhib}
\log \Phi_{\mb}\left( \frac{x}{2 \pi \mb} \right) = \sum_{k=0}^{\infty} (2 \pi \ri \mb^2)^{2k-1} \frac{B_{2k}(1/2)}{(2k)!} \text{Li}_{2-2k}(-\re^x) \, ,
\ee
where $\text{Li}_n(z)$ is the polylogarithm of order $n$, and $B_n(z)$ is the $n$-th Bernoulli polynomial. Similarly, in the limit $\mb \rightarrow 0$, under the assumption that $\Im(\mb^2) >0$, the following special cases of the quantum dilogarithm in Eq.~\eqref{eq: dilog} have the asymptotic expansions~\cite{Katsurada}
\begin{subequations}
\begin{align}
\log (x ; \, q)_{\infty} = & \frac{1}{2} \log(1-x) + \sum_{k=0}^{\infty} (2 \pi \ri \mb^2)^{2k-1} \frac{B_{2k}}{(2k)!} \text{Li}_{2-2k}(x) \, , \label{eq: logPhiNC} \\
\log (q^{\alpha} ; \, q)_{\infty} = & - \frac{\pi \ri}{12 \mb^2} - B_1(\alpha) \log(- 2 \pi \ri \mb^2) - \log \frac{\Gamma(\alpha)}{\sqrt{2 \pi}}  \label{eq: logPhiK} \\
& - B_2(\alpha) \frac{\pi \ri \mb^2}{2} - \sum_{k=2}^{\infty} (2 \pi \ri \mb^2)^k \frac{B_k B_{k+1}(\alpha)}{k (k+1)!} \, , \quad \alpha > 0 \, , \nonumber
\end{align}
\end{subequations}
where $\Gamma(\alpha)$ is the gamma function.

\sectiono{Hadamard's multiplication theorem} \label{app: Hadamard}
We briefly recall the content of Hadamard's multiplication theorem~\cite{Hadamard, Hadamard2}, following the introduction by~\cite{Titchmarsh}.
\begin{theorem} \label{theo: HadamardT}
Consider the two formal power series
\be
f(z) = \sum_{n=0}^{\infty} a_n z^n \, , \quad g(z) = \sum_{n=0}^{\infty} b_n z^n \, ,
\ee
and suppose that $f(z), g(z)$ are convergent for $|z| < R, R'$, respectively, where $R, R' \in \IR_{+}$, and that their singularities in the complex $z$-plane are known. 
Let us denote by $\{\alpha_i\}$ the set of singularities of $f(z)$ and by $\{\beta_j\}$ the set of singularities of $g(z)$.
We introduce the formal power series
\be \label{eq: prodHa}
F(z) = (f \diamond g)(z) = \sum_{n=0}^{\infty} a_n b_n z^n \, ,
\ee
also known as the Hadamard product of the given series $f(z)$ and $g(z)$, which we denote with the symbol $\diamond$. 
Then, $F(z)$ has a finite radius of convergence $r > R R'$, and its singularities belong to the set $\{\alpha_i \beta_j\}$ of products of the singular points of $f(z)$ and $g(z)$. 
Furthermore, $F(z)$ admits the integral representation
\be \label{eq: intHa}
F(z) = \frac{1}{2 \pi \ri} \int_{\gamma} f(s) g(z/s) \frac{\rd s}{s} \, ,
\ee
where $\gamma$ is a closed contour encircling the origin $s=0$ on which 
\be \label{eq: condHa}
|s| < R \, , \quad \left| \frac{z}{s} \right| < R' \, .
\ee
\end{theorem}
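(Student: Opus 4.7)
My plan is to prove the three assertions (integral representation, radius of convergence, location of singularities) in the order that is logically forced by the standard approach of Hadamard. I would begin with the integral representation \eqref{eq: intHa}, since both other statements follow from it by direct analysis of the integrand. Fix a circle $\gamma = \{|s| = \rho\}$ with $\rho < R$. Cauchy's integral formula applied to $f$ gives
\[
a_n = \frac{1}{2\pi \ri}\int_{\gamma} \frac{f(s)}{s^{n+1}}\, \rd s \, , \quad n \in \IN \, .
\]
Multiplying by $b_n z^n$ and summing, I would formally write
\[
F(z) = \sum_{n=0}^{\infty} a_n b_n z^n = \frac{1}{2\pi \ri} \int_{\gamma} f(s)\, \Bigl(\sum_{n=0}^{\infty} b_n (z/s)^n \Bigr) \frac{\rd s}{s} = \frac{1}{2 \pi \ri}\int_\gamma f(s)\, g(z/s)\, \frac{\rd s}{s} \, .
\]
The exchange of sum and integral is the first technical point: it is justified by the Weierstrass $M$-test provided the inner series converges uniformly on $\gamma$, i.e., provided $|z/s| \le q < R'$ for all $s \in \gamma$. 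Since $|s|=\rho$, this amounts to $|z| < \rho R'$, so taking $\rho$ just below $R$ produces Eq.~\eqref{eq: intHa} in the region $|z| < R R'$ with the contour condition \eqref{eq: condHa}.

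Second, I would extract the radius of convergence. The coefficient estimate $|a_n| \le M\,\rho^{-n}$ on $\gamma$ together with $\limsup |b_n|^{1/n} = 1/R'$ gives $\limsup |a_n b_n|^{1/n} \le 1/(\rho R')$, and sending $\rho \nearrow R$ yields $r \ge R R'$. Strict improvement beyond this naïve bound, as asserted in the theorem, follows from the third assertion once it is proved, because $R R'$ coincides with the modulus of the product of the nearest singularities of $f$ and $g$, and the actual singularities of $F$ must lie in the discrete set $\{\alpha_i \beta_j\}$, whose elements can be larger than $RR'$ when the nearest-singularity pair is not realised.

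Third, and this is the main step, I would locate the singularities of $F(z)$ by analytic continuation of the integral representation under deformations of $\gamma$. For $z$ in the disk $|z|<RR'$ the singularities of the integrand in the $s$-plane split into two disjoint families: the singularities $s = \alpha_i$ of $f(s)$, which lie in $|s|\ge R$, and the singularities $s = z/\beta_j$ of $g(z/s)$, which lie in $|s|\le |z|/R'$. The chosen contour $\gamma$ separates them, with the $z/\beta_j$ inside and the $\alpha_i$ outside. As $z$ varies continuously in $\IC$, the outer family is frozen while the inner family moves radially. As long as $\gamma$ can be deformed continuously in the punctured $s$-plane so as to keep the two families on opposite sides, the resulting integral defines an analytic continuation of $F(z)$ to the new value of $z$; by Cauchy's theorem the continuation is independent of the admissible choice of $\gamma$. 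The only obstruction to such a deformation is a pinch, namely a coincidence $z/\beta_j = \alpha_i$, which forces $z = \alpha_i \beta_j$. Hence the set of possible singularities of $F(z)$ is contained in $\{\alpha_i \beta_j\}$, as claimed.

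The hard part of this program is precisely the pinch argument of the final paragraph: one must justify that away from the discrete set $\{\alpha_i \beta_j\}$ the two families of singularities can always be separated by a deformation of $\gamma$ inside the punctured $s$-plane, and that a pinch is the \emph{only} mechanism that can obstruct analytic continuation of the integral. In the general setting this requires a careful topological argument — essentially a monodromy-free homotopy of $\gamma$ in $\IC \setminus (\{\alpha_i\}\cup\{z/\beta_j\})$ as $z$ traces a path — together with branch-tracking of $f$ and $g$ along such homotopies when the singularities are not poles. All other steps reduce to standard applications of Cauchy's theorem and Hadamard's coefficient formula for the radius of convergence.
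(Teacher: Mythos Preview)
The paper does not give its own proof of this theorem: it is recalled in an appendix as a classical result, attributed to Hadamard and to Titchmarsh's exposition, and stated without argument. Your outline is precisely the standard proof one finds in those references, so there is nothing to compare.

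One small point on your second step. Your attempted justification of the \emph{strict} inequality $r > RR'$ does not hold: if $\alpha_1$ and $\beta_1$ are the singularities of smallest modulus for $f$ and $g$, then $\alpha_1\beta_1$ is always a member of $\{\alpha_i\beta_j\}$ and has modulus exactly $RR'$, so the pinch argument does not exclude it. Indeed the inequality is only $r \ge RR'$ in general (take $f=g=(1-z)^{-1}$, so $F=(1-z)^{-1}$ and $r=1=RR'$). The strict inequality in the statement appears to be a minor slip; your argument correctly yields $r \ge RR'$ and nothing more.
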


Let us conclude by citing two results of~\cite{ZetaSeries} which we use in the explicit resummation of the Hadamard factors considered in Section~\ref{sec: localP2}. Namely, 
\begin{subequations}
\begin{align}
\sum_{k=1}^{\infty} \frac{\zeta(2k, a)}{2k} x^{2k} &=\frac{1}{2} \log \left( \Gamma (a-x)\Gamma (a+x) \right) -  \log \Gamma (a) \, , \quad |x| < |a| \, , \label{eq: zetaseriesEven} \\
\sum_{k=1}^{\infty} \frac{\zeta(2k+1, a)}{2k+1} x^{2k+1} &= \frac{1}{2} \log \left( \frac{\Gamma(a-x)}{\Gamma(a+x)} \right) + x \Psi(a) \, , \quad |x| < |a| \, , \label{eq: zetaseriesOdd}
\end{align}
\end{subequations}
where $\zeta(z, a)$ denotes the Hurwitz zeta function, $\Psi(a)$ denotes the digamma function, $x \in \IC$, and $a \in \IC \backslash \IZ_{\le 0}$.

\sectiono{Elements of alien calculus} \label{app: alien}
We briefly recall here the basic notions of alien calculus that are used in this paper. We refer to~\cite{ABS, MS, Dorigoni}. 
Let $\phi(z)$ be a resurgent asymptotic series, and let $\hat{\phi}(\zeta)$ be its Borel transform. We denote by $\zeta_{\omega}$, $\omega \in \Omega_{\theta}$, the singularities of $\hat{\phi}(\zeta)$ that lie on the same Stokes line at an angle 
\be \label{eq: lineA}
\theta = \arg(\zeta_{\omega})
\ee
in the complex $\zeta$-plane. For simplicity, we will now number the singularities along the given Stokes ray according to their increasing distance from the origin. Let us fix a value $\omega = r \in \IN_{\ne 0}$.
When analytically continuing $\hat{\phi}(\zeta)$ from the origin to the singularity $\zeta_r$ along the direction in Eq.~\eqref{eq: lineA}, each singularity $\zeta_i$, $i=1, \dots, r-1$, must be encircled by either passing slightly above or slightly below it. This creates ambiguity in the prescription. We label by $\epsilon_i = \pm 1$ the two choices, and we introduce the notation
\be
\hat{\phi}_{\zeta_1, \dots, \zeta_r}^{\epsilon_1, \dots , \epsilon_{r-1}} (\zeta)
\ee
to indicate that the analytic continuation is performed in such a way that the singularity $\zeta_i$ is avoided above or below according to the value of $\epsilon_i$ for $i=1, \dots, r-1$. Suppose that the local expansion of the Borel transform at $\zeta = \zeta_r$ has the form
\be
\hat{\phi}_{\zeta_1, \dots, \zeta_r}^{\epsilon_1, \dots , \epsilon_{r-1}} (\zeta) = -\frac{1}{2 \pi \ri \xi } c_{\zeta_1, \dots, \zeta_r}^{\epsilon_1, \dots , \epsilon_{r-1}} - \frac{\log(\xi)}{2 \pi \ri} \hat{\phi}_{r ; \, \zeta_1, \dots, \zeta_r}^{\epsilon_1, \dots , \epsilon_{r-1}} (\xi) + \dots \, ,
\ee
where $\xi = \zeta- \zeta_r$, the dots denote regular terms in $\xi$, $c_{\zeta_1, \dots, \zeta_r}^{\epsilon_1, \dots , \epsilon_{r-1}}$ is a complex number, and $\hat{\phi}_{r ; \, \zeta_1, \dots, \zeta_r}^{\epsilon_1, \dots , \epsilon_{r-1}} (\xi)$ is the germ of an analytic function at $\xi = 0$. The alien derivative at the singularity $\zeta_r$, which has been introduced in Eq.~\eqref{eq: alienStokes}, acts on the formal power series $\phi(z)$ as
\be \label{eq: alien_formula}
\Delta_{\zeta_r} \phi(z) = \sum_{\epsilon_1, \dots , \epsilon_{r-1}} \frac{p(\epsilon)! q(\epsilon)!}{r!} \left( c_{\zeta_1, \dots, \zeta_r}^{\epsilon_1, \dots , \epsilon_{r-1}} + \CB^{-1} \hat{\phi}_{r ; \, \zeta_1, \dots, \zeta_r}^{\epsilon_1, \dots , \epsilon_{r-1}} (z) \right) \, ,
\ee
where $\CB^{-1}$ denotes the inverse Borel transform, and $p(\epsilon),q(\epsilon)$ are the number of times that $\pm 1$ occur in the set $\{ \epsilon_1, \dots , \epsilon_{r-1}\}$, respectively. Furthermore, $\Delta_{\zeta} \phi(z) = 0$ if $\zeta \in \IC$ is not a singular point in the Borel plane of $\phi(z)$.
\begin{example}
Let us consider the simple example of 
\be
\hat{\phi}_{\zeta_1, \dots, \zeta_r}^{\epsilon_1, \dots , \epsilon_{r-1}} (\zeta) = - \frac{\log(\xi)}{2 \pi \ri} S_r + \dots \, ,
\ee
where $S_r \in \IC$ is a constant which depends only on the choice of the singularity $\zeta_r$. Since the inverse Borel transform acts trivially on numbers, we have that
\be \label{eq: exampleA}
\Delta_{\zeta_r} \phi(z) = S_r \sum_{\epsilon_1, \dots , \epsilon_{r-1}} \frac{p(\epsilon)! q(\epsilon)!}{r!} =  S_r \sum_{p=0}^{r-1} \frac{p! (r-1-p)!}{r!} \binom{r-1}{p} = S_r \, .
\ee
We observe that, in the case of 
\be
\hat{\phi}_{\zeta_1, \dots, \zeta_r}^{\epsilon_1, \dots , \epsilon_{r-1}} (\zeta) = -\frac{1}{2 \pi \ri \xi } S_r + \dots \, ,
\ee
where $S_r \in \IC$ is again a number which depends only on the choice of the singularity $\zeta_r$, the alien derivative at $\zeta_r$ acts on $\phi(z)$ according to the same formula in Eq.~\eqref{eq: exampleA}.
\end{example}
We conclude by recalling that the alien derivative $\Delta_{\zeta}$, $\zeta \in \IC$, is indeed a derivation in the algebra of resurgent functions. In particular, it satisfies the expected Leibniz rule when acting on a product, that is, 
\be \label{eq: alienD}
\Delta_{\zeta} \left( \phi_1(z) \phi_2(z) \right) = \left(\Delta_{\zeta}\phi_1(z)  \right) \phi_2(z) + \phi_1(z) \left(\Delta_{\zeta} \phi_2(z) \right) \, ,
\ee
where $\phi_1(z), \phi_2(z)$ are two given resurgent formal power series. As a consequence, the alien derivative also acts naturally on exponentials. Namely,
\be \label{eq: alienExp}
\Delta_{\zeta} \re^{\phi(z)} = \sum_{k=0}^{\infty} \frac{1}{k!} \Delta_{\zeta} \phi^k(z) = \sum_{k=1}^{\infty} \frac{1}{(k-1)!} \phi^{k-1}(z) \Delta_{\zeta} \phi(z) =  \re^{\phi(z)} \Delta_{\zeta} \phi(z) \, .
\ee

\bibliographystyle{JHEP}

\linespread{0.6}
\bibliography{biblio}

\end{document}